\def\bR{\mathbb{R}}
\def\bN{\mathbb{N}}
\def\bZ{\mathbb{Z}}
\def\cV{\mathcal{V}}
\def\cF{\mathcal{F}}
\def\cG{\mathcal{G}}
\def\cD{\mathcal{D}}
\def\cJ{\mathcal{J}}
\def\cL{\mathcal{L}}
\def\cN{\mathcal{N}}
\def\cE{\mathcal{E}}
\def\cK{\mathcal{K}}
\def\cH{\mathcal{H}}
\def\eps{\varepsilon}
\def\ph{\varphi}
\def\wt{\widetilde}
\def\wh{\widehat}
\def\wch{\widecheck}
\def \pn {\varphi }
\def \whpn{\widehat{\varphi}}
\def \an {\mathfrak{a}_{0} }
\def \whpn { \widehat{ \varphi} }
\def \cFpn {\mathcal{F}_{\bot \varphi}^{\leq N} }
\newcommand{\gl}{g_L}
\newcommand{\cgl}{\check{g}_L}
\def\be{\begin{equation}}
\def\ee{\end{equation}}
\newtheorem{theorem}{Theorem}[section]  
\newtheorem{prop}[theorem]{Proposition}
\newtheorem{lemma}[theorem]{Lemma}
\numberwithin{equation}{section}
\begin{document}

\title{Bose-Einstein Condensation with Optimal Rate for \\  Trapped Bosons in the Gross-Pitaevskii Regime}

\author{Christian Brennecke$^1$, Benjamin Schlein$^2$, Severin Schraven$^3$ \\
\\
Department of Mathematics, Harvard University, \\
One Oxford Street, Cambridge MA 02138, USA$^{1}$ \\
\\
Institute of Mathematics, University of Zurich, \\
Winterthurerstrasse 190, 8057 Zurich, Switzerland$^{2,3}$}

\maketitle

\begin{abstract}
We consider a Bose gas consisting of $N$ particles in $\bR^3$,  trapped by an external field and interacting through a two-body potential with scattering length of order $N^{-1}$. We prove that low energy states exhibit complete Bose-Einstein condensation with optimal rate, generalizing previous work in \cite{BBCS1, BBCS4}, restricted to translation invariant systems. This extends recent 
results in \cite{NNRT}, removing the smallness assumption on the size of the scattering length.
\end{abstract}

\section{Introduction and Main Results} \label{sec:intro}

We consider a system of $N \in \bN$ bosons trapped by an external potential in the Gross-Pitaevskii regime; the particles interact through a repulsive two-body potential with scattering length of order $N^{-1}$. The Hamilton operator as the form 
\begin{equation}\label{eq:defHN} H_N = \sum_{j=1}^N \left[ -\Delta_{x_j} + V_{ext} (x_j) \right] + \sum_{1\leq i<j\leq N} N^2 V(N(x_i -x_j)).
\end{equation}
and it acts on a dense subspace of the Hilbert space $L^2_s (\bR^{3N})$, the subspace of $L^2 (\bR^{3N})$ consisting of functions that are symmetric with respect to permutations of the $N$ particles. The confining potential $V_{ext} \in L^\infty_\text{loc} (\bR^3)$ diverge to infinity, as $|x| \to \infty$ (more precise conditions on $V_{ext}$ will be introduced later on). Furthermore, we assume $V\in L^3(\bR^3)$ to be pointwise non-negative, spherically symmetric and compactly supported (but our results could easily be extended to potentials decaying sufficiently fast at infinity). 

The scattering length of $V$ is defined through the zero-energy scattering equation 
\begin{equation}\label{eq:0en} \left[ - \Delta + \frac{1}{2} V (x)  \right] f (x) = 0 \end{equation}
with the boundary condition $f (x) \to 1$, as $|x| \to \infty$. For $|x|$ large enough (outside the support of $V$) we find \[ f(x) = 1- \frac{\frak{a}_0 }{|x|} \]
where the constant $\frak{a}_0 > 0$ is known as the scattering length of $V$. A simple computation shows that  
\begin{equation}\label{eq:a0}  8 \pi \frak{a}_0  =  \int_{\mathbb{R}^3} V(x) f(x) dx .  \end{equation}
Moreover, by scaling, (\ref{eq:0en}) implies that 
		\[ \left[ - \Delta + \frac{1}{2} N^2 V (Nx) \right] f (Nx) = 0 \]
so that the scattering length of $N^2 V (N.)$ is given by $\frak{a}_0 /N$.  

From \cite{LSY}, it is known that the ground state energy $E_N$ of the Hamilton operator \eqref{eq:defHN} satisfies		
\begin{equation}\label{eq:LSY} \lim_{N\to \infty} \frac{E_N}N =  \inf_{\psi\in H^1(\bR^3): \|\psi\|_2=1}\cE_{GP}(\psi), 
\end{equation}
where $\cE_{GP}$ denotes the Gross-Pitaevskii energy functional 
\begin{equation} \label{eq:defGPfunctional}
\mathcal{E}_{GP}(\psi) = \int_{\bR^3} \left(  \vert \nabla \psi(x) \vert^2 + V_{ext}(x) \vert \psi(x)\vert^2 + 4\pi\an \vert \psi(x)\vert^4 \right) dx.
\end{equation}

For the rest of the paper, we will lighten the notation and write $\int$ instead of $\int_{\mathbb{R}^3}$. Furthermore, we will write $\Vert \cdot \Vert$ for the $L^2$-norm and indicate other $L^p$-norms by a suitable subscript.
The Gross--Pitaevskii functional $\cE_{GP} $ admits a unique normalized, strictly positive minimizer $\pn \in L^2(\bR^3)$. It satisfies the Euler-Lagrange equation
\begin{equation}\label{eq:euler} -\Delta \pn + V_{ext} \pn + 8\pi \frak{a}_0 |\pn|^2 \pn = \eps_{GP} \pn \end{equation} 
with the Lagrange multiplier $\eps_{GP} := \cE_{GP} (\ph) + 4\pi \frak{a}_0 \| \pn \|_4^4$. As first shown in \cite{LS1}, the ground state of \eqref{eq:defHN} exhibits complete Bose-Einstein condensation in the state $\pn$. More precisely, if $\gamma^{(1)}_N = \operatorname{tr}_{2,\dots,N}|\psi_N\rangle\langle\psi_N|$ denotes the one-particle reduced density associated with the ground state  of \eqref{eq:defHN}, then 
\begin{equation}\label{eq:BEC0}\lim_{N\to\infty} \langle\pn, \gamma_N^{(1)}\pn\rangle = 1.  \end{equation}
This implies that, in the ground state of (\ref{eq:defHN}), the fraction of particles in the state $\pn$ approaches one, as $N \to \infty$. The convergence in \eqref{eq:BEC0} was later extended in \cite{LS2,NRS} to any sequence $\psi_N$ of approximate ground states, satisfying 
\[ \lim_{N\to \infty} \frac1N \langle \psi_N, H_N\psi_N\rangle = \cE_{GP}(\pn).\]

For translation invariant systems (particles trapped in the box $\Lambda = [0 ;1]^3$, with periodic boundary conditions), \eqref{eq:LSY} (stating, in this case, that $E_N / N \to 4 \pi \frak{a}_0$) and \eqref{eq:BEC0} (establishing Bose-Einstein condensation in the zero-momentum mode $\ph (x) = 1$ for all $x \in \Lambda$) have been proved in \cite{BBCS1, BBCS4,H} to hold with the optimal rate of convergence. This result was recently generalised in \cite{NNRT} (extending the approach of \cite{BFS,FS}) to trapped systems described by the Hamilton operator (\ref{eq:defHN}), under the assumption of sufficiently small scattering length $\mathfrak{a}_0$. 


Our goal in this paper is to obtain optimal bounds on the rate of convergence in (\ref{eq:LSY}) and (\ref{eq:BEC0}), with no restriction on the size of the scattering length. To reach this goal, we adapt and extend the approach developed in \cite{BBCS4} for the translation invariant case. To this end, we will impose, throughout the rest of this paper, the following conditions:   
\begin{equation}\label{eq:asmptsVVext}
\begin{split}
(1) \;& V\in L^3(\bR^3) , V(x)\geq 0 \text{ for a.e.} \;x\in\bR^3, V \text{ spherically symmetric, }  \operatorname{supp}(V) \text{ compact},  \\
(2)\; & V_{ext} \in C^3(\bR^3; \bR), V_{ext}(x) \to \infty \text{ as } |x| \to \infty, \\
& \exists\;  C > 0 \; \forall \;  x,y\in\bR^3: V_{ext}(x+y) \leq C (  V_{ext}(x)+C)(  V_{ext}(y)+C), \\
& \nabla V_{ext} , \Delta V_{ext} \text{ have at most exponential growth as }  |x|\to \infty.
\end{split}
\end{equation}
Note in particular that all polynomials in $x^2$ with positive leading coefficient satisfy condition $(2)$ in \eqref{eq:asmptsVVext}. The assumptions that $V_{ext}(x+y) \leq C (  V_{ext}(x)+C)(  V_{ext}(y)+C)$ and that $\nabla V_{ext} , \Delta V_{ext}$ grow at most exponentially allow for certain simplifications of our analysis and are needed for technical reasons only. 

\begin{theorem}\label{thm:main} Assume \eqref{eq:asmptsVVext} and let $E_N$ denote the ground state energy of (\ref{eq:defHN}). Then, there exists a constant $C > 0$ such that 
\begin{equation}\label{eq:ENlow}
E_N \geq N \cE_{GP} (\pn) - C 
\end{equation}
and 
\[ H_N\geq N\mathcal{E}_{GP}(\pn) + C^{-1}\sum_{i=1}^N \big( 1- |\pn\rangle\langle\pn|_i\big)- C.\]
In particular, if $\psi_N\in L^2_s(\bR^{3N})$ with $\|\psi_N\|=1$ is an sequence of approximate ground states such that 
\[ \langle \psi_N, H_N\psi_N\rangle \leq N\cE_{GP}(\pn) + \zeta, \]
for a $\zeta > 0$, then the reduced density $\gamma_N^{(1)}$ associated with $\psi_N$ satisfies 
\[ 1-\langle\pn, \gamma_N^{(1)}\pn\rangle \leq \frac{(C+\zeta)}N.\]
\end{theorem}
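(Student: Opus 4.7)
The second (operator) inequality implies the other two: applied to the ground state, dropping the non-negative term $\sum_i(1-|\pn\rangle\langle\pn|_i)\geq 0$ yields \eqref{eq:ENlow}; applied to any $\psi_N$ with $\langle\psi_N,H_N\psi_N\rangle\leq N\cE_{GP}(\pn)+\zeta$ it gives $\langle\psi_N,\sum_i(1-|\pn\rangle\langle\pn|_i)\psi_N\rangle\leq C(C+\zeta)$, and permutation symmetry together with the identity $\operatorname{tr}\gamma_N^{(1)}(1-|\pn\rangle\langle\pn|)=1-\langle\pn,\gamma_N^{(1)}\pn\rangle$ then produces the $1/N$ rate for $1-\langle\pn,\gamma_N^{(1)}\pn\rangle$. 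So the whole task is to prove the coercivity bound.

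\textbf{Excitation map.} Setting $q=1-|\pn\rangle\langle\pn|$, every $\psi_N\in L^2_s(\bR^{3N})$ decomposes uniquely as $\psi_N=\sum_{k=0}^N \pn^{\otimes(N-k)}\otimes_s\xi_k$ with $\xi_k\in(qL^2(\bR^3))^{\otimes_s k}$, which defines a unitary $U_N:L^2_s(\bR^{3N})\to\cFpn$ onto the truncated Fock space of excitations. On $\cFpn$ one has $U_N\big(\sum_i(1-|\pn\rangle\langle\pn|_i)\big)U_N^*=\cN$, so the coercivity bound is equivalent to
\[ \cL_N:=U_N H_N U_N^*\; \geq\; N\cE_{GP}(\pn)+C^{-1}\cN-C\qquad\text{on }\cFpn. \]
A direct computation expresses $\cL_N$ in terms of modified creation/annihilation fields $b_x^*,b_x$ (which respect the truncation to $\cFpn$) as a sum of pieces of degree $0,1,2,3,4$ in $b,b^*$, the quartic piece still carrying the singular kernel $N^2V(N(x-y))$.

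\textbf{Bogoliubov renormalizations.} To diagonalize $\cL_N$ to leading order I would follow \cite{BBCS4} and conjugate by a generalized quadratic Bogoliubov transformation $e^B$ whose kernel is built from the solution of a zero-energy scattering problem for $N^2V(N\,\cdot\,)$, truncated at a scale $\ell\in(N^{-1},1)$ and modulated by $\pn(x)\pn(y)$. This conjugation absorbs the singular short-distance correlations into the $\cE_{GP}(\pn)$ term, turns the quadratic part into a gapped quadratic form, and reduces the quartic term to a harmless remainder. The residual leading cubic contributions of the form $b^*b^*b+\text{h.c.}$ are then removed by a further cubic conjugation $e^A$. The doubly renormalized Hamiltonian $\cJ_N:=e^{-A}e^{-B}\cL_N e^B e^A$ satisfies
\[ \cJ_N\;\geq\; N\cE_{GP}(\pn)+c\,(\cK+\cN)-C \]
for some $c>0$, where $\cK$ denotes the one-body kinetic energy acting on $\cFpn$. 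Pulling this bound back through $e^A$ and $e^B$ using moment estimates of the type developed in \cite{BBCS1,BBCS4} (which show that $e^{\pm B}$ and $e^{\pm A}$ approximately commute with powers of $\cN$) yields the required bound for $\cL_N$, and hence for $H_N$.

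\textbf{Main obstacle.} The essential new difficulty compared with \cite{BBCS4}, which treated the translation invariant torus, is the complete loss of momentum conservation: $\pn$ is a genuinely position-dependent minimizer of $\cE_{GP}$, and each Fourier-space commutator estimate of \cite{BBCS4} must be redone in position space with the modulation $\pn(x)\pn(y)$ present. The hardest step will be to control the commutators of $V_{ext}$ with the non-local operators $B$ and $A$ for an unbounded, rapidly growing trap; the submultiplicativity $V_{ext}(x+y)\leq C(V_{ext}(x)+C)(V_{ext}(y)+C)$ and the at-most-exponential growth of $\nabla V_{ext}$, $\Delta V_{ext}$ imposed in \eqref{eq:asmptsVVext} are tailored precisely for this purpose. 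Finally, removing the smallness assumption on $\mathfrak{a}_0$ of \cite{NNRT} forces one to exploit a genuine cancellation between the quadratic and cubic renormalizations at the scattering-length scale, so the cubic conjugation $e^A$ is essential rather than optional.
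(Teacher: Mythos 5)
Your high-level outline correctly identifies the reduction to the coercivity bound $\cL_N\geq N\cE_{GP}(\pn)+c\,\cN-C$ on $\cF_{\perp\pn}^{\leq N}$ and the two-step renormalization by $e^B$ and $e^A$, and your remarks on the loss of momentum conservation and the role of the conditions on $V_{ext}$ are apt. However, as a proof sketch there is a genuine gap: you assert that after both conjugations $\cJ_N\geq N\cE_{GP}(\pn)+c(\cK+\cN)-C$ and that one can simply ``pull this back'' through $e^{A}$ and $e^{B}$ using $\cN$-moment bounds. This is not what the renormalizations deliver. Proposition \ref{prop:JNell} only gives a lower bound of the form $N\cE_{GP}(\pn)+\tfrac12 d\Gamma(h_{GP})-C\ell^\kappa\cN-C\ell^{-4\alpha}(\cN+1)^2/N-C\ell^{-2\beta}$, and the error $\cE_{\cG_N}$ in Proposition \ref{prop:GNell} contains a similar $(\cN+1)^2/N$ contribution. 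On all of $\cF^{\leq N}$ the term $(\cN+1)^2/N$ is of order $N$, so these error bounds do not give coercivity outright; moreover $d\Gamma(h_{GP})$ annihilates $\ph$, so it only yields $\lambda_1(\cN-a^*(\pn)a(\pn))$, not $c(\cK+\cN)$, and the term $a^*(\pn)a(\pn)$ must be killed by restricting to $\cF_{\perp\pn}^{\leq N}$.

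The missing ingredient is the localization in $\cN$ that the paper borrows from \cite{LNSS}: one writes $\wt\cL_N=f_{\delta N}\wt\cL_N f_{\delta N}+g_{\delta N}\wt\cL_N g_{\delta N}+\text{commutator errors}$ with smooth cutoffs $f^2+g^2=1$ localizing to $\cN\lesssim\delta N$ and $\cN\gtrsim\delta N$ respectively. On the $f_{\delta N}$ sector the $(\cN+1)^2/N$ errors become $O(\delta\cN)$ and the two-step renormalization together with the spectral gap $\lambda_1>0$ of $h_{GP}$ (and the identity $\Gamma(q)a^*(\pn)a(\pn)\Gamma(q)=0$) produces genuine coercivity. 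On the complementary $g_{\delta N}$ sector the renormalization machinery does not help at all; instead the paper invokes the qualitative a priori Bose--Einstein condensation of \cite{LS2,NRS} to show that states with $\cN\geq\delta N/2$ have excess energy at least $c_2 N$, following \cite{BBCS4}. Finally a residual $\cV_N/(\delta^2N^2)$ error from the localization must be absorbed by $\cL_N$ itself. Without the localization and the a priori input your argument cannot control the errors quadratic in $\cN$, so the proposal as written does not close.
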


{\it Remark:} Our techniques could also be used to prove an upper bound for $E_N$ matching (\ref{eq:ENlow}), implying that $|E_N - N \cE_{GP} (\pn) | \leq C$. We do not show it, because it would require some non-trivial additional work (this is a consequence of our choice, leading to some technical simplifications, to work on the Fock space $\cF^{\leq N}$ rather than on $\cF^{\leq N}_{\perp \ph}$, where we impose orthogonality to $\ph$; we will explain this point in the next section) and because it is already established in \cite{NNRT} (the upper bound there does not require restrictions on the size of the 
potential)

{\it Remark:} We plan to apply Theorem \ref{thm:main} in a separate paper to determine the low-energy spectrum of (\ref{eq:defHN}) and to establish the validity of the predictions of Bogoliubov theory, extending recent results obtained in \cite{BBCS2, BBCS3} for the translation invariant setting. 

\vspace{0.5cm}

{\it Acknowledgements.} B.S. gratefully acknowledge support from the NCCR SwissMAP and from the Swiss National Foundation of Science through the SNF Grant ``Effective equations from quantum dynamics'' and the SNF Grant ``Dynamical and energetic properties of Bose-Einstein condensates''.

\section{Excitation Hamiltonians} 

We introduce the bosonic Fock space 
\[ \cF = \bigoplus_{n \geq 0} L^2_s (\bR^{3n}) = \bigoplus_{n \geq 0} L^2 (\bR^3)^{\otimes_s n} \]
On $\cF$, we consider creation and annihilation operators, satisfying the 
canonical commutation relations 
\begin{equation}\label{eq:ccr} [a (g), a^* (h) ] = \langle g,h \rangle , \quad [ a(g), a(h)] = [a^* (g), a^* (h) ] = 0 \end{equation}
for all $g,h \in L^2 (\bR^3)$. We also introduce position and momentum-space operator-valued distributions $a_x, a_x^*$ and $\hat{a}_p, \hat{a}^*_p$, for $x,p \in \bR^3$, so that 
\[ a(f) = \int \bar{f} (x) \, a_x \, dx = \int \bar{\hat{f}} (p) \, \hat{a}_p \, dp , \qquad a^* (f) = \int f(x) \, a_x^* \, dx = \int  \hat{f} (p) \, \hat{a}_p^* \, dp.  \]

In terms of these operator-valued distributions, the number of particles operator $\cN$, defined by $(\cN \Psi)^{(n)} = n \Psi^{(n)}$ for every $\Psi \in \cF$, takes the form 
\[ \cN = \int a_x^* a_x \, dx = \int \hat{a}_p^* \hat{a}_p \, dp . \]
More generally, given an operator $A:L^2(\mathbb{R}^3)\to L^2(\mathbb{R}^3)$ with kernel $A(x;y)$, we define its second quantization $d\Gamma (A)$ acting in $\cF$ through
		\[ d\Gamma(A) = \int dxdy\, A(x;y) a^*_xa_y.\]
In particular, $\cN = d\Gamma(1)$ is the second quantization of the identity operator. 

It is simple to check that creation and annihilation operators are bounded, with respect to $\cN^{1/2}$. In fact, we find that 
\begin{equation}\label{eq:abd} \| a (f) \Psi \| \leq \| f \| \  \| \cN^{1/2} \Psi \|, \quad \| a^* (f) \Psi \| \leq \| f \| \ \| (\cN+1)^{1/2} \Psi \|.
\end{equation}

To describe excitations of the Bose-Einstein condensate, we also define the truncated Fock spaces
\[ \cF^{\leq N}  = \bigoplus_{n = 0}^N L^2 (\bR^3)^{\otimes_s n} , \qquad \text{and } \quad \cF^{\leq N}_{\perp \ph}  = \bigoplus_{n = 0}^N L_{\perp \ph}^2 (\bR^3)^{\otimes_s n} \]
defined over $L^2 (\bR^3)$ and, respectively, over $L^2_{\perp \ph} (\bR^3)$, the orthogonal complement of the condensate wave function $\ph$ in $L^2 (\bR^3)$ ($\ph$ is the unique minimizer of (\ref{eq:defGPfunctional})). Since they do not preserve the number of particles, creation and annihilation operators are not well-defined on $\cF^{\leq N}$ and $\cF^{\leq N}_{\perp \ph}$ (but notice that products of a creation and an annihilation operators are well-defined). They are replaced by modified creation and annihilation operators 
\[ b^* (f) = a^* (f) \, \sqrt{\frac{N-\cN}{N}} , \qquad \quad b (f) = \sqrt{\frac{N- \cN}{N}} \, a (f) . \]
For every $f \in L^2 (\bR^3)$, $b(f)$ and $b^* (f)$ map $\cF^{\leq N}$ to $\cF^{\leq N}$. If moreover $f \perp \ph$, we also find $b (f), b^* (f) : \cF^{\leq N}_{\perp \ph} \to \cF^{\leq N}_{\perp \ph}$. From (\ref{eq:abd}) we have 
\[ \| b (f) \Psi \| \leq \| f \| \| \cN^{1/2} \Psi \|, \quad \| b^* (f) \Psi \| \leq \| f \| \| (\cN+1)^{1/2} \Psi \|.
\] 
Also here, it is convenient to introduce operator-valued distributions $b_x , b_x^*$, for any $x \in \bR^3$, and, in momentum space $\hat{b}_p, \hat{b}^*_p$, for any $p \in \bR^3$. They satisfy the commutation relations (focussing here on position space operators) 
\begin{equation}\label{eq:comm-b}
\begin{split}  [ b_x, b_y^* ] &= \left( 1 - \frac{\cN}{N} \right) \delta (x-y) - \frac{1}{N} a_y^* a_x, \hspace{1cm}[ b_x, b_y ] = [b_x^* , b_y^*] = 0 
\end{split} \end{equation}
and
\begin{equation}\label{eq:comm-b2}
\begin{split}
[b_x, a_y^* a_z] &=\delta (x-y) b_z, \qquad 
[b_x^*, a_y^* a_z] = -\delta (x-z) b_y^*.
\end{split} \end{equation}
In particular, it follows that $[ b_x, \cN ] = b_x$ and $[ b_x^* , \cN ] = -b_x^*$.

We factor out the Bose-Einstein condensate applying a unitary map $U_N : L^2_s (\bR^{3N}) \to \cFpn$, first introduced in \cite{LNSS}. To define $U_N$, we observe that any $\psi_N \in L^2_s (\bR^{3N})$ can be uniquely decomposed as
		\[ \psi_N = \alpha_0 \ph_0^{\otimes N} + \alpha_1 \otimes_s \ph_0^{\otimes (N-1)} + \dots + \alpha_N \]
with $\alpha_j \in L^2_{\perp \pn} (\bR^3)^{\otimes_s j}$ for every $j = 0, \dots , N$. Thus, we can set $U_N \psi_N = \{ \alpha_0, \alpha_1, \dots , \alpha_N \} \in \cFpn$. It is then possible to show that $U_N$ is unitary; see \cite{LNSS} for details.  

With $U_N$, we can define the excitation Hamiltonian $ \cL_N = U_N H_N U_N^*$, acting on a dense subspace of $\cFpn$. 
The action of $U_N$ on creation and annihilation operators is given by 
 \begin{equation}\label{2.1.UNconjugation}
  \begin{split}
    		U_{N} a^* (\pn ) a (\pn) U_{N}^*  & = N-\cN, \\
    		U_{N} a^*(f)a (\pn) U_{N}^* &= a^*(f) \sqrt{N-\cN} = \sqrt{N} \, b^* (f), \\
    		U_{N} a^* (\pn) a(g) U_{N}^* &= \sqrt{N-\cN} a(g) = \sqrt{N} \, b (g), \\
    		U_{N} a^* (f) a(g)U_{N}^* &= a^*(f)a(g)
    		\end{split}
    		\end{equation}
for all $ f,g\in L^2_{\perp \pn} (\bR^3)$, where $ \cN$ denotes the number of particles operator in $ \cFpn$. Writing $H_N$ in second quantized form and using (\ref{2.1.UNconjugation}), we proceed as in \cite[Section 4]{LNSS} and find that
		\[ \cL_N =  \cL^{(0)}_{N}+\cL^{(1)}_{N} + \cL^{(2)}_{N} + \cL^{(3)}_{N} + \cL^{(4)}_{N} , \]
where, in the sense of forms in $ \cFpn$, we have that
	\begin{equation}\label{eq:cLNj}
	\begin{split}
	\mathcal{L}_N^{(0)} &= \big\langle \pn, \big[ -\Delta + V_{ext} + \frac{1}{2} \big(N^3 V(N\cdot) * \vert \pn \vert^2\big) \big] \pn \big\rangle (N-\cN) \\
	& \qquad -  \frac12\big\langle \pn,  \big(N^3 V(N\cdot) * \vert \pn \vert^2 \big)\pn \big\rangle (\cN+1)(1-\cN/N) ,  \\
	\mathcal{L}_N^{(1)} &=   \sqrt{N} b\left(  \left( N^3 V(N\cdot) * \vert \pn\vert^2   -8\pi \frak{a}_0|\pn|^2 \right)\pn\right)\\
	&\hspace{0.5cm} - \frac{\mathcal{N}+1}{\sqrt N}  b\left( \left( N^3 V(N\cdot) * \vert \pn\vert^2 \right) \pn \right) + \text{h.c.},\\
	\mathcal{L}_N^{(2)} &= \int dx\; \left( \nabla_x a_x^* \nabla_x a_x + V_{ext}(x)a_x^{*} a_x \right)
		\\
		&\hspace{0.5cm} + \int  dxdy\;  N^3 V(N(x-y)) \vert\pn(y) \vert^2 \Big(b_x^* b_x - \frac{1}{N} a_x^* a_x \Big)   \\
		&\hspace{0.5cm}+ \int   dxdy\; N^3 V(N(x-y)) \pn(x) \pn(y) \Big( b_x^* b_y - \frac{1}{N} a_x^* a_y \Big)  \\
		&\hspace{0.5cm}+ \frac{1}{2}  \int  dxdy\;  N^3 V(N(x-y)) \pn(y) \pn(x) \Big(b_x^* b_y^*  + \text{h.c.} \Big), \\
	\mathcal{L}_N^{(3)} &=  \int   N^{ 5/2} V(N(x-y)) \pn(y) \big( b_x^* a^*_y a_x + \text{h.c.} \big) dx dy, \\
	\mathcal{L}_N^{(4)} &= \frac{1}{2} \int  dxdy\; N^2 V(N(x-y)) a_x^* a_y^* a_y a_x .
	\end{split}
	\end{equation}	

While $\cL_N$ maps $\cFpn$ to itself, the operators $ \cL_N^{(j)}$, $j\in \{0,1,2,3,4\}$ are also well defined as operators on $\cF^{\leq N}$. 
In the following, it will be technically convenient to identify the $ \cL_N^{(j)}$, through Eq. \eqref{eq:cLNj}, as acting in $\cF^{\leq N}$ and in this case we will denote them by $\wt\cL_N^{(j)}$. Moreover, we define 
		\be  \label{eq:defwtcLN} \wt \cL_N =  \wt\cL^{(0)}_{N}+\wt\cL^{(1)}_{N} +\wt  \cL^{(2)}_{N} + \wt \cL^{(3)}_{N} + \wt \cL^{(4)}_{N} \ee
as a self-adjoint operator acting on a dense subspace in $\cF^{\leq N}$. In particular, if $\Gamma(q)$ is the orthogonal projection from $\cF^{\leq N}$ onto $\cFpn$, defined by $(\Gamma (q) \Psi )^{(n)} = q^{\otimes n} \Psi^{(n)}$ for every $n \in \{ 0, 1, \dots , N \}$, with $q = 1- |\ph \rangle \langle \ph|$, then $ \cL_N = \Gamma(q) \wt \cL_N \Gamma(q)$ in $\cFpn$.

The vacuum expectation of $\cL_N$ is given, up to terms of order one, by 
\[ N \int \left[  \vert \nabla \pn(x) \vert^2 + V_{ext}(x) \vert \pn(x)\vert^2 +  \frac{1}{2} \widehat{V}(0)\vert \pn(x)\vert^4 \right] dx \]
Since $\hat{V} (0) > 8 \pi \frak{a}_0$, the difference between $\langle \Omega, \cL_N \Omega\rangle$ and the ground state energy of $\cL_N$ (or of $H_N$) is very large, of order $N$, in the limit $N \to \infty$. The point is that, through the map $U_N$, we expand $H_N$ around the energy of the pure condensate wave function $\pn^{\otimes N}\in L^2_s(\bR^{3N})$. In the Gross-Pitaevskii regime, however, it is well-known that short scale correlations among particles play a crucial role; they even affect the energy to leading order. To extract the missing correlation energy, we will conjugate $\cL_N$ with two unitary operator, given by the exponential of a quadratic and a cubic operator in (modified) creation and annihilation operators. We follow here the basic strategy of \cite{BBCS4}. Loosely speaking, through unitary conjugation, we renormalize the singular interaction, producing a soft, mean-field potential whose ultraviolet behavior is easy to control. This procedure extracts the missing correlation energy from the higher order terms in $\cL_N$ and, at the same time, it generates the coercivity needed to show Theorem \ref{thm:main}.

In our first renormalization step, we will conjugate $ \cL_N$ with a generalized Bogoliubov transformation. To define  the kernel of the quadratic phase, we consider the ground state 
solution of the Neumann problem 
\begin{equation}\label{eq:scatl} \left[ -\Delta + \frac{1}{2} V \right] f_{\ell} = \lambda_{\ell} f_\ell \end{equation}
on the  ball $|x| \leq N\ell$, for some $0 < \ell < 1$. For simplicity, we omit here the $N$-dependence in the notation for $f_\ell$ and for $\lambda_\ell$. By radial symmetry of the interaction $V$, $f_\ell$ is radially symmetric and we normalize it such that $f_\ell (x) = 1$ if $|x| = N \ell$. By scaling, $f_\ell (N.)$ solves 
\[ \left[ -\Delta + \frac{ N^2}{2} V (Nx) \right] f_\ell (Nx) = N^2 \lambda_\ell f_\ell (Nx) \]
on the ball where $|x| \leq \ell$. Later in our analysis, we will choose the parameter $\ell > 0$ to be sufficiently small, but it will always be of order one, independent of $N$. We then extend $f_\ell (N.)$ to $\bR^3$, by setting $f_{N,\ell} (x) = f_\ell (Nx)$ if $|x| \leq \ell$ and $f_{N,\ell} (x) = 1$ for $x \in \bR^3$ with $|x| > \ell$. Thus, $ f_{\ell}(N.)$ solves the equation  
		\begin{equation}\label{eq:scatlN}
		 \left( -\Delta + \frac{N^2}{2} V (N.) \right) f_{N,\ell} = N^2 \lambda_\ell f_{N,\ell} \chi_\ell  
		\end{equation}
where $\chi_\ell$ denotes the characteristic function of the ball $ B_\ell(0)$ of radius $\ell$, centered at the origin in $\bR^3$. Finally, we denote by $ w_\ell$ the function $ w_\ell = 1-f_\ell $. Notice that by scaling, $ w_\ell(N.)$ has compact support in $ B_\ell(0)$, for all $N\in\mathbb{N}$ sufficiently large. Defining the Fourier transform of $ w_\ell $ through
		\[ \widehat w_\ell (p) = \int_{}dx\; w_\ell(x)e^{-2\pi ipx}, \]
we see that $ w_\ell (N.)$ has Fourier transform
		\[ \int_{ }dx\; w_\ell(N x)e^{-2\pi ipx} = \frac1{N^3}\widehat w_\ell (p/N) \]
and we record that \eqref{eq:scatlN} implies that 
		\[
		\begin{split}  - 4\pi^2p^2 \widehat{w}_\ell (p/N) +  \frac{N^2}{2} ( \widehat{V} (./N) \ast \widehat{f}_{N,\ell}) (p) = N^5 \lambda_\ell ( \widehat{\chi}_\ell \ast \widehat{f}_{N,\ell}) (p). \end{split} \]
The next lemma collects important properties of $ f_\ell, w_\ell$ and the Neumann eigenvalue $ \lambda_\ell$.
\begin{lemma} \label{3.0.sceqlemma}
Let $V \in L^3 (\bR^3)$ be non-negative, compactly supported and spherically symmetric. Fix $\ell> 0$ and let $f_\ell$ denote the solution of \eqref{eq:scatl}. 
\begin{enumerate}
\item [i)] We have that 
\begin{equation}\label{eq:lambdaell} 
  \lambda_\ell = \frac{3\frak{a}_0 }{(\ell N)^3} \left(1 + \mathcal{O}\left(\frac{\frak{a}_0}{\ell N}\right) \right)
\end{equation}
\item[ii)] We have $0\leq f_\ell, w_\ell \leq 1$ and there exists a constant $C > 0$ such that  
\begin{equation} \label{eq:Vfa0} 
\left|  \int_{\bR^3}  V(x) f_\ell (x) dx - 8\pi \frak{a}_0  \right| 
\leq \frac{C \frak{a}_0}{\ell N} \, 
\end{equation}
for all $\ell \in (0;1)$, $N \in \bN$.
\item[iii)] There exists a constant $C>0$ such that 
			\begin{equation}\label{3.0.scbounds1} 
			w_\ell(x)\leq \frac{C}{|x|+1} \quad\text{ and }\quad |\nabla w_\ell(x)|\leq \frac{C }{|x|^2+1}. 
			\end{equation}
for all $x\in \bR^3$, $\ell \in (0;1)$ and $N \in \bN$ large enough. Moreover,
		\[
		\Big| \frac{1}{(N \ell)^2} \int_{\bR^3} \text{w}_{\ell}(x) dx - \frac 2 5 \pi \frak{a}_0   \Big| \leq   \frac{C \mathfrak{a}_0^2}{N \ell}
		\]
for all $\ell \in (0;1)$ and $N \in \bN$ large enough.
\item[iv)] There exists a constant $C > 0$ such that 
		\begin{equation}\label{eq:whwl} |\widehat{w}_\ell (p)| \leq \frac{C}{|p|^2} \end{equation}
for all $p \in \bR^3$, $\ell \in (0;1)$ and $N \in \bN$ large enough. 
\end{enumerate}        
\end{lemma}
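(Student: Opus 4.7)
The plan is to prove the four items essentially in the order they are stated, using three basic tools: integration of the Neumann eigenvalue equation over $B_{N\ell}$ with the divergence theorem, quantitative comparison with the full-space zero-energy scattering solution $f$ of \eqref{eq:0en} (for which $f(x) = 1 - \mathfrak{a}_0/|x|$ outside $\operatorname{supp}(V)$), and the explicit radial ODE for $f_\ell$ outside $\operatorname{supp}(V)$. Integrating $-\Delta f_\ell + \tfrac{1}{2} V f_\ell = \lambda_\ell f_\ell$ over $B_{N\ell}$ and using the Neumann condition $\partial_n f_\ell|_{|x|=N\ell} = 0$ gives the central identity
\[ \tfrac{1}{2} \int V f_\ell\, dx = \lambda_\ell \int_{B_{N\ell}} f_\ell \, dx, \]
and essentially everything else follows by quantifying the two sides.

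For the bounds $0 \leq f_\ell, w_\ell \leq 1$ in (ii), positivity of $f_\ell$ follows from its characterization as the Neumann ground state with $V \geq 0$ (Perron--Frobenius). For $f_\ell \leq 1$, I would test the equation against $(f_\ell - 1)_+$: integration by parts produces no boundary term since $f_\ell = 1$ on $|x|=N\ell$, yielding $\int |\nabla (f_\ell - 1)_+|^2 + \tfrac{1}{2}\int V (f_\ell - 1)_+ f_\ell = \lambda_\ell \int f_\ell (f_\ell - 1)_+$, which for $\lambda_\ell$ smaller than the Poincar\'e constant of $B_{N\ell}$ (automatic for $N\ell$ large) forces $(f_\ell - 1)_+ \equiv 0$. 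The quantitative estimate \eqref{eq:Vfa0}, as well as \eqref{eq:lambdaell}, then come from comparing $f_\ell$ with the full-space scattering solution $f$: outside $\operatorname{supp}(V)$ both solve a perturbation of the Laplace equation, and matching across $\operatorname{supp}(V)$ via the radial ODE (with the $\lambda_\ell$-source) gives $\int V f_\ell \, dx = 8\pi \mathfrak{a}_0 + O(\mathfrak{a}_0/(N\ell))$. Combined with $\int_{B_{N\ell}} f_\ell \, dx = \tfrac{4\pi}{3}(N\ell)^3(1 + O(\mathfrak{a}_0/(N\ell)))$ (which uses the $w_\ell$ bound from the next step), the central identity yields \eqref{eq:lambdaell}.

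For (iii), by radial symmetry $w_\ell$ outside $\operatorname{supp}(V)$ solves $(r^2 w_\ell')' = \lambda_\ell r^2 f_\ell$ with boundary data $w_\ell(N\ell) = w_\ell'(N\ell) = 0$. Integrating twice and inserting $f_\ell = 1 + O(\mathfrak{a}_0/(N\ell))$ yields
\[ w_\ell(r) = \frac{\mathfrak{a}_0}{r} - \frac{3\mathfrak{a}_0}{2 N\ell} + \frac{\mathfrak{a}_0 r^2}{2(N\ell)^3} + O\!\left(\frac{\mathfrak{a}_0^2}{N\ell}\right) \]
on $[R_V, N\ell]$, where $R_V$ is the radius of $\operatorname{supp}(V)$. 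From this the pointwise bounds $w_\ell(x) \leq C/(|x|+1)$ and $|\nabla w_\ell(x)| \leq C/(|x|^2+1)$ are immediate outside $\operatorname{supp}(V)$, while inside $\operatorname{supp}(V)$ they follow from $f_\ell \in [0,1]$ together with standard elliptic estimates using $V \in L^3$ (or from direct radial ODE analysis). Term-by-term integration of the displayed formula over $B_{N\ell}$, plus a contribution of order $\mathfrak{a}_0$ from $\operatorname{supp}(V)$ (negligible relative to $(N\ell)^2$), produces the integral identity in (iii).

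For (iv), I would rewrite the scattering equation as $-\Delta w_\ell = \tfrac{1}{2} V f_\ell - \lambda_\ell f_\ell \chi_{B_{N\ell}}$. The right-hand side has $L^1$-norm at most $\tfrac{1}{2}\|V\|_1 + \lambda_\ell |B_{N\ell}| \leq C$ uniformly in $N$ and $\ell$, so its Fourier transform is uniformly bounded; since $-\Delta$ acts as multiplication by $4\pi^2|p|^2$ under the paper's Fourier convention, this yields $|\widehat w_\ell(p)| \leq C/|p|^2$ at once. The hard part is not (iv) but the quantitative comparison with the full-space scattering solution needed for (i) and \eqref{eq:Vfa0}: tracking the $\mathfrak{a}_0/(N\ell)$ error uniformly both inside and outside $\operatorname{supp}(V)$, and for $V$ only assumed in $L^3$, requires a careful matching between the interior and exterior radial solutions at $\partial \operatorname{supp}(V)$, together with a precise accounting of how $\lambda_\ell$ perturbs the exterior solution away from $1 - \mathfrak{a}_0/|x|$. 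Once that matching is in hand, everything else reduces to the Neumann integration identity and elementary radial integrals.
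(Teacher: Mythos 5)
The paper does not actually prove this lemma: it only cites Appendix B of \cite{BBCS3} together with \cite[Lemma A.1]{ESY0} and \cite[Lemma 4.1]{BBCS4}, so your proposal supplies a proof where the paper gives a pointer. Your overall architecture (integrate the Neumann equation over $B_{N\ell}$ to get the balance identity $\tfrac12\int Vf_\ell = \lambda_\ell\int_{B_{N\ell}}f_\ell$, compare with the zero-energy scattering solution, solve the exterior radial ODE for $w_\ell$, and take an $L^1$ bound on $-\Delta w_\ell = \tfrac12 Vf_\ell - \lambda_\ell f_\ell\chi_{B_{N\ell}}$ for (iv)) is the standard route and essentially matches the cited references; parts (i), (iv) and the integral identity in (iii) are sketched correctly.

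The argument you give for $f_\ell \le 1$, however, does not close. Testing against $g := (f_\ell - 1)_+$ and dropping the nonnegative $V$-term gives $\|\nabla g\|_2^2 \le \lambda_\ell\int f_\ell\, g = \lambda_\ell\bigl(\int g + \int g^2\bigr)$, since $f_\ell g = g + g^2$ on $\{f_\ell>1\}$, not $g^2$. Poincar\'e on $B_{N\ell}$ then only yields $(\mu_1 - \lambda_\ell)\|g\|_2^2 \le \lambda_\ell\|g\|_1 \le \lambda_\ell|B_{N\ell}|^{1/2}\|g\|_2$, which bounds $\|g\|_2$ but does not force $g\equiv 0$; the linear term $\lambda_\ell\int g$ is genuinely in the way, and the fact that $\lambda_\ell$ is below the Poincar\'e constant is not sufficient. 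The standard replacement is a radial argument: set $m(r)=rf_\ell(r)$; outside $\operatorname{supp}V$ one has $m''=-\lambda_\ell m\le 0$, and the Neumann condition gives $m(N\ell)=N\ell$, $m'(N\ell)=1$, so concavity forces $m(r)\le r$, i.e. $f_\ell\le 1$, on the exterior region; one then propagates into $\operatorname{supp}V$ via the integrated ODE for $m$ and the smallness of $\lambda_\ell$. A secondary, smaller gap is the interior gradient bound in (iii): with only $V\in L^3(\bR^3)$, elliptic regularity gives $f_\ell\in W^{2,3}_{\mathrm{loc}}$, which is exactly the critical case failing to embed in $W^{1,\infty}$, so ``standard elliptic estimates'' do not give $|\nabla f_\ell|\le C$ near the origin; here too one should integrate the radial ODE once and use H\"older with $\|V\|_3$ to control $\int_0^r V(s)\,s\,ds$.
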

\begin{proof}
This has already been proved in \cite[Appendix B]{BBCS3}, based on \cite[Lemma A.1]{ESY0} and \cite[Lemma 4.1]{BBCS4}. 
\end{proof}

Let us now define the correlation kernel for the generalized Bogoliubov transformation that we will be using. We denote by $G:\bR^3\to\mathbb{R}$ the rescaled function 
		\be \label{eq:defG} G(x) = -N w_\ell(Nx) \ee
which has compact support in $ B_\ell(0)$ and which, by \eqref{eq:whwl}, satisfies for all $p\in\bR^3$ 
		\be\label{eq:whGbnd} |\widehat G (p) | \leq \frac{C}{ |p|^2}. \ee
We denote by $ \chi_H$ the characteristic function of $ \{p\in \bR^3: |p|\geq \ell^{-\alpha} \}$, for fixed $\alpha>0$, and by $ \wch{\chi}_H$ its inverse Fourier transform. Finally, we define $\eta_H \in L^2(\bR^3\times \bR^3)$ by
		\be \label{eq:defetaH} \eta_H (x,y) = (G\ast \wch \chi_H ) (x-y) \pn(x)\pn(y).  \ee		
The next lemma summarizes basic properties of the kernel $\eta_H\in L^2(\bR^3\times \bR^3)$.  

\begin{lemma}\label{lm:bndsetaH}
Assume \eqref{eq:asmptsVVext}, let $\ell\in (0;1)$ and let $\alpha>0$. Set $\eta_{H,x}(y)= \eta_{H}(x;y)$ for $x\in\bR^3$. Then, there exists $C>0$, uniform in $N$, $\ell\in (0;1)$ and $x\in\bR^3$, such that 
	\be  \label{eq:bndsetaH}
	\Vert \eta_{H} \Vert \leq C \ell^{\alpha/2}, \hspace{0.5cm}  \Vert \eta_{H,x} \Vert \leq C \ell^{\alpha/2} \vert \pn(x) \vert, \hspace{0.5cm} \Vert \nabla_1 \eta_{H} \Vert, \,\Vert \nabla_2 \eta_{H} \Vert  \leq C \sqrt{N}.
	\end{equation}
	Furthermore, identifying $\eta_H(x;y)$ with the kernel of a Hilbert-Schmidt operator on $L^2(\bR^3)$ and $\eta_H^{(n)} (x;y)$ with the kernel of its $n$-th power, we have for $n\geq 2$ and $x,y\in\bR^3$ that 
	\begin{equation}\label{eq:bndsetaHn}
	\vert \eta_{H}^{(n)} (x;y) \vert 
	\leq \Vert \eta_{H,x} \Vert \cdot \Vert \eta_{H,y} \Vert \cdot \Vert \eta_{H} \Vert^{n-2} \leq  C \ell^{\alpha} \Vert \eta_{H} \Vert^{n-2} \vert \pn(x) \vert \cdot \vert \pn(y) \vert
	\end{equation}
and, for all $N$ sufficiently large, that
	\begin{equation}\label{eq:etapwbnd} |\eta_H(x;y)|\leq CN |\pn(x) | |\pn(y)| \leq CN. \end{equation}
 
\end{lemma}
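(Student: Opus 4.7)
The plan is to work mostly at the Fourier level, using the single decay estimate $|\widehat{G}(p)|\leq C/|p|^2$ from Lemma~\ref{3.0.sceqlemma}(iv), together with elementary properties of the high-pass filter $\chi_H$ and boundedness of the condensate wave function $\varphi$ (which follows from elliptic regularity applied to the Euler-Lagrange equation \eqref{eq:euler} under the assumptions \eqref{eq:asmptsVVext}).

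\textbf{Step 1 ($L^2$-bounds).} Since $G*\check\chi_H$ has Fourier transform $\widehat{G}\,\chi_H$ supported in $\{|p|\geq\ell^{-\alpha}\}$, Plancherel gives
\[
\|G*\check\chi_H\|^2=\int_{|p|\geq\ell^{-\alpha}}|\widehat G(p)|^2\,dp\;\leq\; C\!\int_{|p|\geq\ell^{-\alpha}}\!\frac{dp}{|p|^4}\;\leq\;C\ell^{\alpha}.
\]
From the definition \eqref{eq:defetaH} and $\|\varphi\|_\infty<\infty$, Fubini and a change of variables yield
\[
\|\eta_{H,x}\|^2=|\varphi(x)|^2\!\int|(G*\check\chi_H)(x-y)|^2|\varphi(y)|^2dy\leq\|\varphi\|_\infty^2\|G*\check\chi_H\|^2|\varphi(x)|^2\leq C\ell^\alpha|\varphi(x)|^2,
\]
and integrating in $x$ and using $\|\varphi\|=1$ gives $\|\eta_H\|\leq C\ell^{\alpha/2}$.

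\textbf{Step 2 (gradient bound).} The product rule gives $\nabla_2\eta_H=-\nabla(G*\check\chi_H)(x-y)\varphi(x)\varphi(y)+(G*\check\chi_H)(x-y)\varphi(x)\nabla\varphi(y)$. The second summand is controlled by Step 1 together with $\|\nabla\varphi\|<\infty$. For the first, Plancherel gives $\|\nabla(G*\check\chi_H)\|\leq\|\nabla G\|$, and rescaling together with the pointwise bound $|\nabla w_\ell(y)|\leq C/(|y|^2+1)$ from Lemma~\ref{3.0.sceqlemma}(iii) yields
\[
\|\nabla G\|^2=N\!\int|\nabla w_\ell(y)|^2\,dy\leq CN.
\]
A change of variables as in Step 1 then gives $\|\nabla_2\eta_H\|\leq C\sqrt N$, and $\nabla_1$ is symmetric. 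Note that this is the only point where one cannot rely solely on $|\widehat G(p)|\leq C/|p|^2$ — the integral $\int|p|^2|\widehat G(p)|^2\chi_H\,dp$ would diverge logarithmically; one must use the real-space pointwise estimate for $\nabla w_\ell$. This is the main technical point.

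\textbf{Step 3 ($n$-th power bound).} Since $w_\ell$ and $\chi_H$ are both radial, $G*\check\chi_H$ is even and hence $\eta_H$ is symmetric, so it defines a self-adjoint Hilbert-Schmidt operator. Writing $\eta_H^{(n)}(x;y)=\langle\overline{\eta_{H,x}},\eta_H^{(n-2)}\eta_{H,y}\rangle$, Cauchy-Schwarz combined with the operator-norm bound $\|\eta_H^{(n-2)}\|_{\mathrm{op}}\leq\|\eta_H\|^{n-2}$ gives
\[
|\eta_H^{(n)}(x;y)|\leq\|\eta_{H,x}\|\cdot\|\eta_{H,y}\|\cdot\|\eta_H\|^{n-2},
\]
and inserting the bounds of Step 1 produces the claim \eqref{eq:bndsetaHn}.

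\textbf{Step 4 (pointwise bound).} Write $\chi_H=1-\chi_L$ with $\chi_L$ the indicator of $\{|p|\leq\ell^{-\alpha}\}$, so $G*\check\chi_H=G-G*\check\chi_L$. Trivially $|G(x)|\leq N$. For the low-frequency part,
\[
\|G*\check\chi_L\|_\infty\leq\|\widehat G\,\chi_L\|_1=\int_{|p|\leq 1}|\widehat G(p)|\,dp+\int_{1\leq|p|\leq\ell^{-\alpha}}|\widehat G(p)|\,dp,
\]
where the first integral is bounded by $\|G\|_1\leq C\ell^2$ (using the rescaling and Lemma~\ref{3.0.sceqlemma}(iii)), and the second is bounded using $|\widehat G(p)|\leq C/|p|^2$ by $C\ell^{-\alpha}$. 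For $N$ large enough (depending on $\alpha,\ell$), the sum is absorbed into $CN$, and multiplication by $\varphi(x)\varphi(y)$ and $\|\varphi\|_\infty<\infty$ gives \eqref{eq:etapwbnd}.
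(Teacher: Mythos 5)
Your proof is correct and follows essentially the same route as the paper: the $L^2$ bounds via Plancherel and the high-pass cutoff $\chi_H$, the gradient bound via the real-space decay $|\nabla w_\ell(y)|\leq C/(|y|^2+1)$ (you correctly flag that the Fourier bound $|\widehat G(p)|\leq C/|p|^2$ alone would give only a logarithmically divergent estimate), the $n$-th power bound via Cauchy--Schwarz and submultiplicativity, and the pointwise bound via $\|G\|_\infty\leq N$ plus control of the low-frequency correction. The only minor divergence is in Step 4, where you estimate $\|G*\check\chi_L\|_\infty\leq\|\widehat G\chi_L\|_1$ by splitting $|p|\leq 1$ from $1\leq|p|\leq\ell^{-\alpha}$, while the paper instead uses Young's inequality $\|G*\check\chi_{H^c}\|_\infty\leq\|G\|_1\|\check\chi_{H^c}\|_\infty$; both yield a bound that is $\ell$-dependent but $N$-independent, hence $\leq CN$ for $N$ large, so the difference is cosmetic.
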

\begin{proof}
By Theorem \ref{thm:gpmin1}, $\| \pn\|_\infty\leq C<\infty$ so that together with Eq. \eqref{eq:whGbnd}, we find that
	\begin{align*}
	\Vert \eta_{H,x} \Vert^2
	&= \int dy \ \vert (G\ast \widecheck{\chi}_H)(x-y)\vert^2 \cdot \vert \pn(x) \vert^2 \cdot \vert \pn(y) \vert^2 \\
	&\leq \Vert \pn \Vert_{\infty}^2 \cdot \vert \pn(x) \vert^2 \cdot \Vert G \ast \widecheck{\chi}_H \Vert^2
	\leq C \vert \pn(x) \vert^2 \int_{\vert p \vert \geq \ell^{-\alpha}} dp \ \vert \widehat{G}(p) \vert^2 \\
	&\leq C \vert \pn(x) \vert^2 \int_{\vert p \vert \geq \ell^{-\alpha}} dp \ \frac{1}{\vert p \vert^4}
	= C \ell^{\alpha} \vert \pn(x) \vert^2. 
	\end{align*}
This concludes the first two bounds in \eqref{eq:bndsetaH}, once we integrate the right hand side of the last estimate. To bound the gradient, we proceed similarly and find for $i=1,2$ that
	\begin{align*}
	\Vert \nabla_i  {\eta}_H \Vert^2
	&\leq  \int dx dy \ \vert \nabla(G \ast \widecheck{\chi}_H)(x-y)\vert^2 \cdot \vert \pn(x) \vert^2 \cdot \vert \pn(y) \vert^2   \\
	& \hspace{0.5cm}+ \int dx dy \ \vert (G \ast \widecheck{\chi}_H)(x-y)\vert^2 \cdot \vert \nabla \pn(x) \vert^2 \cdot \vert \pn(y) \vert^2\\
	&\leq C \Vert   (G\ast \widecheck{\chi}_H) \Vert_{H^1}^2
	\leq C \Vert  G \Vert_{H^1}^2
	\leq C \int dx \ \frac{N^4}{(\vert N x \vert^2 + 1)^2} = CN.
	\end{align*}
Notice that we used the pointwise estimate \eqref{3.0.scbounds1} in the second to last step. The estimates in \eqref{eq:bndsetaHn} follow directly from \eqref{eq:bndsetaH} and Cauchy-Schwarz. Finally, \eqref{eq:etapwbnd} follows from $\|\pn\|_\infty\leq C, \|G\|_\infty\leq N$ as well as $\| G\ast \widecheck{\chi_{H^c}}\|_\infty \leq \|G\|_1\|\widecheck{\chi_{H^c}}\|_\infty \leq C\ell^{1-3\alpha/2}\leq CN$ for $N$ large enough. Here, $\widecheck{\chi}_{H^c}$ denotes the inverse Fourier transform of the characteristic function on $ P_H^c$.
\end{proof}

With the kernel $\eta_H$, we consider the quadratic expression 
\begin{equation}\label{eq:B} B = \frac{1}{2} \int dx dy \, \eta_H (x;y) \, 
\left[  b_x^* b_y^* - b_x b_y \right] \end{equation} 
and the unitary operator $e^B : \cF^{\leq N} \to \cF^{\leq N}$ (since $\eta_H$ is real, the operator $B$ is antisymmetric). Because of its similarity with a Bogoliubov transformation (which would have $a_x, a_y$, instead of the modified fields $b_x, b_y$ in (\ref{eq:B})), we call $e^B$ a generalized Bogoliubov transformation. Notice that we do not project $\eta_H$ into the orthogonal complement of the condensate wave function $\ph$. As a consequence, $B$ and $e^B$ do not map $\cF_{\perp \ph}^{\leq N}$ into itself. This is not a problem for us, because we perform our analysis on the larger space $\cF^{\leq N}$ and only at the end (in Section \ref{sec:proof}) we switch back to the right space. An important properties of the unitary operator $e^B$ is that it preserves the number of particles, up to corrections of order one. The following lemma was proved in \cite[Lemma 3.1]{BS}; it is based on the observation that $\| \eta_H \| \leq C \ell^{\alpha/2} \leq C$. 
\begin{lemma} \label{lm:Npow} Let $B$ be the antisymmetric operator defined in (\ref{eq:B}). For every $n \in \bZ$ there exists a constant $C > 0$ such that  
		\[ e^{-B} (\cN +1)^n  e^B \leq C (\cN+1)^{n}  \]
as an operator inequality on $\cF^{\leq N}$.
\end{lemma}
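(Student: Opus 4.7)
The plan is to run a standard Grönwall argument on the quantity
\[ f(s) = \langle \Psi, e^{-sB}(\cN+1)^n e^{sB}\Psi\rangle, \quad s \in [0,1], \]
for arbitrary $\Psi \in \cF^{\leq N}$. Since $\eta_H$ is real (and, in fact, symmetric in $x,y$), the operator $B$ is antisymmetric, so $e^{sB}$ is unitary and differentiation in $s$ gives $f'(s) = \langle \Phi_s, [(\cN+1)^n, B]\Phi_s\rangle$ with $\Phi_s = e^{sB}\Psi$. If we can establish the operator bound $\pm[(\cN+1)^n, B] \leq C_n (\cN+1)^n$ on $\cF^{\leq N}$, then $|f'(s)| \leq C_n f(s)$ and Grönwall yields $f(1) \leq e^{C_n} f(0)$, which is the statement of the lemma for $n \geq 0$.

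To prove the commutator bound, I would exploit the identities $[\cN, b_x] = -b_x$ and $[\cN, b_x^*] = b_x^*$ (which hold exactly, despite the modified commutation relations \eqref{eq:comm-b}) to get $(\cN+1)^n b_x^* b_y^* = b_x^* b_y^* (\cN+3)^n$ and, analogously, $(\cN+1)^n b_x b_y = b_x b_y (\cN-1)^n$. Thus
\[ [(\cN+1)^n, B] = \tfrac{1}{2}\!\int\! dx\,dy\; \eta_H(x;y) \Bigl[ b_x^* b_y^*\bigl((\cN+3)^n - (\cN+1)^n\bigr) + b_xb_y\bigl((\cN+1)^n - (\cN-1)^n\bigr)\Bigr]. \]
By functional calculus, $(\cN+3)^n - (\cN+1)^n \leq C_n (\cN+1)^{n-1}$ on $\cF^{\leq N}$ for $n\geq 1$ (and similarly for the second difference). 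The resulting expression is then estimated by writing $\int dy\, \eta_H(x;y) b_y^* = b^*(\eta_{H,x})$, applying Cauchy--Schwarz in the $x$-integration, and invoking the standard bounds $\|b^*(\eta_{H,x})\Phi\| \leq \|\eta_{H,x}\|\,\|(\cN+1)^{1/2}\Phi\|$ and $\int dx\, \|b_x\Phi\|^2 \leq \|\cN^{1/2}\Phi\|^2$. Combined with the key input $\|\eta_H\| \leq C\ell^{\alpha/2}$ from Lemma \ref{lm:bndsetaH}, this gives the desired bound $\pm [(\cN+1)^n,B] \leq C_n \|\eta_H\|\,(\cN+1)^n \leq C_n(\cN+1)^n$ for every $n \geq 1$ (the case $n=0$ is trivial).

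For $n < 0$, I would deduce the bound by inverting operator inequalities. Applying the positive case with $-B$ in place of $B$ (also antisymmetric, with the same $\eta_H$) gives $e^{B}(\cN+1)^{|n|}e^{-B} \leq C(\cN+1)^{|n|}$, i.e.\ $(\cN+1)^{|n|} \leq C\, e^{-B}(\cN+1)^{|n|}e^{B}$. Inverting this positive operator inequality yields $e^{-B}(\cN+1)^{-|n|}e^B \leq C(\cN+1)^{-|n|}$, as required. The main technical obstacle is really concentrated in the Cauchy--Schwarz step: one must take care that the factor $(\cN\pm c)^{n-1}$ appearing between the two $b$-operators is reabsorbed against $(\cN+1)^n$ on both sides (this is where writing $b_y^*(\cN+3)^{n-1} = (\cN+1)^{n-1}b_y^*$ and then moving all number operators to one side before applying Cauchy--Schwarz is essential). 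All other steps are routine bookkeeping that relies only on $\|\eta_H\| \leq C$ and on the standard $\cN$-bounds \eqref{eq:abd} for the modified fields $b_x, b_x^*$.
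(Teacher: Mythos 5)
Your proposal is correct and follows essentially the same Gr\"onwall-plus-commutator argument used in \cite[Lemma 3.1]{BS}, which is the reference the paper cites for this statement (no independent proof is given here), and which the paper itself runs through explicitly in the proof of the analogous Lemma~\ref{lm:Npowcubic} for the cubic phase $A$. One small slip in a parenthetical: since a single $b_y^*$ raises $\cN$ by exactly one, the correct pull-through identity is $b_y^*(\cN+2)^{n-1}=(\cN+1)^{n-1}b_y^*$, not $b_y^*(\cN+3)^{n-1}=(\cN+1)^{n-1}b_y^*$; the shift of two that produces $(\cN+3)^n$ in your commutator formula comes from moving past the \emph{pair} $b_x^*b_y^*$, not a single creation operator. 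This does not affect the substance: inserting $g(\cN)^{1/2}$ with $g(\cN)=(\cN+3)^n-(\cN+1)^n\le C_n(\cN+1)^{n-1}$, commuting it across the $b$-fields, and applying Cauchy--Schwarz in $x$ together with $\int dx\,\|b_x\xi\|^2\le\|\cN^{1/2}\xi\|^2$, $\|b^*(\eta_{H,x})\xi\|\le\|\eta_{H,x}\|\,\|(\cN+1)^{1/2}\xi\|$ and $\|\eta_H\|\le C$ indeed gives $\pm[(\cN+1)^n,B]\le C_n(\cN+1)^n$. Your reduction of negative $n$ to positive $n$ via the operator-monotonicity statement $A\le CB\Rightarrow B^{-1}\le CA^{-1}$ for positive operators is also the standard device and is correct.
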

Other important properties of the generalized Bogoliubov transformation $e^B$ will be discussed at the beginning of Section \ref{sec:GNell} (in particular, we will show there that, on states with few excitations, $e^B$ acts like a standard Bogoliubov transformation, up to small errors).

With $ \wt \cL_N $ from \eqref{eq:defwtcLN}, we can now define the quadratically renormalized excitation Hamiltonian 
\be  \label{eq:defGNell}\cG_{N} = e^{-B} \wt \cL_N e^{B}. \ee
The next proposition summarizes important properties of $\cG_{N}$. Before stating it, let us introduce the notation 
\[
\begin{split}
\cK & = \int dx\; a^*_x( -\Delta_x) a_x, \hspace{0.5cm} \cV_N = \frac12 \int dxdy\; N^2V(N(x-y))a^*_x a^*_y a_y a_x, \\
\cV_{ext} & = \int dx \; V_{ext} (x) a_x^* a_x, \hspace{.4cm}  \cH_N = \cK + \cV_{ext} + \cV_N
\end{split}
\]

\begin{prop}\label{prop:GNell}
Assume \eqref{eq:asmptsVVext} and let $\eta_H$ be defined as in \eqref{eq:defetaH}, with an $\alpha > 3$. Let 
\begin{equation}\label{eq:propGNell}
\begin{split}
\cG_N^\text{eff} &= N\cE_{GP}(\pn)  -\eps_{GP}\,\cN + 4\pi \an \| \pn\|_4^4\,\cN^2/N   + \cH_N  +2 \,\widehat{V}(0) \int dx  \vert \pn(x) \vert^2 b_x^* b_x \\
	&\quad+ 4\pi a_0 \int dx dy \, \widecheck{\chi}_{H^c}(x-y) \pn(x) \pn(y) ( b_x^* b_y^*+\emph{h.c.}) \\
	&\quad + \frac{1}{\sqrt{N}} \int dx dy \ N^3 V(N(x-y)) \pn(x) (b_x^* a_y^* a_x + \emph{h.c.})
\end{split} \end{equation}
where $\eps_{GP}$ has been introduced in (\ref{eq:euler}) and $\chi_{H^c}$ denotes the characteristic function of the set $\{ p\in\bR^3: |p|\leq \ell^{-\alpha}\}$. Then we have $\cG_N = \cG_N^\text{eff} + \cE_{\cG_N}$, where we can bound 	
\begin{equation*}
	\begin{split}
	\pm \cE_{\cG_{N}} \leq&\; C \ell^{(\alpha-3)/2} (\cN+\cK+\cV_N) + C\ell^{-5\alpha/2}N^{-1}(\cN+1)^2 + C \ell^{-4\alpha}  	\end{split}
	\end{equation*}
for a constant $C$, independent of $N \in \bN$ and $\ell \in (0;1)$.  
\end{prop}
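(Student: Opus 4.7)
The plan is to compute $\cG_N = e^{-B}\wt\cL_N e^B$ by conjugating each $\wt\cL_N^{(j)}$ separately, expanding through the identity $e^{-B}Ae^B = A + \int_0^1 e^{-sB}[A,B]e^{sB}\,ds$, iterated until the remaining nested commutators are negligible. Two tools will make this tractable. First, Lemma \ref{lm:Npow} lets us transport powers of $(\cN+1)$ freely through $e^{\pm B}$. Second, because $\|\eta_H\| \leq C\ell^{\alpha/2}$ by Lemma \ref{lm:bndsetaH}, on few-excitation states $e^B$ mimics a standard Bogoliubov transformation: a Gr\"onwall argument based on the commutators \eqref{eq:comm-b}--\eqref{eq:comm-b2} shows $e^{-B}b_x e^B = b(\cosh_{\eta_H}(x;\cdot)) + b^*(\sinh_{\eta_H}(x;\cdot)) + \mathcal{R}_x$, and similarly for $a_x$, with $\mathcal{R}_x$ bounded in norm by $N^{-1/2}(\cN+1)^{3/2}$; the kernels $\cosh_{\eta_H}, \sinh_{\eta_H}$ are controlled pointwise by \eqref{eq:bndsetaHn}.

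Next I would identify the source of each summand of $\cG_N^\text{eff}$. The scalar $\wt\cL_N^{(0)}$ produces, via \eqref{eq:euler} and \eqref{eq:Vfa0}, the contribution $N\cE_{GP}(\pn) - \eps_{GP}\cN + 4\pi\mathfrak{a}_0\|\pn\|_4^4\,\cN^2/N$ modulo conjugation errors. The quadratic $\wt\cL_N^{(2)}$ keeps the kinetic and trap part $\cK + \cV_{ext}$ to leading order; since $N^3V(N(x-y))$ approximates $\widehat{V}(0)\delta(x-y)$ when tested against $\pn\otimes\pn$, its direct and exchange $b^*b$ channels combine to yield $2\widehat{V}(0)\int |\pn|^2 b_x^*b_x\,dx$. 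The cubic $\wt\cL_N^{(3)}$ persists, after a $\pn(y)\to\pn(x)$ replacement controlled by Lemma \ref{3.0.sceqlemma}(iii), as the explicit cubic summand of $\cG_N^\text{eff}$. The interaction $\cV_N = \wt\cL_N^{(4)}$ is kept, while $\wt\cL_N^{(1)}$ produces only errors; the remaining off-diagonal $4\pi\mathfrak{a}_0 \wch\chi_{H^c}$ term will come from commutators.

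The heart of the argument lies in combining $[\cK, B]$ with the leading part of $[\cV_N, B]$. Using the Neumann scattering equation \eqref{eq:scatlN}, these two singular contributions combine, at the level of the $b^*b^*$ kernel, to produce a smooth effective kernel equal to $8\pi\mathfrak{a}_0\wch\chi_{H^c}(x-y)\pn(x)\pn(y)$ up to admissible errors, by \eqref{eq:a0} and Lemma \ref{3.0.sceqlemma}(i)--(ii). Symmetrization then yields the missing $4\pi\mathfrak{a}_0\int \wch\chi_{H^c}(x-y)\pn(x)\pn(y) (b_x^*b_y^* + \text{h.c.})\,dxdy$ summand of $\cG_N^\text{eff}$.

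The main obstacle will be the bookkeeping of four families of remainders: (i) discrepancies between modified $b$-operators and genuine $a$-operators, of size $\cN/N$, contributing an $N^{-1}(\cN+1)^2$ term; (ii) higher-order commutators in the BCH expansion, bounded using $\|\eta_H\|_\infty \leq CN$ and $\|\eta_H\| \leq C\ell^{\alpha/2}$ to yield $\ell^{-5\alpha/2}N^{-1}(\cN+1)^2$; (iii) mismatches stemming from the momentum cutoff $\chi_H$, giving $\ell^{(\alpha-3)/2}(\cN + \cK + \cV_N)$ after Cauchy--Schwarz against $\cK$ (to absorb $\|\nabla\eta_H\| \lesssim \sqrt{N}$) and against $\cV_N$ (to handle the short-scale singularity); and (iv) scalar residues of order $\ell^{-4\alpha}$ from iterated vacuum-like contractions. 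Collected, these assemble the stated bound on $\cE_{\cG_N}$.
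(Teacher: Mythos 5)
Your plan follows the same overall route as the paper (conjugate each $\wt\cL_N^{(j)}$ separately via the generalized-Bogoliubov expansion of Lemma~\ref{lm:d-bds}, extract the effective kernels via the scattering equation), but there is a genuine gap in the bookkeeping of the linear terms that would sink the proof as described.

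You claim that ``$\wt\cL_N^{(1)}$ produces only errors.'' This is false. The dominant piece of $\wt\cL_N^{(1)}$ is
\[
\sqrt{N}\, b\big( (N^3V(N\cdot)*|\pn|^2 - 8\pi\mathfrak{a}_0|\pn|^2)\pn\big) + \text{h.c.},
\]
and since $N^3V(N\cdot)*|\pn|^2 \to \widehat V(0)|\pn|^2$ with $\widehat V(0) > 8\pi\mathfrak{a}_0$, the argument of $b(\cdot)$ has $L^2$-norm of order one. After conjugation this produces a contribution of size $\sqrt{N}(\cN+1)^{1/2}$, which is of order $\sqrt{N}$ and cannot possibly be absorbed into the error $\cE_{\cG_N}$, which is bounded by $\ell^{-4\alpha}$ uniformly in $N$. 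The resolution in the paper is an exact cancellation: conjugating the cubic term $\wt\cL_N^{(3)}$ with $e^B$ generates, besides the explicit cubic summand of $\cG_N^{\text{eff}}$, the term $-\big[\sqrt{N}\,b(\cosh_{\eta_H}(h_N)) + \sqrt{N}\,b^*(\sinh_{\eta_H}(h_N)) + \text{h.c.}\big]$ with $h_N = (N^3(Vw_\ell)(N\cdot)*|\pn|^2)\pn$, which cancels exactly against the $O(\sqrt{N})$ part of $\cG_N^{(1)}$ (see Propositions~\ref{prop:GNell1} and~\ref{prop:GNell3} and the remark at the start of Section~\ref{sec:proofpropGNell}). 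Without recognizing this cancellation one cannot close the estimate, so the claim that the cubic term simply ``persists'' is also misleading: it persists only after donating a large linear piece to cancel $\cG_N^{(1)}$.

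A second, smaller imprecision: the scalar contribution $N\cE_{GP}(\pn) - \eps_{GP}\cN + 4\pi\mathfrak{a}_0\|\pn\|_4^4\,\cN^2/N$ cannot come from $\wt\cL_N^{(0)}$ ``modulo conjugation errors'' alone, because $\wt\cL_N^{(0)}$ carries the bare coupling $\widehat V(0)$, not $8\pi\mathfrak{a}_0$. The replacement $\widehat V(0)\to 8\pi\mathfrak{a}_0$ in the $O(N)$, $O(\cN)$ and $O(\cN^2/N)$ pieces is itself a leading-order effect of the conjugation: it requires combining the $(N-\cN)$-level term of $\wt\cL_N^{(0)}$ with the $\|\nabla_1\eta_H\|^2$ term from $e^{-B}\cK e^B$, the $\eta_H$-diagonal term from $\cG_N^{(2)}$, and the $N\int N^3V w_\ell^2\,|\pn|^2|\pn|^2$ term from $\cG_N^{(4)}$, after which the scattering equation \eqref{eq:scatlN} and Lemma~\ref{3.0.sceqlemma}(ii) effect the replacement. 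So the attribution of the GP-energy scalar to $\wt\cL_N^{(0)}$ should be treated as schematic, and the conjugation corrections are not errors but contribute the renormalization of the coupling.
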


{\it Remark:} In (\ref{eq:B}), we could have projected the kernel $\eta_H$ orthogonally to $\ph$. In this way, we could have defined $\cG_N$ as an operator on $\cF^{\leq N}_{\perp \ph}$ (conjugating $\cL_N$, rather than $\wt{\cL}_N$, with $e^B$). Also with this definition of $\cG_N$, we could have proven bounds similar to those in Prop. \ref{prop:GNell}, at the expenses of a longer proof (this procedure would also give an upper bound for the ground state energy, as in the remark after Theorem \ref{thm:main}).  

While the vacuum expectation of $\cG_N$ gives now, to leading order, the correct ground state energy of the Hamiltonian (\ref{eq:defHN}), the estimates in Prop. \ref{prop:GNell} are still not enough to show a bound for $\cN$ in low-energy states. The main issue is the cubic term in \eqref{eq:propGNell}. With Cauchy-Schwarz we can bound it with the positive quartic term $\cV_N$ (contained in $\cH_N$), but this produces a negative quadratic contribution that kills coercivity. To overcome this problem, we have to renormalize the excitation Hamiltonian $\cG_{N}$ again (actually, only its main part $\cG_N^\text{eff}$), but this time through the exponential of an expression cubic in creation and annihilation operators. 

If $ \chi_H$ denotes the characteristic function of the set $\{ p\in\bR^3: |p|\geq \ell^{-\alpha}\} $ and $G$ is defined as in \eqref{eq:defG}, we define the kernel 
	\begin{equation} \label{eq:defnuH}
	\nu_{H}(x;y) = (G* \widecheck{\chi}_H)(x-y) \pn(y).
	\end{equation}
For our analysis below, it will also be useful to introduce	
	\[
	\wt{k}(x;y) = - N w_\ell(N(x-y)) \pn(y)
	\]
so that in particular $\nu_{H}(x;y) = \wt k (x;y) - (G* \widecheck{\chi}_{H^c})(x-y) \pn(y) $. The next lemma summarizes important properties of $\nu_H$ and its relation to $\wt k$.
\begin{lemma}\label{lm:bndsnuH}
		Assume \eqref{eq:asmptsVVext} and let $\ell\in (0;1)$. Then $\nu_H$ satisfies 
		\begin{equation} \label{eq:nuHL2bnd}
		\Vert \nu_{H} \Vert \leq C \ell^{\alpha/2},\hspace{0.5cm} \Vert \nu_{H,x} \Vert \leq C \ell^{\alpha/2}, \hspace{0.5cm} 
		\Vert \nu_{H,y} \Vert \leq C \ell^{\alpha/2} \vert \pn(y) \vert
		\end{equation}
for all $ x,y\in \bR^3$ and for all $N$ sufficiently large. Moreover, denoting by $ \widehat{\nu}_{H}(p;q) = \wh{G}(p) \chi_H(p) \wh{\pn}(p+q)$ the Fourier transform of $ \nu_H$ as a function in $ L^2(\bR^3\times \bR^3)$, we have for all $ p,q\in \bR^3$ that	
		\begin{equation} \label{eq:whnuHL2bnd}
		\Vert \widehat{\nu}_{H,p} \Vert \leq \frac{C}{\vert p \vert^2} \chi_H(p)
		, \quad \Vert \widehat{\nu}_{H,q} \Vert \leq C\ell^{\alpha/2}.
		\end{equation}
\end{lemma}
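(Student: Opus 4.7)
The plan is to proceed by direct computation, mirroring the strategy used in Lemma~\ref{lm:bndsetaH}, and exploiting two inputs: the decay estimate $|\widehat{G}(p)| \leq C|p|^{-2}$ from \eqref{eq:whGbnd}, and the uniform bound $\|\pn\|_\infty \leq C$ (cf.\ the proof of Lemma~\ref{lm:bndsetaH}). The basic intermediate quantity is the $L^2$-norm of the convolution $G*\widecheck{\chi}_H$, which by Plancherel satisfies
\[
\|G*\widecheck{\chi}_H\|^2 \;=\; \int_{|p|\geq \ell^{-\alpha}} |\widehat{G}(p)|^2\, dp \;\leq\; C\int_{|p|\geq \ell^{-\alpha}} \frac{dp}{|p|^4} \;=\; C\ell^\alpha.
\]

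To prove \eqref{eq:nuHL2bnd}, I would fix $x$ and estimate $\|\nu_{H,x}\|^2 = \int |(G*\widecheck{\chi}_H)(x-y)|^2 |\pn(y)|^2\, dy \leq \|\pn\|_\infty^2\, \|G*\widecheck{\chi}_H\|^2 \leq C\ell^\alpha$. For fixed $y$, the factor $|\pn(y)|^2$ pulls out of the integral and gives $\|\nu_{H,y}\|^2 \leq C\ell^\alpha |\pn(y)|^2$. Integrating this estimate in $y$ and using $\|\pn\|=1$ yields $\|\nu_H\|^2 \leq C\ell^\alpha$.

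For the Fourier identity, substituting $u=x-y$ in the double integral defining $\widehat{\nu}_H(p;q)$ decouples the two integrations and produces $\widehat{\nu}_H(p;q) = \widehat{G}(p)\chi_H(p)\widehat{\pn}(p+q)$. The bound on $\|\widehat{\nu}_{H,p}\|$ then follows by pulling $\widehat{G}(p)\chi_H(p)$ outside the $q$-integral and using $\int |\widehat{\pn}(p+q)|^2\, dq = \|\pn\|^2 = 1$, together with \eqref{eq:whGbnd}. For $\|\widehat{\nu}_{H,q}\|$ I would instead use the pointwise estimate $|\widehat{G}(p)\chi_H(p)| \leq C|p|^{-2}\chi_H(p) \leq C\ell^{2\alpha}$ on the support of $\chi_H$, and then $\int |\widehat{\pn}(p+q)|^2\, dp = 1$, obtaining $\|\widehat{\nu}_{H,q}\|^2 \leq C\ell^{4\alpha} \leq C\ell^\alpha$ since $\ell \in (0;1)$ and $\alpha>0$.

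No serious obstacle is expected; the only mild subtlety, relative to Lemma~\ref{lm:bndsetaH}, is that $\nu_H$ carries only a single factor of $\pn$, so the bound on $\|\widehat{\nu}_{H,q}\|$ cannot be obtained by a Cauchy--Schwarz argument on two $\widehat{\pn}$-convolutions. Instead, one has to exploit the smallness of the multiplier $\widehat{G}\chi_H$ in $L^\infty$ on the ultraviolet region $\{|p|\geq \ell^{-\alpha}\}$, which is precisely what delivers the factor $\ell^{\alpha/2}$ in the second bound of \eqref{eq:whnuHL2bnd}.
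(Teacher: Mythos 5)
Your proof is correct and follows essentially the same route as the paper's: use Plancherel to bound $\|G*\widecheck{\chi}_H\|$ by $C\ell^{\alpha/2}$ via the decay $|\widehat G(p)|\leq C|p|^{-2}$ from \eqref{eq:whGbnd}, read off the three position-space bounds from the factorized structure $\nu_H(x;y)=(G*\widecheck{\chi}_H)(x-y)\pn(y)$, and derive the two momentum-space bounds from the explicit Fourier identity. The only minor deviation is in the final estimate for $\|\widehat\nu_{H,q}\|$, where you use $\|\widehat G\chi_H\|_\infty\leq C\ell^{2\alpha}$ together with $\|\widehat\pn(\cdot+q)\|=1$ rather than $\|\widehat G\chi_H\|\leq C\ell^{\alpha/2}$ together with $\|\widehat\pn\|_\infty\leq\|\pn\|_1\leq C$; both are valid and your version actually gives a slightly stronger power of $\ell$.
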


\begin{proof}

Using \eqref{eq:whGbnd} we find that 
		\begin{align*}
		\Vert \nu_{H} \Vert \leq \Vert G * \widecheck{\chi}_H \Vert \, \Vert \pn \Vert  \leq C \Vert \wh{G} \chi_H \Vert \leq C \ell^{\alpha/2}
		\end{align*}
as well as
		\begin{align*}
		\Vert \nu_{H,x} \Vert \leq \Vert \pn \Vert_\infty  \Vert G * \widecheck{\chi}_H \Vert \leq C \ell^{\alpha/2}, \hspace{0.5cm} \Vert \nu_{H,y} \Vert = \vert \pn(y) \vert   \Vert G * \widecheck{\chi}_H \Vert \leq C \ell^{\alpha/2} \vert \pn(y) \vert.
		\end{align*}
Finally, \eqref{eq:whnuHL2bnd} follows from $ \widehat{\nu}_{H}(p;q) = \wh{G}(p) \chi_H(p) \wh{\pn}(p+q)$ and the estimate \eqref{eq:whGbnd}.
\end{proof}
	
A part from $\chi_H$, we will also need a second cutoff, localising on small momenta (we are constructing the cubic operator (\ref{definition A with l}), involving three particles; one particle should have small momentum, the other two large momenta, similarly to \cite[Eq. (5.1)]{BBCS4} in the translation invariant case). Here, it is convenient to use the Gaussian function  
\begin{equation} \label{definition chi Ll}
\gl(p) =  e^{- (\ell^{\beta}p)^2 }
\end{equation}
for an exponent $0 < \beta < \alpha$ ($g_L$ localises on momenta $|p| \lesssim \ell^{-\beta} \ll \ell^{-\alpha}$). Notice that the inverse Fourier transform of $\gl$ is given by
\[\cgl(x) = (\sqrt{\pi}  \ell^{-\beta} )^3 e^{ -(\pi \ell^{-\beta} x)^2 }.\]
In particular, it satisfies
\begin{equation} \label{Lp norms of widecheck chi L,l}
\Vert \cgl \Vert_1 =1, \quad 	\Vert \cgl \Vert = C \ell^{-\frac{3 \beta}{2}}.
\end{equation}	

With $ \nu_H$ defined as in \eqref{eq:defnuH} and $\gl$ from \eqref{definition chi Ll}, we introduce the operator 
\begin{equation} \label{definition A with l}
A = \frac{1}{\sqrt{N}} \int dx dy dz\, \nu_H(x;y) \cgl(x-z) \big( b_x^* a_y^* a_z -\text{ h.c.}\big).
\end{equation}
Since $A$ is anti-symmetric, $e^A:\cF^{\leq N} \to \cF^{\leq N}$ is a unitary map. An important observation is that conjugation with $e^A$ only increases the number of particles by a constant 
of order one, independent of $N$ (this result is similar to Lemma \ref{lm:Npow}, for the action of the generalized Bogoliubov transform $e^B$). 
\begin{lemma}\label{lm:Npowcubic}
Assume \eqref{eq:asmptsVVext}, let $ \ell\in (0;1)$, $ t\in [-1;1]$, $\alpha >0$ and let $k\in \mathbb{Z}$. Then, there exists a constant $C=C(k)>0$ such that in the sense of forms on on $\mathcal{F}^{\leq N}$, we have 
		\begin{equation} \label{rough bound e-A (N+1) eA with l}
		e^{-tA} (\mathcal{N}+1)^k e^{tA} \leq C (\mathcal{N}+1)^k.
		\end{equation}
\end{lemma}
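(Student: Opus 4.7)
I will run a Grönwall argument for
\[
f_\xi(t) := \langle e^{tA}\xi, (\cN+1)^k e^{tA}\xi\rangle, \qquad \xi \in \cF^{\leq N},
\]
aiming to show $|f_\xi'(t)| \leq C_k f_\xi(t)$ with a constant independent of $N$ and of $\ell \in (0;1)$, from which the stated operator inequality follows by integration over $t \in [-1,1]$. Differentiating gives $f_\xi'(t) = \langle \xi_t, [(\cN+1)^k, A]\xi_t\rangle$ with $\xi_t := e^{tA}\xi$, so the task reduces to a uniform commutator bound.

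I decompose $A = A_+ - A_+^*$, where
\[
A_+ := \frac{1}{\sqrt{N}} \int dx\, dy\, dz\, \nu_H(x;y)\, \cgl(x-z)\, b_x^* a_y^* a_z
\]
raises the particle number by one, so $\cN A_+ = A_+(\cN+1)$. This yields
\[
[(\cN+1)^k, A_+] = A_+ D_k(\cN), \qquad D_k(\cN) := (\cN+2)^k - (\cN+1)^k,
\]
with $0 \leq D_k(\cN) \leq C_k(\cN+1)^{k-1}$ for $k \geq 1$, and an analogous formula for the commutator with $A_+^*$. Using the shift identity $A_+ f(\cN) = f(\cN-1) A_+$, I symmetrize $A_+ D_k(\cN) = D_k(\cN-1)^{1/2} A_+ D_k(\cN)^{1/2}$ so that the $(\cN+1)$-weights are distributed evenly between the two sides of the inner product.

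The heart of the argument is the operator estimate
\[
|\langle \eta, A_+ \psi\rangle| \leq \frac{C\ell^{\alpha/2}}{\sqrt{N}} \|\cN\eta\|\, \|(\cN+1)^{1/2}\psi\|,
\]
which I derive by the substitution $z = x-v$ and Minkowski's inequality in $v$ (exploiting $\|\cgl\|_1 = 1$ from \eqref{Lp norms of widecheck chi L,l}), followed by Cauchy--Schwarz in position space using the bound $\sup_x \|\nu_{H,x}\| \leq C\ell^{\alpha/2}$ from Lemma \ref{lm:bndsnuH} together with the standard identity $\int dx\, dy\, \|a_y b_x\eta\|^2 \leq \|\cN\eta\|^2$, itself a consequence of $\int dx\, b_x^* b_x \leq \cN$ and $[\cN, b_x] = -b_x$. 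Inserting $\eta = D_k(\cN-1)^{1/2}\xi_t$ and $\psi = D_k(\cN)^{1/2}\xi_t$ and applying the operator inequality $\|(\cN+1)^s D_k(\cN)^{1/2}\chi\| \leq C_k \|(\cN+1)^{s+(k-1)/2}\chi\|$ gives
\[
|\langle \xi_t, A_+ D_k(\cN)\xi_t\rangle| \leq \frac{C}{\sqrt{N}} \|(\cN+1)^{(k+1)/2}\xi_t\|\, \|(\cN+1)^{k/2}\xi_t\| \leq C f_\xi(t),
\]
the last step absorbing $(\cN+1)^{1/2} \leq (N+1)^{1/2}$ on $\cF^{\leq N}$ against the prefactor $N^{-1/2}$. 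The same strategy applied to the commutator with $A_+^*$ (where the weights $\cN$ and $(\cN+1)^{1/2}$ simply interchange roles between the left and right vectors) yields the matching bound, so $|f_\xi'(t)| \leq C f_\xi(t)$. Grönwall then gives the claim for $k \geq 0$, and the case $k \leq -1$ follows by inverting the positive-$k$ bound and using unitarity of $e^{tA}$.

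The principal obstacle is the sharp operator estimate on the cubic form $A_+$: since $\nu_H$ is not pointwise bounded (its Fourier transform is not integrable on $|p| \geq \ell^{-\alpha}$), one cannot directly extract a sup-norm in position space, and a naive Cauchy--Schwarz overshoots by a factor of $\sqrt{N}$. The key is to combine (i) the $L^1$-normalisation of $\cgl$, which prevents any $\ell^{-\beta}$ blow-up from the low-momentum mollifier and makes the $v$-integration harmless, with (ii) the uniform $L^\infty_x L^2_y$ bound $\sup_x\|\nu_{H,x}\| \leq C\ell^{\alpha/2}$ from Lemma \ref{lm:bndsnuH}. Together with the explicit $N^{-1/2}$ prefactor in $A$, these allow one to absorb precisely the one extra power of $\cN$ produced by the cubic structure into the operator norm on $\cF^{\leq N}$, which is exactly what the Grönwall closure requires.
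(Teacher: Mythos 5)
Your argument is correct and follows essentially the same route as the paper: a Grönwall scheme for $f_\xi(t)=\langle e^{tA}\xi,(\cN+1)^k e^{tA}\xi\rangle$, an explicit commutator $[(\cN+1)^k,A]$ controlled through the difference $(\cN+2)^k-(\cN+1)^k$, Cauchy--Schwarz in position space using $\sup_x\|\nu_{H,x}\|\leq C\ell^{\alpha/2}$ and $\|\cgl\|_1=1$, and absorption of the one extra power of $\cN^{1/2}$ into the prefactor $N^{-1/2}$ on $\cF^{\leq N}$. Your reformulation via $A=A_+-A_+^*$, the shift identity, and the symmetrized factor $D_k(\cN-1)^{1/2}A_+D_k(\cN)^{1/2}$ is just a more explicit bookkeeping of the same mean-value-theorem bound the paper invokes, and the remark about recovering $k\le -1$ by duality and unitarity is a valid (if unnecessary, since the mean value argument already handles negative $k$ directly) alternative.
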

	\begin{proof}
The proof is based on a Gronwall argument. Given $ \xi\in \cF^{\leq N}$, we define $f_\xi$ by $f_\xi(s) = \langle \xi, e^{-sA} (\mathcal{N}+1)^k e^{sA} \xi \rangle$. Taking its derivative yields
	\begin{align*}
	\partial_s f_\xi(s) = 2 \text{Re}\,  \langle \xi, e^{-sA}[(\mathcal{N}+1)^k, A] e^{sA} \xi \rangle 
	\end{align*}
and it is straight-forward to verify that
	\begin{align*}
	[(\mathcal{N}+1)^k, A] = \frac{1}{\sqrt{N}} \int dx dy dz \ \nu_{H}(x;y) \cgl(x-z) b_x^* a_y^* a_z \big((\mathcal{N}+2)^k -(\mathcal{N}+1)^k\big) + \text{h.c.}
	\end{align*}
By the mean value theorem, there exists a function $\Theta: \mathbb{N} \rightarrow (0;1)$ such that
	\begin{align*}
	(\mathcal{N}+2)^k - (\mathcal{N}+1)^k = k (\mathcal{N}+ \Theta(\mathcal{N}) + 1)^{k-1}.
	\end{align*}
Thus, together with Cauchy-Schwarz, \eqref{eq:nuHL2bnd} and \eqref{Lp norms of widecheck chi L,l}, we obtain that
	\begin{align*}
	&\vert \partial_s f_\xi (s) \vert\\
	&\leq \frac{2k}{\sqrt{N}} \int dx dy dz \ \vert \nu_{H}(x;y) \vert  \cgl(x-z) \Vert a_x a_y (\mathcal{N}+1)^\frac{k-1}{2} e^{s A}\xi \Vert \Vert a_z (\mathcal{N}+1)^\frac{k-1}{2} e^{sA}\xi \Vert \\ 
	&\leq \frac{C }{\sqrt{N}}  \bigg( \int dx dz \, \| \nu_{H,x}\|^2| \cgl(x-z) | \Vert a_z (\mathcal{N}+1)^\frac{k-1}{2} e^{sA}\xi \Vert ^2 \bigg)^{1/2}\\
	&\hspace{1.5cm}\times  \bigg( \int dx dy dz \, | \cgl(x-z) | \Vert a_x a_y (\mathcal{N}+1)^\frac{k-1}{2} e^{sA}\xi \Vert ^2 \bigg)^{1/2} \\
	&\leq \frac{C \ell^{\alpha/2}}{\sqrt{N}} \Vert (\mathcal{N}+1)^\frac{k}{2} e^{sA} \xi \Vert  \Vert (\mathcal{N}+1)^\frac{k+1}{2} e^{sA} \xi \Vert \leq C \Vert (\mathcal{N}+1)^\frac{k}{2} e^{sA} \xi \Vert^2= C f_\xi (s).
	\end{align*}
Since $C$ is indepedent of $\xi\in\cF^{\leq N}$, the claim follows from Gronwall's inequality.
	\end{proof}
Now, recalling the definition of $ \cG_{N}^{\text{eff}}$ in \eqref{eq:propGNell}, let us define the cubically renormalized excitation Hamiltonian $ \cJ_N $ through 
		\begin{equation}\label{eq:defJNell} \cJ_N = e^{-A}  \cG_{N}^{\text{eff}} e^A.
		\end{equation}
		
\begin{prop} \label{prop:JNell}
Assume \eqref{eq:asmptsVVext}, let $\ell\in (0;1)$ and fix $ 2\beta > \alpha > 7\beta /5$ as well as $\alpha>4$. Moreover, assume that $N\in\mathbb{N}$ is sufficiently large and that $\ell\in (0;1)$ is sufficiently small (but fixed, independently of $N$). Then, there exist $\kappa > 0$ and a constant $C>0$, independent of $N$ and $\ell$, such that 
		\begin{equation}\label{eq:propJNell}
		\begin{split} 
		\cJ_N \geq \; &N \cE_{GP}(\pn) + \frac12 d\Gamma (-\Delta_x + V_{ext}(x) + 8\pi \an |\pn(x)|^2 - \eps_{GP}) \\ &- C \ell^\kappa \cN - C \ell^{-4\alpha} (\cN+1)^2/N - C \ell^{-2\beta}.
		\end{split} 
		\end{equation}
\end{prop}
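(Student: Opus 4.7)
The plan is to expand $\cJ_N$ via the fundamental theorem of calculus,
\[ \cJ_N = \cG_N^{\text{eff}} + \int_0^1 e^{-sA}\bigl[\cG_N^{\text{eff}}, A\bigr]\,e^{sA}\,ds, \]
and to exploit the fact that $A$, as defined in (\ref{definition A with l}), is designed so that $[\cK,A]$ together with a piece of $[\cV_N,A]$ cancels the problematic cubic term $\frac{1}{\sqrt{N}}\int\!dxdy\, N^3 V(N(x-y))\pn(x)(b_x^*a_y^*a_x+\text{h.c.})$ appearing in $\cG_N^{\text{eff}}$, modulo errors of size $\ell^\kappa(\cN+\cH_N) + \ell^{-4\alpha}N^{-1}(\cN+1)^2 + \ell^{-2\beta}$. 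The guiding identity is the Neumann scattering equation \eqref{eq:scatlN}, which relates $-\Delta G$ to $\tfrac{N^2}{2}V(N\cdot)f_{N,\ell}$ up to the $\lambda_\ell$-error controlled by Lemma \ref{3.0.sceqlemma}. The off-diagonal quadratic piece combined with $\cK+\cV_{ext}$ and the Euler--Lagrange identity \eqref{eq:euler} will produce the half of $d\Gamma(-\Delta+V_{ext}+8\pi\an|\pn|^2-\eps_{GP})$ in the lower bound.

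\textbf{Commutator analysis.} I would split $\cG_N^{\text{eff}} = \cG_N^{(\leq 2)} + \cG_N^{(3)} + \cV_N$ where $\cG_N^{(3)}$ is the cubic term and $\cG_N^{(\leq 2)}$ contains the constant, pure quadratic and one-body kinetic/external contributions. The key pieces to compute are: (i) $[\cK,A]$, which via $-\Delta_x$ acting on $\nu_H(x;y)\cgl(x-z)$ and the scattering equation reproduces roughly $\frac{1}{\sqrt{N}}\int\!dxdydz\, N^3V(N(x-y))\pn(y)\cgl(x-z)\,b_x^*a_y^*a_z + \text{h.c.}$, up to commutators with $\check{\chi}_{H^c}$ (which are small because $\alpha$ is large), cross terms where $-\Delta$ falls on $\cgl$ (small because $\|\nabla\cgl\|_1\lesssim \ell^{-\beta}$), and the $\lambda_\ell$-correction of order $N^{-3}\ell^{-3}$; (ii) $[\cV_N,A]$, which via the canonical commutation relations on $a_x^*a_y^*a_ya_x$ contributes the remaining $1/\sqrt{N}$-piece needed to reconstitute $\cG_N^{(3)}$ after $\cgl\to\delta$, the difference being controllable through Lemma \ref{lm:bndsnuH} and $\|\cgl-\delta\|$ on functions with momentum $\lesssim\ell^{-\alpha}$; (iii) commutators of $A$ with $\cV_{ext}$ and with the quadratic terms in $\cG_N^{\text{eff}}$, all of which are error terms absorbed via Cauchy--Schwarz against $\cN$, $\cH_N$ and $(\cN+1)^2/N$ using Lemma \ref{lm:Npowcubic}; (iv) the double commutators $[[\cG_N^{\text{eff}},A],A]$ picked up when Duhamel is iterated once more, bounded by the same tools together with $\|\nu_H\|\leq C\ell^{\alpha/2}$ and $\|\cgl\|_1=1$.

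\textbf{Reassembling the effective Hamiltonian.} After cancelling $\cG_N^{(3)}$, what remains up to controllable errors is $\cG_N^{(\leq 2)}+\cV_N$. Using the Euler--Lagrange equation \eqref{eq:euler} for $\pn$, the constant $N\cE_{GP}(\pn)$ together with $-\eps_{GP}\cN$, $\cK$, $\cV_{ext}$, the diagonal quadratic term $2\widehat V(0)\int|\pn|^2 b_x^*b_x$ and the constant $4\pi\an\|\pn\|_4^4\,\cN^2/N$ combine to produce $N\cE_{GP}(\pn) + d\Gamma(-\Delta+V_{ext}+8\pi\an|\pn|^2-\eps_{GP})$, plus the off-diagonal $4\pi\an\int\!dxdy\,\widecheck\chi_{H^c}(x-y)\pn(x)\pn(y)(b_x^*b_y^*+\text{h.c.})$ and the positive $\cV_N$. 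A Cauchy--Schwarz against one half of $d\Gamma(-\Delta+V_{ext}+8\pi\an|\pn|^2-\eps_{GP})$ (using that this one-body operator is non-negative on $L^2_{\perp\pn}$ by \eqref{eq:euler} and has spectral gap there) absorbs the off-diagonal piece at the cost of $C\ell^\kappa\cN$, leaving exactly the factor $\tfrac12$ in the statement; discarding the non-negative $\cV_N$ concludes the lower bound.

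\textbf{Main obstacle.} The technical heart is the interaction between the two cutoffs $\chi_H$ (high momenta $|p|\geq\ell^{-\alpha}$) and $g_L$ (low momenta $|p|\lesssim\ell^{-\beta}$) during the scattering-equation cancellation: errors from replacing $\nu_H$ by $\widetilde k$ scale as $\ell^\alpha$, cross errors where $\Delta$ falls on $\cgl$ scale as $\ell^{-2\beta}$, and the $\lambda_\ell$-residual and the $[\cV_N,A]$ reassembly require $\alpha$ large enough relative to $\beta$. The constraints $2\beta>\alpha>7\beta/5$ and $\alpha>4$ are precisely what makes every error simultaneously tame and leaves a positive power $\ell^\kappa$; keeping track of all cross-terms from the single and double commutators without losing this positive power is the main bookkeeping challenge.
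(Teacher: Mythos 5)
Your overall skeleton — Duhamel expansion, cancellation of the cubic term by $[\mathcal{K}+\mathcal{V}_N, A]$ via the scattering equation, a priori Gronwall bounds to control $e^{-sA}\cH_N e^{sA}$ — is the same as the paper's. But your ``reassembling'' step contains a genuine gap at the level of the coupling constant in the final one-body operator, and without the missing ingredient the stated lower bound cannot be reached.

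Specifically, the diagonal quadratic contribution surviving in $\cG_N^{\text{eff}}$ after the cancellation of $\mathcal{C}_N$ is $2\widehat V(0)\int|\ph|^2 b_x^*b_x$, whose coefficient $2\widehat V(0)$ is \emph{not} $8\pi\an$ (indeed $\widehat V(0)=\int V > \int Vf = 8\pi\an$ strictly). Grouping $\cD_N+\mathcal{Q}_N+\cK+\cV_{ext}$ and invoking the Euler--Lagrange equation, as you propose, therefore does \emph{not} produce $d\Gamma(-\Delta+V_{ext}+8\pi\an|\ph|^2-\eps_{GP})$; the one-body coefficient comes out wrong by the macroscopic amount $2\widehat V(0)-8\pi\an$. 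The mechanism that fixes this in the paper is precisely the term you relegate to the error budget: after iterating Duhamel once, the net surviving contribution is $\int_0^1 s\, e^{-sA}[\mathcal{C}_N,A]e^{sA}\,ds$, and $[\mathcal{C}_N,A]$ is \emph{not} small — it yields the leading one-body operator $4(8\pi\an-\widehat V(0))\int\ph^2 a_x^*a_x$ plus controllable errors, so that the $\tfrac12\int_0^1 s\,ds$ factor gives $2(8\pi\an-\widehat V(0))\int\ph^2 a^*a$. Added to $2\widehat V(0)\int\ph^2 b^*b$ from $\mathcal{Q}_N$, the diagonal piece becomes $16\pi\an\int\ph^2 a^*a$. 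The off-diagonal term $4\pi\an\int\widecheck\chi_{H^c}(x-y)\ph(x)\ph(y)(b_xb_y+b_x^*b_y^*)$ is likewise not a small error to be absorbed by a $C\ell^\kappa\cN$ Cauchy--Schwarz, as you claim: its positivity lower bound costs a full $-8\pi\an\int\ph^2 a^*a$ (plus $\ell^{-3\alpha}$ and $\cN^2/N$ corrections), and it is only after this subtraction that the correct coefficient $8\pi\an$, and hence $d\Gamma(-\Delta+V_{ext}+8\pi\an|\ph|^2-\eps_{GP})$, emerges. Without identifying the macroscopic contribution of $[\mathcal{C}_N,A]$ and the macroscopic cost of the off-diagonal term, the final inequality cannot be closed.
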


\section{Proof of Theorem \ref{thm:main}}
\label{sec:proof}

It is enough to prove that there exist constants $c, C>0$, independent of $N\in\mathbb{N}$, such that for all sufficiently large $N$, we have in the sense of forms in $ \cF_{\bot \pn}^{\leq N}$ that 
		\begin{equation}\label{eq:coercbnd}
		\cL_N \geq N\cE_{GP}(\pn) + c\, \cN - C.
		\end{equation} 
To prove \eqref{eq:coercbnd}, we first localize the operator $\wt \cL_N$ defined in \eqref{eq:defwtcLN}, based on an argument from \cite{LNSS} (see, in particular, \cite[Prop. 6.1]{LNSS}). To this end, let $\delta\in (0;1)$ (it will be determined below) and let $0\leq f,g\leq 1$ be two smooth, real-valued functions such that $ f^2+g^2=1$, $ f(x)=1$ for $|x|\leq 1/2$ as well as $f(x)=0 $ for $|x|\geq1$. With the notation $f_{\delta N} = f (\cN / \delta N)$, $g_{\delta N} = g (\cN / \delta N)$, we observe that 
		\[ \wt\cL_N = f_{\delta N}\wt\cL_N f_{\delta N} +g_{\delta N}\wt\cL_N g_{\delta N} +\frac12 \big( [f_{\delta N}, [f_{\delta N}, \wt \cL_N]] + [g_{\delta N}, [g_{\delta N}, \wt \cL_N]] \big),  \]
Thus, in the sense of forms on $\cF^{\leq N}$,  we have 
		\begin{equation}\label{eq:locLN}
		\begin{split}
		\cL_N =&\, \Gamma(q)   f_{\delta N}\wt\cL_N f_{\delta N} \Gamma(q) + \Gamma(q) g_{\delta N}\wt\cL_N g_{\delta N} \Gamma(q) \\
		& +\frac12\Gamma(q)  \big( [f_{\delta N}, [f_{\delta N}, \wt \cL_N]] + [g_{\delta N}, [g_{\delta N}, \wt \cL_N]] \big)  \Gamma(q).
		\end{split}
		\end{equation}
Recall here that $q = 1- |\ph \rangle \langle \ph|$ and  that $\Gamma(q)$ is the orthogonal projection from $\cF^{\leq N}$ onto $\cF_{\bot}^{\leq N}$. The second line in \eqref{eq:locLN} contains error terms and they can be controlled as follows. Observing that
		\[\begin{split}
		&[f_{\delta N}, [f_{\delta N}, \wt \cL_N]] \\
		&=   \sqrt{N} b^*\left(  \left( N^3 V(N\cdot) * \vert \pn\vert^2   -8\pi \frak{a}_0|\pn|^2 \right)\pn\right)\big (f_{\delta N} (\cN+1)- f_{\delta N}(\cN) \big)^2\\
	&\hspace{0.5cm} -   b^*\left( \left( N^3 V(N\cdot) * \vert \pn\vert^2 \right) \pn \right)\frac{\mathcal{N}+1}{\sqrt N}\big (f_{\delta N} (\cN+1)- f_{\delta N}(\cN) \big)^2 \\
		&\hspace{0.5cm}+ \frac{1}{2}  \int  dxdy\;  N^3 V(N(x-y)) \pn(y) \pn(x) b_x^* b_y^* \big (f_{\delta N} (\cN+2)- f_{\delta N}(\cN) \big)^2  \\
	  &\hspace{0.5cm}+\int  dxdy\, N^{ 5/2} V(N(x-y)) \pn(y)   b_x^* a^*_y a_x\big (f_{\delta N} (\cN+1)- f_{\delta N}(\cN) \big)^2  + \text{h.c.} ,  
		\end{split}\]
and similarly for $ [g_{\delta N}, [g_{\delta N}, \wt \cL_N]]$, a straight-forward application of Cauchy-Schwarz, $ \|\pn\|_\infty\leq C ,  \| N^3V(N.)\ast |\pn|^2\|_\infty \leq C$ and the mean value theorem implies that 
		\[ [f_{\delta N}, [f_{\delta N}, \wt \cL_N]] + [g_{\delta N}, [g_{\delta N}, \wt \cL_N]]\geq - C \big(\| f'\|^2_\infty + \|g'\|^2_\infty\big) \frac{(  \cV_N + N)}{\delta^2N^2}. \]
Hence, we find the lower bound 
		\begin{equation}\label{eq:locLN2}
		\begin{split}
		\cL_N \geq &\, \Gamma(q)   f_{\delta N}\wt\cL_N f_{\delta N} \Gamma(q) + \Gamma(q) g_{\delta N}\wt\cL_N g_{\delta N} \Gamma(q) - C\, \Gamma(q)  \frac{  \cV_N}{\delta^2 N^2}  \Gamma(q) - C.
		\end{split}
		\end{equation}

In the next step, we control the first contribution on the r.h.s. in \eqref{eq:locLN2} through Prop. \ref{prop:GNell} and Prop. \ref{prop:JNell}. We define $\cG_{N}$ as in \eqref{eq:defGNell}, $\cJ_N$ as in \eqref{eq:defJNell} and we choose the parameters $ \alpha, \beta>0$ such that $ 2 \beta > \alpha > 7\beta /5$ and $\alpha >5$. Moreover, we assume in the following that $N$ is sufficiently large and that $\ell\in(0;1)$ is sufficiently small. These assumptions ensure that the conditions in Prop. \ref{prop:GNell} and Prop. \ref{prop:JNell} are satisfied. Then, by Prop. \ref{prop:GNell} and Lemma \ref{lm:Npow}, we find that
		\[\begin{split}
		f_{\delta N}\wt\cL_N f_{\delta N} & = f_{\delta N}e^B \cG_{N}^{\text{eff}}  e^{-B}  f_{\delta N} + f_{\delta N}e^B \cE_{\cG_{N}} e^{-B}  f_{\delta N}\\
		&\geq N\cE_{GP}(\pn)f_{\delta N}^2 +f_{\delta N}e^B \big(\cG_{N}^{\text{eff}}  -N\cE_{GP}(\pn)\big)  e^{-B}  f_{\delta N}\\
		&\hspace{0.4cm} -C\ell^{\kappa_1} f_{\delta N}e^B \big( \cN+\cK +\cV_N\big) e^{-B}f_{\delta N} - C\ell^{-\kappa_2} N^{-1} (\cN+1)^{2} f_{\delta N}^2 - C\ell^{-\kappa_2}\\
		&\geq N\cE_{GP}(\pn)f_{\delta N}^2 +f_{\delta N}e^B \big(\cG_{N}^{\text{eff}}  -N\cE_{GP}(\pn)\big)  e^{-B}  f_{\delta N}\\
		&\hspace{0.4cm} -C\ell^{\kappa_1} f_{\delta N}e^B \big( \cN+\cK +\cV_N\big) e^{-B}f_{\delta N} - C\ell^{-\kappa_2} \delta \cN f_{\delta N}^2 - C\ell^{-\kappa_2}
		\end{split} \] 
for suitable $\kappa_1, \kappa_2>0$. Before we can continue and apply Prop. \ref{prop:JNell}, we need to bound the error proportional to $ (\cN+\cK+\cV_N)$ in terms of $\cG_{N}^{\text{eff}}$, as defined in (\ref{eq:propGNell}). To this end we observe that, since $\chi_{H^c}(p)\geq 0$,
\[\begin{split}
\int dxdy\, \widecheck\chi_{H^c}(x-y) \pn(x)\pn(y) (b^*_x + b_x)(b^*_y + b_y) \geq 0,
\end{split}\]	
Denoting by $A$ the operator with kernel $A (x;y) = \check{\chi}_{H^c} (x-y) \ph (x) \ph (y)$, and observing that, with \eqref{eq:expdecaypn}, $\| A \| \leq \sup_x \int |A(x;y)| dy \leq C$, uniformly in $\ell \in (0;1)$, we conclude that 
 \[\begin{split}
\int dxdy\, \widecheck\chi_{H^c}(x-y) \pn(x)\pn(y) \big(b^*_x  b^*_y +\text{h.c.}\big) 
&\geq  -2 d\Gamma (A) - \widecheck\chi_{H^c}(0) \geq - C \cN - C \ell^{-3\alpha} \end{split}\]
With this bound (and with Cauchy-Schwarz), (\ref{eq:propGNell}) then easily implies that 
\[ \cG_{N}^{\text{eff}} -N\cE_{GP}(\pn) \geq \frac12 \cH_N - C  \cN  - C\ell^{-\alpha}.   \]
Hence, we obtain that 
		\begin{equation}\label{eq:fLNbnd1}
		\begin{split}
		f_{\delta N}\wt\cL_N f_{\delta N} &\geq N\cE_{GP}(\pn)f_{\delta N}^2 +(1- C\ell^{\kappa_1})f_{\delta N}e^B \big(\cG_{N}^{\text{eff}}  -N\cE_{GP}(\pn)\big)  e^{-B}  f_{\delta N}\\
		&\hspace{0.4cm} - C\big(\ell^{\kappa_1} + \ell^{-\kappa_2}\delta\big) \cN f_{\delta N}^2 - C\ell^{-\kappa_2}.
		\end{split} 
		\end{equation}
Now, we apply Prop. \ref{prop:JNell}, Lemma \ref{lm:Npow} and Lemma \ref{lm:Npowcubic} which yields similarly as above
		\[\begin{split}
		&f_{\delta N}e^B \big(\cG_{N}^{\text{eff}}  -N\cE_{GP}(\pn)\big)  e^{-B}  f_{\delta N}\\
		&=  f_{\delta N}e^Be^A \big(\cJ_N   -N\cE_{GP}(\pn)\big)  e^{-A}e^{-B}  f_{\delta N}\\ 
		&\geq \frac12 f_{\delta N}e^Be^A  d\Gamma ( -\Delta_x + V_{ext}(x) + 8\pi \an |\pn(x)|^2 - \eps_{GP}) e^{-A}e^{-B}  f_{\delta N} \\
		&\hspace{0.4cm} - C\big(\ell^{\kappa_1} + \ell^{-\kappa_2}\delta \big) \cN f_{\delta N}^2 - C\ell^{-\kappa_2}.
		\end{split}\]		
From (\ref{eq:euler}) and under the assumptions \eqref{eq:asmptsVVext}, the operator $h_{GP} = -\Delta + V_{ext}(x) + 8\pi \an |\pn(x)|^2 - \eps_{GP}$ has the simple eigenvalue 0 at the bottom of its spectrum (with eigenvector $\ph$) and a positive gap $\lambda_1 > 0$ on top of it; see, for example,  \cite[Theorem XIII.47]{RS4}. Applying once more Lemmas \ref{lm:Npow} and \ref{lm:Npowcubic}, we therefore find that 
\begin{equation}\label{eq:G-E}\begin{split}
&f_{\delta N}e^B \big(\cG_{N}^{\text{eff}}  -N\cE_{GP}(\pn)\big)  e^{-B}  f_{\delta N}\\
&\geq \frac{\lambda_1}2 f_{\delta N}e^Be^A   ( \cN-a^*(\pn)a(\pn) ) e^{-A}e^{-B}  f_{\delta N}   - C\big(\ell^{\kappa_1} + \ell^{-\kappa_2}\delta  \big) \cN f_{\delta N}^2 - C\ell^{-\kappa_2} \end{split}\end{equation} 
With the commutators 
\[ \begin{split} 
\Big[ a^* (\ph ) a(\ph) , A \Big] =\; & \frac{1}{\sqrt{N}} \int dx dy dz\, \nu_H(x;y) \cgl(x-z) \pn(x)  b^*(\pn) a_y^* a_z  \\ & +\frac{1}{\sqrt{N}} \int dx dy dz\, \nu_H(x;y) \cgl(x-z) \pn(y)  b_x^* a^*(\pn) a_z 
\\ & -\frac{1}{\sqrt{N}} \int dx dy dz\, \nu_H(x;y) \cgl(x-z) \pn(z)  b_x^* a_y^* a(\pn)   +\text{h.c.} \\
 \Big[ a^* (\ph) a(\ph) , B \Big] =  \; &\int dxdy\, \eta_H(x;y)  \pn(y) b^*_x b^*(\pn)  +\text{h.c.} 
 \end{split} \]
and with Lemma \ref{lm:bndsnuH} and Lemma \ref{lm:Npowcubic}, we find  
\[ -e^B e^A a^*(\pn)a(\pn)e^{-A} e^{-B} \geq -a^*(\pn)a(\pn) - C\ell^{\alpha/2} (\cN+1) .\]
Thus, inserting in (\ref{eq:G-E}) and then in (\ref{eq:fLNbnd1}), and using again 
Lemmas \ref{lm:Npow} and \ref{lm:Npowcubic}, we arrive at 
\[\begin{split}
		f_{\delta N}\wt\cL_N f_{\delta N} 
		&\geq N\cE_{GP}(\pn)f_{\delta N}^2 +\wt c\,(1- C\ell^{\kappa_1}) \, \cN  f_{\delta N}^2  - \frac{\lambda_1}2(1- C\ell^{\kappa_1})    a^*(\pn)a(\pn)     f_{\delta N}^2\\
		&\hspace{0.4cm} - C\big(\ell^{\kappa_1} + \ell^{-\kappa_2}\delta \big) \cN f_{\delta N}^2 - C\ell^{-\kappa_2}, 
		\end{split} \]
where the constants $\wt c, C>0$ are independent of $\ell\in (0;1)$, $\delta\in(0;1)$ and $N\in \mathbb{N}$. 

We now set $ \delta = \ell^{2\kappa_2}$ and choose $\ell$ sufficiently small, so that the last bound implies
		\begin{equation}\label{eq:fLNbnd2}
		\begin{split}
		f_{\delta N}\wt\cL_N f_{\delta N} 
		&\geq N\cE_{GP}(\pn)f_{\delta N}^2 +c_1  \, \cN  f_{\delta N}^2  - \frac{\lambda_1}4     a^*(\pn)a(\pn)     f_{\delta N}^2   - C. 
		\end{split} 
		\end{equation}
Here, the positive constant $c_1>0$ is independent of $\ell\in (0;1)$, $\delta\in(0;1)$ and $N\in \mathbb{N}$.
For the rest of the proof, we fix this choice of $\delta = \ell^{4\kappa_2}$ and this (sufficiently small) value of $\ell\in (0;1)$ so that \eqref{eq:fLNbnd2} holds true. Then, substituting \eqref{eq:fLNbnd2} into \eqref{eq:locLN2}, we get
		\[\cL_N \geq \,  N\cE_{GP}(\pn)f_{\delta N}^2 \Gamma(q) + c_1   \, \cN  f_{\delta N}^2\Gamma(q) +  g_{\delta N} \Gamma(q)\wt\cL_N\Gamma(q) g_{\delta N}  - C\, \Gamma(q)  \frac{  \cV_N}{\delta^2 N^2}  \Gamma(q) - C\]
in the sense of forms in $\cF_{\perp \ph}^{\leq N}$. Notice that we used that $ \Gamma(q)a^*(\pn)a(\pn)\Gamma(q) = 0 $ and $ [\Gamma(q), \cN] = 0$. Since $ \Gamma(q)\wt \cL_N \Gamma(q) = \cL_N$ in the sense of forms in $\cF_{\perp \ph}^{\leq N}$ and since $ \Gamma(q)_{|\cF^{\leq N}_{\bot \pn}} = \operatorname{id}_{|\cF^{\leq N}_{\bot \pn}}$, the previous bound translates to 
		\begin{equation}\label{eq:fLNbnd3}
		 \cL_N \geq \,  N\cE_{GP}(\pn)f_{\delta N}^2   + c_1   \, \cN  f_{\delta N}^2  +  g_{\delta N}  \cL_N g_{\delta N}  - C\,    \frac{  \cV_N}{\delta^2 N^2}    - C\end{equation}
in the sense of forms in $\cF_{\bot \pn}^{\leq N}$

Next, we notice that the contribution proportional to $ g_{\delta N }   \cL_N g_{\delta N}$ in \eqref{eq:fLNbnd3} can be controlled as explained in \cite{BBCS4}, using the results from \cite{LS2, NRS}, i.e. \eqref{eq:BEC0}. This argument shows that there exists some constant $c_2>0$ such that for all sufficiently large $N\in\mathbb{N}$
		\begin{equation} \label{eq:aprioriinfo}\begin{split} 
		g_{\delta N }  \cL_N  g_{\delta N} &\geq  N\cE_{GP}(\pn)g_{\delta N }^2 +   \left[\inf_{\substack{ \xi\in \cF_{\bot \pn}^{\leq N}\cap\cF_{\bot \pn}^{\geq \delta N/2} :\|\xi\|=1 }} \Big( \frac1N\langle\xi, \cL_N \xi \rangle -  \cE_{GP}(\pn)\Big) \right] N g_{\delta N}^2 \\ 
		& \geq  N\cE_{GP}(\pn)g_{\delta N }^2 +   c_2 N g_{\delta N}^2
		 \end{split}\end{equation} 
where $ \cF_{\bot \pn}^{\geq \delta N/2} = \{ \xi \in \cF_{\bot \pn}: \chi(\cN \geq \delta N/2) \xi = \xi\}$. Hence, plugging \eqref{eq:aprioriinfo} into \eqref{eq:fLNbnd3}, we conclude that
		 \[\begin{split}
		 \cL_N \geq & \,  N\cE_{GP}(\pn)    + c_1   \, \cN  f_{\delta N}^2  +   c_2 N g_{\delta N}^2  - C\,    \frac{  \cV_N}{\delta^2 N^2}    - C
		 \geq \,  N\cE_{GP}(\pn)    + \wt c   \, \cN     - C\,    \frac{  \cV_N}{\delta^2 N^2}    - C
		 \end{split} \]
in the sense of forms in $\cF_{\bot \pn}^{\leq N}$, where $\wt c = \min(c_1,c_2)>0$. 

Finally, to control the error that contains the potential energy $ \cV_N$, we recall the definitions \eqref{eq:cLNj} and a simple computation involving Cauchy-Schwarz shows that 
		\[\cL_N \geq \frac12 \cH_N - C N \geq \frac12 \cV_N - C N \]
for some constant $C>0$, independent of $N\in\mathbb{N}$. Therefore, we find  
		\[ \begin{split}
		 \cL_N   \geq & \,  N\cE_{GP}(\pn)    + \wt c   \, \cN     - C\,    \frac{  \cL_N}{\delta^2 N^2}    - C,
		 \end{split} \]
so that, by choosing $N\in\mathbb{N}$ sufficiently large, we have proved that
		 \[\begin{split}
		 \cL_N \geq & \,  N\cE_{GP}(\pn)    + c   \, \cN      - C
		 \end{split} \]
for some constant $ c>0$ that is independent of $N$. \qquad \quad \qed

\section{Analysis of $\cG_{N}$}  \label{sec:GNell}

In this section we analyse the operator $\cG_{N} = e^{-B} \wt{\cL}_N e^B$, as defined in \eqref{eq:defGNell}. To compute the action of the generalized Bogoliubov transform $e^B$ on $\wt{\cL}_N$, we are going to compare it with the action of a standard Bogoliubov transformation. Interpreting  \eqref{eq:defetaH} as the integral kernel of a Hilbert-Schmidt operator on $L^2 (\bR^3)$, we define   
\[ \sinh_{\eta_H} = \sum_{j=0}^\infty \frac{\eta_H^{(2k+1)}}{(2k+1)!},\hspace{0.5cm} \cosh_{\eta_H} = \sum_{j=0}^\infty \frac{\eta_H^{(2k)}}{(2k)!}.  \]
In addition, we define the Hilbert-Schmidt operators $\text{p}_{\eta_H} $ and $\text{r}_{\eta_H}$ by
		\begin{equation}\label{eq:defpretaH} \text{p}_{\eta_H} = \sinh_{\eta_H} - \,\eta_H = \sum_{j=1}^\infty \frac{\eta_H^{(2k+1)}}{(2k+1)!}, \hspace{0.5cm} \text{r}_{\eta_H} = \cosh_{\eta_H} - \,\text{id}_{} =  \sum_{j=1}^\infty \frac{\eta_H^{(2k)}}{(2k)!}.
		\end{equation}
Using \eqref{eq:bndsetaH} one obtains for $\ell\in (0;1)$ small enough
\begin{equation} \label{eq:bndpr}
  \Vert \text{p}_{\eta_H} \Vert, \Vert \text{r}_{\eta_H} \Vert \leq C\ell^{\alpha}, \quad \vert \text{p}_{\eta_H}(x,y) \vert, \vert \text{r}_{\eta_H}(x,y) \vert \leq C\ell^\alpha \ph(x) \ph(y).
\end{equation}

The following lemma, whose proof is an adaptation of the translation-invariant case \cite[Lemma 3.4]{BBCS4}, shows that, on states with few excitations, $e^B$ acts approximately like a standard Bogoliubov transformation.  
\begin{lemma}\label{lm:d-bds}
	Let $n\in  \mathbb{Z}$, and let $f\in L^2(\bR^3)$. Let $d_{\eta_H}(f)$ as well as $d_{\eta_H,x}$ be defined as
	\begin{equation} \label{eq:defd}
	e^{-B} b(f) e^{B} = b( \cosh_{\eta_H}(f)) + b^*( \sinh_{\eta_H}(\overline{f})) + d_{\eta_H}(f),
	\end{equation}
	respectively
	\begin{equation} \label{eq:defdx}
	e^{-B} b_x e^{B} = b( \cosh_{\eta_H, x} ) + b^*( \sinh_{\eta_H, x}) + d_{\eta_H,x}.
	\end{equation}
	Then, there exists a constant $C>0$ such that
	\begin{equation} \label{eq:bnddf}
	\begin{split}
	\Vert (\cN+1)^{n/2}d_{\eta_H}(f) \xi \Vert & \leq \frac{C\ell^{\alpha/2} }{N}  \|f\|  \Vert (\mathcal{N}+1)^{(n+3)/2} \xi \Vert, \\
	\Vert (\cN+1)^{n/2}d_{\eta_H}(f)^* \xi \Vert & \leq \frac{C\ell^{\alpha/2}}{N}  \|f\|  \Vert (\mathcal{N}+1)^{(n+3)/2} \xi \Vert
	\end{split}
	\end{equation}	
	and such that, for all $x\in\bR^3$, we have that
	\begin{equation} \label{eq:bnddx}
	\begin{split}
	\Vert (\cN+1)^{n/2}d_{\eta_H,x} \xi \Vert & \leq \frac{C}{N} \Big[ \ell^{\alpha/2}  \Vert a_x (\mathcal{N}+1)^{(n+2)/2} \xi \Vert +  \|\eta_{H,x}\| \Vert (\mathcal{N}+1)^{(n+3)/2} \xi \Vert\Big].
	\end{split}
	\end{equation}
	Furthermore, if we set $\overline{d}_{\eta_H,x} = d_{\eta_{H,x}} +  (\cN/N) b^*( \eta_{H,x}) $, it holds true that
	\begin{equation} \label{eq:bndaydxbar}
	\begin{split}
	&\Vert (\cN+1)^{n/2}a_y \overline{d}_{\eta_H,x} \xi \Vert\\
	&\leq \frac{C }{N}  \Big[  \|\eta_{H,x}\|\|\eta_{H,y}\|  \Vert (\mathcal{N}+1)^{(n+2)/2} \xi \Vert  + \ell^{\alpha/2} |\eta_H(x;y)|   \Vert (\mathcal{N}+1)^{(n+2)/2} \xi \Vert \\
	&\hspace{1.5cm}+     \|\eta_{H,y}\|    \Vert a_x (\mathcal{N}+1)^{(n+1)/2}\xi \Vert  +  \ell^{\alpha/2}\|\eta_{H,x}\|  \Vert a_y (\mathcal{N}+1)^{(n+3)/2} \xi \Vert \\
	&\hspace{1.5cm}+  \ell^{\alpha/2} \Vert a_x a_y (\mathcal{N}+1)^{(n+2)/2} \xi \Vert \Big]
	\end{split}
	\end{equation}
	and, finally, we have that
	\begin{equation} \label{eq:bnddxdy}
	\begin{split}
	&\Vert (\cN+1)^{n/2}d_{\eta_H,x} d_{\eta_H,y} \xi \Vert \\
	&\leq \frac{C}{N^2} \Big[    \|\eta_{H,x}\| \|\eta_{H,y}\|  \Vert (\mathcal{N}+1)^{(n+6)/2} \xi \Vert + \ell^{\alpha/2}  |\eta_H(x;y)|   \Vert (\mathcal{N}+1)^{(n+4)/2}\xi \Vert  \\
	&\hspace{1.2cm} + \ell^{\alpha/2}\|\eta_{H,y}\|  \vert   \Vert a_x (\mathcal{N}+1)^{(n+5)/2)} \xi \Vert+   \ell^{\alpha/2} \| \eta_{H,x}\| \Vert a_y(\mathcal{N}+1)^{ (n+5)/2} \xi \Vert  \\
	&\hspace{1.2cm} +  \ell^{\alpha}\Vert a_x a_y (\mathcal{N}+1)^{(n+4)/2} \xi \Vert \Big].
	\end{split}
	\end{equation}
\end{lemma}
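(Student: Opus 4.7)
The plan is to mimic the strategy of \cite[Lemma 3.4]{BBCS4}, adapted to our inhomogeneous setting. The starting point is to differentiate in the parameter of the generalized Bogoliubov transformation: for $f \in L^2(\bR^3)$, set
\[ \gamma_s(f) = e^{-sB} b(f) e^{sB}, \qquad s \in [0,1], \]
so that $\partial_s \gamma_s(f) = e^{-sB} [b(f), B] e^{sB}$. Using the commutation relations \eqref{eq:comm-b}--\eqref{eq:comm-b2}, a direct computation gives
\[ [b(f), B] = b^*(\eta_H f) + \mathcal{E}_1(f), \]
where $\mathcal{E}_1(f)$ collects the terms coming from the $1/N$ corrections $-\cN/N$ and $-a^*_y a_x/N$ in the $[b_x,b_y^*]$ commutator. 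Crucially, each summand in $\mathcal{E}_1(f)$ carries an explicit factor $1/N$ together with either a $b^*(\eta_H f)\cN$ type expression or a normally-ordered cubic in $a, a^*$ tested against $\eta_H$.

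First, I would iterate the identity
\[ \gamma_s(f) = b(f) + \int_0^s e^{-\tau B}\bigl( b^*(\eta_H f) + \mathcal{E}_1(f)\bigr) e^{\tau B} \, d\tau \]
together with the analogous formula for $\gamma_s^*$ obtained by replacing $f$ with $\overline{f}$ and $b$ with $b^*$. After $2k+1$ iterations, the ``main'' part reproduces exactly the partial sum of the Taylor series of $b(\cosh_{\eta_H} f) + b^*(\sinh_{\eta_H} \overline{f})$, while the remainder, by the bound $\|\eta_H\| \leq C \ell^{\alpha/2}$ from Lemma \ref{lm:bndsetaH} together with Lemma \ref{lm:Npow}, is controlled by $\|\eta_H\|^{2k+2}\|(\cN+1)^{(n+3)/2}\xi\|$ and hence vanishes as $k\to\infty$ (the series converges since $\|\eta_H\|\ll 1$ for $\ell$ small). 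The surviving contributions are precisely those in which at least one $1/N$ correction from $\mathcal{E}_1$ was invoked; their sum is $d_{\eta_H}(f)$. The bound \eqref{eq:bnddf} then follows by counting: each such term carries a prefactor $1/N$, a product of kernels of $\eta_H$ whose combined operator norm gives at most one $\ell^{\alpha/2}$ factor after summation, and an explicit polynomial in $\cN$ that is controlled by \eqref{eq:abd} and Lemma \ref{lm:Npow}. The adjoint estimate is proved symmetrically.

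For the pointwise estimates \eqref{eq:bnddx}--\eqref{eq:bnddxdy}, I would carry out the same expansion with $b(f)$ replaced by the operator-valued distribution $b_x$. Tracking the leading $1/N$ correction carefully, one sees that the dominant ``bad'' piece of $d_{\eta_H,x}$ is exactly $-(\cN/N)\, b^*(\eta_{H,x})$, coming from the $(1-\cN/N)\delta(x-y)$ part of \eqref{eq:comm-b}; subtracting this off in the definition of $\overline{d}_{\eta_H,x}$ leaves a remainder where every summand either carries an extra factor $\ell^{\alpha/2} \|\eta_{H,x}\|$, an explicit evaluation of the kernel $\eta_H(x;y)$ (from contracting two operator-valued distributions against each other), or an unsaturated $a_x$ available for further estimates. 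Composing with $a_y$ and using $[a_y, a_z^*]=\delta(y-z)$ to commute past the monomials produces precisely the five contributions on the right-hand side of \eqref{eq:bndaydxbar}, each of which is then bounded by Cauchy--Schwarz in the appropriate variables together with Lemma \ref{lm:Npow} to move powers of $\cN$ across $e^{\pm sB}$. The double-kernel estimate \eqref{eq:bnddxdy} follows by applying \eqref{eq:bnddx} twice: the $a_x$ factor generated by the first application is propagated through the second $d_{\eta_H,y}$ using \eqref{eq:bndaydxbar} with the roles of $x$ and $y$ interchanged.

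The main obstacle is the bookkeeping: separating the genuine $\cosh$/$\sinh$ structure from legitimate $1/N$ corrections in iterated commutators of $B$ against monomials in $a, a^*$ is tedious, especially since the inhomogeneous factors $\pn(x)$ in $\eta_H$ forbid the clean momentum-space simplifications available in the translation invariant case \cite{BBCS4}. The most delicate piece is \eqref{eq:bndaydxbar}, whose decomposition must produce the pointwise evaluation $\eta_H(x;y)$ with exactly one $\ell^{\alpha/2}$ prefactor (reflecting a single ``loop'' of the Wick contraction), since this precise combination is what makes the subsequent estimates in the proof of Proposition \ref{prop:GNell} close.
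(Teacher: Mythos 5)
Your proposal follows the same route the paper intends: the authors give no independent proof here and simply cite \cite[Lemma 3.4]{BBCS4}, which is the iterated nested-commutator expansion of $e^{-B}b(f)e^B$ you describe (main parts summing to $\cosh_{\eta_H}/\sinh_{\eta_H}$, remainders carrying explicit $1/N$ factors controlled via $\|\eta_H\|\le C\ell^{\alpha/2}$, Lemma~\ref{lm:Npow}, and Cauchy--Schwarz). The one step you elide in \eqref{eq:bnddxdy} --- that applying \eqref{eq:bnddx} to the outer $d_{\eta_H,x}$ leaves a term $\|a_x(\cN+1)^{(n+2)/2}d_{\eta_H,y}\xi\|$ which must first be split via $d_{\eta_H,y}=\overline{d}_{\eta_H,y}-(\cN/N)b^*(\eta_{H,y})$ before \eqref{eq:bndaydxbar} (with $x\leftrightarrow y$) can be invoked, and the $(\cN/N)b^*(\eta_{H,y})$ piece handled by commuting $a_x$ past $b^*(\eta_{H,y})$ --- is implicit but worth making explicit, since it is exactly what produces the $|\eta_H(x;y)|$ and $\|\eta_{H,y}\|\,\|a_x\cdot\|$ terms in \eqref{eq:bnddxdy}.
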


From the decomposition \eqref{eq:defwtcLN}, we can write 
\[\cG_{N} =  \cG_{N}^{(0)} +\cG_{N}^{(1)}+\cG_{N}^{(2)}+\cG_{N}^{(3)}+\cG_{N}^{(4)}, \]
where, for $j\in \{0,1,2,3,4\}$, we set
\[
\cG_{N}^{(j)} = e^{-B}\wt\cL_{N}^{(j)}e^{B}.
\]
In the following subsections, we will analyze the main contributions $\cG_{N}^{(j)} $ separately and, in Section \ref{sec:proofpropGNell}, we combine these results to conclude Proposition \ref{prop:GNell}.

\subsection{Analysis of $\cG_{N}^{(0)}$} 

From \eqref{eq:cLNj}, we recall that 
		\begin{equation}\label{eq:wtLN0}
		\begin{split}
		 \widetilde{\mathcal{L}}_N^{(0)} &= \big\langle \pn, \big[ -\Delta + V_{ext} + \frac{1}{2} \big(N^3 V(N\cdot) * \vert \pn \vert^2\big) \big] \pn \big\rangle (N-\cN) \\
	& \qquad -  \frac12\big\langle \pn,  \big(N^3 V(N\cdot) * \vert \pn \vert^2 \big)\pn \big\rangle (\cN+1)(1-\cN/N).
		\end{split}
		\end{equation}
\begin{prop}\label{prop:GNell0} There exists a constant $C>0$ such that
		\[\begin{split}
		 \cG_{N}^{(0)} = &\, \big\langle \pn, \big[ -\Delta + V_{ext} + \frac{1}{2} \big(N^3 V(N\cdot) * \vert \pn \vert^2\big) \big] \pn \big\rangle (N-\cN) \\
	& \qquad -  \frac12\big\langle \pn,  \big(N^3 V(N\cdot) * \vert \pn \vert^2 \big)\pn \big\rangle (\cN+1)(1-\cN/N) + \cE_{N,\ell}^{(0)},
		 \end{split}\]
where the self-adjoint operator $\cE_{N,\ell}^{(0)}$ satisfies 
		\[ \pm \cE_{N,\ell}^{(0)} \leq C \ell^{\alpha/2} (\cN+1) \]
for all $\alpha>0$ and $\ell\in(0;1)$.
\end{prop}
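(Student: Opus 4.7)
\textbf{Proof proposal for Proposition \ref{prop:GNell0}.} Since $\widetilde{\cL}_N^{(0)}$ involves only constants times polynomials in $\cN$, namely $(N-\cN)$ and $(\cN+1)(1-\cN/N)$, and the coefficients $\langle\pn,[-\Delta+V_{ext}+\tfrac12(N^3V(N\cdot)\ast|\pn|^2)]\pn\rangle$ and $\langle\pn,(N^3V(N\cdot)\ast|\pn|^2)\pn\rangle$ are $O(1)$ under the assumptions \eqref{eq:asmptsVVext} (using that $\|\pn\|_\infty\le C$ and that $N^3V(N\cdot)$ integrates to $\widehat V(0)=O(1)$), the whole problem reduces to estimating
\[
\cN - e^{-B}\cN e^{B}\quad\text{and}\quad \cN^2 - e^{-B}\cN^2 e^{B}
\]
as operators on $\cF^{\le N}$, and showing both are controlled by $C\ell^{\alpha/2}(\cN+1)$ (the second divided by $N$, which is where the factor $(\cN+1)^2/N\le (\cN+1)$ valid on $\cF^{\le N}$ will be used).

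The key computation is the Duhamel formula
\[
e^{-B}\cN e^{B}-\cN=\int_{0}^{1}e^{-sB}[\cN,B]\,e^{sB}\,ds,
\]
combined with the commutation relation $[\cN,b_x^{\sharp}]=\mp b_x^{\sharp}$ coming from \eqref{eq:comm-b}, which, applied to \eqref{eq:B}, yields
\[
[\cN,B]=\int dxdy\,\eta_H(x;y)\,\bigl(b_x^{*}b_y^{*}+b_xb_y\bigr).
\]
Using Cauchy--Schwarz, the bounds \eqref{eq:abd} on $b,b^{*}$, and $\|\eta_H\|\le C\ell^{\alpha/2}$ from \eqref{eq:bndsetaH}, I will estimate
\[
\pm[\cN,B]\le C\ell^{\alpha/2}(\cN+1).
\]
Lemma \ref{lm:Npow} then lets me transfer this bound through the conjugations $e^{\pm sB}$, giving $\pm(\cN-e^{-B}\cN e^{B})\le C\ell^{\alpha/2}(\cN+1)$.

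For the quadratic piece I write $[\cN^{2},B]=\cN[\cN,B]+[\cN,B]\cN$ and iterate the argument above, bounding $\pm[\cN^{2},B]\le C\ell^{\alpha/2}(\cN+1)^{2}$; after conjugating and dividing by $N$, the operator bound $(\cN+1)^{2}\le (N+1)(\cN+1)$ on $\cF^{\le N}$ converts this into $C\ell^{\alpha/2}(\cN+1)$. Expanding $(\cN+1)(1-\cN/N)=(\cN+1)-(\cN^{2}+\cN)/N$ and assembling the two estimates with the $O(1)$ coefficients then yields the claim. The only mild technical point is turning the symbolic Cauchy--Schwarz on the operator-valued integral $\int dxdy\,\eta_H(x;y)(b_x^{*}b_y^{*}+b_xb_y)$ into the clean form bound, but this follows by introducing $\eta_{H,x}(y)=\eta_H(x;y)$, rewriting the integral as $\int dx\,b_x^{*}b^{*}(\eta_{H,x})+\mathrm{h.c.}$, and applying \eqref{eq:abd} together with $\|\eta_H\|\le C\ell^{\alpha/2}$; there is no serious obstacle here, the estimate is genuinely at the level of Lemma \ref{lm:Npow}.
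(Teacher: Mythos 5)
Your proposal is correct and takes essentially the same route as the paper: both compute $e^{-B}\cN e^B - \cN$ via the Duhamel/commutator formula $\int_0^1 e^{-sB}[\cN,B]e^{sB}\,ds$ with $[\cN,B]=\int dxdy\,\eta_H(x;y)(b_x^*b_y^*+b_xb_y)$, bound it by Cauchy--Schwarz together with $\|\eta_H\|\le C\ell^{\alpha/2}$, transport the estimate through $e^{\pm sB}$ with Lemma~\ref{lm:Npow}, handle the quadratic piece $\cN^2$ the same way, and then observe the scalar coefficients in $\wt\cL_N^{(0)}$ are $O(1)$.
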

\begin{proof} 
We start with the observation that 
		\[\begin{split}
		 e^{-B} \cN e^{B} -\cN = \int_0^1 ds \,\bigg( \int dxdy\, \eta_H(x;y) e^{-s B}b^*_xb^*_y e^{sB}\bigg) +\text{h.c.}
		\end{split}\]
This implies together with Lemma \ref{lm:Npow} and Cauchy-Schwarz that
		\[ \pm \big(e^{-B} \cN e^{B} -\cN\big) \leq C\ell^{\alpha/2}(\cN+1). \]
Similarly, it is straight-forward to prove that 
 \[ \pm \big(e^{-B} \cN^2 e^{B} -\cN^2\big) \leq C\ell^{\alpha/2}(\cN+1)^2. \]
If we use these two observations together with the bounds
		\[ | \langle \pn, ( -\Delta + V_{ext})\pn\rangle|\leq \cE_{GP}(\pn)\leq C,\; \langle\pn, (N^3 V(N\cdot) * \vert \pn \vert^2) \pn \big\rangle|\leq \|V\|_1\|\pn\|_\infty^2\leq  C, \]
the proposition follows directly from the definition of $\wt{\cL}_N^{(0)}$ in Eq. \eqref{eq:wtLN0}.
\end{proof}

\subsection{Analysis of $\cG_{N}^{(1)}$}

From \eqref{eq:cLNj}, we recall that 
		\begin{equation}\label{eq:wtLN1}
		\begin{split}
		\wt{\mathcal{L}}_N^{(1)} &=   \sqrt{N} b\left(  \left( N^3 V(N.) * \vert \pn\vert^2   -8\pi \frak{a}_0|\pn|^2 \right)\pn\right)\\
	&\hspace{1cm} - \frac{\mathcal{N}+1}{\sqrt N}  b\left( \left( N^3 V(N\cdot) * \vert \pn\vert^2 \right) \pn \right) + \text{h.c.}
		\end{split}
		\end{equation}
For the statement of the next proposition, let us define $h_N\in L^2(\bR^3)\cap L^\infty(\bR^3)$ by
		\begin{equation}\label{eq:defhN} h_N = \big(N^3(Vw_\ell) (N.)\ast |\pn|^2\big)\pn. \end{equation}
\begin{prop}\label{prop:GNell1} There exists a constant $C>0$ such that
		\[\begin{split}
		 \cG_{N}^{(1)} = &\, \big[ \sqrt{N} b(\cosh_{\eta_H}(h_N) ) + \sqrt{N} b^*(\sinh_{\eta_H}(h_N) )+\emph{h.c.}\big] + \cE_{N,\ell}^{(1)},
		 \end{split}\]
where the self-adjoint operator $\cE_{N,\ell}^{(1)}$ satisfies 
		\[ \pm \cE_{N,\ell}^{(1)} \leq C \ell^{\alpha/2} (\cN+1) + C\ell^{-\alpha/2} N^{-1}(\cN+1)^2+ C\ell^{-1}.   \]
for all $\alpha>0$ and $\ell\in(0;1)$.
\end{prop}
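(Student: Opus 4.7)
The strategy is to split $\wt\cL_N^{(1)}$ into a principal contribution whose conjugation by $e^B$ produces the stated main term, plus lower order pieces that are absorbed into $\cE_{N,\ell}^{(1)}$. The key algebraic input is $V=Vf_\ell+Vw_\ell$, which combined with the scattering length estimate \eqref{eq:Vfa0} in Lemma \ref{3.0.sceqlemma} lets me write the argument of $b(\cdot)$ in the first line of \eqref{eq:wtLN1} as
\[ g_1 := \big(N^3 V(N.)*|\pn|^2 - 8\pi\mathfrak{a}_0|\pn|^2\big)\pn \;=\; h_N + r_N, \]
with $r_N = \big([N^3 (Vf_\ell)(N.) - 8\pi\mathfrak{a}_0\delta]*|\pn|^2\big)\pn$. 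A Taylor expansion of $|\pn|^2$ about the integration point, together with the support radius $O(1/N)$ of $N^3(Vf_\ell)(N.)$, uniform bounds on $\pn$ and $\nabla|\pn|^2$ from Gross--Pitaevskii regularity, and the integrated estimate \eqref{eq:Vfa0}, yields $\|r_N\|\le C/(\ell N)$. I similarly denote by $g_2 := (N^3 V(N.)*|\pn|^2)\pn$ the argument of $b(\cdot)$ in the second line of \eqref{eq:wtLN1}, noting $\|g_2\|_\infty\le C$ since $\|N^3 V(N.)\|_1=\|V\|_1$ and $\|\pn\|_\infty\le C$.

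First I would conjugate $\sqrt N\,(b(h_N)+b^*(h_N))$ by $e^B$. Since $h_N$ is real-valued (as $w_\ell,V,\pn\ge 0$) and $\eta_H$ is real and symmetric, applying \eqref{eq:defd} from Lemma \ref{lm:d-bds} gives
\[ e^{-B}\sqrt N\,b(h_N)\,e^B + \text{h.c.} = \sqrt N\,b(\cosh_{\eta_H}(h_N)) + \sqrt N\,b^*(\sinh_{\eta_H}(h_N)) + \text{h.c.} + \sqrt N\big(d_{\eta_H}(h_N)+d_{\eta_H}(h_N)^*\big). \]
The first sum is exactly the main term of the proposition. The $d$--remainder is controlled by \eqref{eq:bnddf}: using $\|h_N\|\le C$, one obtains $\|\sqrt N\,d_{\eta_H}(h_N)\xi\|\le C\ell^{\alpha/2}N^{-1/2}\|(\cN+1)^{3/2}\xi\|$, and a balanced Cauchy--Schwarz bound produces a form estimate of type $C\ell^{\alpha/2}(\cN+1)+C\ell^{\alpha/2}N^{-1}(\cN+1)^2$, comfortably inside the allowed error.

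For the $r_N$--piece I would proceed analogously, using Lemma \ref{lm:d-bds} and Lemma \ref{lm:Npow} together with \eqref{eq:abd} and $\|r_N\|\le C/(\ell N)$, which gives $\|\sqrt N\,b(r_N)\xi\|\le C(\ell\sqrt N)^{-1}\|(\cN+1)^{1/2}\xi\|$; a Cauchy--Schwarz split with weight $\epsilon=\ell^{\alpha/2+1}\sqrt N$ then yields a form bound $C\ell^{\alpha/2}(\cN+1)+C\ell^{-\alpha/2-2}N^{-1}$, whose second term is at most $C\ell^{-1}$ for $N$ large. For the second line of \eqref{eq:wtLN1}, I would estimate directly
\[ \Big|\big\langle\xi,\tfrac{\cN+1}{\sqrt N}b(g_2)\xi\big\rangle\Big| \le \tfrac{C}{\sqrt N}\|(\cN+1)^{1/2}\xi\|\,\|(\cN+1)\xi\| \]
and balance via $\epsilon=\ell^{-\alpha/2}/\sqrt N$ to obtain precisely $C\ell^{\alpha/2}(\cN+1)+C\ell^{-\alpha/2}N^{-1}(\cN+1)^2$; the corresponding bounds after conjugation by $e^B$ follow by the same weighted Cauchy--Schwarz together with Lemma \ref{lm:Npow} and Lemma \ref{lm:d-bds}.

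The main obstacle here is bookkeeping rather than any single deep estimate: each remainder produced by the scattering identity, the $\cosh$/$\sinh$ expansion, and the Cauchy--Schwarz balancing must be forced into one of the three permitted shapes $\ell^{\alpha/2}(\cN+1)$, $\ell^{-\alpha/2}N^{-1}(\cN+1)^2$, $\ell^{-1}$. The most delicate is the $C\ell^{-1}$ term, which comes entirely from the $r_N$--contribution and relies crucially on the $O(1/(\ell N))$ rate in \eqref{eq:Vfa0}; any weakening of that rate would force the error to blow up as $\ell\to 0$.
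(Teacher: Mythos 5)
Your proposal is correct and follows essentially the same route as the paper: split $V = Vf_\ell + Vw_\ell$ to isolate $h_N$, estimate the remainder $r_N$ using \eqref{eq:Vfa0} and the regularity of $\pn$, conjugate $\sqrt N\,b(h_N)$ via Lemma~\ref{lm:d-bds} to produce the stated $\cosh/\sinh$ main term plus a $d$-remainder controlled by \eqref{eq:bnddf}, and handle both the $\frac{\cN+1}{\sqrt N}b(\cdot)$ piece and the $r_N$ piece by weighted Cauchy--Schwarz together with Lemma~\ref{lm:Npow}. The only cosmetic difference is in how you absorb the $r_N$ contribution: the paper bounds $\pm\big(\sqrt N\,b(r_N)+\mathrm{h.c.}\big)\le C\ell^{-1}$ directly using $\cN\le N$ on $\cF^{\le N}$ (no $N$-large condition needed), whereas your weighted split produces an intermediate bound $C\ell^{-\alpha/2-2}N^{-1}$ that you then need $N$ large to fold into $C\ell^{-1}$; both work, the paper's is marginally tighter.
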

\begin{proof}
First of all, we notice that a simple application of Cauchy-Schwarz and the fact that $ \| (N^3 V(N\cdot) * \vert \pn\vert^2 ) \pn\|_2\leq C$ imply that
		\[ \pm\bigg( \frac{\mathcal{N}+1}{\sqrt N}  b\left( \left( N^3 V(N.) * \vert \pn\vert^2 \right) \pn  \right) +\text{h.c.}\bigg) \leq C \ell^{\alpha/2} (\cN+1) + C\ell^{-\alpha/2} N^{-1}(\cN+1)^2. \]
By Lemma \ref{lm:Npow}, we therefore obtain that
		\[\begin{split} &\pm e^{-B} \bigg( \frac{\mathcal{N}+1}{\sqrt N}  b\left( \left( N^3 V(N.) * \vert \pn\vert^2 \right) \pn \right)+\text{h.c.}\bigg)e^{B} \\
		&\hspace{4.5cm}\leq C \ell^{\alpha/2} (\cN+1) + C\ell^{-\alpha/2} N^{-1}(\cN+1)^2. \end{split}\]
This controls the conjugation of the second contribution to $\wt{\cL}_N^{(1)}$ in \eqref{eq:wtLN1}. To deal with the first term on the right hand side of \eqref{eq:wtLN1}, we first observe that
		\[\begin{split}
		\left( N^3 V(N.) * \vert \pn\vert^2   -8\pi \frak{a}_0|\pn|^2 \right)\pn = h_N + \left( N^3 (Vf_\ell)(N.) * \vert \pn\vert^2   -8\pi \frak{a}_0|\pn|^2 \right)\pn
		\end{split}\]
and we find with Lemma \ref{3.0.sceqlemma} $ii)$ that
		\[ \begin{split}
		\|  N^3 (Vf_\ell)(N.) * \vert \pn\vert^2   -8\pi \frak{a}_0|\pn|^2 \|_\infty  \leq &\,  \sup_{x\in\bR^3}   \int dy\,   (Vf_\ell)(y) \big| \pn^2( x-y/N)- \pn^2(x)\big|   +\frac{C}{\ell N}\\
		 \leq &\, C\|\nabla\pn\|_\infty \|\pn\|\|V\| N^{-1}+C\ell^{-1}N^{-1}\leq C\ell^{-1}N^{-1}.
		\end{split} \]	
Note that $\|\nabla \pn\|_\infty\leq C$ by Appendix \ref{apx:gpfunctional}. By Cauchy-Schwarz, this implies that
		\[\pm e^{-B} \big(\sqrt{N} b\left(  \left( N^3 (Vf_\ell)(N.) * \vert \pn\vert^2   -8\pi \frak{a}_0|\pn|^2 \right)\pn\right)+\text{h.c.}  \big)e^{B} \leq C\ell^{-1}\]
and it only remains to control
		\[e^{-B}\sqrt{N} b( h_N ) e^{B} +\text{h.c.}  = \sqrt{N} b(\cosh_{\eta_H}(h_N) ) + \sqrt{N} b^*(\sinh_{\eta_H}(h_N) ) + \sqrt N d_{\eta_H}(h_N)+\text{h.c.} \]
However, by Eq. \eqref{eq:bnddf} from Lemma \ref{lm:d-bds}, we know that for all $\xi\in \cF^{\leq N}$ we have that
		\[  | \langle\xi,   \sqrt N d_{\eta_H}(h_N)\xi\rangle|   \leq C\ell^{\alpha/2}N^{-1/2}\|h_N\| \| (\cN+1)^{1/2}\xi\|\| (\cN+1)\xi\|\leq C\ell^{\alpha/2}   \langle\xi, (\cN+1)\xi\rangle, \]
and hence, collecting the previous estimates, we conclude the proposition.
\end{proof}

\subsection{Analysis of $\cG_{N}^{(2)}$}
From \eqref{eq:cLNj}, we recall that $\wt{\mathcal{L}}_N^{(2)} = \cK + \cV_{ext} + \wt{\cL}_N^{(2,V)}$, where
		\[
		\begin{split}
		\cK &= \int dx\; \nabla_x a_x^* \nabla_x a_x ,\hspace{0.5cm} \cV_{ext} =  \int dx\,  V_{ext}(x)a_x^{*} a_x \\
		\wt{\cL}_N^{(2,V)} &=  \int  dx \,\big(N^3 V(N.)\ast \vert \pn \vert^2\big)(x) \Big(b_x^* b_x - \frac{1}{N} a_x^* a_x \Big)   \\
		&\hspace{0.5cm}+ \int   dxdy\; N^3 V(N(x-y)) \pn(x) \pn(y) \Big( b_x^* b_y - \frac{1}{N} a_x^* a_y \Big)  \\
		&\hspace{0.5cm}+ \frac{1}{2} \int  dxdy\;  N^3 V(N(x-y)) \pn(y) \pn(x) \Big(b_x^* b_y^*  + \text{h.c.} \Big).
		\end{split}
		\]
In the following, we will analyze the contributions $ e^{-B} \cK e^{B}$, $ e^{-B} \cV_{ext} e^{B}$ and $ e^{-B} \wt{\cL}_N^{(2,V)} e^{B}$ separately. The analysis of the kinetic energy is the most involved so let us first treat the contributions $ e^{-B} \cV_{ext} e^{B}$ and $ e^{-B} \wt{\cL}_N^{(2,V)} e^{B}$.

\begin{prop}\label{prop:Vext} There exists a constant $C>0$ such that
		\[\begin{split}
		   e^{-B} \cV_{ext} e^{B} = \cV_{ext} + \cE_{N,\ell}^{(ext)} ,
		 \end{split}\]
where the self-adjoint operator $\cE_{N,\ell}^{(ext)}$ satisfies 
		\[ \pm \cE_{N,\ell}^{(ext)} \leq C\ell^{\alpha/2}(\cN+1).   \]
for all $\alpha>0$ and $\ell\in(0;1)$.
\end{prop}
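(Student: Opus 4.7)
The plan is to apply the fundamental theorem of calculus,
\[ e^{-B}\cV_{ext}e^{B}-\cV_{ext}=\int_0^1 ds\; e^{-sB}[\cV_{ext},B]\,e^{sB},\]
compute the commutator $[\cV_{ext},B]$ explicitly, and then estimate the resulting quadratic expression in modified creation/annihilation operators by Cauchy--Schwarz, transferring the bound through the conjugation by $e^{sB}$ via Lemma \ref{lm:Npow}.

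To compute the commutator I would use that $\cV_{ext}=d\Gamma(V_{ext})$ is the second quantization of a one-particle multiplication operator, hence it preserves the particle number and commutes with every function of $\cN$; together with the standard identity $[d\Gamma(V_{ext}),a_x^*]=V_{ext}(x)\,a_x^*$ this gives $[\cV_{ext},b_x^*]=V_{ext}(x)\,b_x^*$ and $[\cV_{ext},b_x]=-V_{ext}(x)\,b_x$. Using the symmetry $\eta_H(x;y)=\eta_H(y;x)$ (the convolution kernel $G*\wch{\chi}_H$ is even, since $G$ and $\chi_H$ are radial, and $\pn$ is real) I obtain
\[ [\cV_{ext},B]=\int dx dy\; \zeta(x,y)\bigl[b_x^*b_y^*+b_xb_y\bigr], \qquad \zeta(x,y):=V_{ext}(x)\,\eta_H(x;y).\]

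The decisive quantitative input is the $L^2$ bound $\|\zeta\|_{L^2(\bR^6)}\leq C\ell^{\alpha/2}$. Writing $\eta_H(x;y)=(G*\wch{\chi}_H)(x-y)\,\pn(x)\,\pn(y)$ and using \eqref{eq:whGbnd},
\[ \|\zeta\|^2\leq\|V_{ext}\,\pn\|_\infty^2\,\|\pn\|^2\,\|G*\wch{\chi}_H\|^2\leq C\|V_{ext}\,\pn\|_\infty^2\int_{|p|\geq\ell^{-\alpha}}\!\!\frac{dp}{|p|^4}\leq C\ell^\alpha,\]
provided $\|V_{ext}\,\pn\|_\infty<\infty$. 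This is the only place where the growth assumptions on $V_{ext}$ in \eqref{eq:asmptsVVext} enter: they force $V_{ext}$ to grow at most exponentially (integrating the hypothesis on $\nabla V_{ext}$), and the super-exponential decay of the GP minimizer $\pn$, which follows from the Euler--Lagrange equation \eqref{eq:euler} and the divergence of $V_{ext}$ at infinity, dominates this growth (cf.\ Theorem \ref{thm:gpmin1} and the appendix on the GP functional already invoked in the proof of Lemma \ref{lm:bndsetaH}).

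Given $\|\zeta\|\leq C\ell^{\alpha/2}$, a standard sector-by-sector Cauchy--Schwarz argument, using that $b_x^*b_y^*$ acts on an $n$-particle state as $\sqrt{(N-n)(N-n-1)}/N\leq 1$ times $a_x^*a_y^*$, together with the two-body creation bound $\|(a^*\otimes a^*)(\zeta)\xi^{(n)}\|\leq\sqrt{(n+1)(n+2)}\|\zeta\|\|\xi^{(n)}\|$ and weighting the resulting sum by $\sqrt{n+1}$ and $\sqrt{n+2}$, yields the form inequality
\[ \pm\int dx dy\; \zeta(x,y)[b_x^*b_y^*+b_xb_y]\leq C\|\zeta\|(\cN+1)\]
on $\cF^{\leq N}$. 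Combining with Lemma \ref{lm:Npow} then produces
\[ \pm\bigl(e^{-B}\cV_{ext}e^{B}-\cV_{ext}\bigr)\leq C\ell^{\alpha/2}\int_0^1 ds\; e^{-sB}(\cN+1)\,e^{sB}\leq C\ell^{\alpha/2}(\cN+1),\]
as required. The main (and only non-routine) obstacle I anticipate is controlling $\|V_{ext}\,\pn\|_\infty$; everything else is a direct adaptation of the argument used for $\cG_N^{(0)}$ in Proposition \ref{prop:GNell0} and is noticeably simpler than the upcoming analysis of $e^{-B}\cK e^B$, since $\cV_{ext}$ preserves the number of particles and therefore commutes with all the $\cN$-dependent prefactors hidden inside $b_x$ and $b_x^*$.
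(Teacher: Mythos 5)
Your proposal is correct and follows essentially the same route as the paper: both write the difference as a Duhamel integral of the conjugated commutator $[\cV_{ext},B]$, compute that commutator to a quadratic expression with kernel $V_{ext}(x)\eta_H(x;y)$, estimate that kernel's $L^2$-norm by $C\ell^{\alpha/2}$ using the exponential decay of $\pn$ against the at-most-exponential growth of $V_{ext}$ (Eq. \eqref{eq:Vextbnd}, or equivalently $\int V_{ext}^2\pn^2<\infty$ as the paper prefers), and close via Cauchy–Schwarz together with Lemma \ref{lm:Npow}. The only cosmetic difference is that you first establish the form bound $\pm\int\zeta(b_x^*b_y^*+b_xb_y)\leq C\|\zeta\|(\cN+1)$ and then conjugate, whereas the paper applies Cauchy–Schwarz directly to $\int dx\,V_{ext}(x)\langle\xi,e^{-sB}b_x^*b^*(\eta_{H,x})e^{sB}\xi\rangle$; both give identical bounds.
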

\begin{proof}
We start with the observation that
		\[ e^{-B} \cV_{ext} e^{B} - \cV_{ext} =\frac12 \int_0^1ds\, \bigg(\int dx\, V_{ext}(x)e^{-sB} b^*_xb^*(\eta_{H,x})e^{sB}  \bigg) +\text{h.c.} \]
In the appendix we show that the assumptions (2) in \eqref{eq:asmptsVVext} imply that the external potential $V_{ext}(x)$ has at most exponential growth as $|x|\to\infty$ while, by \eqref{eq:expdecaypn}, the minimizer $\pn(x)$ has exponential decay as $|x|\to\infty$ with arbitrary rate. This implies in particular that $ \int dx\, V_{ext}^2(x) |\pn(x)|^2\leq C.$ As a consequence, Cauchy-Schwarz, Lemma \ref{lm:Npow} and Lemma \ref{lm:bndsetaH} imply that 
		\[\begin{split}
		&\bigg|   \int dx\, V_{ext}(x)\langle\xi, e^{-sB} b^*_xb^*(\eta_{H,x})e^{sB}\xi\rangle   \bigg| \\
		& \leq  \bigg( \int dx\, V_{ext}^2(x) \|\eta_{H,x}\|^2  \|  (\cN+1)^{1/2}e^{sB}\xi\|^2\bigg)^{1/2}\bigg( \int dx\,  \| a_xe^{sB}\xi\|^2\bigg)^{1/2}\\
		&\leq C\ell^{\alpha/2}\bigg( \int dx\, V_{ext}^2(x) |\pn(x)|^2\bigg)^{1/2} \langle \xi, (\cN+1)\xi \rangle\leq C\ell^{\alpha/2}\langle \xi, (\cN+1)\xi \rangle
		\end{split}\]
uniformly in $s\in [0;1]$, which proves the claim.
\end{proof}
\begin{prop}\label{prop:GNell2V} There exists a constant $C>0$ such that
		\[\begin{split}
		 &e^{-B} \wt{\cL}_N^{(2,V)} e^{B} \\
		 &=  \int dx \big(N^3 V(N.)\ast \vert \pn \vert^2\big)(x) b_x^* b_x   + \int dx dy \ N^3 V(N(x-y)) \pn(x)\pn(y) b_x^*b_y \\
		&\hspace{0.5cm} + \frac{1}{2} \int dx dy  \ N^3 V(N(x-y)) \pn(x) \pn(y)\big ( b_x^* b_y^*  + b_x b_y  \big)  \\
		&\hspace{0.5cm}+ \int dx dy \ N^3 V(N(x-y)) \pn(x) \pn(y) \eta_H(x;y) \left(\frac{N-\mathcal{N}}{N}\right)\left( \frac{N-\mathcal{N}-1}{N}\right) +\cE_{N,\ell}^{(2,V)},
		 \end{split}\]
where the self-adjoint operator $\cE_{N,\ell}^{(2,V)}$ satisfies 
		\[ \pm \cE_{N,\ell}^{(2,V)} \leq C\ell^{\alpha/2} (\cN+ \cV_N + 1).   \]
for all $\alpha>0$, $\ell\in(0;1)$ and $N\in\mathbb{N}$ sufficiently large.
\end{prop}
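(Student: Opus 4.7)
My plan is to conjugate each of the three pieces of $\wt{\cL}_N^{(2,V)}$ separately by $e^B$, using the expansion from Lemma \ref{lm:d-bds} for $e^{-B} b_x e^{B} = b(\cosh_{\eta_H,x}) + b^*(\sinh_{\eta_H,x}) + d_{\eta_H,x}$, combined with the pointwise/norm bounds from Lemma \ref{lm:bndsetaH} and the small-norm estimates $\|p_{\eta_H}\|, \|r_{\eta_H}\| \leq C\ell^\alpha$ from \eqref{eq:bndpr}. Throughout, the uniform bound $\|N^3 V(N\cdot)\ast|\pn|^2\|_\infty \leq \|V\|_1 \|\pn\|_\infty^2 \leq C$ and the standard estimate $\int dxdy\, N^3 V(N(x-y))|\pn(x)||\pn(y)| \leq C$ give us harmless bounded weights for the two ``number-conserving'' pieces.

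The subtraction terms $-\frac{1}{N}a_x^* a_x$ and $-\frac{1}{N}a_x^* a_y$ in $\wt{\cL}_N^{(2,V)}$ are a priori bounded by $C\cN/N \leq C$; after conjugation by $e^B$, Lemma \ref{lm:Npow} keeps them in that size class, so they get absorbed into the error (this is precisely why they disappear from the main part of the statement). For the diagonal piece $\int dx\, (N^3V(N\cdot)\ast|\pn|^2)(x)\, b_x^* b_x$, expanding via Lemma \ref{lm:d-bds} produces nine terms; the main contribution $b^*(\cosh_{\eta_H,x})b(\cosh_{\eta_H,x})$ reproduces $b_x^* b_x$ up to a $r_{\eta_H,x}$-correction bounded by $C\ell^\alpha(\cN+1)$, while the cross terms involving $\sinh_{\eta_H,x}$ are estimated by Cauchy--Schwarz using $\|\sinh_{\eta_H,x}\| \leq C\ell^{\alpha/2}|\pn(x)|$ and the $d_{\eta_H,x}$-terms use the $\ell^{\alpha/2}/N$ factor from \eqref{eq:bnddx}. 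The off-diagonal piece $\int dxdy\, N^3 V(N(x-y))\pn(x)\pn(y)\, b_x^* b_y$ is treated analogously; here the singular $N^3 V(N\cdot)$ weight is compensated by first splitting $\sinh_{\eta_H,x} = \eta_{H,x} + p_{\eta_H,x}$ and pairing the resulting error against $\cV_N^{1/2}$ via Cauchy--Schwarz, which is exactly where the $\cV_N$ appears in the final error bound.

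The essential step is the pair term $\frac{1}{2}\int dxdy\, N^3 V(N(x-y))\pn(x)\pn(y)(b_x^* b_y^* + b_x b_y)$, where a genuinely new contribution is produced. Expanding $e^{-B} b_x^* b_y^* e^B$ via Lemma \ref{lm:d-bds}, the dominant piece $b^*(\cosh_{\eta_H,x}) b^*(\cosh_{\eta_H,y})$ recovers $b_x^* b_y^*$ up to $\ell^\alpha$-errors, while the ``contracting'' cross term $b(\sinh_{\eta_H,x}) b^*(\cosh_{\eta_H,y})$ is rewritten using the modified commutation relation \eqref{eq:comm-b} as $[b(\sinh_{\eta_H,x}), b^*(\cosh_{\eta_H,y})] + b^*(\cosh_{\eta_H,y}) b(\sinh_{\eta_H,x})$. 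The commutator supplies the scalar $\langle\sinh_{\eta_H,x},\cosh_{\eta_H,y}\rangle(1-\cN/N) - \tfrac{1}{N}a^*(\cosh_{\eta_H,y})a(\overline{\sinh_{\eta_H,x}})$; expanding $\sinh_{\eta_H,x} = \eta_{H,x} + p_{\eta_H,x}$ and $\cosh_{\eta_H,y} = \delta_y + r_{\eta_H,y}$, the pointwise diagonal pairing $\langle\eta_{H,x},\delta_y\rangle = \eta_H(x;y)$ yields the explicit term $\eta_H(x;y)(1-\cN/N)$ plus $O(\ell^\alpha)$ corrections. Adding the hermitian conjugate from $e^{-B} b_x b_y e^B$ cancels the $\frac{1}{2}$ prefactor, and an extra factor $(N-\cN-1)/N$ arises by carefully commuting the $(N-\cN)/N$ weight built into the $b$-operators past one more $b^*$.

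I expect the main obstacle to lie in Step 4: one must verify that the normal-ordered remainder $-\tfrac{1}{N}a^*(\cosh_{\eta_H,y})a(\overline{\sinh_{\eta_H,x}})$ from \eqref{eq:comm-b}, once multiplied by the singular kernel $N^3V(N(x-y))\pn(x)\pn(y)$ and integrated over $x,y$, is indeed controlled. The natural bound uses $\|\sinh_{\eta_H,x}\| \leq C\ell^{\alpha/2}|\pn(x)|$ together with Cauchy--Schwarz against a factor of $\cV_N^{1/2}$ (via $\cN/N \leq 1$), which is again why the error bound must tolerate a $C\ell^{\alpha/2}\cV_N$ contribution. All remaining cross terms and $d_{\eta_H,\cdot}$ contributions in the pair expansion reduce by the same template to $\ell^{\alpha/2}(\cN + \cV_N + 1)$ errors, and collecting everything yields the stated identity.
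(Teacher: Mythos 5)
Your overall strategy — split $\wt{\cL}_N^{(2,V)}$ into its pieces, conjugate each with Lemma \ref{lm:d-bds}, absorb the $-\frac1N a^*a$ subtraction terms and the $d$-errors into $\cE_{N,\ell}^{(2,V)}$, and extract the explicit $\eta_H(x;y)$ scalar from normal-ordering the $(b\cosh+b^*\sinh)$-product in the pair term — matches the paper's. However, there is one genuine gap: the factor $(N-\cN-1)/N$ cannot be obtained the way you describe.

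Normal-ordering $\big(b(\cosh_{\eta_H,x})+b^*(\sinh_{\eta_H,x})\big)\big(b(\cosh_{\eta_H,y})+b^*(\sinh_{\eta_H,y})\big)$ via \eqref{eq:comm-b} only produces the scalar $\eta_H(x;y)\,(N-\cN)/N$; there is no $(N-\cN-1)/N$ correction hiding in that computation. The second factor arises from one of the cross terms you declare to be a pure error: in the paper's notation this is $F_{524} = \tfrac12\int N^3V\,\pn\pn\,b(\cosh_{\eta_H,x})\,d_{\eta_H,y}+\text{h.c.}$. Since $d_{\eta_H,y} = \overline{d}_{\eta_H,y} - (\cN/N)b^*(\eta_{H,y})$ and $\overline{d}_{\eta_H,y}$ is genuinely small (bound \eqref{eq:bndaydxbar}), the piece $-(\cN/N)b^*(\eta_{H,y})$ must be kept; commuting $b_x\,\cN\,b^*(\eta_{H,y})$ produces the scalar $\eta_H(x;y)(1-\cN/N)(\cN+1)$ which, weighted by $-1/N$, is exactly what turns $(N-\cN)/N$ into $(N-\cN)/N \cdot (N-\cN-1)/N$. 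Because $\int N^3V(N(x-y))\pn(x)\pn(y)\eta_H(x;y)\,dxdy$ is of size $O(N)$, this correction is of order $\cN+1$ with an $O(1)$ prefactor and \emph{cannot} be absorbed into the $C\ell^{\alpha/2}(\cN+\cV_N+1)$ error. Your statement that the $(N-\cN-1)/N$ factor ``arises by carefully commuting the $(N-\cN)/N$ weight past one more $b^*$'' hints at the right phenomenon, but it is inconsistent with your concluding sentence that ``all remaining cross terms and $d_{\eta_H,\cdot}$ contributions in the pair expansion reduce\ldots to $\ell^{\alpha/2}(\cN+\cV_N+1)$ errors''; you must single out the $b(\cosh_{\eta_H,x})\,d_{\eta_H,y}$ cross term and compute its scalar contribution explicitly.
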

\begin{proof}
We split	$e^{-B}\mathcal{L}_N^{(2,V)} e^{B} = \sum_{j=1}^{5} F_j $, setting
	\begin{equation}\label{eq:F12345}\begin{split}
	F_1 &:= \int   dx \, \big(N^3 V(N.)\ast \vert \pn \vert^2\big)(x)  e^{-B} b_x^* b_x  e^{B}   , \\
	F_2 &:= \int dxdy \,  N^3 V(N(x-y)) \pn(x) \pn(y)  e^{-B} b_x^* b_y  e^{B}   ,  \\
	F_3 &:= -\frac{1}{N} \int dx  \, \big(N^3 V(N.)\ast \vert \pn \vert^2\big)(x) e^{-B} a_x^* a_x e^{B}    ,  \\
	F_4 &:=  -\frac{1}{N}\int dxdy \,  N^3 V(N(x-y)) \pn(x) \pn(y)  e^{-B} a_y^* a_x  e^{B}   ,\\
	F_5 &:=  \frac{1}{2} \int dxdy \, N^3 V(N(x-y)) \pn(y) \pn(x) e^{-B}\big ( b_x^* b_y^*  + b_x b_y  \big) e^{B}.
	\end{split}\end{equation}
To estimate these terms, we use the decomposition \eqref{eq:defdx} and the bounds in Lemma \ref{lm:d-bds}. With Lemma \ref{lm:Npow} and Cauchy-Schwarz, we obtain 
\begin{equation} \label{F1F2}\begin{split}
	\pm \bigg( F_1 -  \int dx \, \big(N^3 V(N.)\ast \vert \pn \vert^2\big)(x) b_x^* b_x \bigg)&\leq C \ell^{\alpha/2} (\mathcal{N}+1), \\
	\pm \bigg( F_2 - \int dx dy \, N^3 V(N(x-y)) \pn(x)\pn(y) b_x^* b_y\bigg) &\leq   C \ell^{\alpha/2} (\mathcal{N}+1)
	\end{split}
	\end{equation}
and, similarly, that
	\begin{equation} \label{F3F4}
	\pm F_3, \pm F_4 \leq N^{-1}  \|V\|_1\|\pn\|_\infty^2 (\mathcal{N}+1) \leq C\ell^{\alpha/2} (\mathcal{N}+1) 
	\end{equation}
for all $N$ sufficiently large. 
Hence, let us focus on the analysis of $F_5$. We split it into $F_5 = F_{51} + F_{52} + F_{53} $, where
	\begin{equation}\label{eq:defF515253}\begin{split}
	F_{51}
	&:= \frac{1}{2} \int dx dy \ N^3 V(N(x-y)) \pn(x) \pn(y) \\
	&\qquad \times \big(b(\cosh_{ \eta_H , x}) + b^*(\sinh_{\eta_H,x})\big)\big(b(\cosh_{ \eta_H , y}) + b^*(\sinh_{\eta_H,y})\big) + \text{h.c.},\\
	F_{52} &:=  \int dx dy \ N^3 V(N(x-y)) \pn(x) \pn(y) \big[d_{\eta_H,x} b(\cosh_{ \eta_H , y}) + d_{\eta_H,x} b^*(\sinh_{\eta_H,y}) \\
	&\hspace{4.5cm} +  b(\cosh_{ \eta_H , x})d_{\eta_H,y} + b^*(\sinh_{\eta_H,x})d_{\eta_H,y}\big] + \text{h.c.},\\
	F_{53} &:= \frac{1}{2} \int dx dy \ N^3 V(N(x-y)) \pn(x) \pn(y) d_{\eta_H,x} d_{\eta_H,y} + \text{h.c.}
	\end{split}\end{equation}
and consider the different contributions separately. We use $\sinh_{\eta_H} = \eta_H + \text{p}_{\eta_H}$ and $\cosh_{\eta_H} = 1 + \text{r}_{\eta_H}$ and the estimates in Lemma \ref{lm:d-bds} to rewrite $F_{51}$ as
	\begin{equation} \label{F51}
	\begin{split}
	F_{51}
	&= \frac{1}{2} \int dx dy  \ N^3 V(N(x-y)) \pn(x) \pn(y) \big( b_x b_y + b^*_x b^*_y\big)\\
	&\quad+ \int dx dy  \ N^3 V(N(x-y)) \pn(x) \pn(y) \eta_H(x;y) \left(\frac{N-\mathcal{N}}{N}\right) + \mathcal{E}^{V}_{51}
	\end{split}
	\end{equation}
	with
	\begin{equation} \label{F51 error}
	\pm \mathcal{E}_{51}^{V} \leq C \ell^{\alpha/2} (\mathcal{N}+1).
	\end{equation}
Let us switch to $F_{52}$, defined in \eqref{eq:defF515253}. We write $ F_{52} = F_{521} + F_{522} + F_{523} + F_{524}$ with
	\begin{align*}
	F_{521} &:= \frac{1}{2} \int dx dy \ N^3 V(N(x-y)) \pn(x) \pn(y) d_{\eta_H,x} b(\cosh_{ \eta_H , y})  + \text{h.c.},  \\
	F_{522} &:= \frac{1}{2} \int dx dy \ N^3 V(N(x-y)) \pn(x) \pn(y) d_{\eta_H,x} b^*(\sinh_{\eta_H,y}) + \text{h.c.},  \\
	F_{523} &:= \frac{1}{2} \int dx dy \ N^3 V(N(x-y)) \pn(x) \pn(y) b^*(\sinh_{\eta_H,x})d_{\eta_H,y} + \text{h.c.}, \\
	F_{524} &:= \frac{1}{2} \int dx dy \ N^3 V(N(x-y)) \pn(x) \pn(y) b(\cosh_{ \eta_H , x}) d_{\eta_H,y} + \text{h.c.}
	\end{align*}
Now, by applying \eqref{eq:bnddx}, we find that
	\begin{align*}
	\vert \langle \xi, d_{\eta_H,x} b(\cosh_{ \eta_H , y}) \xi \rangle \vert
	&\leq C \ell^{\alpha/2} \Vert (\mathcal{N}+1)^{1/2} \xi \Vert  \Big(  N^{-1/2}\Vert a_x a_y \xi \Vert + |\pn(x)|\Vert a_y \xi \Vert \Big) \\
	&\hspace{0.5cm}+C \ell^{\alpha/2}\|\text{r}_{\eta_H,y}\| \Vert (\mathcal{N}+1)^{1/2} \xi \Vert    \Big(\Vert a_x \xi \Vert +|\pn(x)|  \Vert (\mathcal{N}+1)^{1/2} \xi \Vert \Big) 
	\end{align*}
and, similarly, that 
	\begin{align*}
	\vert \langle \xi, d_{\eta_H,x} &b^*(\sinh_{\eta_H,y}) \xi \rangle \vert \\ 
	&\leq C \ell^{\alpha/2} \Vert (\mathcal{N}+1)^\frac{1}{2} \xi \Vert \Big( N^{-1}\vert \sinh_{\eta_H}(x;y)\vert \cdot \Vert (\mathcal{N}+1)^\frac{1}{2} \xi \Vert\Big)\\
	&\hspace{0.5cm} +  C \ell^{\alpha/2}\|\sinh_{\eta_H,y}\| \Vert (\mathcal{N}+1)^\frac{1}{2} \xi \Vert \Big(\Vert a_x   \xi \Vert  +|\ph (x)| \Vert (\mathcal{N}+1)^{1/2} \xi \Vert\Big).
	\end{align*}
Together with the fact that 
		$$ \sup_{x,y\,\in \bR^3} N^{-1}|\sinh_{\eta_H}(x;y) | \leq N^{-1}\big(| \eta_H(x;y) | + C\big)\leq C $$
by \eqref{eq:etapwbnd}, this yields together with the estimates in Lemma \ref{lm:d-bds}
	\begin{equation} \label{F521522}
	\pm F_{521} \leq C \ell^{\alpha/2} (\mathcal{N}  + \mathcal{V}_N + 1 ), \hspace{0.5cm} \pm F_{522} \leq C\ell^{\alpha/2} (\mathcal{N}+1).
	\end{equation}
A similar (but simpler) argument involving \eqref{eq:bnddx} also shows that 
		\begin{equation}\label{F523}
		\pm F_{523} \leq C \ell^{\alpha/2} (\mathcal{N}+1).
		\end{equation}
Next, let us switch to $ F_{524}$, defined above. Here, we first compute
	\begin{align*}
	b_x \mathcal{N} b^*(\eta_{H,y}) = b^*(\eta_{H,y}) b_x (\mathcal{N}+1) + \eta_H(x;y)  (1-\cN/N) (\mathcal{N}+1) -  a^*(\eta_{H,y}) a_x (\mathcal{N}+1)/N.
	\end{align*}
If we recall the notation $ \overline d_{\eta_H,y} = d_{\eta_H,y} + (\cN/N) b^*(\eta_{H,y})$, we therefore obtain
	\[
	F_{524} = - \int dx dy \ N^3 V(N(x-y)) \pn(x) \pn(y) \eta_H(x;y) \left(\frac{N-\mathcal{N}}{N}\right)\left( \frac{\mathcal{N}+1}{N}\right) + \mathcal{E}_{524}^{(V)}
	\]
where 
	\begin{align*}
	\mathcal{E}_{524}^{(V)} 
	&=\frac{1}{2}\int dx dy \ N^3 V(N(x-y)) \pn(x) \pn(y) \big[ b_x \overline{d}_{\eta_H,y} + b(\text{r}_{\eta_H,x}) d_{\eta_H,y} \\
	&\qquad \;\;-  N^{-1}b^*(\eta_{H,y}) b_x (\mathcal{N}+1) +  N^{-2}a^*(\eta_{H,y})a_x (\mathcal{N}+1) \big] +\text{h.c.} 
	\end{align*}
Now, using \eqref{eq:bnddx} and \eqref{eq:bndaydxbar} and proceeding as in the previous steps, we find that
	\begin{equation} \label{F523 error}
	\pm \mathcal{E}_{524}^{(V)}
	\leq C \ell^{\alpha/2} \big(\mathcal{N}  + \mathcal{V}_N +1 \big).
	\end{equation}
Collecting the previous bounds, this controls the contribution $ F_{52}$, defined in \eqref{eq:defF515253}. 	
	
Finally, to control the last contribution $F_{53} $, defined in \eqref{eq:defF515253}, we use \eqref{eq:bnddxdy} and the fact that $N^{-1} \vert\eta_H(x;y) \vert \leq C $ so that
	\begin{equation} \label{F53}
	\pm F_{53} \leq C \ell^{\alpha/2} \big(\mathcal{N}  + \mathcal{V}_N +1 \big).
	\end{equation}
In summary, if we combine  \eqref{eq:F12345}, \eqref{F1F2}, \eqref{F3F4}, \eqref{F51}, \eqref{F51 error}, \eqref{F521522}, \eqref{F523}, \eqref{F523 error} and \eqref{F53}, we obtain the claim.
\end{proof}

Finally, we study the kinetic energy. We start with two auxiliary lemmas. 
\begin{lemma}\label{lm:Deltarp}
	Assume \eqref{eq:asmptsVVext}, let $\ell\in (0;1)$ be sufficiently small and let $\alpha>0$. Then, there exists $C>0$, independent of $N$ and $\ell$, such that for all $n\geq 2$ and $j\in \{1,2\}$ 
	\begin{equation} \label{L2 estimate nabla widetilde eta_H, delta widetilde eta_H}
	\Vert \nabla_j \eta_H^{(n)} \Vert, \ \Vert \Delta_j \eta_H^{(n)} \Vert \leq C \ell^{\alpha/2} \Vert \eta_H \Vert^{n-2}.
	\end{equation}
	As a consequence, we have that
	\begin{equation} \label{L2 estimate nabla pr, delta pr}
	\Vert \nabla_j \emph{p}_{\eta_H} \Vert, \ \Vert \Delta_j \emph{p}_{\eta_H} \Vert  , \Vert \nabla_j \emph{r}_{\eta_H} \Vert, \ \Vert \Delta_j \emph{r}_{\eta_H} \Vert \leq C \ell^{\alpha/2}.
	\end{equation}
\end{lemma}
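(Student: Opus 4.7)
The plan is to exploit the factorization
\[
\eta_H(x;y) = \tilde G(x-y)\,\pn(x)\,\pn(y), \qquad \tilde G := G\ast\widecheck{\chi}_H,
\]
to peel off one copy of $\eta_H$ in the first variable and then move the derivative onto either $\pn$ or onto the convolution kernel $\tilde G$, controlling the latter through Fourier multipliers that are uniformly small because $\widehat{\tilde G}$ is supported on $\{|p|\geq \ell^{-\alpha}\}$. Concretely, for $n\geq 2$ I would write
\[
\eta_H^{(n)}(x;y) = \pn(x)\,(\tilde G * H_y)(x), \qquad H_y(z) := \pn(z)\,\eta_H^{(n-1)}(z;y),
\]
regarded as a convolution in the first variable with $y$ as a parameter. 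Applying the Leibniz rule then gives $\nabla_x \eta_H^{(n)} = \nabla\pn\,(\tilde G * H_y) + \pn\,(\nabla\tilde G * H_y)$ and an analogous three-term expansion for $\Delta_x\eta_H^{(n)}$.

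The key analytic input is that although $\nabla G, \Delta G$ have poor pointwise behaviour, the cutoff $\chi_H$ makes the multipliers uniformly bounded: by \eqref{eq:whGbnd},
\[
\|\widehat{\tilde G}\|_\infty \leq C\ell^{2\alpha}, \qquad \|\widehat{\nabla\tilde G}\|_\infty \leq C\ell^\alpha, \qquad \|\widehat{\Delta\tilde G}\|_\infty \leq C.
\]
Combined with the $L^\infty$ bounds on $\pn,\nabla\pn,\Delta\pn$ coming from the GP-minimizer appendix (which uses that the super-exponential decay of $\pn$ absorbs the exponential growth of $V_{ext}$), Plancherel yields
\[
\int dy\,\|(\nabla^k\tilde G)\ast H_y\|_{L^2(dx)}^2 \leq \|\widehat{\nabla^k\tilde G}\|_\infty^2 \int dy\,\|H_y\|_{L^2(dz)}^2 \leq C\,\ell^{2\alpha(2-k)}\,\|\pn\|_\infty^2\,\|\eta_H^{(n-1)}\|^2
\]
for $k=0,1,2$. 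Using $\|\eta_H^{(n-1)}\|\leq \|\eta_H\|^{n-1}\leq C\ell^{\alpha/2}\|\eta_H\|^{n-2}$, this delivers the two bounds in \eqref{L2 estimate nabla widetilde eta_H, delta widetilde eta_H}; in fact the $\nabla_j$ estimate comes with an extra factor $\ell^{\alpha/2}$ to spare, while the $\Delta_j$ estimate is the one that is saturated (the term $\pn\,\Delta\tilde G * H_y$, with $\|\widehat{\Delta\tilde G}\|_\infty = O(1)$, is the bottleneck). The case $j=2$ follows immediately from the case $j=1$ upon noting that $\eta_H(x;y) = \eta_H(y;x)$ (because $G$ is radial and $\chi_H$ is even), so all powers $\eta_H^{(n)}$ are symmetric kernels.

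Finally, the pointwise/Hilbert--Schmidt bounds \eqref{L2 estimate nabla pr, delta pr} on $\emph{p}_{\eta_H}$ and $\emph{r}_{\eta_H}$ follow by termwise differentiation of the Taylor series \eqref{eq:defpretaH}: using the bounds just proved and $\|\eta_H\|\leq C\ell^{\alpha/2}$ (so that $\|\eta_H\|^{n-2}/n!$ is summable with constant sum), one obtains $\|\nabla_j \emph{p}_{\eta_H}\|, \|\nabla_j \emph{r}_{\eta_H}\|, \|\Delta_j \emph{p}_{\eta_H}\|, \|\Delta_j \emph{r}_{\eta_H}\| \leq C\ell^{\alpha/2}$. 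The main obstacle in this strategy is purely bookkeeping: ensuring that the Leibniz expansion of $\Delta_x[\pn(\tilde G\ast H_y)]$ is handled so that each of its three pieces ends up controlled by one (and only one) of the three small multipliers above, and that the $L^\infty$ bound on $\Delta\pn$ invoked from the appendix is available under hypothesis \eqref{eq:asmptsVVext}.
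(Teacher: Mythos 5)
Your proof is correct. Both you and the paper expand $\Delta_x\big[(G\ast\widecheck\chi_H)(x-y)\,\pn(x)\pn(y)\big]$ via the Leibniz rule into three terms, invoke the symmetry of $\eta_H$ for $j=2$, and conclude \eqref{L2 estimate nabla pr, delta pr} by termwise summation of the series \eqref{eq:defpretaH}; what differs is how the three pieces are estimated. The paper fixes $n=2$ (the reduction from general $n$ via the operator-norm bound $\|\eta_H\|^{n-2}$ is left implicit), passes fully to Fourier variables, and controls the resulting fourfold convolution integral through the quartic decay $|\widehat\pn(p)|\leq C(1+|p|)^{-4}$ from Lemma~\ref{lem:gpmin2}. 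You instead peel off one copy of $\eta_H$ (which treats all $n\geq 2$ uniformly), observe that the $k$-th derivatives of $G\ast\widecheck\chi_H$ are Fourier multipliers supported on $\{|p|\geq\ell^{-\alpha}\}$ satisfying $\sup_p|p|^k|\widehat G(p)|\chi_H(p)\leq C\ell^{(2-k)\alpha}$ for $k=0,1,2$ by \eqref{eq:whGbnd}, and close with Plancherel plus the $L^\infty$ bounds on $\pn,\nabla\pn,\Delta\pn$. Your route avoids the heavier Fourier bookkeeping and, as you note, gives an extra $\ell^\alpha$ to spare in the gradient case; both proofs ultimately rest on the same regularity for the GP minimizer established in Lemma~\ref{lem:gpmin2}.
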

	\begin{proof}
	Notice first of all that \eqref{L2 estimate nabla pr, delta pr} follows indeed directly from \eqref{L2 estimate nabla widetilde eta_H, delta widetilde eta_H} and the definition \eqref{eq:defpretaH}. To prove 	\eqref{L2 estimate nabla widetilde eta_H, delta widetilde eta_H}, let us first consider $\Vert \Delta_1 \eta_H^{(n)} \Vert$. Since $\eta_H$ is symmetric, this takes also care of the case $j=2$. Moreover, note that it suffices to consider the case $n=2$. By the triangle inequality, we then have that
		\begin{align*}
		\Vert \Delta_1 \eta_H^{(2)} \Vert &\leq \left( \int dx dz \ \left\vert \int dy \ (\Delta G * \widecheck{\chi}_H)(x-y) \pn(x) \pn(y) \eta_{H}(y;z) \right\vert^2 \right)^{1/2}  \\
		&\quad+ \left(2 \int dx dz \left\vert \int dy \ (\nabla G * \widecheck{\chi}_H)(x-y) \cdot \nabla \pn(x) \pn(y) \eta_{H}(y;z) \right\vert^2  \right)^{1/2} \\
		&\quad + \left( \int dx dz \left\vert \int dy \ ( G * \widecheck{\chi}_H)(x-y) \Delta \pn(x) \pn(y) \eta_{H}(y;z) \right\vert^2  \right)^{1/2} \\
		&=: T_1 + T_2 + T_3.
		\end{align*}
Let us start with $T_1$. After switching to Fourier space, using the bound \eqref{eq:decphhat} and using \eqref{eq:whGbnd} that $\vert \widehat{\Delta G}(p) \vert = \vert p \vert^2 \vert \widehat{G}(p) \vert \leq C$, we get
		\begin{align*}
		 &T_1^2 
		= \int dr_1 dr_2 ds_1 ds_2 \ \wh{G}(r_1) \chi_H(r_1) \wh{G}(r_2) \chi_H(r_2) \widehat{\vert \pn \vert^2}(s_1+ s_2) \widehat{\vert \pn \vert^2}(r_1 - s_1) \\
		&\hspace{1cm}\times \widehat{\vert \pn \vert^2}(r_2-s_2) \widehat{\vert \pn \vert^2}(r_1 + r_2) \widehat{\Delta G}(s_1) \chi_H(s_1) \widehat{\Delta G}(s_2) \chi_H(s_2) \\
		&\leq C \int_{\vert r_1\vert, \vert r_2 \vert, \vert s_1 \vert, \vert s_2 \vert \geq \ell^{-\alpha}} dr_1 dr_2 ds_1 ds_2 \ \frac{1}{\vert r_1 \vert^2} \frac{1}{\vert r_2 \vert^2} \frac{1}{(1+ \vert s_1 + s_2 \vert)^4}\frac{1}{(1+\vert r_1 - s_1 \vert)^4} \\
		&\hspace{4cm}\times \frac{1}{(1+\vert r_2 - s_2\vert)^4}\frac{1}{(1+ \vert r_1 + r_2 \vert)^4} \\
		&\leq C \int_{\vert r_1\vert, \vert r_2 \vert \geq \ell^{-\alpha}} dr_1 dr_2 \ \frac{1}{\vert r_1 \vert^2} \frac{1}{\vert r_2 \vert^2} \frac{1}{(1+ \vert r_1 + r_2 \vert)^4} 
		\leq C \int_{\vert r \vert \geq \ell^{-\alpha}} dr\, \frac{1}{\vert r \vert^4} 
		= C \ell^\alpha.
		\end{align*}
For the second term $T_2$, we use that $\big\Vert \widehat{\vert \nabla \pn \vert^2} \big\Vert_\infty \leq \Vert \vert \nabla \pn \vert^2 \Vert_1\leq C$ by Lemma \ref{lem:gpmin2} and that $ | \widehat{\nabla G}(p) | \leq \ell^{\alpha} C$ for $|p|\geq \ell^{-\alpha}$, which yields by similar computation $T_2^2 \leq C \ell^{2\alpha}$. 
		Similarly one deals with $T_3$ and the bounds on $\Vert \nabla_j \eta_H^{(n)} \Vert$ can be proved analogously.
\end{proof}
\begin{lemma}\label{lm:nablad-bds}
Assume \eqref{eq:asmptsVVext}, let $\ell\in (0;1)$ be sufficiently small and let $\alpha>0$. Then, there exists a constant $C>0$ such that 
		\begin{equation} \label{estimate int a star nabla widetilde a nabla widetilde }
		\pm \int dx \ a^* (\nabla_x \eta_{H,x}) a(\nabla_x \eta_{H,x}) \leq C \ell^{\alpha/2} (\mathcal{N}+1).
		\end{equation}
Moreover, recalling \eqref{eq:defd}, \eqref{eq:defdx} and $ \overline{d}_{\eta,x} = d_{\eta,x}+(\cN/N)b^*(\eta_{x})$, we have that
		\begin{equation} \label{eq:nabladx1}
		\begin{split}
		&\left(\int dx \ \Vert \sqrt{N} (\mathcal{N} + 1)^{-1/2} \nabla_x d_{s\eta_H,x} \xi \Vert^2 \right)^{1/2} 
		\leq C    \Vert (\mathcal{N}+\cK+1)^{1/2} \xi \Vert  , \\
		&\left(\int dx \ \Vert \sqrt{N} (\mathcal{N} + 1)^{-1/2} \nabla_x \overline{d}_{s\eta_H,x} \xi \Vert^2 \right)^{1/2} 
		\leq C \ell^{\alpha/2}  \Vert (\mathcal{N}+\cK+1)^{1/2} \xi \Vert   
		\end{split}
		\end{equation}
and that
		\begin{equation} \label{eq:nabladx2}
		\begin{split}
		&\left( \int dx \ \Vert \sqrt{N} (\mathcal{N}+1)^{- 1/2} d_{s\eta_H}(\nabla_x \eta_{H,x}) \xi \Vert^2  \right)^{1/2}
		\leq C\ell^{\alpha/2} \Vert (\mathcal{N}+1)^{1/2} \xi \Vert, \\	
		&\left( \int dx \ \Vert \sqrt{N} (\mathcal{N}+1)^{- 1/2} d_{s\eta_H}^*(\nabla_x \eta_{H,x}) \xi \Vert^2  \right)^{1/2}
		\leq C\ell^{\alpha/2} \Vert (\mathcal{N}+1)^{1/2} \xi \Vert
		\end{split}
		\end{equation}
for all $\xi\in \cF^{\leq N}$ and all $s\in [0;1]$.
\end{lemma}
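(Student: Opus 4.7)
\medskip

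\noindent\textbf{Proof proposal.}
The key observation that powers all three bounds is that the high--mo\-men\-tum cutoff $\chi_H$ gives a substantial improvement over the $H^1$--bounds on $\eta_H$ that appear in Lemma~\ref{lm:bndsetaH}. Specifically, I would first record a sharpened gradient estimate: writing $\nabla_{x}\eta_H(x;y) = (\nabla G \ast \wch\chi_H)(x-y)\pn(x)\pn(y) + (G\ast\wch\chi_H)(x-y)\nabla\pn(x)\pn(y)$ and using $|\wh{G}(p)|\leq C|p|^{-2}$ together with $\chi_H(p) = \mathbf{1}(|p|\geq \ell^{-\alpha})$, Plancherel yields
\[ \Vert \nabla(G\ast\wch\chi_H)\Vert^{2} \leq C\int_{|p|\geq \ell^{-\alpha}}\frac{1}{|p|^{2}}\,dp \leq C\ell^{\alpha},\]
and since $\pn,\nabla\pn\in L^\infty\cap L^2$, this gives $\Vert\nabla_{1}\eta_H\Vert, \Vert\nabla_{2}\eta_H\Vert \leq C\ell^{\alpha/2}$. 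This is the crucial improvement over Lemma~\ref{lm:bndsetaH}; the same trick will reappear several times.

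For the first bound \eqref{estimate int a star nabla widetilde a nabla widetilde }, I would note that $\int dx\, a^{*}(\nabla_{x}\eta_{H,x}) a(\nabla_{x}\eta_{H,x}) = d\Gamma(K)$, where $K$ has kernel $K(y_1;y_2)=\int dx\,\overline{\nabla_x\eta_H(x;y_1)}\nabla_x\eta_H(x;y_2)$. For any $g\in L^{2}$, Cauchy--Schwarz yields $\langle g, Kg\rangle = \int dx\,|\langle\overline{\nabla_{x}\eta_{H,x}},g\rangle|^{2} \leq \Vert \nabla_{1}\eta_H\Vert^{2}\,\Vert g\Vert^{2}$, so $\Vert K\Vert_{\mathrm{op}}\leq C\ell^{\alpha}$, which gives the claimed bound by $d\Gamma(K)\leq \Vert K\Vert_{\mathrm{op}}\cN$ (with a trivial $\ell^{\alpha}\leq \ell^{\alpha/2}$ for $\ell\in(0;1)$).

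For the gradient bounds \eqref{eq:nabladx1}, I would return to the explicit expansion of $d_{\eta_H,x}$ that underlies the proof of Lemma \ref{lm:d-bds} (as in \cite[Lemma 3.4]{BBCS4}): $d_{\eta_H,x}$ is a finite sum of normal--ordered terms each of which contains, in addition to factors of $\cN/N$, a kernel obtained by acting on $\eta_{H,x}$ with some combination of $\sinh_{\eta_H}, \cosh_{\eta_H}, \mathrm{p}_{\eta_H}, \mathrm{r}_{\eta_H}$. Applying $\nabla_x$ therefore either produces $\nabla_{1}\eta_H^{(n)}$ (for $n\geq 1$), or differentiates one of $\mathrm{p}_{\eta_H,x}, \mathrm{r}_{\eta_H,x}$, or hits an $a_x$ to give $\nabla_x a_x$; in the last case the integrated quantity $\int dx\,\Vert\nabla_x a_x \cdots\Vert^{2}$ generates a kinetic energy $\cK$, accounting for the $\cK$--dependence on the right hand side. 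For all other pieces, I would estimate in $L^{2}(dx)$ using Lemma \ref{lm:Deltarp} together with the sharpened bound $\Vert\nabla_{1}\eta_H\Vert\leq C\ell^{\alpha/2}$, mirroring line by line the estimates in the proof of Lemma \ref{lm:d-bds}. The $\overline{d}$--version is then obtained because subtracting $(\cN/N)b^{*}(\eta_{H,x})$ removes the one piece in the expansion of $d_{\eta_H,x}$ that has only $\eta_{H,x}$ as its kernel (rather than a higher iterate $\eta_H^{(n)}$, $n\geq 2$, or $\mathrm{p}_{\eta_H,x}, \mathrm{r}_{\eta_H,x}$); after this subtraction every remaining kernel is of size $\ell^{\alpha/2}$ even before differentiation, giving the extra $\ell^{\alpha/2}$.

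Finally, \eqref{eq:nabladx2} is the most direct of the three: I would simply apply Lemma~\ref{lm:d-bds} with $n=-1$ and $f=\nabla_{x}\eta_{H,x}$, which yields
\[\Vert\sqrt N (\cN+1)^{-1/2}d_{s\eta_H}(\nabla_{x}\eta_{H,x})\xi\Vert^{2}\leq \frac{C\ell^{\alpha}}{N}\,\Vert\nabla_{x}\eta_{H,x}\Vert^{2}\,\Vert(\cN+1)\xi\Vert^{2},\]
and then integrate over $x$, noting $\int dx\,\Vert\nabla_{x}\eta_{H,x}\Vert^{2}=\Vert\nabla_{1}\eta_H\Vert^{2}\leq C N$ from Lemma \ref{lm:bndsetaH} (the crude bound suffices here because the $1/N$ factor eats the $N$); the same argument for $d^{*}_{s\eta_H}$ uses the second half of \eqref{eq:bnddf}. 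The power of $(\cN+1)$ on the right is improved to $(\cN+1)^{1/2}$ thanks to one of these $(\cN+1)^{1/2}$--factors being absorbed by the square root of the $x$--integral when one is careful to keep $a_y, a_y^{*}$ normal--ordered inside $d_{s\eta_H}(\nabla_{x}\eta_{H,x})$, exploiting that integration in $x$ turns $\int dx\,|\nabla_{x}\eta_{H,x}\rangle\langle \nabla_{x}\eta_{H,x}|$ into the bounded operator $K$ already analyzed in the proof of \eqref{estimate int a star nabla widetilde a nabla widetilde }.

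I expect the main obstacle to be the second estimate in \eqref{eq:nabladx1}, namely the $\overline{d}$--bound with the extra $\ell^{\alpha/2}$ gain: one must track precisely which piece of the expansion of $d_{\eta_H,x}$ is cancelled by $(\cN/N)b^{*}(\eta_{H,x})$ and verify that \emph{every} surviving term benefits from the high--momentum cutoff, either through $\nabla_{1}\mathrm{p}_{\eta_H}, \nabla_{1}\mathrm{r}_{\eta_H}$ in Lemma \ref{lm:Deltarp}, or through the sharpened $\Vert\nabla_{1}\eta_H\Vert\leq C\ell^{\alpha/2}$. Once this combinatorial bookkeeping is set up the estimates themselves are straightforward Cauchy--Schwarz applications modeled on Lemma~\ref{lm:d-bds}.
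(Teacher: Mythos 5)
Your proposed ``sharpened gradient estimate'' $\Vert \nabla_1 \eta_H\Vert, \Vert\nabla_2\eta_H\Vert \leq C\ell^{\alpha/2}$ is false, and the error propagates through the whole argument. The computation $\Vert\nabla(G*\wch\chi_H)\Vert^2 \leq C\int_{|p|\geq\ell^{-\alpha}}|p|^{-2}\,dp \leq C\ell^\alpha$ does not hold: in $\bR^3$ one has $\int_{|p|\geq R}|p|^{-2}\,dp = 4\pi\int_R^\infty dr = +\infty$. The correct order of magnitude is the one already recorded in Lemma~\ref{lm:bndsetaH}, namely $\Vert\nabla_j\eta_H\Vert \leq C\sqrt N$; the high-momentum cutoff does not help here because the $L^2$-mass of $\nabla G$ (equivalently, the support of $|p\hat G(p)|^2$) lives at momenta of order $N \gg \ell^{-\alpha}$. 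Consequently, the crude Cauchy--Schwarz bound $\Vert K\Vert_{\mathrm{op}}\leq \int dx\,\Vert\nabla_x\eta_{H,x}\Vert^2 = \Vert\nabla_1\eta_H\Vert^2$ gives only $\Vert K\Vert_{\mathrm{op}}\leq CN$, which is useless for \eqref{estimate int a star nabla widetilde a nabla widetilde }. What actually produces the factor $\ell^{\alpha/2}$ is that $K$ is small in \emph{Hilbert--Schmidt} norm --- the paper shows $\int dy\,dz\,|K(y;z)|^2 \leq C\ell^\alpha$, i.e.\ $\mathrm{tr}(K^2)\leq C\ell^\alpha$ --- via the same Fourier-space calculation used in Lemma~\ref{lm:Deltarp}. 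The mechanism is the \emph{contraction} over $x$: after contracting, you have two factors $\chi_H(r_1)\chi_H(r_2)$ together with the rapid decay of $\hat\ph$, and the integral converges; this is a genuinely different and stronger statement than a norm bound on one $\nabla_1\eta_H$. Note that $\mathrm{tr}(K)=\Vert\nabla_1\eta_H\Vert^2\sim N$ while $\mathrm{tr}(K^2)\leq C\ell^\alpha$ are perfectly compatible --- the operator $K$ has many small eigenvalues --- but it means you cannot deduce the one from the other by a trivial bound.

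Because the false estimate $\Vert\nabla_1\eta_H\Vert\leq C\ell^{\alpha/2}$ is also invoked in your treatment of \eqref{eq:nabladx1} and \eqref{eq:nabladx2}, those parts of the proposal need to be reworked as well; in particular, the first inequality in \eqref{eq:nabladx1} has \emph{no} small prefactor precisely because the term $\nabla_x a_x$ producing $\cK$ and the occurrence of $\nabla_x\eta_{H,x}$ itself cannot be made small --- only the terms involving iterated kernels $\eta_H^{(n)}$ with $n\geq 2$, or $\mathrm{p}_{\eta_H},\mathrm{r}_{\eta_H}$, enjoy the $\ell^{\alpha/2}$ gain via Lemma~\ref{lm:Deltarp}. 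The correct line of reasoning for \eqref{eq:nabladx1}, \eqref{eq:nabladx2} is the one you sketch near the end, where instead of using a pointwise-in-$x$ norm of $\nabla_x\eta_{H,x}$ you integrate in $x$ \emph{first} and invoke \eqref{estimate int a star nabla widetilde a nabla widetilde } to turn $\int dx\,\Vert a(\nabla_x\eta_{H,x})\cdot\Vert^2$ into $C\ell^{\alpha/2}\Vert(\cN+1)^{1/2}\cdot\Vert^2$; this is how the paper handles it, and you should apply it uniformly rather than falling back on the erroneous $\Vert\nabla_1\eta_H\Vert$ bound.
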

\begin{proof}
We first note that, using arguments very similar to those in the proof of the previous Lemma \ref{lm:Deltarp}, it is simple to show that
		\[
		 \int dy dz \left\vert \int dx \ \nabla_x \eta_H (y;x) \cdot \nabla_x \eta_H(z;x) \right\vert^2  \leq C \ell^{\alpha}.
		\]
Hence, Cauchy-Schwarz implies 
	\[\begin{split}
	&  \pm\bigg( \int dx \,   a^* (\nabla_x \eta_{H,x}) a(\nabla_x \eta_{H,x})  \bigg)  \\
	& =  \pm\bigg( \int  dydz\, \bigg[ \int dx\,    \nabla_x \eta_{H}(y;x)  \nabla_x \eta_{H}(z;x) \bigg] a^*_ya_z\bigg) \leq C \ell^{\alpha}   (\mathcal{N}+1). 
	\end{split}\]
This proves the first bound \eqref{estimate int a star nabla widetilde a nabla widetilde }. To prove the bounds \eqref{eq:nabladx1} and \eqref{eq:nabladx2}, we proceed similar as in the proof of Lemma \ref{lm:d-bds} (which can be found in \cite[Lemma 3.4]{BBCS4}) and use Lemma \ref{lm:bndsetaH}, Lemma \ref{lm:Deltarp} as well as the bound \eqref{estimate int a star nabla widetilde a nabla widetilde }; we skip the details.
\end{proof}
We are now ready to analyze the kinetic energy. 
\begin{prop}\label{prop:e-BKeB} There exists a constant $C>0$ such that
		\[\begin{split}
		  e^{-B }\cK e^{B}= \,&\mathcal{K} + \int dx \ \big[b(\nabla_x \eta_{H,x}) \nabla_x b_x + \emph{h.c.}\big] \\
		  &+ \Vert \nabla_1 \eta_H \Vert^2 (1-\cN/N)  (1-\cN/N-1/N) + \mathcal{E}_{N, \ell}^{(K)}
		 \end{split}\]
where the self-adjoint operator $\cE_{N,\ell}^{(K)}$ satisfies 
		\[ \pm \cE_{N,\ell}^{(K)} \leq  C\ell^{(\alpha-3)/2} (\cN+ \cK+\cV_N +1) + C\ell^{-5\alpha/2} N^{-1}(\cN+1)^2   \]
for all $\alpha>3$, $\ell\in(0;1)$ and all $N$ large enough. 
\end{prop}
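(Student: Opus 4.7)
The strategy is the Duhamel-plus-Bogoliubov expansion used in Propositions \ref{prop:Vext}--\ref{prop:GNell2V}, adapted to the kinetic energy. The new difficulty is that derivatives must be tracked through the conjugation: each gradient on $\eta_H$ costs a factor $\sqrt N$ by Lemma \ref{lm:bndsetaH}, and gradients of the remainder $d_{s\eta_H,x}$ in the decomposition \eqref{eq:defdx} require the $L^2_x$-bounds of Lemma \ref{lm:nablad-bds}.

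\emph{Step 1 (Duhamel and commutator).} I would start from $e^{-B}\mathcal{K}e^B = \mathcal{K} + \int_0^1 ds\, e^{-sB}[\mathcal{K},B]e^{sB}$. Since $[\mathcal{K},\mathcal{N}] = 0$, one has $[\mathcal{K}, b^*(f)] = b^*(-\Delta f)$ and $[\mathcal{K}, b(f)] = b(\Delta f)$. Inserting these in the definition \eqref{eq:B} of $B$ and moving one derivative by parts onto $\eta_H$ gives
\[
[\mathcal{K},B] = \int dx\,\bigl[\nabla_x b_x^*\cdot b^*(\nabla_x \eta_{H,x}) + b(\nabla_x \eta_{H,x})\cdot \nabla_x b_x\bigr],
\]
which is precisely the cross term appearing in the statement of Proposition \ref{prop:e-BKeB}, together with its Hermitian conjugate.

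\emph{Step 2 (Bogoliubov conjugation and extraction of the scalar).} I would then expand $e^{-sB}[\mathcal{K},B]e^{sB}$ via the decompositions \eqref{eq:defd}--\eqref{eq:defdx}, using that $\nabla_x$ commutes with $e^{sB}$ (since $B$ depends on $x$ only through its kernel $\eta_H$), so that
\[
e^{-sB}\nabla_x b_x e^{sB} = b(\nabla_x \cosh_{s\eta_H,x}) + b^*(\nabla_x \sinh_{s\eta_H,x}) + \nabla_x d_{s\eta_H,x},
\]
and analogously for $\nabla_x b_x^*$. Substituting $\cosh_{s\eta_H,x} = \delta_x + \text{r}_{s\eta_H,x}$ and $\sinh_{s\eta_H,x} = s\eta_{H,x} + \text{p}_{s\eta_H,x}$, the identity part of $\cosh$ combined with the vanishing part of $\sinh$ reproduces $[\mathcal{K},B]$ unchanged, so that $\int_0^1 ds$ returns exactly the cross term in the claim. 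The scalar $\|\nabla_1\eta_H\|^2(1-\mathcal{N}/N)(1-\mathcal{N}/N-1/N)$ arises from a Wick contraction of a $b^*(\nabla_x\sinh_{s\eta_H,x})$ against an adjacent $b(\nabla_x\cosh_{s\eta_H,x})$: by the commutator \eqref{eq:comm-b}, the delta piece yields $s(1-\mathcal{N}/N)\int du\,|\nabla_x\eta_H(x;u)|^2$, and after integration over $x$ and $s\in[0,1]$ and combination with the Hermitian-conjugate contraction, the polynomial $(1-\mathcal{N}/N)(1-\mathcal{N}/N-1/N)$ is produced when the surviving $(1-\mathcal{N}/N)$-factor is commuted across an intermediate $b$-operator using \eqref{eq:comm-b2}.

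\emph{Step 3 (Error bounds and main obstacle).} Everything else -- cross terms involving $\text{p}_{s\eta_H}, \text{r}_{s\eta_H}$, gradients of $d_{s\eta_H,x}$ and $\overline d_{s\eta_H,x}$, and the second-order Duhamel remainder from iterating the commutator once more -- is absorbed into $\mathcal{E}_{N,\ell}^{(K)}$. The required estimates combine Lemma \ref{lm:Deltarp} (bounds $\|\nabla_j\text{p}_{\eta_H}\|, \|\nabla_j\text{r}_{\eta_H}\|\leq C\ell^{\alpha/2}$), Lemma \ref{lm:nablad-bds} (which controls the $L^2_x$-norm of $\sqrt N(\mathcal{N}+1)^{-1/2}\nabla_x\overline d_{s\eta_H,x}\xi$ by $C\ell^{\alpha/2}\|(\mathcal{N}+\mathcal{K}+1)^{1/2}\xi\|$), Lemma \ref{lm:Npow} to propagate powers of $\mathcal{N}$, and Cauchy--Schwarz against $N^3V(N(x-y))$ to reabsorb quartic leftovers into $\mathcal{V}_N$. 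The main obstacle is the balancing act between the $\sqrt N$-cost of each gradient falling on $\eta_H$ (Lemma \ref{lm:bndsetaH}) and the $\ell^{\alpha/2}$-gain from the high-momentum cutoff: at most two derivatives can be tolerated before control is lost, and after pairing one factor of $\|\nabla G\|^2\lesssim N$ against the $\ell^{\alpha/2}$-smallness of the sinh/cosh-corrections one is left with the effective smallness $\ell^{(\alpha-3)/2}$, which is what forces the hypothesis $\alpha>3$. The quadratic-in-$\mathcal{N}$ error $\ell^{-5\alpha/2}N^{-1}(\mathcal{N}+1)^2$ in turn originates from the $1/N$-corrections to the commutators \eqref{eq:comm-b}--\eqref{eq:comm-b2} that propagate through one additional Duhamel iteration, each step adding powers of the cutoff $\ell^{-\alpha}$.
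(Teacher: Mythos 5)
Your Steps 1 and 2 correctly identify the Duhamel expansion, the commutator $[\mathcal{K},B]=\int dx\,\big[\nabla_x b_x^*\cdot b^*(\nabla_x\eta_{H,x})+b(\nabla_x\eta_{H,x})\cdot\nabla_x b_x\big]$, and the Bogoliubov-type decomposition of $e^{-sB}b_x e^{sB}$ via \eqref{eq:defd}--\eqref{eq:defdx}; this matches the paper's skeleton (the decomposition into $\mathrm{G}_1,\mathrm{G}_2,\mathrm{G}_3$). But your Step~3 has a genuine gap: you never invoke the scattering equation \eqref{eq:scatlN}, which is the central ingredient that makes the kinetic-energy analysis work. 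When one gradient lands on each copy of $\eta_H$ one faces the kernel $-\Delta_x\eta_H(x;y)$, whose dominant part involves $-\Delta G(x-y)$; this is far too singular to be handled by the na\"ive $\|\nabla G\|\lesssim\sqrt N$ bound you appeal to. The paper uses \eqref{eq:scatlN} to trade $-\Delta G$ for $N^3(Vf_\ell)(N\cdot)-N^3\lambda_\ell f_\ell(N\cdot)\chi_\ell$, whereupon the $V$-piece is absorbed into $\cV_N$ by Cauchy--Schwarz and the eigenvalue piece produces the factor $N^3\lambda_\ell\simeq 3\mathfrak a_0\ell^{-3}$, which is precisely the origin of the $\ell^{(\alpha-3)/2}$ in the error bound (multiplying the $\ell^{\alpha/2}$ gain from Lemma \ref{lm:d-bds}). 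Your explanation — pairing ``$\|\nabla G\|^2\lesssim N$'' against $\ell^{\alpha/2}$ — is not how the exponent arises and would not close the estimate.

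Two further inaccuracies. First, the factor $\ell^{-5\alpha/2}N^{-1}(\cN+1)^2$ does not come from iterated $1/N$-corrections in the $b$-commutators; it comes from the low-momentum regularization $G*\widecheck\chi_{H^c}$, whose second derivative is pointwise bounded by $C\ell^{-3\alpha}(|\ph|+|\nabla\ph|+|\Delta\ph|)|\ph(\cdot)|$, paired with the $\ell^{\alpha/2}$-bound on $\overline d_{s\eta_H,x}$ from \eqref{eq:bndaydxbar}. Second, your account of the scalar $(1-\cN/N)(1-\cN/N-1/N)$ is off: the paper obtains $\|\nabla_1\eta_H\|^2(1-\cN/N)$ from the contraction inside $\mathrm{G}_1$, and the remaining factor comes from a separate piece $\mathrm{G}_{233}$ of the $d$-analysis (yielding $-\|\nabla_1\eta_H\|^2\,\tfrac{\cN+1}{N}\,\tfrac{N-\cN}{N}$); it is not produced by ``commuting $(1-\cN/N)$ across an intermediate $b$''. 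In short, the overall strategy is right, but the proposal as written would not prove the stated error bound because it omits the scattering-equation step that controls $\Delta G$ and gives the $\ell^{(\alpha-3)/2}$ scaling.
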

\begin{proof}
Using a first order Taylor expansion and the identity \eqref{eq:defd}, we have that
	\begin{equation}\label{eq:defG123}
	\begin{split}
	&e^{-B} \mathcal{K} e^{B} - \mathcal{K} \\
	&= \bigg(\int_0^1 ds \int dx \   \big[ b(\cosh_{s\eta_H}(\nabla_x \eta_{H,x})) + b^*(\sinh_{s\eta_H}(\nabla_x \eta_{H,x}) \big] \\
	&\hspace{2.5cm} \times \big[ \nabla_x b(\cosh_{s \eta_H , x}) + \nabla_x b^*(\sinh_{s\eta_H,x})  \big] +\text{h.c.}\bigg) \\
	&\quad + \bigg(\int_0^1 ds \int dx \Big[  \big(b(\cosh_{s\eta_H}(\nabla_x \eta_{H,x})) + b^*(\sinh_{s\eta_H}(\nabla_x \eta_{H,x}))  \big) \nabla_x d_{s\eta_H, x} \\
	&\hspace{3cm}+ d_{s\eta_H}(\nabla_x \eta_{H,x})  \big( \nabla_x b(\cosh_{s \eta_H , x}) + \nabla_x b^*(\sinh_{s\eta_H,x}) \big)  \Big] +\text{h.c.}\bigg) \\
	&\quad +\bigg( \int_0^1 ds \int dx \  d_{s\eta_H}(\nabla_x \eta_{H,x}) \nabla_x d_{s\eta_H, x}  +\text{h.c.}\bigg)\\
	&=: \text{G}_1 + \text{G}_2 + \text{G}_3.
	\end{split}
	\end{equation}

Let us start to analyze $\text{G}_1$. Integrating by parts and using \eqref{eq:bndpr}, Lemma \ref{lm:Deltarp} as well as the bound \eqref{estimate int a star nabla widetilde a nabla widetilde } we conclude that 
	\begin{equation}\label{eq:G1bnd}
	\text{G}_1 = \int dx \ \big[b(\nabla_x \eta_{H,x}) \nabla_x b_x + \text{h.c.}\big] + \Vert \nabla_1 \eta_H \Vert^2  (1-\cN/N) + \mathcal{E}_1,
	\end{equation}
	where the error $\cE_1$ satisfies
	\[
	\pm \mathcal{E}_1 \leq C \ell^{\alpha/2} (\mathcal{N}+1).
	\]	
	
Next we extract the relevant terms from $\text{G}_2$, defined in \eqref{eq:defG123}. We split $\text{G}_2$ into
	\begin{align*}
	\text{G}_2
	&=  \int_0^1 ds \int dx \ \big[ b^*(\sinh_{s\eta_H}(\nabla_x \eta_{H,x})) \nabla_x d_{s\eta_H, x}  + \text{h.c.} \big] \\
	&\quad + \int_0^1 ds \int dx \ \big[d_{s\eta_H}(\nabla_x \eta_{H,x}) \nabla_x b(\cosh_{s \eta_H , x}) + \text{h.c.}\big] \\
	&\quad + \int_0^1 ds \int dx \ \big[ b(\cosh_{s\eta_H} (\nabla_x \eta_{H,x})) \nabla_x d_{s\eta_H, x} + \text{h.c.} \big] \\
	&\quad + \int_0^1 ds \int dx \ \big[d_{s\eta_H}(\nabla_x \eta_{H,x}) \nabla_x b^*(\sinh_{s \eta_H , x}) + \text{h.c.}\big] =:  \text{G}_{21}+\text{G}_{22}+\text{G}_{23}+\text{G}_{24}.
	\end{align*}
From Lemma \ref{lm:Deltarp}, Lemma \ref{lm:nablad-bds} and $\Vert \sinh_{\eta_H}(\nabla_x \eta_{H,x}) \Vert \leq C\Vert \nabla_x \eta_H^{(2)} \Vert$, we easily find that
		\[ \pm \text{G}_{21}\leq C\ell^{\alpha/2} (\cN+\cK+1),\hspace{0.5cm} \pm \text{G}_{22}\leq C\ell^{\alpha/2} (\cN+\cK+1), \]
so let us continue with the analysis of $\text{G}_{23}$. We split it into
	\begin{align*}
	G_{23}
	&= \int_0^1 ds \int dx \ \left[ b\big( \text{r}_{s\eta_H}( -\Delta_x \eta_{H,x})\big) d_{s\eta_H, x} + \text{h.c.} \right] \\
	&\quad + \int_0^1 ds \int dx \ \left[ b(\nabla_x \eta_{H,x}) \nabla_x \overline{d}_{s\eta_H,x} + \text{h.c.} \right] \\
	&\quad - \int_0^1 ds \int dx \ \left[ b(\nabla_x \eta_{H,x})  (\cN/N) b^*(\nabla_x s\eta_{H,x}) + \text{h.c.} \right]=: G_{231} + G_{232}+ G_{233}
	\end{align*}
and, similarly as above, it is simple to see that
	\[\pm G_{231} \leq C\ell^{\alpha/2}(\mathcal{N}+1).\]
To control the contribution $\text{G}_{232}$, we use that
	\begin{align*}
	- \Delta_x \eta_H (y;x) &= - \Delta G (y-x) \pn(x) \pn(y) +2\nabla G (y-x) \nabla \pn(x) \pn(y) \\
	&\quad - G(y-x)  \Delta\pn(x) \pn(y) + \Delta_x\big[(G * \widecheck{\chi}_{H^c})(y-x) \pn(x) \pn(y)\big]
	\end{align*}
so that integrating by parts implies
	\begin{align*}
	G_{232}
	&= \int_0^1 ds \int dx dy \ ( -\Delta G)(y-x) \pn(x) \pn(y)  b_y \overline{d}_{s\eta_H,x}   \\ 
	&\quad + \int_0^1 ds \int dx dy \  G(y-x)  \Delta\pn(x)   \pn(y) b_y \overline{d}_{s\eta_H,x}  \\
	&\quad + \int_0^1 ds \int dx dy \ G (y-x) \nabla \pn(x) \pn(y) b_y \nabla_x \overline{d}_{s\eta_H,x}  \\
	&\quad + \int_0^1 ds \int dx dy \ \Delta_x \big((G * \widecheck{\chi}_{H^c})(y-x) \pn(x) \pn(y)\big)b_y \overline{d}_{s\eta_H,x}  +\text{h.c.} 
	\end{align*}
Using the scattering equation \eqref{eq:scatlN}, we have that 
		\[\begin{split}
		&\int_0^1 ds \int dx dy \ ( -\Delta G)(y-x) \pn(x) \pn(y)  b_y \overline{d}_{s\eta_H,x} \\
		& =  -\int_0^1 ds \int dx dy \left[ N^3 V(N(y-x)) - N^3\lambda_\ell  f_\ell(N(y-x)) \right] \chi_\ell(x-y) \pn(x) \pn(y) b_y \overline{d}_{s\eta_H,x}
		\end{split}\]
With Lemma \ref{3.0.sceqlemma} and Lemma \ref{lm:d-bds}, we conclude, proceeding 
in the usual way, 	
		\[\begin{split}
		&\bigg|\int_0^1 ds \int dx dy \ ( -\Delta G)(y-x) \pn(x) \pn(y)  \langle \xi, b_y \overline{d}_{s\eta_H,x} \xi\rangle \bigg|\\
		&\leq C\ell^{\alpha/2} \int dx dy \left[ N^3 V(N(x-y)) + C\ell^{-3} \chi_\ell(x-y)  \right] | \pn(x) | | \pn(y)|  \Vert (\mathcal{N}+1)^{1/2} \xi \Vert \\
		&\hspace{2cm} \times N^{-1}   \big[   \ell^{\alpha/2}  \Vert (\mathcal{N}+1)^{1/2} \xi \Vert + |\eta_H(x;y)|   \Vert (\mathcal{N}+1)^{1/2} \xi \Vert +   \ell^{\alpha/2} \Vert a_x  \xi \Vert \\
		& \hspace{3.2cm} +  \ell^{\alpha/2} \Vert a_y (\mathcal{N}+1)  \xi \Vert +  \Vert a_x a_y (\mathcal{N}+1)^{1/2} \xi \Vert \Big] \\
		&\leq   C\ell^{(\alpha-3)/2}\langle\xi, (\cN+1)\xi\rangle + C\ell^{\alpha/2}\langle \xi,  \cV_N \xi\rangle.
		\end{split}\]
Using once more Lemma \ref{lm:d-bds} and Lemma \ref{lm:nablad-bds}, we also find that
	\[\begin{split}
	&\bigg| \int_0^1 ds \int dx dy \  G(y-x)  \Delta\pn(x)   \pn(y) \langle\xi, b_y \overline{d}_{s\eta_H,x}\xi\rangle \bigg| \\
	& + \bigg|\int_0^1 ds \int dx dy \ G (y-x) \nabla \pn(x) \pn(y) \langle\xi, b_y \nabla_x \overline{d}_{s\eta_H,x}\xi\rangle \bigg|\\
	&\leq C   \int_0^1 ds\, \Vert (\mathcal{N}+1)^{1/2} \xi \Vert  \bigg[\left( \int dx \ \Vert \nabla_x \overline{d}_{s\eta_H,x} \xi \Vert^2 \right)^{1/2}  + \left( \int dx \ \Vert \overline{d}_{s\eta_H,x} \xi \Vert^2 \right)^{1/2}\bigg] \\
	&\leq C \ell^{\alpha/2} \langle \xi, (\mathcal{N} + \mathcal{K} +1)\xi \rangle
	\end{split}\]
and, since \[ \vert \Delta_x [(G * \widecheck{\chi}_{H^c})(x-y) \pn(x) \pn(y)] \vert \leq C\ell^{-3\alpha} [\vert \pn(x) \vert + \vert \nabla_x \pn(x) \vert + \vert \Delta_x \pn(x) \vert] \vert \pn(y)\vert ,\] we have furthermore by Cauchy-Schwarz and \eqref{eq:bndaydxbar} that 
	\[\begin{split}
	&\bigg| \int_0^1 ds \int dx dy \ \Delta_x \big((G * \widecheck{\chi}_{H^c})(y-x) \pn(x) \pn(y)\big) \langle\xi, b_y \overline{d}_{s\eta_H,x} \xi\rangle \bigg| \leq \frac{C\ell^{-5\alpha/2}}{N}  \langle\xi, (\cN+1)^{2}\xi\rangle.   
	\end{split}\]
In summary, this proves that 
		\[\pm \text{G}_{232}\leq  C\ell^{(\alpha-3)/2} (\cN+1) + C\ell^{\alpha/2}(\cK+\cV_N) + C\ell^{-5\alpha/2} N^{-1}(\cN+1)^2. \]
and since
	\begin{align*}
	G_{233}
	&= -\Vert \nabla_1 \eta_H \Vert^2\, \frac{\mathcal{N}+1}{N} \frac{N-\cN}{N}  -  \int dx \   a^*(\nabla_x \eta_{H,x}) a(\nabla_x \eta_{H,x})\frac{\mathcal{N}+1}{N} \frac{N-\cN}{N},  
	\end{align*}
we easily deduce, with Lemma \ref{lm:nablad-bds}, that
	\[\begin{split} \pm &\bigg(G_{23}  +\Vert \nabla_1 \eta_H \Vert^2 \,\frac{\mathcal{N}+1}{N} \frac{N-\cN}{N}\,\bigg) \\
	 &\hspace{1.5cm}\leq C\ell^{(\alpha-3)/2} (\cN+1) + C\ell^{\alpha/2}(\cK+\cV_N) + C\ell^{-5\alpha/2} N^{-1}(\cN+1)^2. \end{split}\]
		
With very similar arguments, one can show that 
		\[\pm \text{G}_{24}\leq C\ell^{\alpha/2}(\cN+1) \]
so that, in summary, we have
		\begin{equation}\label{eq:G2bnd}
		\text{G}_2 = -\Vert \nabla_1 \eta_H \Vert^2 \,\frac{\mathcal{N}+1}{N} \frac{N-\cN}{N} + \cE_2
		\end{equation}
for an error $\cE_{2}$ that satisfies 
		\[\pm\cE_2 \leq C\ell^{(\alpha-3)/2} (\cN+1) + C\ell^{\alpha/2}(\cK+\cV_N) + C\ell^{-5\alpha/2} N^{-1}(\cN+1)^2. \]

Going back to \eqref{eq:defG123}, we finally use once more Lemmas \ref{lm:d-bds} and \ref{lm:nablad-bds} to deduce		
		\begin{equation} \label{eq:G3bnd} \pm \text{G}_3\leq C\ell^{\alpha/2}(\cN+ \cK+1).
		\end{equation}
Hence, collecting \eqref{eq:G1bnd}, \eqref{eq:G2bnd} and \eqref{eq:G3bnd} proves the proposition.
\end{proof}

\subsection{Analysis of $\cG_{N}^{(3)}$}
From \eqref{eq:cLNj}, we recall that 
		\[
		\begin{split}
		\wt{\mathcal{L}}_N^{(3)} &=    \int dxdy\,  N^{ 5/2} V(N(x-y)) \pn(y) \big( b_x^* a^*_y a_x + \text{h.c.} \big) 
		\end{split}
		\]
Let us also recall the definition of $h_N = \big(N^3(Vw_\ell) (N.)\ast |\pn|^2\big)\pn$ from \eqref{eq:defhN}. 
\begin{prop}\label{prop:GNell3} There exists a constant $C>0$ such that
		\[\begin{split}
		 \cG_{N}^{(3)} = &\,  \int dxdy\,  N^{ 5/2} V(N(x-y)) \pn(y)  \big( b_x^* a^*_y a_x + \emph{h.c.}\big) \\
		 &\; -  \Big[ \sqrt N b\big(\cosh_{\eta_H} (h_N) \big) +\sqrt N b^*\big(\sinh_{\eta_H} (h_N) \big) + \emph{h.c.} \Big] + \cE_{N,\ell}^{(3)}, 
		 \end{split}\]
where the self-adjoint operator $\cE_{N,\ell}^{(3)}$ satisfies 
		\[ \pm \cE_{N,\ell}^{(3)} \leq C \ell^{\alpha/2} ( \mathcal{N} + \mathcal{V}_N + 1) +  C N^{-1/2} (\mathcal{N}+1)^{3/2} + C\ell^{-\alpha}.   \]
for all $\alpha>0$, $\ell\in(0;1)$ and $N$ large enough.
\end{prop}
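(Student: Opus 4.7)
The plan is to start from the Duhamel expansion
\[
\cG_N^{(3)} \;=\; \wt\cL_N^{(3)} \;+\; \int_0^1\!ds\, e^{-sB}\,\bigl[\wt\cL_N^{(3)}, B\bigr]\, e^{sB},
\]
and to identify inside the commutator the piece responsible for the scattering-length renormalization. The mechanism is as follows: the kernel $\eta_H(x;v)=(G*\widecheck\chi_H)(x-v)\pn(x)\pn(v)$ with $G(z)=-Nw_\ell(Nz)$ lives on the scale $1/N$, the same scale as $V(N(x-y))$; when the two are combined via a contraction arising from the commutator, the rescaled convolution produces the kernel $N^3(Vw_\ell)(N\cdot)*|\pn|^2\cdot\pn=h_N$ (cf.\ \eqref{eq:defhN}). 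Consequently, the ``bad'' piece of $\cG_N^{(1)}$ identified in Proposition~\ref{prop:GNell1} will be cancelled, after summation, by the corresponding contribution of $\cG_N^{(3)}$, with matching $h_N$ and opposite sign --- this is precisely the structure asserted in the statement.

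Concretely, I compute $[b_x^* a_y^* a_x, B]$ using \eqref{eq:comm-b} and \eqref{eq:comm-b2}, splitting it as $[b_x^*, B]\,a_y^* a_x + b_x^*\,[a_y^* a_x, B]$. By symmetry of $\eta_H$, the second factor reduces to a combination of $b^*b^*$ and $bb$ contracted with $\eta_H$; the first factor equals $\int dv\, \eta_H(x;v)\, b_v$ modulo $\cN/N$ and $a^*a/N$ corrections from \eqref{eq:comm-b}. Isolating the leading part of $[b_x^*, B]\,a_y^* a_x$ and integrating the weight $N^{5/2}V(N(x-y))\pn(y)$ over $x$ against the $\delta$-part of the decomposition $\widecheck\chi_H=\delta-\widecheck\chi_{H^c}$, the change of variables $z=N(x-y)$ produces (up to Lipschitz corrections in $\pn$ and the low-momentum tail) the factor $-\sqrt N\, h_N(y)\pn(v)$. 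After reassembling and applying Lemma~\ref{lm:d-bds} to rewrite the $e^{\pm sB}\,b_v\,e^{\mp sB}$-action in terms of $\cosh_{\eta_H}$, $\sinh_{\eta_H}$ and $d_{\eta_H}$, this yields the main term
\[
-\bigl[\sqrt N\, b(\cosh_{\eta_H}(h_N)) + \sqrt N\, b^*(\sinh_{\eta_H}(h_N)) + \mathrm{h.c.}\bigr],
\]
while the $d_{\eta_H}(h_N)$ residue is controlled by \eqref{eq:bnddf} by $C\ell^{\alpha/2}(\cN+1)$.

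The remaining contributions feed into $\cE_{N,\ell}^{(3)}$. The $\cN/N$ and $a^*a/N$ corrections in $[b_x^*, B]$ yield, after Cauchy--Schwarz against the singular weight $N^{5/2}V(N(x-y))\pn(y)$, the error $CN^{-1/2}(\cN+1)^{3/2}$, the half-integer power reflecting that one $b$-factor has been absorbed into $\cV_N^{1/2}$. The quadratic contribution $b_x^*[a_y^*a_x, B]$ is handled by an asymmetric Cauchy--Schwarz, splitting the singular weight between $\cV_N^{1/2}$ and the condensate-localized norm $\|\eta_{H,x}\|\leq C\ell^{\alpha/2}|\pn(x)|$ from Lemma~\ref{lm:bndsetaH}; this produces $C\ell^{\alpha/2}(\cN+\cV_N+1)$. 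The low-momentum tail $\widecheck\chi_{H^c}$, smooth on the scale $\ell^{\alpha}$, is bounded pointwise by $C\ell^{-3\alpha/2}$ and, against $N^3V(N\cdot)$, contributes at worst the constant $C\ell^{-\alpha}$. Finally, the conjugations by $e^{\pm sB}$ of the collected error operators are absorbed using Lemma~\ref{lm:Npow}.

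The main obstacle will be extracting the precise $h_N$-prefactor from the rescaled convolution $\int dx\, N^{5/2}V(N(x-y))\eta_H(x;v)$ while tracking the subleading corrections --- the Lipschitz remainder $\pn(y+z/N)-\pn(y)$, the high/low momentum split of $\widecheck\chi_H$, and the $1/N$ corrections in $[b_x, b_v^*]$ --- since these must reassemble consistently into the stated main term plus a genuinely small error. A second delicate point is the asymmetric Cauchy--Schwarz for the quadratic residue $b_x^*[a_y^*a_x, B]$: the weight $N^{5/2}V(N(x-y))$ must be split so that one factor goes into $\cV_N^{1/2}$ and the other into the condensate-localized norm of $\eta_{H,\cdot}$, because a naive symmetric split would produce $N^{-1/2}\cV_N$ rather than the needed $\ell^{\alpha/2}\cV_N$ for the stated estimate.
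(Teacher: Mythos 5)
Your overall mechanism — that the $h_N$-main-term arises from contracting the singular interaction against the short-range kernel $\eta_H$, and cancels against the corresponding piece of $\cG_N^{(1)}$ — is the right idea and matches the paper. However, the organization you propose has a genuine gap at the resummation step, and as written it does not reproduce the stated main term.

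The issue is the single Duhamel expansion $\cG_N^{(3)} = \wt\cL_N^{(3)} + \int_0^1 ds\, e^{-sB}[\wt\cL_N^{(3)}, B]e^{sB}$ combined with Lemma \ref{lm:d-bds}. After you isolate $\eta_H(x;y)b_x$ from the leading part of $[b_x^*,B]a_y^*a_x$ and perform the $y$-integration to produce $-\sqrt N\, b(h_N)$, the remaining operator sits inside $\int_0^1 ds\, e^{-sB}(\cdot)e^{sB}$. Applying \eqref{eq:defdx} yields $e^{-sB}b(h_N)e^{sB}=b(\cosh_{s\eta_H}(h_N))+b^*(\sinh_{s\eta_H}(h_N))+d_{s\eta_H}(h_N)$, with the rescaled kernel $s\eta_H$. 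Averaging over $s$ gives
\[
\int_0^1 ds\, \cosh_{s\eta_H}(h_N)=\sum_{k\geq0}\frac{\eta_H^{(2k)}h_N}{(2k+1)!},\qquad
\int_0^1 ds\, \sinh_{s\eta_H}(h_N)=\sum_{k\geq0}\frac{\eta_H^{(2k+1)}h_N}{(2k+2)!},
\]
which are \emph{not} $\cosh_{\eta_H}(h_N)$ and $\sinh_{\eta_H}(h_N)$. The missing combinatorial weight would have to come from the quartic residues $b_x^*b_y^*b^*(\eta_{H,x})$ and $b_x^*b(\eta_{H,y})b_x$ in $[\wt\cL_N^{(3)},B]$, which after conjugation also produce $h_N$-like linear contractions at all orders in $\eta_H$; your plan treats these pieces entirely as errors, so the claimed main term does not assemble.

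The paper's proof avoids this by expanding asymmetrically: $e^{-B}b_x^*e^B$ is kept fully conjugated via \eqref{eq:defd}, while only $e^{-B}a_y^*a_xe^B$ is opened up with Duhamel ($e^{-B}a_y^*a_xe^B = a_y^*a_x + \int_0^1 ds\, e^{-sB}[b(\eta_{H,y})b_x + b_y^*b^*(\eta_{H,x})]e^{sB}$). The key is that in the $J_{33}$-contribution, the $\int_0^1 ds$ then acts on a kernel that already starts with a factor of $\eta_H$: since $\int_0^1 s^{2k}\,ds = (2k+1)^{-1}$, one gets $\int_0^1 ds\,\cosh_{s\eta_H}(\eta_{H,x})=\sinh_{\eta_H,x}$ and $\int_0^1 ds\,\sinh_{s\eta_H}(\eta_{H,x})=\text{r}_{\eta_H,x}$, and these recombine, together with the $J_{13}$-term $-\sqrt N\, b(h_N)$, into exactly $-\sqrt N\,[b(\cosh_{\eta_H}(h_N))+b^*(\sinh_{\eta_H}(h_N))]$. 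This asymmetric split is what makes the resummation close cleanly. To make your Duhamel route rigorous you would need either to iterate the expansion to all orders and track the full combinatorics, or to switch to the paper's decomposition. The rest of your outline (the asymmetric Cauchy--Schwarz for the quartic residue, the $N^{-1/2}(\cN+1)^{3/2}$ error from the $\cN/N$-corrections, and the $C\ell^{-\alpha}$ from the low-momentum tail $\widecheck\chi_{H^c}$) is sound in spirit, though those estimates cannot be completed until the main-term extraction is fixed.
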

\begin{proof}
We use the identity 
		\[ e^{-B}a^*_y a_x e^{B} = a^*_y a_x  + \int_0^1 ds\, e^{-sB} \big[ b(\eta_{H,y}) b_x+ b^*_y b^*(\eta_{H,x}) \big]  e^{sB}\]
to split $ \cG_{N}^{(3)}$ into  $ \cG_{N}^{(3)} = \text{J}_1+ \text{J}_2 + \text{J}_3 +\text{h.c.}$, where
		\begin{equation}\label{eq:J123}
		\begin{split}
		\text{J}_1 &:= \int dxdy\,  N^{ 5/2} V(N(x-y)) \pn(y)   e^{-B}b_x^*e^{B} a^*_y a_x , \\
		\text{J}_2 &:=\int dxdy\,  N^{ 5/2} V(N(x-y)) \pn(y)   e^{-B}b_x^*e^{B}\int_0^1 ds\, e^{-sB}  b(\eta_{H,y}) b_x e^{sB}, \\
		\text{J}_3 &:= \int dxdy\,  N^{ 5/2} V(N(x-y)) \pn(y)   e^{-B}b_x^*e^{B}\int_0^1 ds\, e^{-sB}   b^*_y b^*(\eta_{H,x}) e^{sB}. 
		\end{split}
		\end{equation}
We start with the analysis of $\text{J}_1$. Using \eqref{eq:defd}, we have that 
		\begin{equation}\label{eq:J1split}\begin{split}
		\text{J}_1 &= \int dxdy\,  N^{ 5/2} V(N(x-y)) \pn(y)    b^*_x  a^*_y a_x \\
		&\hspace{0.5cm} + \int dxdy\,  N^{ 5/2} V(N(x-y)) \pn(y)   \big[  b^*(\text{r}_{\eta_H,x}) + b(\text{p}_{\eta_H,x})\big]  a^*_y a_x  \\
		&\hspace{0.5cm} + \int dxdy\,  N^{ 5/2} V(N(x-y)) \pn(y)    d_{\eta_H,x}^*a^*_y a_x \\
		&\hspace{0.5cm} + \int dxdy\,  N^{ 5/2} V(N(x-y)) \pn(y)  b(\eta_{H,x})a^*_y a_x\\
		&=: \int dxdy\,  N^{ 5/2} V(N(x-y)) \pn(y)    b^*_x  a^*_y a_x  + \text{J}_{11}+\text{J}_{12}+\text{J}_{13}.
		\end{split}\end{equation}
First, it is simple to see that 	
		\[ \pm   \text{J}_{11}   \leq  C\ell^{\alpha} (\cN+1)\]
and, by \eqref{eq:bndpr} and the bound \eqref{eq:bndaydxbar} from Lemma \ref{lm:d-bds}, we also find that
		\[\begin{split}
		|\langle \xi, \text{J}_{12}\xi\rangle| &\leq \int dxdy\,  N^{ 5/2} V(N(x-y)) |\pn(y)|  \| a_y \overline{d}_{\eta_H,x}\xi\| \|a_x\xi\| \\
		&\hspace{0.5cm} +   \int dxdy\,  N^{ 5/2} V(N(x-y)) |\pn(y)|  \| a_y  (\cN^{3/4}/N) a^*(\eta_{H,x})\xi\| \|\cN^{1/4}a_x\xi\|\\
		&\leq C\ell^{\alpha/2} \|(\cN+\cV_N+1)^{1/2}\xi\|^2 + C N^{-1/2} \| (\cN+1)^{3/4}\xi\|^2  .
		\end{split}\]		
Note that we used $ |\eta_H(x;y)|\leq CN $ for all $N$ large enough, by \eqref{eq:etapwbnd}. 

Going back to \eqref{eq:J1split} and recalling the definition \eqref{eq:defhN}, we finally see that 	
		\[\begin{split}
		\text{J}_{13} &= \int dxdy\,  N^{ 5/2} V(N(x-y)) \pn(y) \big[  \eta_{H}(y;x) b_x  + a^*_y a_xb(\eta_{H,x})\big] \\
		& = -\sqrt N b(h_N)  - \int dxdy\,  N^{ 5/2} V(N(x-y))(G\ast \widecheck{\chi}_{H^c}) (x-y)  \pn^2 (y) \pn(x)   b_x  \\
		&\hspace{0.5cm} + \int dxdy\,  N^{ 5/2} V(N(x-y))\pn(y) a^*_y a_x b(\eta_{H,x}), 
		\end{split}\]
where $ \widecheck{\chi}_{H^c}$ denotes the inverse Fourier transform of the characteristic function of the set $\{p\in \bR^3: |p|\leq \ell^{-\alpha}\}$. Using that $\|(G\ast \widecheck{\chi}_{H^c})\|_\infty\leq C\ell^{-\alpha}$, by \eqref{eq:whGbnd}, we deduce that
		\[ \pm \big(\text{J}_{13} +\sqrt N b(h_N)  \big) \leq  C\ell^{-\alpha} + C\ell^{\alpha/2}(\cN+1) \]
and hence, if we collect the previous estimates, we have proved that 
		\begin{equation}\label{eq:bndJ1}
		\begin{split}
		&\pm \bigg(J_1 - \int dxdy\,  N^{ 5/2} V(N(x-y)) \pn(y)    b^*_x  a^*_y a_x + \sqrt N b(h_N) \bigg) \\
		&\hspace{3cm} \leq C\ell^{\alpha/2}(\cN+ \cV_N +1) +  C\ell^{-\alpha} N^{-1/2}(\cN+1)^{3/2}  + C \ell^{-\alpha}.
		\end{split}
		\end{equation}	
	
Next, we bound $\text{J}_2$, defined in \eqref{eq:J123}. We apply as usual the identity \eqref{eq:defd}, Lemma \ref{lm:d-bds} and Cauchy-Schwarz to estimate
		\begin{equation}\label{eq:J2bnd} \begin{split}
		| \langle \xi, \text{J}_2\xi\rangle| &\leq \int dxdy\,  N^{ 5/2} V(N(x-y)) |\pn(y)| \|\eta_{H,y}\|  \\
		&\hspace{1cm}\times  \int_0^1ds\,  \|    b(\cosh_{\eta_H,x})\xi + b^*(\sinh_{\eta_H,x})  \xi  \|\| (\cN+1)^{1/2} b_x e^{sB}\xi\|\\
		&\hspace{0.5cm} + \int dxdy\,  N^{ 5/2} V(N(x-y)) |\pn(y)|  \|\eta_{H,y}\|   \\
		&\hspace{1cm}\times \int_0^1ds\,\|   d_{\eta_H,x}  \xi  \| \| (\cN+1)^{1/2} b_x e^{sB}\xi\| \\
		&\leq C\ell^{\alpha/2}  \| (\cN+1)^{1/2}\xi\|^2\\
		&\hspace{0.5cm}+  C\ell^{\alpha}\int dxdy\,  N^{ 5/2} V(N(x-y))   \\
		&\hspace{1cm}\times \int_0^1ds\, \big[ \|   a_x  \xi  \| + |\pn(x)|  \|   (\cN+1)^{1/2}  \xi  \|\big]\| (\cN+1)^{1/2} b_x e^{sB}\xi\| \\
		&\leq C\ell^{\alpha/2}  \| (\cN+1)^{1/2}\xi\|^2.
		\end{split}\end{equation}

Finally, let us analyze the contribution $\text{J}_3$, defined in \eqref{eq:J123}. We split this contribution into $ \text{J}_3 = \text{J}_{31} + \text{J}_{32}+ \text{J}_{33}$, where 
		\[\begin{split}
		\text{J}_{31} &= \int dxdy\,  N^{ 5/2} V(N(x-y)) \pn(y)   e^{-B}b_x^*e^{B}\\
		&\hspace{1cm} \times \int_0^1 ds\, \Big(  e^{-sB}   b^*_y e^{sB} - b_y^*  \Big) e^{-sB} b^*(\eta_{H,x}) e^{sB} ,\\
		\text{J}_{32} & = \int dxdy\,  N^{ 5/2} V(N(x-y)) \pn(y)   \Big( e^{-B}b_x^*e^{B} - b(\eta_{H,x}) \Big)b^*_y\\
		&\hspace{1cm} \times \int_0^1 ds\,   e^{-sB}   b^*(\eta_{H,x}) e^{sB}, \\
		\text{J}_{33} & = \int dxdy\,  N^{ 5/2} V(N(x-y)) \pn(y) b(\eta_{H,x})b^*_y  \int_0^1 ds\, e^{-sB}  b^*(\eta_{H,x}) e^{sB}. 
		\end{split}\]
To control the error terms $\text{J}_{31}$ and $\text{J}_{32}$, we proceed as before to bound 
		\[\begin{split} 
		|\langle \xi, \text{J}_{31}\xi \rangle | &\leq  \int dxdy\,  N^{ 5/2} V(N(x-y)) |\pn(y)|  \| b_x e^{B}\xi\| \\
		&\hspace{1cm} \times \int_0^1 ds\,\big\| \big[b^*(\text{r}_{s\eta_H,y}) +b(\sinh_{s\eta_H,y})\big] e^{-sB} b^*(\eta_{H,x}) e^{sB}\xi \big\| \\
		&\hspace{0.5cm} + C\ell^{\alpha/2}\int dxdy\,  N^{ 5/2} V(N(x-y)) |\pn(y)|  |\pn(x)|\| (\cN+1)^{1/2}\xi \big\|  \\
		&\hspace{1cm}\times  \int_0^1ds\, \big\| d_{s\eta_H,y} \big[ b(\cosh_{\eta_H,x}) + b^*(\sinh_{\eta_H,x}) + d_{\eta_H,x}\big]  \xi \big\| \\
		&\leq C\ell^{\alpha/2} \langle\xi , (\cN+\cV_N+1)\xi\rangle
		\end{split}\]
as well as
		\[\begin{split} 
		|\langle \xi, \text{J}_{32}\xi \rangle | &\leq  \int dxdy\,  N^{ 5/2} V(N(x-y)) |\pn(y)|  \| b_x e^{B}\xi\| \\
		& \leq  C\ell^{\alpha/2}\int dxdy\,  N^{ 5/2} V(N(x-y)) |\pn(y)|  |\pn(x)|\| (\cN+1)^{1/2}\xi \big\|  \\
		&\hspace{1.5cm}\times  \big\|  b_y \big[ b(\cosh_{\eta_H,x}) + b^*(\text{p}_{\eta_H,x}) + \overline{d}_{\eta_H,x} - \frac{\cN}{N} b^*(\eta_{H,x})\big]  \xi \big\| \\
		&\leq C\ell^{\alpha/2} \langle\xi , (\cN+\cV_N+1)\xi\rangle. 
		\end{split}\]
	Here we used in the in the last step \eqref{eq:bndaydxbar} to bound the term with $\overline{d}$ and further used $b_y b^*(\eta_{H,x}) = a^*(\eta_{H,x}) a_y \left(1-\cN/N \right)+ \eta_H(x,y) \left(1-\cN /N \right)$ and $\Vert \eta_H \Vert_{\infty} \leq C N$.
To control the last term $\text{J}_{33}$, on the other hand, we first rewrite it as  
		\[\begin{split}
		\text{J}_{33} & = -\sqrt N \int dxdy\,  N^{ 3} (Vw_\ell)(N(x-y)) |\pn(y)|^2\pn(x) \\
		&\hspace{1cm}\times \int_0^1 ds\, \big[ b^*(\cosh_{s\eta_H}(\eta_{H,x})) + b(\sinh_{s\eta_H}(\eta_{H,x}))\big] \\
		&\hspace{0.3cm} - \int dxdy\,  N^{ 5/2} V(N(x-y))(G\ast \widecheck{\chi}_{H^c} )(x-y) |\pn(y)|^2\pn(x)  \\
		&\hspace{1cm}\times \int_0^1ds\,  \big[ b^*(\cosh_{s\eta_H}(\eta_{H,x})) + b(\sinh_{s\eta_H}(\eta_{H,x}))\big] \\
		&\hspace{0.3cm} + \int dxdy\,  N^{ 5/2} V(N(x-y)) \pn(y)  \eta_H(x;y) \int_0^1ds\, d^*_{s\eta_H}(\eta_{H,x}) \\
		&\hspace{0.3cm} + \int dxdy\,  N^{ 5/2} V(N(x-y)) \pn(y)a^*_y  a(\eta_{H,x}) (1-\cN/N) \int_0^1 ds\, e^{-sB}  b^*(\eta_{H,x}) e^{sB},
		\end{split}\]
where $ \widecheck{\chi}_{H^c}$ denotes the inverse Fourier transform of the characteristic function of the set $ \{p\in\bR^3: |p|> \ell^{-\alpha}\}$. With very similar arguments as before, we find that  
		\[\begin{split}
		&\pm\bigg( \text{J}_{33} + \sqrt N \int dxdy\,  N^{ 3} (Vw_\ell)(N(x-y)) |\pn(y)|^2\pn(x) \\
		&\hspace{2cm}\times \int_0^1 ds\, \big[ b^*(\cosh_{s\eta_H}(\eta_{H,x})) + b(\sinh_{s\eta_H}(\eta_{H,x}))\big] \bigg)\\
		&\hspace{0.5cm}\leq C\ell^{\alpha/2} (\cN +1)+ C\ell^{-\alpha}.
		\end{split}\]	
Finally, if we recall the definition \eqref{eq:defhN}, we observe that
		\[\begin{split}
		&    \int dxdy\,  N^{ 3} (Vw_\ell)(N(x-y)) |\pn(y)|^2\pn(x)  \int_0^1 ds\, \big[ b^*(\cosh_{s\eta_H}(\eta_{H,x})) + b(\sinh_{s\eta_H}(\eta_{H,x}))\big] \\
		& =  \int dxdy\,  N^{ 3} (Vw_\ell)(N(x-y)) |\pn(y)|^2\pn(x)  \big[ b^*(\sinh_{\eta_H,x}) + b(\text{r}_{\eta_H,x})\big]\\
		& = \Big[ b\big(\cosh_{\eta_H}(h_N)\big)  +  b^*\big(\sinh_{\eta_H}(h_N)\big)\Big] -   b(h_N).
		\end{split} \]
Hence, the previous bounds together with \eqref{eq:bndJ1} and \eqref{eq:J2bnd} prove the proposition. 		
\end{proof}

\subsection{Analysis of $\cG_{N}^{(4)}$}
From \eqref{eq:cLNj}, we recall that 
		\[
		\begin{split}
		\wt{\mathcal{L}}_N^{(4)} &= \frac{1}{2} \int  dxdy\; N^2 V(N(x-y)) a_x^* a_y^* a_y a_x.
		\end{split}
		\]
For the analysis of $\cG_{N}^{(4)}=e^{-B}\wt{\mathcal{L}}_N^{(4)}e^{B}$, we will use the following Lemma which is a straightforward consequence of Lemmas \ref{lm:bndsetaH}, \ref{lm:d-bds} and the decomposition \eqref{eq:defd}; we omit its proof.
\begin{lemma}\label{lm:auxquartic}
	Assume \eqref{eq:asmptsVVext} and let $\ell\in(0;1)$ be sufficiently small. Then, there exists a constant $C>0$ such that 
	\[\begin{split}
	&\Vert (\mathcal{N}+1)^{n/2} e^{-sB} b_x b_y e^{sB} \xi \Vert \\
	&\leq C\ell^{\alpha} |\pn(x)||\pn(y)| \Vert (\mathcal{N}+1)^{(n+2)/2} \xi \Vert+C\ell^{\alpha/2} |\pn(y)| \Vert a_x (\mathcal{N}+1)^{(n+1)/2} \xi \Vert   \\
	&\quad+C\ell^{\alpha/2}|\pn(x)| \Vert a_y (\mathcal{N}+1)^{(n+1)/2} \xi \Vert  + \vert \eta_H(x;y) \vert   \Vert (\mathcal{N}+1)^{n/2}\xi \Vert +\Vert a_x a_y (\mathcal{N}+1)^{n/2} \xi \Vert
		\end{split}\]
for all $\xi \in \cF^{\leq N}$ and all $s\in [0;1]$. 
\end{lemma}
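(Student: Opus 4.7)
My plan is to expand $e^{-sB} b_x b_y e^{sB}$ using the identity \eqref{eq:defdx} applied twice and then estimate each of the nine resulting terms. More precisely, I would first write
\[ e^{-sB} b_x b_y e^{sB} = \bigl(e^{-sB} b_x e^{sB}\bigr)\bigl(e^{-sB} b_y e^{sB}\bigr) = \bigl[b(\cosh_{s\eta_H,x}) + b^*(\sinh_{s\eta_H,x}) + d_{s\eta_H,x}\bigr]\bigl[b(\cosh_{s\eta_H,y}) + b^*(\sinh_{s\eta_H,y}) + d_{s\eta_H,y}\bigr], \]
and then split $\cosh_{s\eta_H,x} = \delta_x + \text{r}_{s\eta_H,x}$ and $\sinh_{s\eta_H,x} = s\eta_{H,x} + \text{p}_{s\eta_H,x}$ so that the leading factor in each $b$-piece is simply $b_x$ (respectively $b^*(\eta_{H,x})$), while the remainders are controlled by \eqref{eq:bndpr}.

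Each of the resulting terms should then be matched to one of the five contributions on the right-hand side. The product of the two leading $b_x b_y$ terms directly yields the $\|a_x a_y (\mathcal{N}+1)^{n/2}\xi\|$ contribution, after moving $(\mathcal{N}+1)^{n/2}$ through the annihilation operators via the relation $f(\mathcal{N}) a_x a_y = a_x a_y f(\mathcal{N}-2)$ and using that $b$ is bounded by $a$ and $\mathcal{N}^{1/2}$. Terms where one factor is a leading $b_x$ and the other is an error in $\text{r}_{s\eta_H,y}$, $\text{p}_{s\eta_H,y}$, or $d_{s\eta_H,y}$ produce the $\ell^{\alpha/2}|\varphi(y)|\|a_x(\mathcal{N}+1)^{(n+1)/2}\xi\|$ contribution (and symmetrically for $x\leftrightarrow y$), using the pointwise bound $|\text{r}_{\eta_H}(x;y)|, |\text{p}_{\eta_H}(x;y)| \leq C\ell^{\alpha}\varphi(x)\varphi(y)$ from \eqref{eq:bndpr}, the bound $\|\eta_{H,y}\| \leq C\ell^{\alpha/2}|\varphi(y)|$ from Lemma \ref{lm:bndsetaH}, and the estimates on $d_{s\eta_H,y}$ from Lemma \ref{lm:d-bds}. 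Terms where both factors carry an error produce the $\ell^{\alpha}|\varphi(x)||\varphi(y)|\|(\mathcal{N}+1)^{(n+2)/2}\xi\|$ contribution.

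The remaining delicate piece is the cross term $b_x \cdot b^*(s\eta_{H,y})$, where we must commute $b_x$ past $b^*(s\eta_{H,y})$ using \eqref{eq:comm-b}; the commutator $[b_x, b^*(s\eta_{H,y})] = s\eta_H(x;y)(1-\mathcal{N}/N) - N^{-1} a^*(s\eta_{H,y}) a_x$ produces exactly the $|\eta_H(x;y)| \|(\mathcal{N}+1)^{n/2}\xi\|$ contribution, while the normal-ordered part $b^*(s\eta_{H,y}) b_x$ gives an additional $\ell^{\alpha/2}|\varphi(y)| \|a_x(\mathcal{N}+1)^{(n+1)/2}\xi\|$ term. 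For all terms containing $d_{s\eta_H,x}$ or $d_{s\eta_H,y}$, I would apply \eqref{eq:bnddx}, moving $(\mathcal{N}+1)^{n/2}$ through using Lemma \ref{lm:Npow} and the fact that $d$ changes the particle number by a bounded amount.

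The main obstacle I anticipate is purely bookkeeping: keeping track of which of the nine terms contributes to which of the five output estimates, and carefully managing the $\mathcal{N}$-powers (accounting for the shifts produced when commuting $(\mathcal{N}+1)^{n/2}$ past $a$'s, $b$'s, and $d$'s, and through factors of $(1 - \mathcal{N}/N)$). No new ingredient beyond Lemmas \ref{lm:bndsetaH}, \ref{lm:d-bds} and \eqref{eq:bndpr} is needed, which is presumably why the authors opt to omit the proof.
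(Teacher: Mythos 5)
Your outline—expanding each factor of $e^{-sB}b_xb_ye^{sB}$ via \eqref{eq:defdx}, splitting $\cosh$ and $\sinh$ through \eqref{eq:defpretaH}, and matching the resulting terms to the five contributions on the right—is exactly the route the authors have in mind, and most of the bookkeeping you describe (the commutator $[b_x,b^*(s\eta_{H,y})]$ producing $\eta_H(x;y)$, the pointwise bounds on $\text{p}_{\eta_H},\text{r}_{\eta_H}$, the $\mathcal{N}$-power shifts) is correct.

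There is, however, a concrete gap in the step where you write ``for all terms containing $d_{s\eta_H,x}$ or $d_{s\eta_H,y}$, I would apply \eqref{eq:bnddx}.'' That estimate alone does not suffice for the cross term $b_x\, d_{s\eta_H,y}$ (nor for $d_{s\eta_H,x}d_{s\eta_H,y}$). Bounding $b_x$ by $a_x$, commuting $(\mathcal{N}+1)^{n/2}$ through and then applying $\|a_x\zeta\|\leq\|\mathcal{N}^{1/2}\zeta\|$ before invoking \eqref{eq:bnddx} produces, up to $\mathcal{N}$-power shifts and factors of $(\mathcal{N}+1)/N$, a term of the form $C\ell^{\alpha/2}\|a_y(\mathcal{N}+1)^{(n+1)/2}\xi\|$ with \emph{no} factor of $|\pn(x)|$ or $a_x$ in front. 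Such a term is not dominated by any of the five expressions on the right-hand side (all of which carry either $|\pn(x)|$ or $a_x$, both of which can be arbitrarily small). To recover the claimed $x$-structure you must write $d_{s\eta_H,y}=\overline{d}_{s\eta_H,y}-(\mathcal{N}/N)b^*(s\eta_{H,y})$ and use \eqref{eq:bndaydxbar} (with the roles of $x$ and $y$ exchanged) for the $\overline{d}$ piece, together with \eqref{eq:comm-b} and \eqref{eq:bndsetaH} for the $b^*(s\eta_{H,y})$ piece; the double-$d$ term analogously needs \eqref{eq:bnddxdy}. In contrast, the terms $d_{s\eta_H,x}\cdot(\text{something in }y)$ \emph{can} be handled by \eqref{eq:bnddx} alone, since there the operator carrying the $y$-structure sits to the right and acts on $\xi$ before $d_{s\eta_H,x}$ does; similarly, $b(\text{r}_{s\eta_H,x})d_{s\eta_H,y}$, $b^*(s\eta_{H,x})d_{s\eta_H,y}$ and $b^*(\text{p}_{s\eta_H,x})d_{s\eta_H,y}$ already carry the $|\pn(x)|$ factor through $\|\text{r}_{\eta_H,x}\|$, $\|\eta_{H,x}\|$, $\|\text{p}_{\eta_H,x}\|$. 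Your concluding sentence (``no new ingredient beyond Lemmas \ref{lm:bndsetaH}, \ref{lm:d-bds} and \eqref{eq:bndpr}'') is accurate, but the detailed plan should name \eqref{eq:bndaydxbar} and \eqref{eq:bnddxdy} explicitly for the two problematic products.
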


\begin{prop}\label{prop:GNell4} There exists a constant $C>0$ such that
		\[\begin{split}
		 \cG_{N}^{(4)} = &\,\mathcal{V}_N - \frac{1}{2} \int dx dy \ N^3 (Vw_\ell)(N(x-y)) \pn(x)\pn(y)\big(b_x b_y + b_x^* b_y^*\big)\\
	& + \frac{N}{2} \int dx dy \ N^3 V(N(x-y))  w_\ell^2(N(x-y)) |\pn(x)|^2|\pn(y)|^2  \\
		&\hspace{1.8cm}\times (1-\cN/N) (1-\cN/N-1/N)+\cE_{N,\ell}^{(4)}\,,
		 \end{split}\]
where the self-adjoint operator $\cE_{N,\ell}^{(4)}$ satisfies 
		\[ \pm \cE_{N,\ell}^{(4)} \leq C \ell^{\alpha/2} (\mathcal{N} + \mathcal{V}_N+1) + C\ell^{-\alpha}.   \]
for all $\alpha>0$ and $\ell\in(0;1)$. 
\end{prop}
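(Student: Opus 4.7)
\medskip

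\noindent\textbf{Proof plan for Proposition \ref{prop:GNell4}.} The plan is to apply the fundamental theorem of calculus as in the previous subsections, writing
\[
\cG_N^{(4)} = \cV_N + \int_0^1 ds\, e^{-sB}[\cV_N, B]\, e^{sB},
\]
and then extracting the leading contributions of the commutator $[\cV_N, B]$. Writing $B = \tfrac12 \int du\,dv\, \eta_H(u;v)(b_u^* b_v^* - b_u b_v)$ and using the CCR (\ref{eq:ccr}), (\ref{eq:comm-b}), (\ref{eq:comm-b2}), only the annihilation factors $a_y a_x$ in $\cV_N$ effectively commute with the $b_u^* b_v^*$ part of $B$ (modulo harmless $(N{-}\cN)/N$ factors), giving a decomposition into \emph{single} and \emph{double} contraction terms. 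The single contractions, where exactly one of $a_y,a_x$ contracts with $b_u^*$ or $b_v^*$, produce after symmetrization a quadratic expression whose integral kernel is, to leading order, $N^2 V(N(x-y))\pn(x)\pn(y)\eta_H(x;y)$ times an operator of the form $a_x^* b_y^* (1{-}\cN/N)^{1/2}$ (and its hermitian conjugate coming from the $-b_u b_v$ half of $B$). The double contractions produce a scalar term quadratic in $\eta_H$, again with the kernel $N^2 V(N(x-y))\pn(x)\pn(y)\eta_H(x;y)$, dressed by $(1{-}\cN/N)(1{-}\cN/N{-}1/N)$.

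The key algebraic step is then to insert the decomposition
\[
\eta_H(x;y) \;=\; -N w_\ell(N(x-y))\pn(x)\pn(y) + (G*\wch\chi_{H^c})(x-y)\pn(x)\pn(y),
\]
so that the $w_\ell$ piece generates exactly the two main terms stated in the proposition: the quadratic $-\tfrac12 \int N^3(Vw_\ell)(N(x-y))\pn(x)\pn(y)(b_xb_y+b_x^*b_y^*)$ and the scalar $\tfrac{N}{2}\int N^3 V(N(x-y))w_\ell^2(N(x-y))|\pn(x)|^2|\pn(y)|^2(1-\cN/N)(1-\cN/N-1/N)$, after rewriting $a_x^* b_y^*$ as $b_x^* b_y^*$ up to one more contraction that is absorbed into the double-contraction piece. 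The remainder involving $G*\wch\chi_{H^c}$ is a bounded kernel since $\|G*\wch\chi_{H^c}\|_\infty \leq C\ell^{-\alpha}$ by \eqref{eq:whGbnd}, which produces the $C\ell^{-\alpha}$ term in the error bound.

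The remaining step is to estimate the resulting error operators and the difference between $\int_0^1 e^{-sB}(\cdot)e^{sB}$ and the identity. I would replace each $e^{-sB}b_x e^{sB}$ and $e^{-sB}b_x b_y e^{sB}$ factor by its expansion via \eqref{eq:defd}, \eqref{eq:defdx}, and Lemma \ref{lm:auxquartic}, using repeatedly $|\eta_H(x;y)|\leq CN\,|\pn(x)|\,|\pn(y)|$ from \eqref{eq:etapwbnd}, $\|\pn\|_\infty\leq C$, $\|\eta_{H,x}\|\leq C\ell^{\alpha/2}|\pn(x)|$, and Cauchy--Schwarz against the density $N^3V(N(x-y))$ to produce $\cV_N$ whenever two position-space annihilation operators appear. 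The key bookkeeping is that each loose factor $\eta_H(x;y)$ costs at most $N$ but is paid for by one factor of $1/N$ hidden in the $\cV_N$-integrable potential $N^3V(N(x-y))$, and that each missing factor of $\ell^{\alpha/2}$ (as opposed to $N$) is supplied by cutting the integral against $\wch\chi_H$ via Lemma \ref{lm:bndsetaH}.

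The main obstacle is the delicate tracking of the $(1-\cN/N)$ and $(1-\cN/N-1/N)$ factors that emerge from moving $a^* b$ to $b^* a$ and from re-expressing $b_xb_y$ on products of $b^*$'s via \eqref{eq:comm-b}: these factors must appear exactly as in the statement to match the energy contribution in $\cG_N^{\text{eff}}$ (cf. \eqref{eq:propGNell} and the analogous scalar contribution extracted in Proposition \ref{prop:GNell2V}). Once the accounting is correct, the error $\cE_{N,\ell}^{(4)}$ assembles into the claimed bound $\pm\cE_{N,\ell}^{(4)} \leq C\ell^{\alpha/2}(\cN+\cV_N+1)+C\ell^{-\alpha}$ by combining all the estimates above with Lemma \ref{lm:Npow}.
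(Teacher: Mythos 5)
Your proposal follows essentially the same route as the paper: expand $\cG_N^{(4)}=\cV_N+\int_0^1 e^{-sB}[\cV_N,B]e^{sB}\,ds$, split the commutator into the single-contraction piece (the paper's $W_{11}$, giving the quadratic $b_xb_y+b_x^*b_y^*$ term) and the double-contraction pieces (the paper's $W_{12}+W_{13}$, giving the scalar dressed by $(1-\cN/N)(1-\cN/N-1/N)$), then insert $\eta_H(x;y)=-Nw_\ell(N(x-y))\pn(x)\pn(y)+(G*\wch{\chi}_{H^c})(x-y)\pn(x)\pn(y)$ and absorb the $G*\wch{\chi}_{H^c}$ remainder into the $C\ell^{-\alpha}$ error via $\|G*\wch{\chi}_{H^c}\|_\infty\leq C\ell^{-\alpha}$, controlling all remaining errors with Lemmas \ref{lm:d-bds}, \ref{lm:auxquartic} and \eqref{eq:etapwbnd}. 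One small slip in your write-up: the kernel of the single-contraction term should be $N^2V(N(x-y))\,\eta_H(x;y)$, not $N^2V(N(x-y))\pn(x)\pn(y)\,\eta_H(x;y)$ (the factor $\pn(x)\pn(y)$ is already contained in $\eta_H$), and the accompanying operator is $b_xb_y+b_x^*b_y^*$ rather than $a_x^*b_y^*(1-\cN/N)^{1/2}$; with that correction the bookkeeping matches the stated result.
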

\begin{proof}
Using the identity
	\begin{align*}
	[a_x^* a_y^* a_y a_x,  b_u b_v] 
	&= -\Big(\delta(x-u) a_y^* a_v + \delta(x-v) a_y^* a_u + \delta(x-u) \delta(y-v)  \\
	&\hspace{1cm} + \delta(x-v) \delta(u-y)+ \delta(y-u) a_x^* a_v + \delta(y-v) a_x^* a_u\Big)b_x b_y ,
	\end{align*}
we have that
	\begin{align*}
	G_{N,\ell}^{(4)} &=e^{-B({\eta_H})} \mathcal{L}_N^{(4)} e^{B} \\
	&= \mathcal{V}_N + \frac{1}{2}\int_0^1 ds \int dx dy \  N^2 V(N(x-y)) e^{-sB} \big[a_x^* a_y^* a_y a_x, B\big] e^{sB} \\
	&= \mathcal{V}_N
	+ \frac{1}{2} \int_0^1 ds \int dx dy \ N^2 V(N(x-y)) \eta_H(x;y) \left(e^{-sB} b_x b_y e^{sB} + \text{h.c.} \right) \\
	&\quad +   \int_0^1 ds \int dx dy  \ N^2 V(N(x-y))  \left( e^{-sB} a_y^* a(\eta_{H,x}) b_x b_y e^{sB} + \text{h.c.} \right).
	\end{align*}
For the conjugation of the quartic term, we use furthermore that
	\begin{align*}
	 e^{-s B} a_y^* a_u e^{s B}
	&= a_y^* a_u + \int_0^s d\tau \ e^{-\tau B} \big[a_y^* a_u, B\big] e^{\tau B} \\
	&= a_y^* a_u + \int_0^s d\tau e^{-\tau B} \Big(b_y^* b^*(\eta_{H,u}) + b(\eta_{H,y}) b_u \Big) e^{\tau B}
	\end{align*}
so that
	\[
	\begin{split}
	\cG_{N}^{(4)} 
	&= \mathcal{V}_N
	+ \frac{1}{2} \int_0^1 ds \int dx dy \ N^2 V(N(x-y)) \eta_H(x;y) \left(e^{-sB} b_x b_y e^{sB} + \text{h.c.} \right) \\
	&\quad +   \int_0^1 ds \int dx dy \ N^2V(N(x-y))   \left(  a_y^* a(\eta_{H,x}) e^{-sB} b_x b_y e^{sB} + \text{h.c.} \right) \\
	&\quad +   \int_0^1 ds \int_0^s d\tau \int dx dy  \ N^2 V(N(x-y))   \\
	&\hspace{3.2cm} \times \left(  e^{-\tau B} b(\eta_{H,y}) b(\eta_{H,x})  e^{\tau B}  e^{-sB} b_x b_y e^{sB} + \text{h.c.}\right) \\
	&\quad +   \int_0^1 ds \int_0^s d\tau \int dx dy  \ N^2 V(N(x-y))  \\
	&\hspace{3.2cm} \times \left(  e^{-\tau B} b_y^* b^*(\eta^{(2)}_{H,x}) e^{\tau B} e^{-sB} b_x b_y e^{sB} + \text{h.c.} \right) \\
	&=: \mathcal{V}_N + \text{W}_1 +\text{W}_2 + \text{W}_3 + \text{W}_4.
	\end{split}
	\]
Combining as usual the bounds from Lemma \ref{lm:bndsetaH} together with Lemma \ref{lm:auxquartic} and Cauchy-Schwarz, a tedious, but simple analysis as in the proof of \cite[Prop. 7.6]{BBCS4}, shows that 
		\[ \pm \text{W}_2\leq C\ell^{\alpha/2} (\cN+\cV_N+1), \;\pm \text{W}_3\leq C\ell^{\alpha} (\cN+\cV_N+1), \;\pm \text{W}_4 \leq C\ell^{\alpha} (\cN+\cV_N+1). \]
We omit the details and focus on the only relevant term $\text{W}_1$ which can be written as
	\begin{align*}
	\text{W}_1
	&=\frac{1}{2} \int_0^1 ds \int dx dy \ N^2 V(N(x-y)) \eta_H(x;y) \big(b(\cosh_{s \eta_H , x})+b^*(\sinh_{s \eta_H , x}) + d_{s\eta_H,x} \big)\\
	&\hspace{2cm} \times  \big(b(\cosh_{s \eta_H , y})+b^*(\sinh_{s \eta_H , y}) + d_{s\eta_H,y} \big) +\text{h.c.} \\
	&= \frac{1}{2} \int_0^1 ds \int dx dy \ N^2 V(N(x-y)) \eta_H(x;y) \big(b(\cosh_{ s\eta_H , x}) b(\cosh_{s \eta_H , y}) + \text{h.c.}\big) \\
	&\quad + \frac{1}{2} \int_0^1 ds \int dx dy \ N^2 V(N(x-y)) \eta_H(x;y) \big( b(\cosh_{s \eta_H , x}) b^*(\sinh_{s \eta_H , y}) + \text{h.c.} \big) \\
	&\quad + \frac{1}{2} \int_0^1 ds \int dx dy \ N^2 V(N(x-y)) \eta_H(x;y) \big( b(\cosh_{s\eta_H,x}) d_{s\eta_H,y} + \text{h.c.} \big)   + \mathcal{E}_1^{(4)} \\
	&=: \text{W}_{11} + \text{W}_{12} + \text{W}_{13} + \mathcal{E}_1^{(4)},
	\end{align*}
where
	\begin{align*}
	\mathcal{E}_1^{(4)}
	&= \frac{1}{2} \int_0^1 ds \int dx dy \ N^2 V(N(x-y)) \eta_H(x;y)  \big(b^*(\sinh_{s \eta_H , x}) + d_{s\eta_H,x}\big)\\
	&\hspace{2.5cm} \times \big(b(\cosh_{s \eta_H , y})+b^*(\sinh_{s \eta_H , y}) + d_{s\eta_H,y} \big) + \text{h.c.}
	\end{align*}
Using \eqref{eq:bnddx}, \eqref{eq:bndaydxbar} and $N^{-1}  | \eta_H(x;y)  | \leq C|\pn(x)||\pn(y)| $ by \eqref{eq:etapwbnd}, we get
	\[
	\begin{split}
	\big\vert \langle \xi, \mathcal{E}_1^{(4)} \xi \rangle \big\vert
	&\leq C \ell^{\alpha/2}  \int_0^1 ds \int dx dy \ N^2 V(N(x-y)) \vert \eta_H(x;y) \vert \Vert (\mathcal{N}+1)^\frac{1}{2} \xi \Vert \\
	&\hspace{0.5cm} \times \Big[ |\pn(x)|\Vert  \left(b(\cosh_{s \eta_H , y})+b^*(\sinh_{s \eta_H , y}) + d_{s\eta_H,y} \right)  \xi \Vert \\
	&\hspace{1cm}+  N^{-1} \Vert  (\cN+1)^{1/2} a_x  \left(b(\cosh_{s \eta_H , y})+b^*(\sinh_{s \eta_H , y}) + d_{s\eta_H,y} \right) \xi \Vert \Big] \\
	&\leq C \ell^{\alpha/2} \langle \xi, (\mathcal{N} + \mathcal{V}_N+1) \xi \rangle.
	\end{split}
	\]
	
Next, let us analyze the contributions $ \text{W}_{11}, \text{W}_{12}$ and $\text{W}_{13}$. We write
	\[\text{W}_{11} = \frac{1}{2}  \int dx dy \ N^2 V(N(x-y)) \eta_H(x;y) \big(b_x b_y + b_x^* b_y^*\big) + \mathcal{E}_{11}^{(4)}\]
for an error $\mathcal{E}_{11}^{(4)}$ that satisfies 
	\[\begin{split}
	\vert \langle \xi, \mathcal{E}_{11}^{(4)} \xi \rangle \vert
	&\leq  C \int_0^1 ds \int dx dy \ N^2 V(N(x-y)) \vert \eta_H(x;y) \vert \Vert (\mathcal{N}+1)^\frac{1}{2} \xi \Vert  \\
	&\hspace{2cm} \times \Vert (\cN+1)^{-1/2} (b_x b(p_{s\eta_H,y}) + b(p_{s\eta_H,x})b_y + b(p_{s\eta_H,x})b(p_{s\eta_H,y})) \xi \Vert \\
	&\leq C\ell^{\alpha/2} \langle \xi, (\mathcal{N}+1) \xi \rangle.
	\end{split}\]
Similarly, we have that
	\[
	\text{W}_{12} = \frac{1}{2} \int dx dy \ N^2 V(N(x-y)) \eta_H(x;y)^2 (1-\cN/N) + \mathcal{E}_{12}^{(4)},
	\]
where 
	\begin{align*}
	\mathcal{E}_{12}^{(4)} 
	&=  \int_0^1 ds \int dx dy \ N^2 V(N(x-y)) \eta_H(x;y)(1-\cN/N) \\
	&\hspace{2cm} \times [a^*(\sinh_{s \eta_H , y}) a(\cosh_{s \eta_H , x}) + \text{p}_{s \eta_H}(x;y) +\langle \text{r}_{s\eta_H,x}, \sinh_{s\eta_H,y}\rangle+ \text{h.c.}]
	\end{align*}
	and thus $\pm \mathcal{E}_{12}^{(4)} \leq C\ell^{\alpha/2} (\mathcal{N}+1)$. 
	
Finally, we have that
	\begin{equation}
	\text{W}_{13} = - \frac{1}{2} \int dx dy \ N^2 V(N(x-y)) \eta_H(x;y)^2 \left(1-\frac{\mathcal{N}}{N}\right)\frac{\mathcal{N}+1}{N} + \mathcal{E}_{13}^{(4)}
	\end{equation}
where, by the bound \eqref{eq:bndaydxbar}, it is simple to see that
	\[\begin{split}
	\vert \langle \xi, \mathcal{E}_{13}^{(4)} \xi \rangle \vert
	&\leq  C \int_0^1 ds \int dx dy \ N^2 V(N(x-y)) \vert \eta_H(x;y) \vert \cdot  \Vert (\mathcal{N}+1)^{1/2} \xi \Vert\\
	&\hspace{2cm}\times\Big[ \Vert (\mathcal{N}+1)^{- 1/2} a^*(s\eta_{H,y})a_x \xi\| + \ell^{\alpha/2}|\pn(x)| \|  d_{s\eta_H,y}\xi\| \\
	&\hspace{2.5cm}+\|(\cN+1)^{-1/2} a_x \overline{d}_{s\eta_H,y} \xi \Vert  \Big] \\
	&\leq C \ell^{\alpha/2} \langle \xi, (\mathcal{N}+\cV_N+1) \xi \rangle.
	\end{split}\]

In summary, the analysis from above proves that 
		\[\begin{split}
		 \cG_{N}^{(4)} = &\,\mathcal{V}_N + \frac{1}{2} \int dx dy \ N^2 V(N(x-y)) \eta_H(x;y) \big(b_x b_y + b_x^* b_y^*\big)\\
	& + \frac{1}{2} \int dx dy \ N^2 V(N(x-y)) \eta_H(x;y)^2  (1-\cN/N) (1-\cN/N-1/N) + \widetilde{\cE}_{N,\ell}^{(4)} ,
		 \end{split}\]
for an error $\widetilde{\cE}_{N,\ell}^{(4)}$ satisfies
		\[\pm \widetilde{\cE}_{N,\ell}^{(4)}\leq C \ell^{\alpha/2}  (\mathcal{N}+\cV_N+1).\]
Replacing finally $ \eta_H(x;y)$ by $ G(x-y)\pn(x)\pn(y) = -Nw_\ell(N(x-y))\pn(x)\pn(y)$ in the first two contributions on the right hand side of the last equation for $\cG_{N}^{(4)} $, we conclude the proposition, using that $ \| G\ast \widecheck{\chi}_{H^c}\|_\infty\leq C \ell^{-\alpha}$ and $ N^{-1}|\eta_H(x;y)|\leq C$, by \eqref{eq:etapwbnd}. 
\end{proof}

\subsection{Proof of Proposition \ref{prop:GNell}}\label{sec:proofpropGNell}

Collecting the results from the previous subsections, we are now ready to prove Proposition \ref{prop:GNell}. Since the proof is similar to the proof of \cite[Theorem 4.4]{BS} and \cite[Prop. 4.2]{BBCS4}, we explain the main steps only. 
\begin{proof}[Proof of Proposition \ref{prop:GNell}]
Let us collect the results of Propositions \ref{prop:GNell0}, \ref{prop:GNell1}, \ref{prop:Vext}, \ref{prop:GNell2V}, \ref{prop:e-BKeB}, \ref{prop:GNell3} and \ref{prop:GNell4}, noting that there is a cancellation between the linear main contributions from $\cG_N^{(1)}$ in Prop. \ref{prop:GNell1} with those of $\cG_N^{(3)}$ in Prop. \ref{prop:GNell3}. We find that
		\begin{equation} \label{eq:ovGN}
		\begin{split}
		\cG_{N} = &\, \big\langle \pn, \big[ -\Delta + V_{ext} + \frac{1}{2} \big(N^3 V(N\cdot) * \vert \pn \vert^2\big) \big] \pn \big\rangle (N-\cN) \\
		& \qquad -  \frac12\big\langle \pn,  \big(N^3 V(N\cdot) * \vert \pn \vert^2 \big)\pn \big\rangle (\cN+1)(1-\cN/N) \\ 
		& + \int dx dy \ N^3 V(N(x-y)) \pn(x) \pn(y) \eta_H(x;y)  (1-\cN/N)(1-\cN/N-1/N)\\
		&+   \int dxdy\, (-\Delta_x \eta_H (x;y)))\eta_H(x;y)  (1-\cN/N)  (1-\cN/N-1/N)  \\
		& + \frac{N}{2} \int dx dy \ N^3 V(N(x-y))  w_\ell^2(N(x-y)) |\pn(x)|^2|\pn(y)|^2  \\
		&\hspace{1.8cm}\times (1-\cN/N) (1-\cN/N-1/N) \\
		&  + \int dx \big(N^3 V(N.)\ast \vert \pn \vert^2\big)(x) b_x^* b_x   + \int dx dy \ N^3 V(N(x-y)) \pn(x)\pn(y) b_x^*b_y \\
		&  + \frac{1}{2} \int dx dy\big(  N^3 (Vf_\ell)(N(x-y)) \pn(x) \pn(y) - \Delta_y\eta_{H}(x;y)   \big)\big ( b_x^* b_y^*  + b_x b_y  \big)  \\
		&+ \int dxdy\,  N^{ 5/2} V(N(x-y)) \pn(y)  \big( b_x^* a^*_y a_x + \text{h.c.}\big) \\
		&+ \cK+ \cV_{ext}+\cV_N+ \widetilde{\cE}^{(1)}_{\cG_{N} },
		\end{split}
		\end{equation}
where the error $\widetilde{\cE}^{(1)}_{\cG_{N} }$ 
satisfies the estimate
		\[\pm \widetilde{\cE}^{(1)}_{\cG_{N} } \leq C\ell^{(\alpha-3)/2}(\cN+\cK+\cV_N+1) + C\ell^{-5\alpha/2}N^{-1}(\cN+1)^2+ C\ell^{-\alpha}. \]
To prove Proposition \ref{prop:GNell}, we need to simplify $ \cG_{N} - \widetilde{\cE}^{(1)}_{\cG_{N} } $ further. 

We start with the terms on the first six lines of \eqref{eq:ovGN}. Recalling that
		\begin{align*}
		- \Delta_x \eta_H (x;y) &= - \Delta G (x-y) \pn(x) \pn(y) -2\nabla G (x-y) \nabla \pn(x) \pn(y) \\
		&\quad - G(x-y)  \Delta\pn(x) \pn(y) + \Delta_x\big[(G * \widecheck{\chi}_{H^c})(x-y) \pn(x) \pn(y)\big]
		\end{align*}
an application of the scattering equation \eqref{eq:scatlN} together with the bounds \eqref{eq:lambdaell}, \eqref{3.0.scbounds1} from Lemma \ref{3.0.sceqlemma} as well as the pointwise bounds $ \|  G\ast \widecheck{\chi}_{H^c}\|_\infty\leq C\ell^{-\alpha} $ and 
		\[\begin{split}
		 \big|  \nabla_x\big[(G * \widecheck{\chi}_{H^c})(x-y) \pn(x) \pn(y)\big| \leq C\ell^{-2\alpha} |\pn(y)| \big[ |\pn(x)|+ |\nabla\pn(x)|  + |\Delta\pn(x)|  \big], \\
		 \big|  \Delta_x\big[(G * \widecheck{\chi}_{H^c})(x-y) \pn(x) \pn(y)\big| \leq C\ell^{-3\alpha} |\pn(y)| \big[ |\pn(x)|  + |\nabla\pn(x)|  + |\Delta\pn(x)| \big]
		 \end{split}\]  
shows that
		\[\begin{split}
		&\big\langle \pn, \big[ -\Delta + V_{ext} + \frac{1}{2} \big(N^3 V(N\cdot) * \vert \pn \vert^2\big) \big] \pn \big\rangle (N-\cN) \\
		& \qquad -  \frac12\big\langle \pn,  \big(N^3 V(N\cdot) * \vert \pn \vert^2 \big)\pn \big\rangle (\cN+1)(1-\cN/N) \\ 
		& + \int dx dy \ N^3 V(N(x-y)) \pn(x) \pn(y) \eta_H(x;y)  (1-\cN/N)(1-\cN/N-1/N)\\
		&+   \int dxdy\, (-\Delta_x \eta_H (x;y)))\eta_H(x;y)  (1-\cN/N)  (1-\cN/N-1/N)  \\
		& + \frac{N}{2} \int dx dy \ N^3 V(N(x-y))  w_\ell^2(N(x-y)) |\pn(x)|^2|\pn(y)|^2  \\
		&\hspace{1.8cm}\times (1-\cN/N) (1-\cN/N-1/N) \\
		& =  \big\langle \pn, \big[ -\Delta + V_{ext} + \frac{1}{2} \big(N^3 (Vf_\ell)(N\cdot) * \vert \pn \vert^2\big) \big] \pn \big\rangle (N-\cN) \\
		& \qquad -  \frac12\big\langle \pn,  \big(N^3 (Vf_\ell)(N\cdot) * \vert \pn \vert^2 \big)\pn \big\rangle  \,\cN (1-\cN/N) + \widetilde{\cE}^{(2)}_{\cG_{N} },\\ 
		\end{split}\]
for an error that satisfies $\pm \widetilde{\cE}^{(2)}_{\cG_{N} }\leq C(\ell^{(\alpha-3)/2}+ \ell^{-3\alpha})$. Using the Gross-Pitaevskii equation \eqref{eq:gpeq}, the bound \eqref{eq:Vfa0} from Lemma \ref{3.0.sceqlemma}, a simple application of the mean value theorem shows furthermore that 
		\[\begin{split} 
		&\big\langle \pn, \big[ -\Delta + V_{ext} + \frac{1}{2} \big(N^3 (Vf_\ell)(N\cdot) * \vert \pn \vert^2\big) \big] \pn \big\rangle (N-\cN) \\
		& \qquad -  \frac12\big\langle \pn,  \big(N^3 (Vf_\ell)(N\cdot) * \vert \pn \vert^2 \big)\pn \big\rangle  \,\cN (1-\cN/N) \\
		& = N\cE_{GP}(\pn) - \eps_{GP}\cN + 4\pi \mathfrak{a}_0\|\pn\|_4^4\, \cN^2/N + \widetilde{\cE}^{(3)}_{\cG_{N} },
		\end{split} \]
up to an error that satisfies $\pm \widetilde{\cE}^{(3)}_{\cG_{N} }\leq C\ell^{-1} $. This shows that 
		\[\begin{split}
		\cG_{N} =&\,  N\cE_{GP}(\pn) - \eps_{GP}\cN + 4\pi \mathfrak{a}_0\|\pn\|_4^4\, \cN^2/N \\
		&  + \int dx \big(N^3 V(N.)\ast \vert \pn \vert^2\big)(x) b_x^* b_x   + \int dx dy \ N^3 V(N(x-y)) \pn(x)\pn(y) b_x^*b_y \\
		&  + \frac{1}{2} \int dx dy\big(  N^3 (Vf_\ell)(N(x-y)) \pn(x) \pn(y) - \Delta_y\eta_{H}(x;y)   \big)\big ( b_x^* b_y^*  + b_x b_y  \big)  \\
		&+ \int dxdy\,  N^{ 5/2} V(N(x-y)) \pn(y)  \big( b_x^* a^*_y a_x + \text{h.c.}\big) \\
		&+ \cK+ \cV_{ext}+\cV_N +  \widetilde{\cE}^{(1)}_{\cG_{N} }  + \widetilde{\cE}^{(2)}_{\cG_{N} } + \widetilde{\cE}^{(3)}_{\cG_{N} }.
		\end{split}\]
		
Now, let us simplify the quadratic contributions on the right hand side of the last equation. First of all, another application of the mean value theorem shows that 
		\[ \begin{split}
		&\pm\bigg( \int dx \big(N^3 V(N.)\ast \vert \pn \vert^2\big)(x) b_x^* b_x -\widehat V(0)\int |\pn(x)|^2b^*_xb_x \\
		 & \hspace{1cm}+ \int dx dy \ N^3 V(N(x-y)) \pn(x)\pn(y) b_x^*b_y  - \widehat V(0)\int |\pn(x)|^2b^*_xb_x\bigg) \\
		 &\hspace{0.5cm}\leq CN^{-1}(\cN+\cK+1).
		\end{split}\]
Here we used for the second line that $\vert \hat{V}(p/N) - \hat{V}(0) \vert \leq C \vert p \vert /N$ and that
\begin{align*}
&\int dx dy \ N^3 V(N(x-y)) \pn(x) \pn(y) b_y^* b_x \\
&\vspace{2cm}= \int dx \ \hat{V}(0) \pn(x)^2 b_x^* b_x 
+ \int dp \left( \hat{V}(p/N) - \hat{V}(0) \right) \widehat{a}^*(\widehat{\pn}_p) \left(1- \frac{\cN}{N}\right) \widehat{a}(\widehat{\pn}_p),
\end{align*}
where $\widehat{a}(\widehat{\pn}_p) = \int dy\, e^{2\pi i px }\pn(x) a_x$. The previous bound follows from 
\begin{equation}\label{eq:aph20} \int |p|^2 \|\widehat{a}(\widehat{\pn}_p)\xi \|^2\leq 2 \int |\nabla\pn(x)|^2\|a_x \xi\|^2 + 2 \int dx  \pn^2(x)\|\nabla_x a_x\xi\|^2\leq C \| (\cN+\cK)^{1/2} \xi \|^2 \end{equation} 
because $\| \ph \|_\infty, \| \nabla \ph \|_\infty \leq C$ by \eqref{exponential decay}. 

This controls the diagonal terms. For the non-diagonal term, we use once more the scattering equation \eqref{eq:scatlN}, the simple identity $\wh\chi_\ell(p) = \pi^{-1} \sin(\ell|p|) /|p|^3- \pi^{-1} \ell  \cos(\ell |p|) /|p|^2$ so that $| \wh \chi_\ell(p)|\leq C\ell |p|^{-2}$ and the pointwise bounds similarly as above to deduce that 
		\[\begin{split}
		&\pm\bigg( \frac{1}{2} \int dx dy\big(  N^3 (Vf_\ell)(N(x-y)) \pn(x) \pn(y) - \Delta_y\eta_{H}(x;y)   \big)\big ( b_x^* b_y^*  + b_x b_y  \big) \\
		&\hspace{1cm} - \frac{1}{2} \int dx dy \big(N^3(Vf_\ell)(N.)\ast \widecheck{\chi}_{H^c}\big)(x-y)\pn(x)\pn(y)\big ( b_x^* b_y^*  + b_x b_y  \big)\bigg)\\
		&\leq  C( \ell^{\alpha/2} + \ell^{(3\alpha-4)/2} )(\cN+\cK+1).
		\end{split}\]
		
Finally, switching to Fourier space, we see from Eq. \eqref{eq:Vfa0} that for all $p\in \bR^3$		
	\begin{equation} \label{N3 Vfl - 8pi a0}
	\begin{split}
	\vert  \widecheck{(V f_\ell)}(p/N)  - 8\pi a_0 \vert
	&\leq  \int dx \ (Vf_\ell)(x) \cdot \vert e^{2\pi i  xp/N  } -1 \vert + \left\vert \int dx \ (Vf_\ell)(x) - 8 \pi a_0 \right\vert\\
	&\leq  CN^{-1}  |p | +CN^{-1}\ell^{-1}
	\end{split}
	\end{equation} 
and therefore
	\[\begin{split}
	&\pm\bigg( \frac{1}{2} \int dx dy \ \left( (N^3 (Vf_l)(N.) \ast \widecheck{\chi}_{H^c}\right)(x-y) \pn(x) \pn(y) \big(b_x b_y + b_x^* b_y^*\big) \\
	& \hspace{1.5cm} -  4\pi a_0 \int dx dy \ \widecheck{\chi}_{H^c}(x-y) \pn(x) \pn(y) \big(b_x b_y + b_x^* b_y^*\big) \bigg) \\
	&\leq  C \big\| \left( (N^3 (Vf_l)(N.) \ast \widecheck{\chi}_{H^c}\right)-8\pi a_0\widecheck{\chi}_{H^c} \big\|_\infty (\cN+1) \\
	&\leq CN^{-1}\int_{|p|\leq \ell^{-\alpha}} dp\, \big(  |p | + \ell^{-1}\big) (\cN+1) \leq C \big(\ell^{-4\alpha} + \ell^{-3\alpha-1}\big).
	\end{split}\]
Hence, in summary, we conclude that
		\[\begin{split}
		\cG_{N} =&\,  N\cE_{GP}(\pn) - \eps_{GP}\cN + 4\pi \mathfrak{a}_0\|\pn\|_4^4\, \cN^2/N \\
		& + \int dx\, a^*_x \big[-\Delta_x + V_{ext}(x)\big] a_x + 2\widehat V(0)\int |\pn(x)|^2b^*_xb_x\\
		&  + 4\pi a_0 \int dx dy \ \widecheck{\chi}_{H^c}(x-y) \pn(x) \pn(y) \big(b_x b_y + b_x^* b_y^*\big)\\
		&+ \int dxdy\,  N^{ 5/2} V(N(x-y)) \pn(y)  \big( b_x^* a^*_y a_x + \text{h.c.}\big) \\
		& +\frac12 \int dxdy\, N^2V(N(x-y))a^*_xa^*_ya_xa_y+   \widetilde{\cE}^{(1)}_{\cG_{N} }  + \widetilde{\cE}^{(2)}_{\cG_{N} } + \widetilde{\cE}^{(3)}_{\cG_{N} }+\widetilde{\cE}^{(4)}_{\cG_{N} },
		\end{split}\]
where the error $ \cE_{\cG_{N}}:=  \widetilde{\cE}^{(1)}_{\cG_{N} }  + \widetilde{\cE}^{(2)}_{\cG_{N} } + \widetilde{\cE}^{(3)}_{\cG_{N} }+\widetilde{\cE}^{(4)}_{\cG_{N} }$ satisfies
		\[\begin{split}
		\pm \cE_{\cG_{N}} \leq&\; C\big( \ell^{(\alpha-3)/2)} + \ell^{(3\alpha-4)/2}\big) (\cN+\cK+\cV_N+1) + C\ell^{-5\alpha/2}N^{-1}(\cN+1)^2 \\ 
		&+C(\ell^{(\alpha-3)/2}+ \ell^{-4\alpha} + \ell^{-3\alpha-1}+ \ell^{-1} \big).  
		\end{split}\]
Choosing $\alpha>3$, this concludes the proof of Proposition \ref{prop:GNell}. 
\end{proof}

\section{Analysis of $\cJ_N$} 

The goal of this section is to show Prop. \ref{prop:JNell} for the excitation Hamiltonian $\cJ_N = e^{-A} \cG_{N}^\text{eff} e^A$, where $\cG_{N}^\text{eff}$ has been introduced in (\ref{eq:propGNell}) and can be decomposed as 
\begin{equation}\label{eq:def-Geff0} \cG_{N}^\text{eff} = \mathcal{D}_N + \mathcal{Q}_N + \mathcal{C}_N + \cH_N \end{equation} 
with $\cH_N = \cK + \cV_{ext} + \cV_N$ and where 
\begin{equation}\label{eq:def-Geff}
\begin{split} 
\cD_N = \; &N \cE_\text{GP} (\ph_0) - \eps_\text{GP} \cN + 4\pi \frak{a}_0 \| \ph_0 \|_4^4 \cN^2 / N, \\
\mathcal{Q}_N = \; &2\hat{V} (0) \int dx |\ph_0 (x)|^2 b_x^* b_x + 4\pi \frak{a}_0 \int dx dy \check{\chi}_{H^c} (x-y) \ph_0 (x) \ph_0 (y) (b_x^* b_y^* + \text{h.c.}), \\
\mathcal{C}_N = \; &\frac{1}{\sqrt{N}} \int dx dy \, N^3 V (N(x-y)) \ph_0 (x) [ b_x^* a_y^* a_x + \text{h.c.} ].
\end{split} \end{equation} 

In the next subsections, we will study the action of the unitary operator $e^A$ on these terms, where we recall from \eqref{definition A with l} that 
\[
A = \frac{1}{\sqrt{N}} \int dx dy dz\, \nu_H(x;y) \cgl(x-z) \big( b_x^* a_y^* a_z -\text{ h.c.}\big)
\]
with $\nu_H$ and $\gl$ as defined in (\ref{eq:defnuH}) and (\ref{definition chi Ll}), respectively, with parameters $\alpha > \beta > 0$. 

First, however, we need to establish some a-priori bounds controlling the growth of the expectation of kinetic and potential energies, generated by $A$.

\subsection{Preliminary estimates} 

First of all, with the next lemma we control the growth of the expectation of the external potential.
\begin{lemma} 
	Assume \eqref{eq:asmptsVVext} and let $\alpha > \beta >0$. Then there exists $C >0$ such that for all $\xi \in \cF_+^{\leq N}$, $t \in [0;1]$, $\ell\in (0;1)$ and $N\in \mathbb{N}$ large enough,
\begin{equation} \label{rough bound e-A Vext eA}
	\Big| \langle \xi, e^{-tA} \mathcal{V}_{ext} e^{tA} \xi \rangle - \langle \xi, \cV_{ext} \xi \rangle \Big| 
	\leq C \ell^{\alpha/2} \big\langle \xi, \big[ \mathcal{V}_{ext} + \cN + 1 \big] \xi \big\rangle.
	\end{equation}
\end{lemma}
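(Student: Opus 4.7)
The plan is to use Duhamel's formula and then carefully estimate the resulting commutator. I would first write
\[
\langle \xi, (e^{-tA}\cV_{ext}e^{tA} - \cV_{ext})\xi\rangle = \int_0^t ds \, \langle \xi, e^{-sA}[\cV_{ext},A]e^{sA}\xi\rangle,
\]
and then use $[\cV_{ext}, b_x^*] = V_{ext}(x) b_x^*$, $[\cV_{ext}, a_y^*] = V_{ext}(y) a_y^*$, $[\cV_{ext}, a_z] = -V_{ext}(z) a_z$ to obtain
\[
[\cV_{ext}, A] = \frac{1}{\sqrt N} \int dx dy dz\, \nu_H(x;y) \cgl(x-z)\bigl(V_{ext}(x)+V_{ext}(y)-V_{ext}(z)\bigr) b_x^* a_y^* a_z + \text{h.c.}
\]
The next step is to split this into three pieces and bound each by $C\ell^{\alpha/2}(\cV_{ext}+\cN+1)$ (on the $e^{sA}\xi$-state). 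The $V_{ext}(y)$-piece is the easiest: since $\nu_H(x;y)=(G*\wch\chi_H)(x-y)\pn(y)$ and $\int V_{ext}^2(y)|\pn(y)|^2 dy<\infty$ by the arbitrary-rate exponential decay of $\pn$ (as used in the proof of Prop.~\ref{prop:Vext}) combined with the at-most-exponential growth of $V_{ext}$, the same estimates as in Lemma \ref{lm:bndsnuH}, applied to $V_{ext}\pn$ instead of $\pn$, yield a Cauchy-Schwarz bound of the form $C\ell^{\alpha/2}\langle\xi,(\cN+1)\xi\rangle$ after integrating in $z$ against $\cgl$ (which is $L^1$-normalized).

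For the $V_{ext}(x)$- and $V_{ext}(z)$-pieces I distribute a factor $V_{ext}^{1/2}$ to each side via Cauchy-Schwarz, producing $\cV_{ext}$ on one side and an auxiliary integral on the other. For the $V_{ext}(z)$-piece, for instance, one gets a term bounded by
\[
\Bigl(\int V_{ext}(z)\|a_z e^{sA}\xi\|^2 dz\Bigr)^{1/2}\Bigl(\int dxdydz\,|\nu_H(x;y)|^2\cgl(x-z)^2 V_{ext}(z)\|b_x a_y e^{sA}\xi\|^2\Bigr)^{1/2},
\]
and the submultiplicativity assumption $V_{ext}(z)\leq C(V_{ext}(z-x)+C)(V_{ext}(x)+C)$ together with the fast decay of $\cgl$ implies $\int V_{ext}(z)\cgl(x-z)^2 dz\leq C\ell^{-3\beta}(V_{ext}(x)+C)$; the resulting $V_{ext}(x)$ factor is absorbed into $\|V_{ext}^{1/2}(x)b_x\cdots\|^2\leq\langle \cdot,\cV_{ext}\cdot\rangle$. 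The factor $\ell^{\alpha/2}$ comes out of $\|\nu_H\|\leq C\ell^{\alpha/2}$ (and $\|\nu_{H,x}\|\leq C\ell^{\alpha/2}|\pn(x)|$). The $V_{ext}(x)$-piece is handled analogously.

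Because the right-hand side of the commutator bound involves $\cV_{ext}$ on the conjugated state, I close the argument by Gr\"onwall. Defining $f_\xi(t)=\langle\xi,e^{-tA}(\cV_{ext}+\cN+1)e^{tA}\xi\rangle$, the above estimates combined with Lemma \ref{lm:Npowcubic} (which controls $\cN$-growth) yield $|f_\xi'(t)|\leq C\ell^{\alpha/2}f_\xi(t)$, and therefore $f_\xi(t)\leq C f_\xi(0)$ uniformly in $t\in[-1;1]$ for $\ell$ small. Substituting this back into the Duhamel representation produces the claimed bound with factor $\ell^{\alpha/2}$.

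The main obstacle is the bookkeeping in the $V_{ext}(x)$- and $V_{ext}(z)$-pieces: one must verify that the Gaussian $\cgl$ absorbs the submultiplicative defect of $V_{ext}$ without introducing a bad $\ell^{-\beta}$ power that cannot be compensated by $\ell^{\alpha/2}$. Since we are free to choose $\alpha>\beta$, this is not an issue, and in fact one does not need to exploit any gradient-type smoothness of $V_{ext}$ (via Taylor expansion of $V_{ext}(x)-V_{ext}(z)$) to obtain the stated estimate --- the $\ell^{\alpha/2}$ saving from $\|\nu_H\|$ alone is sufficient.
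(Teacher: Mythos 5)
Your overall strategy (Duhamel, commutator $[\cV_{ext},A]$ split into $V_{ext}(x)$-, $V_{ext}(y)$- and $V_{ext}(z)$-pieces, submultiplicativity to move $V_{ext}$ from the $\cgl$-variable to where it can be absorbed into $\cV_{ext}$, then Gr\"onwall using Lemma~\ref{lm:Npowcubic}) is the same as the paper's, and the $V_{ext}(y)$-piece and the Gr\"onwall closing are handled correctly. However, the Cauchy--Schwarz you use for the $V_{ext}(x)$- and $V_{ext}(z)$-pieces has a genuine flaw: you put $\cgl^2$ entirely in one factor, which forces you to evaluate $\int V_{ext}(z)\cgl(x-z)^2\,dz$, and since $\|\cgl\|_2^2\sim \ell^{-3\beta}$ this produces a loss of $\ell^{-3\beta/2}$. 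You assert that $\alpha>\beta$ makes this harmless, but that is not enough: to compensate $\ell^{-3\beta/2}$ with $\ell^{\alpha/2}$ you would need $\alpha>3\beta$, which is incompatible with the constraint $\alpha<2\beta$ required in Prop.~\ref{prop:JNell}. The correct estimate, as in the paper's proof of this lemma, splits $\cgl$ \emph{evenly} between the two Cauchy--Schwarz factors (i.e.\ one factor carries $\int dxdydz\,V_{ext}(x)\cgl(x-z)|\nu_H(x;y)|^2\|a_z\xi\|^2$ and the other $\int dxdydz\,V_{ext}(x)\cgl(x-z)\|a_xa_y\xi\|^2$); then only $\|\cgl\|_1=1$ and $\|V_{ext}\cgl\|_1\leq C$ enter, and the whole estimate closes at $\ell^{\alpha/2}$ with no $\beta$-dependent loss. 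A secondary slip: you write $\|\nu_{H,x}\|\leq C\ell^{\alpha/2}|\pn(x)|$, but by Lemma~\ref{lm:bndsnuH} this bound is uniform in $x$ (the $\pn$-weighted decay is for $\|\nu_{H,y}\|$); here you have apparently transposed the bound for $\eta_H$, where $\|\eta_{H,x}\|\leq C\ell^{\alpha/2}|\pn(x)|$ does hold.
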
 
	\begin{proof}
We compute 
\begin{align*}
[\mathcal{V}_{ext}, A] 
= \frac{1}{\sqrt{N}} \int dx dy dz \ [V_{ext}(x) + V_{ext}(y) - V_{ext}(z)] \nu_H(x;y) \cgl(x-z) b_x^* a_y^* a_z + h.c.
\end{align*}
Using \eqref{Lp norms of widecheck chi L,l} and the fact that, by \eqref{eq:Vextbnd}, $\Vert V_{ext} \pn \Vert_{\infty} \leq C$ we get
		\begin{equation}
		\begin{split}
		&\left\vert \frac{1}{\sqrt{N}} \int dx dy dz \ V_{ext}(y) \nu_H(x;y) \cgl(x-z) \langle \xi, b_x^* a_y^* a_z \xi \rangle \right\vert \\
		&\leq \frac{C}{\sqrt{N}} \int dx dy dz \ \vert (G * \widecheck{\chi}_H)(x-y) \vert \cgl(x-z) \Vert a_x a_y \xi \Vert  \Vert a_z \xi \Vert \leq C \ell^\frac{\alpha}{2} \Vert (\mathcal{N}+1)^\frac{1}{2} \xi \Vert^2.
		\end{split}
		\end{equation}
Furthermore, recalling the assumption $V_{ext}(x+y) \leq C (V_{ext}(x) + C) (V_{ext}(y) +C)$ from \eqref{eq:asmptsVVext}, and using \eqref{eq:nuHL2bnd} as well as \eqref{Lp norms of widecheck chi L,l}, we find that
\begin{align*}
		&\left\vert \frac{1}{\sqrt{N}} \int dx dy dz \ V_{ext}(x) \nu_H(x;y) \cgl(x-z) \langle \xi, b_x^* a_y^* a_z \xi \rangle \right\vert \\
		&\leq \frac{1}{\sqrt{N}} \left( \int dx dy dz \ V_{ext}(x) \cgl(x-z) \vert \nu_H(x;y) \vert^2  \Vert a_z \xi \Vert^2 \right)^\frac{1}{2} \\
		& \hspace{3cm} \times \left( \int dx dy dz \ V_{ext}(x) \cgl(x-z) \Vert a_x a_y \xi \Vert^2 \right)^\frac{1}{2} \\
		&\leq \frac{C \ell^\frac{\alpha}{2}}{\sqrt{N}} \left( \int dx dz \ V_{ext}(x+z) \cgl(x) \Vert a_z \xi \Vert^2 \right)^\frac{1}{2}  \Vert \mathcal{V}_{ext}^\frac{1}{2} (\mathcal{N}+1)^\frac{1}{2} \xi \Vert \\
		&\leq \frac{C \ell^\frac{\alpha}{2}}{\sqrt{N}}  (\Vert \mathcal{V}_{ext}^\frac{1}{2} \xi \Vert + \Vert (\mathcal{N}+1)^\frac{1}{2} \xi \Vert) \, \Vert \mathcal{V}_{ext}^\frac{1}{2} (\mathcal{N}+1)^\frac{1}{2} \xi \Vert
		\leq C \ell^\frac{\alpha }{2} \langle \xi , (\mathcal{V}_{ext} + \mathcal{N}+1) \xi \rangle.
		\end{align*}
		Notice that we used in the last step the bound $\| V_{ext} \cgl \|_1 \leq C$, which follows from the fact that $V_{ext}$ grows at most exponentially (see Appendix \ref{apx:gpfunctional}) and the explicit formula $\cgl(x) = (\sqrt{\pi}  \ell^{-\beta} )^3 e^{ -(\pi \ell^{-\beta} x)^2 }$. Similarly, we get
		\begin{align*}
		&\left\vert \frac{1}{\sqrt{N}} \int dx dy dz \ V_{ext}(z) \nu_H(x;y) \cgl(x-z) \langle \xi, b_x^* a_y^* a_z \xi \rangle \right\vert 
		\leq C \ell^\frac{\alpha }{2} \langle \xi, (\mathcal{V}_{ext} + \mathcal{N}+1) \xi \rangle.
		\end{align*}
		Thus, we have
		\[
		\pm [\mathcal{V}_{ext}, A] \leq C \ell^\frac{\alpha}{2} (\mathcal{V}_{ext}+ \mathcal{N}+1) \, .
		\]
With \eqref{rough bound e-A (N+1) eA with l}, Eq. (\ref{rough bound e-A Vext eA}) follows now by Gronwall's lemma, applied to the function $f (t) = \langle \xi, e^{-tA} \cV_{ext} e^{tA} \xi \rangle$. 
\end{proof}

Next, we need to control the growth of the kinetic and potential energy. To estimate contributions arising from the kinetic energy, we will often need to switch to momentum space. We will use the formal  notation $\hat{a}_p = a (e^{2\pi ip \cdot x}) $ and $\hat{a}^*_p = a^* (e^{2\pi ip \cdot x})$ to indicate creation and annihilation operators in momentum space. For $f \in L^2 (\bR^3)$ (interpreted as a function of momentum), we set $\hat{a} (f) = \int \bar{f}(p) \hat{a}_p$ and similarly for $\hat{a}^* (f)$. It is useful to keep in mind that 
\begin{equation}\label{eq:aph} \int dy \, e^{2\pi i p \cdot y} \ph (y) a_y  = \hat{a} (\hat{\ph}_p) \end{equation} 
where $\hat{\ph}_p (q) := \hat{\ph} (p-q)$.  

We will often encounter operators as in (\ref{eq:aph}), with $\ph$ the minimizer (or the square of the minimizer) of the Gross-Pitaevskii energy functional. Switching to position space, we can bound 
\begin{equation}\label{eq:aph1} \int dp \ \| \hat{a} (\hat{\ph}_p)  \xi \|^2  = \int dx  |\ph (x)|^2 \| a_x \xi \|^2 \leq \| \ph \|_\infty^2 \| \cN^{1/2} \xi \|^2 \leq C \| \cN^{1/2} \xi \|^2
\end{equation} 
and, as already discussed in (\ref{eq:aph20}),  
\begin{equation} \label{eq:aph2} 
\int dp \  p^2 \| \hat{a} (\hat{\ph}_p) \xi \|^2 \leq C \| (\cK + \cN)^{1/2} \xi \|^2.
\end{equation}
\begin{lemma} 
Let $\alpha > 4$, $0 < \beta < \alpha$. Then we can write 
\begin{equation}\label{eq:commKA}  [\cK , A] = T_2 + S_1 + S_2 + \delta_{\cK} \end{equation}
where
\begin{align*}
	T_2 &= - \frac{2}{\sqrt{N}} \int dx dy dz \ \nabla_x\nu_H(x;y) \nabla_x \cgl(x-z) [ b_x^* a_y^* a_z + h.c.], \\
	S_1 &=-\frac{1}{\sqrt{N}} \int dx dy dz \ N^3 (Vf_\ell)(N(x-y)) \pn(y) \cgl(x-z) [b_x^* a_y^* a_z + h.c.], \\
	S_2 &= \frac{1}{\sqrt{N}} \int dx dy dz \ [(N^3 (Vf_\ell)(N\cdot)) * \widecheck{\chi}_{H^c}](x-y) \pn(y) \cgl(x-z) [b_x^* a_y^* a_z + h.c.] 
\end{align*}
and where 
\[ | \langle \xi,  \delta_{\cK}  \xi \rangle | \leq C \ell^{(\alpha -4)/2} \langle \xi , \cN \xi \rangle + C \ell^{\alpha/2} \| \cK^{1/2} \xi \| \| (\cN+ 1)^{1/2} \xi \|.  \]
Moreover, we find 
\begin{equation}\label{eq:T2S12} \begin{split}  
\pm T_2 &\leq C \ell^{\alpha/2}  \cK \, , \qquad \pm S_1 \leq C (\cV_N + \cN + 1) \, , \qquad 
\pm S_2 \leq C \cK + C \ell^{-\alpha} (\cN + 1) \, , \end{split} \end{equation}
and thus 
\begin{equation} \label{eq:commKA2} \pm [\cK , A] \leq C (\cK + \cV_N) + C \ell^{-\alpha} (\cN+1). \end{equation} 
\end{lemma}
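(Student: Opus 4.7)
The plan is to compute $[\cK, A]$ by direct commutation, apply the Neumann scattering equation \eqref{eq:scatlN} to identify $S_1$ and $S_2$, and then estimate each piece with Cauchy--Schwarz. Since $\cK$ commutes with $\cN$, I have $[\cK, b_x^*] = -\Delta_x b_x^*$ and $[\cK, a_y^*] = -\Delta_y a_y^*$, $[\cK, a_z] = \Delta_z a_z$ as operator-valued distributions. Applying Leibniz to $[\cK, b_x^* a_y^* a_z]$ and integrating by parts to shift the Laplacians onto the coefficient $\nu_H(x;y)\cgl(x-z)$, the $\nu_H \Delta\cgl$ pieces arising from $-\Delta_x(\nu_H\cgl)$ and from $\nu_H\Delta_z\cgl$ cancel exactly (since $\Delta_z\cgl(x-z)=\Delta_x\cgl(x-z)$), leaving
\[
[\cK, A] = -\frac{1}{\sqrt N}\int (\Delta_x\nu_H+\Delta_y\nu_H)\,\cgl\, b_x^* a_y^* a_z\,dxdydz + T_2 + \text{h.c.}
\]

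To extract $S_1$ and $S_2$ from the first integral, I would substitute $\nu_H=(G*\widecheck\chi_H)\pn$ and apply \eqref{eq:scatlN} to $G(x) = N(f_\ell(Nx)-1)$, which yields $-2\Delta G = -N^3(Vf_\ell)(N\cdot) + 2N^3\lambda_\ell f_\ell(N\cdot)\chi_\ell$. Convolving with $\widecheck\chi_H = \delta - \widecheck\chi_{H^c}$, the dominant part of $-(\Delta_x\nu_H+\Delta_y\nu_H)$ becomes $-N^3(Vf_\ell)(N(x-y))\pn(y) + [N^3(Vf_\ell)(N\cdot)*\widecheck\chi_{H^c}](x-y)\pn(y)$; inserted into the integral these are exactly $S_1$ and $S_2$. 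Everything else, namely the $\lambda_\ell$-terms (with coefficient $N^3\lambda_\ell\le C\ell^{-3}$ by \eqref{eq:lambdaell}) and the $\nabla\pn$, $\Delta\pn$ corrections arising from $\Delta_y\nu_H$, is collected into $\delta_\cK$.

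The operator bounds are then essentially Cauchy--Schwarz. For $S_1$, splitting the kernel as $\sqrt{N^3(Vf_\ell)(N(x-y))\,\pn^2(y)\,\cgl(x-z)}\cdot \sqrt{N^3(Vf_\ell)(N(x-y))\,\cgl(x-z)/N}$ and using $\|\cgl\|_1=1$ together with $\int N^3 V(N(x-y))\,dy\le C$ gives $\pm S_1\leq C\sqrt{\cV_N\,\cN}\leq C(\cV_N+\cN+1)$. For $T_2$, I would first IBP to move $\nabla_x\cgl$ onto $a_z$ via $\nabla_x\cgl(x-z)=-\nabla_z\cgl(x-z)$, and then rewrite $\nabla_x\nu_H(x;y) = -\nabla_y\nu_H(x;y) + (G*\widecheck\chi_H)(x-y)\nabla\pn(y)$ to IBP the remaining gradient onto $a_y^*$; the resulting form $\sim \tfrac{1}{\sqrt N}\int \nu_H\,\cgl\, b_x^*\nabla_y a_y^*\nabla_z a_z$ (plus a $\nabla\pn$-remainder) is bounded by Cauchy--Schwarz using $\|\nu_{H,x}\|\leq C\ell^{\alpha/2}|\pn(x)|$ from Lemma \ref{lm:bndsnuH}, yielding $\pm T_2\leq C\ell^{\alpha/2}\cK$. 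For $S_2$, the convolution $W := N^3(Vf_\ell)(N\cdot)*\widecheck\chi_{H^c}$ is momentum-supported in $|p|\leq\ell^{-\alpha}$ with $\|\widehat W\|_\infty\leq C$; after switching to Fourier in $x-y$ (trading one gradient for the low-momentum bound $|p|\le\ell^{-\alpha}$, as in the use of \eqref{eq:aph20}) Cauchy--Schwarz gives $\pm S_2 \leq C\cK + C\ell^{-\alpha}(\cN+1)$.

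Finally, in $\delta_\cK$ the $\nabla\pn$ and $\Delta\pn$ contributions are controlled using $\|\nabla\pn\|_\infty,\|\Delta\pn\|_\infty\leq C$ together with Lemma \ref{lm:bndsnuH}, producing the term $C\ell^{\alpha/2}\|\cK^{1/2}\xi\|\|(\cN+1)^{1/2}\xi\|$; the $\lambda_\ell$-pieces involve the short-range kernel $f_\ell(N\cdot)\chi_\ell$ and its convolution with $\widecheck\chi_{H^c}$, and using $N^3\lambda_\ell\leq C\ell^{-3}$ together with Cauchy--Schwarz yields the $C\ell^{(\alpha-4)/2}\langle\xi,\cN\xi\rangle$ contribution. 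Summing the four estimates and applying AM--GM (together with $\alpha > 4$, so that $\ell^{(\alpha-4)/2}\leq 1$) absorbs all cross-terms into $C(\cK+\cV_N)+C\ell^{-\alpha}(\cN+1)$, which is \eqref{eq:commKA2}. The most delicate points are the $S_2$ estimate and the $\lambda_\ell$-parts of $\delta_\cK$: both require careful handling of the low-momentum cutoff $\widecheck\chi_{H^c}$ so that one gradient can be traded for a bounded prefactor while paying only $\ell^{-\alpha}$ on the remaining $\cN$.
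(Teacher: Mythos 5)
Your overall plan matches the paper's proof closely: commute $\cK$ through the cubic expression, cancel the $\nu_H\,\Delta\cgl$ contributions using $\Delta_x\cgl(x-z)=\Delta_z\cgl(x-z)$, apply the scattering equation \eqref{eq:scatlN} to $-2\Delta G$, and estimate the resulting pieces by Cauchy--Schwarz. The identifications of $T_2$, $S_1$, $S_2$ and the integrations by parts used for the $T_2$ and $S_2$ bounds are all consistent with what the paper does.

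There is, however, one genuine gap in your treatment of the $\lambda_\ell$ contribution to $\delta_\cK$. You write that it ``involves the short-range kernel $f_\ell(N\cdot)\chi_\ell$ and its convolution with $\widecheck{\chi}_{H^c}$'', i.e.\ you split $\widecheck{\chi}_H=\delta-\widecheck{\chi}_{H^c}$ and bound the two pieces separately, as you do for $S_1$, $S_2$. For the $Vf_\ell$ part this is harmless, but for the $\lambda_\ell$ part it destroys an essential cancellation. The ``$\delta$''-piece
\[
\frac{2 N^3\lambda_\ell}{\sqrt N}\int dx\,dy\,dz\ f_\ell(N(x-y))\chi_\ell(x-y)\,\pn(y)\,\cgl(x-z)\,[\,b_x^*a_y^*a_z+\text{h.c.}\,]
\]
has, after Cauchy--Schwarz with $\|f_\ell(N\cdot)\chi_\ell\|_2\le C\ell^{3/2}$ and $N^3\lambda_\ell\le C\ell^{-3}$, a bound of order $\ell^{-3/2}\langle\xi,(\cN+1)\xi\rangle$, not $\ell^{(\alpha-4)/2}\langle\xi,\cN\xi\rangle$ (the same is true for the $\widecheck{\chi}_{H^c}$-piece, since $\|(f_\ell(N\cdot)\chi_\ell)*\widecheck{\chi}_{H^c}\|_2$ is also of order $\ell^{3/2}$). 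The claimed bound only emerges when the two pieces are kept together as $(f_\ell(N\cdot)\chi_\ell)*\widecheck{\chi}_H$: in Fourier the low-momentum mass cancels exactly and, using $|\widehat{\chi}_\ell(p)|\le C\ell\,|p|^{-2}$, one finds $\|(f_\ell(N\cdot)\chi_\ell)*\widecheck{\chi}_H\|_2\le C\ell^{1+\alpha/2}$, which together with $N^3\lambda_\ell\le C\ell^{-3}$ gives exactly $\ell^{(\alpha-4)/2}$. Since the sharp $\ell^{(\alpha-4)/2}$ rate is propagated into \eqref{[K,A]} and subsequently into Prop.~\ref{prop:JNell} (where a growing coefficient would be fatal), this is not an innocuous loss. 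The fix is simply not to split the $\lambda_\ell$ term and instead keep the full $\widecheck{\chi}_H$-convolution, which is what the paper's $S_3$ does.

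Two minor points: Lemma~\ref{lm:bndsnuH} gives $\|\nu_{H,x}\|\le C\ell^{\alpha/2}$ without a factor $|\pn(x)|$ (the $\pn$-weighted bound is for $\|\nu_{H,y}\|$), and the expression $\sqrt{\cV_N\,\cN}$ should be read as Cauchy--Schwarz at the level of forms; neither affects the argument.
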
 

\begin{proof} 
With the commutation relations (\ref{eq:ccr}), (\ref{eq:comm-b2}) 
and integration by parts, we obtain 
	\[
		\begin{split}
	[\mathcal{K},A]
	&= \frac{1}{\sqrt{N}} \int dx dy dz \ \nu_H(x;y) \cgl(x-z) [- \Delta_x b_x^* a_y^* a_z - b_x^* \Delta_y a_y^* a_z + b_x^* a_y^* \Delta_z a_z + h.c.] \\
	&= -\frac{1}{\sqrt{N}} \int dx dy dz \ [\Delta_x \nu_H (x;y)+ \Delta_y \nu_H (x;y)] \cgl(x-z) [b_x^* a_y^* a_z + h.c.] \\
	&\quad - \frac{2}{\sqrt{N}} \int dx dy dz \ \nabla_x\nu_H(x;y) \nabla_x \cgl(x-z) [ b_x^* a_y^* a_z + h.c.] \\
	&=: T_1 + T_2.
	\end{split}
	\]
We used here the identity $-\Delta_x \cgl (x-z) + \Delta_z \cgl (x-z) = 0$. We rewrite $T_1$ as 
	\[
	\begin{split}
	T_1&= -\frac{2}{\sqrt{N}} \int dx dy dz \ (\Delta G * \widecheck{\chi}_H)(x-y) \pn(y) \cgl(x-z) [b_x^* a_y^* a_z + h.c.] \\
	&\quad + \frac{1}{\sqrt{N}} \int dx dy dz \ (G* \widecheck{\chi}_H)(x-y) \, \Delta \pn (y) \cgl(x-z) [b_x^* a_y^* a_z + h.c.] \\
	&\quad + \frac{2}{\sqrt{N}} \int dx dy dz \ (G* \widecheck{\chi}_H)(x-y) \, \nabla\pn(y) \cgl(x-z) [b_x^* \nabla_y a_y^* a_z + h.c.] \\
	&=: T_{11} + T_{12} + T_{13}.
	\end{split}
	\]
	Using Young's inequality and \eqref{Lp norms of widecheck chi L,l} we obtain
	\[
	\begin{split}
	\vert \langle \xi, T_{12} \xi \rangle \vert
	&\leq \frac{C}{\sqrt{N}} \Vert G* \widecheck{\chi}_H \Vert \int dx dz \ \cgl(x-z) \Vert a_x (\mathcal{N}+1)^\frac{1}{2} \xi \Vert \Vert a_z \xi \Vert \\
	&\leq C\ell^\frac{\alpha}{2} \Vert (\mathcal{N}+1)^\frac{1}{2} \xi \Vert^2
	\end{split}
	\]
	and
	\[
	\begin{split}
	\vert \langle \xi, T_{13} \xi \rangle \vert
	&\leq \frac{C}{\sqrt{N}} \Vert G* \widecheck{\chi}_H \Vert \int dx dz \ \cgl(x-z) \Vert  \mathcal{K}^\frac{1}{2} a_x \xi \Vert \Vert a_z \xi \Vert \\
	&\leq C\ell^\frac{\alpha}{2} \Vert (\mathcal{N}+1)^\frac{1}{2} \xi \Vert \Vert \mathcal{K}^\frac{1}{2} \xi \Vert.
	\end{split}
	\]
	We are left with $T_{11}$. For this term we use the scattering equation \eqref{eq:scatlN} and get
	\[
	\begin{split}
	T_{11}
	&=-\frac{1}{\sqrt{N}} \int dx dy dz \ N^3 (Vf_\ell)(N(x-y)) \pn(y) \cgl(x-z) [b_x^* a_y^* a_z + h.c.] \\
	&\quad +\frac{1}{\sqrt{N}} \int dx dy dz \ [(N^3 (Vf_\ell)(N\cdot)) * \widecheck{\chi}_{H^c}](x-y) \pn(y) \cgl(x-z) [b_x^* a_y^* a_z + h.c.] \\
	&\quad + \frac{2}{\sqrt{N}} \int dx dy dz \ \left[ N^3 \lambda_\ell (f_\ell(N\cdot) \chi_\ell) * \widecheck{\chi}_H \right](x-y) \pn(y) \cgl(x-z) [b_x^* a_y^* a_z + h.c.] \\
	&=: S_1 + S_2 + S_3 .
	\end{split}
	\]
An explicit calculation shows that $|\hat{\chi}_\ell (p)| = \ell^3 |\hat{\chi}_1 (\ell p)| \leq C \ell |p|^{-2}$. With \eqref{eq:whGbnd}, we find $\| (f_\ell (N.) \chi_\ell) * \check{\chi}_H \| \leq \ell^{1+\alpha/2}$ (for $N$ large enough). From \eqref{eq:lambdaell}, 
\eqref{Lp norms of widecheck chi L,l}, we obtain 
 \[
	\begin{split}
	\vert \langle \xi, S_3 \xi \rangle \vert
	&\leq \frac{C}{\sqrt{N}} N^3 \lambda_\ell \Vert (f_\ell(N\cdot) \chi_\ell) * \widecheck{\chi}_H \Vert \int dx dz \ \cgl(x-z) \Vert a_x (\mathcal{N}+1)^\frac{1}{2} \xi \Vert \Vert a_z \xi \Vert \\
	&\leq C \ell^\frac{(\alpha-4)}{2} \Vert (\mathcal{N}+1)^\frac{1}{2} \xi \Vert^2 .
	\end{split}
	\]
This proves (\ref{eq:commKA}). To show (\ref{eq:T2S12}), we observe that, integrating by parts, 
\begin{align*}
	T_2 &= -\frac{2}{\sqrt{N}} \int dx dy dz \ (G* \widecheck{\chi}_H)(x-y) \nabla \pn(y) \cgl(x-z) [b_x^* a_y^* \nabla_z a_z + h.c.] \\
	&\quad - \frac{2}{\sqrt{N}} \int dx dy dz \ (G* \widecheck{\chi}_H)(x-y) \pn(y) \cgl(x-z) [b_x^* \nabla_y a_y^* \nabla_z a_z + h.c.].
	\end{align*}
	With $\| G * \check{\chi}_H \| = \| \hat{G} \chi_H \| \leq \ell^{\alpha/2}$ and \eqref{Lp norms of widecheck chi L,l} we get 
	\[
	\begin{split}
	\vert \langle \xi, T_2 \xi \rangle \vert 
	&= \frac{C\ell^\frac{\alpha}{2}}{\sqrt{N}} \int dx dz \ \vert \cgl(x-z) \vert \Vert \mathcal{N}^\frac{1}{2} a_x \xi \Vert \Vert \nabla_z a_z \xi \Vert \\
	&\quad + \frac{C\ell^\frac{\alpha}{2}}{\sqrt{N}} \int dx dz \ \vert \cgl(x-z) \vert \Vert \mathcal{K}^\frac{1}{2} a_x \xi \Vert \Vert \nabla_z a_z \xi \Vert \\
	&\leq \frac{C \ell^\frac{\alpha}{2}}{\sqrt{N}} \Vert \cgl \Vert_1 \Vert \mathcal{K}^\frac{1}{2} (\mathcal{N}+1)^\frac{1}{2} \xi \Vert \Vert \mathcal{K}^\frac{1}{2} \xi \Vert  \leq C \ell^{\alpha/2} \| \cK^{1/2} \xi \|^2. 	\end{split}
	\]
By \eqref{Lp norms of widecheck chi L,l}, we have 
	\begin{equation} \label{S1}
	\begin{split}
	\vert \langle \xi, S_1 \xi \rangle \vert
	&\leq \frac{2}{\sqrt{N}} \int dx dy dz \  N^3 V(N(x-y)) \pn(y) \cgl(x-z) \Vert a_x a_y \xi \Vert \Vert a_z \xi \Vert \\
	&\leq C \left( \int dx dy dz \ N^2 V(N(x-y)) \cgl(x-z) \Vert a_x a_y \xi \Vert^2 \right)^\frac{1}{2} \\
	&\qquad \times \left( \int dx dy dz \ N^3 V(N(x-y)) \cgl(x-z) \Vert a_z \xi \Vert^2 \right)^\frac{1}{2} \\
	&\leq C \Vert \mathcal{V}_N^\frac{1}{2} \xi \Vert  \Vert (\mathcal{N}+1)^\frac{1}{2} \xi \Vert.
	\end{split}
	\end{equation}
	For $S_2$ we change to momentum space to get, using (\ref{eq:aph2})
	\[
	\begin{split}
	\vert \langle \xi, S_2 \xi \rangle \vert
	&= \left\vert \frac{1}{\sqrt{N}} \int dp dq \, \widehat{(Vf_\ell )} (p/N) \chi_{H^c}(p) \gl(q) \langle \hat{a} (\hat{\ph}_p) \hat{b}_{q-p} \xi, \hat{a}_q \xi \rangle  \right\vert \\ 
	&\leq \frac{C}{\sqrt{N}} \left[ \int dp  dq \,  p^2 \Vert \hat{a} (\hat{\ph}_p)  \hat{a}_{q-p} \xi \Vert^2  \right]^{1/2} \left[ \int_{|p| \leq \ell^{-\alpha}}  dp dq  \,  |p|^{-2} \| \hat{a}_q \xi \|^2 \right]^{1/2}  \\
	&\leq C  \ell^{-\alpha/2} \| (\cK+ \cN)^{1/2} \xi \| \| (\cN+1)^{1/2} \xi \|.
	\end{split}
	\]
This concludes the proof of (\ref{eq:T2S12}) and thus of (\ref{eq:commKA2}). 	
\end{proof}

To bound the commutator of $A$ with the interaction energy operator $\cV_N$, it is useful to introduce, for any $\theta > 0$, the notation 
\begin{equation}\label{eq:Ktheta} \cK_\theta = \int_{|p| \leq \theta} p^2 \, \hat{a}_p^* \hat{a}_p \end{equation} 
for the kinetic energy of particles having momentum below $\theta$.
\begin{lemma}
Let $\alpha \geq \beta>0$ and $\varepsilon \in (0, \alpha -\beta)$. Then there exists $C>0$ such that for $\ell\in (0;1)$ sufficiently small we have
	\begin{equation} \label{commutator [VN, A]}
	[\mathcal{V}_N, A] = \frac{1}{\sqrt{N}} \int dx dy dz \ \nu_H(x;y) \cgl(x-z) N^2 V(N(x-y)) [b_x^* a_y^* a_z + h.c.] + \delta_V,
	\end{equation}	
	where
	\[
	\pm \delta_V   \leq C \ell^\frac{(\alpha-\beta)}{2}  \big[ \cK_{\ell^{-\beta - \eps}}+ \cV_N + \cN + 1 \big]. \]
Estimating the term on the r.h.s. of (\ref{commutator [VN, A]}), we conclude that
\begin{equation}\label{eq:commVNA2} \pm [\mathcal{V}_N, A] \leq C \big[ \cK_{\ell^{-\beta - \eps}}+ \cV_N + \cN + 1 \big].  \end{equation} 
\end{lemma}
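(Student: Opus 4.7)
The plan is to compute $[\mathcal{V}_N, A]$ explicitly by commuting $\mathcal{V}_N$ past each factor in the kernel $b_x^* a_y^* a_z$ and identify the asserted main term; then bound all remaining (quintic in $a,a^*$) contributions with Cauchy--Schwarz, pairing $N^2V(N\cdot)$ factors with $\mathcal{V}_N$, pairing the low-momentum Gaussian $\gl$ with $\mathcal{K}_{\ell^{-\beta-\varepsilon}}$, and absorbing the kernels $\nu_H,\cgl$ via Lemma \ref{lm:bndsnuH} and \eqref{Lp norms of widecheck chi L,l}.

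\textbf{Step 1: explicit commutator.} A direct use of the CCR gives
$[\mathcal{V}_N, a_x^*] = \int dv\, N^2 V(N(x-v))\, a_x^* a_v^* a_v$; since $\mathcal{V}_N$ commutes with $\cN$, this upgrades to $[\mathcal{V}_N, b_x^*] = \int dv\, N^2 V(N(x-v))\, b_x^* a_v^* a_v$. Similarly,
\[
[\mathcal{V}_N, a_y^* a_z] = \int dv\, N^2 V(N(y-v))\, a_y^* a_v^* a_v a_z - \int dv\, N^2 V(N(z-v))\, a_y^* a_v^* a_z a_v.
\]
Combining these and using $a_v^* a_v a_y^* = a_y^* a_v^* a_v + \delta(v-y) a_v^*$ produces exactly one contraction term $N^2 V(N(x-y))\, b_x^* a_y^* a_z$, which, once integrated against $N^{-1/2}\nu_H(x;y)\cgl(x-z)$ (and adding the h.c.), gives the main contribution in \eqref{commutator [VN, A]}.

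\textbf{Step 2: bounding the quintic error terms.} The remaining error terms have the schematic form
\[
\delta_V \sim \frac{1}{\sqrt N}\int \nu_H(x;y)\,\cgl(x-z)\, N^2 V(N(x-v))\, b_x^* a_y^* a_v^* a_v a_z\,dxdydzdv
\]
and two analogous terms with $V(N(y-v))$ and with $V(N(v-z))\, a_y^* a_v^* a_z a_v$, plus h.c. For the first two, one pairs $\langle a_v a_y b_x \xi, a_v a_z\xi\rangle$ and applies Cauchy--Schwarz after first integrating $y$ against $\nu_H(x;\cdot)$; using $\|\nu_{H,x}\|\leq C\ell^{\alpha/2}$, the standard integrals $\int\|a_v\phi\|^2 dv=\|\cN^{1/2}\phi\|^2$, and $\|\cgl\|_1=1$ produces a bound by $C\ell^{\alpha/2}\,(\mathcal{V}_N+\mathcal{N}+1)$, in which one of the $\sqrt{N^2V(N(x-v))}$ factors builds $\sqrt{\mathcal{V}_N}$ and the other is paired with $\|a_v a_z\xi\|$ (the $v$-integral producing $\cN$).

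\textbf{Step 3: the delicate term involving $V(N(v-z))$.} The main obstacle is the term with $a_y^* a_v^* a_z a_v$: here $z$ appears simultaneously in $a_z$, in $\cgl(x-z)$ and in $V(N(v-z))$, so direct Cauchy--Schwarz as above loses. The right move is to switch to momentum space in $x-z$: write $\cgl(x-z)=\int dp\, \gl(p) e^{2\pi i p(x-z)}$, and exploit that $\gl(p)=e^{-(\ell^\beta p)^2}$ is essentially supported on $|p|\leq \ell^{-\beta-\varepsilon}$ with super-polynomially small tail. This converts the $a_z$ factor into an operator of the form $\hat a_q$ with $|q|\lesssim \ell^{-\beta-\varepsilon}$ after combining the momenta from $\gl$ and $\hat V(\cdot/N)$, producing a $\sqrt{\mathcal{K}_{\ell^{-\beta-\varepsilon}}}$ rather than a full $\sqrt{\mathcal{K}}$; the tail $|p|>\ell^{-\beta-\varepsilon}$ is negligible. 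Combined with $\|\nu_H\|\leq C\ell^{\alpha/2}$, this yields the stated bound with prefactor $\ell^{(\alpha-\beta)/2}$.

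\textbf{Step 4: global bound \eqref{eq:commVNA2}.} The main term in \eqref{commutator [VN, A]} is itself bounded by the same Cauchy--Schwarz scheme, this time pairing $\sqrt{N^2V(N(x-y))}$ with $\|a_y b_x\xi\|$ (giving $\sqrt{\mathcal V_N}$) and the other $\sqrt{N^2V(N(x-y))}\cdot \cgl(x-z)$ with $\|a_z\xi\|$; the $z$-integration against $\cgl$ again forces a low-momentum kinetic contribution $\mathcal{K}_{\ell^{-\beta-\varepsilon}}$, while the $\nu_H$ factor only contributes an $O(1)$ after using $\|\nu_{H,x}\|\leq C\ell^{\alpha/2}|\pn(x)|$. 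Summing with Step 3 gives the stated bound $\pm[\mathcal{V}_N,A]\leq C(\mathcal{K}_{\ell^{-\beta-\varepsilon}}+\mathcal{V}_N+\mathcal{N}+1)$. The technically hardest point throughout is the term of Step 3, where one must avoid losing a factor $\ell^{-\beta}$ from $\|\cgl\|_\infty$ by performing the momentum-space localization rather than a naive $L^\infty$--$L^1$ split.
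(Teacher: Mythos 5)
Your decomposition of $[\mathcal V_N,A]$ into the main term plus three quintic errors agrees with the paper, and the Fourier-space device you describe (reading $\cgl$ in momentum space, splitting $|p|\lessgtr\ell^{-\beta-\varepsilon}$, pairing $p^2\gl(p)\|\hat a_p\xi\|^2$ with $\mathcal K_{\ell^{-\beta-\varepsilon}}$) is indeed exactly the mechanism the paper uses. However, you have attached it to the wrong term, and this is a genuine error.

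The term with $V(N(v-z))$, which you single out in Step~3 as the ``main obstacle,'' is in fact the \emph{easy} one. In that term the two annihilation operators $a_v,a_z$ sit at precisely the two arguments of the interaction, so $N^2V(N(v-z))\,\|a_va_z\xi\|^2$ builds $\mathcal V_N$ directly, while on the other Cauchy--Schwarz factor the integral $\int dz\,N^2V(N(v-z))$ (or equivalently $\int dz\, (\cgl* N^2V(N\cdot))(\cdot)$, which has $L^1$-norm $\|V\|_1/N$) supplies an extra factor $1/N$. The paper bounds this term by $C\ell^{\alpha/2}N^{-1}\Vert\mathcal V_N^{1/2}\xi\Vert\,\Vert(\mathcal N+1)^{1/2}\xi\Vert$ with no kinetic energy at all. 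Your worry that ``$z$ appears in $a_z$, in $\cgl(x-z)$ and in $V(N(v-z))$'' is not a problem but a gift: $\cgl$ is integrated away at cost $\|\cgl\|_1=1$, and $V$ is consumed by $\mathcal V_N$ on one side and by $\|V\|_1/N$ on the other.

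The genuinely delicate terms are the other two, $V(N(x-v))$ and $V(N(y-v))$, which you claim in Step~2 can be handled ``by $C\ell^{\alpha/2}(\mathcal V_N+\mathcal N+1)$.'' This cannot be right. In those terms the interaction connects the $\cV_N$-variable $v$ to a variable of $A$ sitting in a \emph{creation} operator ($b_x^*$ or $a_y^*$), so on the Cauchy--Schwarz side where you build $\mathcal V_N$ (say $\|a_va_y\cdot\|$) you have already spent the $y$-integration; the orphaned $a_z$, restricted by $\cgl(x-z)$ to low momenta, then remains and must be controlled by $\mathcal K_{\ell^{-\beta-\varepsilon}}$. This is precisely where the low-momentum kinetic operator appears in the lemma's conclusion, and it cannot be replaced by $\mathcal N$ alone without losing a divergent $\|\cgl\|_\infty\sim\ell^{-3\beta}$. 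So the momentum-space splitting of Step~3 is the right idea, but it belongs to the two terms you declared easy, not the one you declared hard.
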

\begin{proof}
With the commutation relations (\ref{eq:ccr}), (\ref{eq:comm-b2}), we find 
 \[
	\begin{split}
	[\mathcal{V}_N, A]
	&= - \frac{1}{\sqrt{N}} \int dx dy dz du \ \nu_H(x;y) \cgl(x-z) N^2 V(N(u-z)) [b_u^* a_x^* a_y^* a_z a_u + h.c.] \\
	&\quad + \frac{1}{\sqrt{N}} \int dx dy dz du \ \nu_H(x;y) \cgl(x-z) N^2 V(N(u-y)) [b_u^* a_y^* a_x^* a_u a_z + h.c. ]\\
	&\quad + \frac{1}{\sqrt{N}} \int dx dy dz du \ \nu_H(x;y) \cgl(x-z) N^2 V(N(x-u)) [b_x^* a_u^* a_y^* a_u a_z + h.c.] \\
	&\quad + \frac{1}{\sqrt{N}} \int dx dy dz \ \nu_H(x;y) \cgl(x-z) N^2 V(N(x-y)) [b_x^* a_y^* a_z + h.c.] \\
	&=: \sum_{j=1}^4 V_j.
	\end{split}
	\]
Switching partially to momentum space and using \eqref{eq:aph1}, \eqref{eq:aph2}, we have that
	\[
	\begin{split}
	\vert \langle \xi, V_1 \xi \rangle \vert
	&\leq  \frac{C}{\sqrt N}\int dpdq du dz\,  N^2 V(N(u-z)) |\widehat{G} (p)| \chi_H (p) \gl (q) \| a_u \hat a_{p+q} \hat a(\widehat{\varphi_p}) \xi\| \| a_za_u\xi\| \\
		&\leq \frac{C\ell^{3\alpha/2}}{\sqrt N}  \bigg(\int dpdq \, p^2  \, \| \hat{a}_{p+q} \hat a(\widehat{\varphi_p}) \xi\|^2  \bigg)^{1/2} \bigg(\int dq\,  |\gl(q)|^2   \bigg)^{1/2} \| \cV_N^{1/2}\xi\|\\
		&\leq C\ell^{3(\alpha-\beta)/2} \langle\xi, (\cK + \cV_N + \cN)\xi\rangle.
	\end{split}
	\]

	We have $\Vert \nu_{H,x} \Vert_1 \leq \Vert G * \widecheck{\chi}_H \Vert \cdot \Vert \pn \Vert \leq C \ell^\frac{\alpha}{2}$. Thus, we get using \eqref{Lp norms of widecheck chi L,l} 
	\[
	\begin{split}
	\vert \langle \xi, V_1 \xi \rangle \vert
	&\leq \frac{C \ell^\frac{\alpha}{2}}{\sqrt{N}} \Vert \cgl \Vert_1 \int dz du \  N^2 V(N(u-z)) \Vert a_u \xi \Vert \cdot \Vert a_z a_u \xi \Vert \\
		&\leq \frac{C \ell^\frac{\alpha}{2}}{N} \Vert \mathcal{V}_N^\frac{1}{2} \xi \Vert \cdot \Vert (\mathcal{N}+1)^\frac{1}{2} \xi \Vert.
	\end{split}
	\]
	
Writing $\nu_H (x,y) = (G* \check{\chi}_H)(x-y) \ph_0 (y)$ and expanding $(G* \check{\chi}_H)$ and also $\cgl$ in Fourier space we can estimate 
\[ \begin{split} 
|\langle \xi , V_2 \xi \rangle | &\leq \frac{1}{\sqrt{N}} \left[ \int d\sigma dp  du dy \, N^2 V(N (u-y)) \, \frac{\gl (p)}{p^2} \, \| a_u a_y \hat{a}_{\sigma+p} \xi \|^2 \right]^{1/2} \\ &\hspace{1cm} \times \left[  
\int d\sigma dp  du dy \,  |\hat{G} (\sigma) \chi_H (\sigma )|^2 \, N^2 V(N (u-y)) \gl (p) \, p^2  \| a_u \hat{a}_p \xi \|^2  \right]^{1/2} \\ &\leq C   \ell^{(\alpha-\beta)/2}  \| \cV_N^{1/2}  \xi \|  \, \| (\cK_{\ell^{-\beta-\eps}}  + \cN + 1)^{1/2} \xi \|, \end{split} \]
where, to bound the parenthesis in the second line, we divided the $p$-integral in the two domains $|p| < \ell^{-\beta -\eps}$ (where we can use the operator $\cK_{\ell^{-\beta-\eps}}$) and $|p| \geq \ell^{-\beta -\eps}$, where we use the estimate $\sup_{|p| > \ell^{-\beta -\eps}} p^2 e^{-p^2 \ell^{2\beta}} < C$. 
The term $V_3$ can be bounded similarly. This proves \eqref{commutator [VN, A]}. To show (\ref{eq:commVNA2}), we use $\vert \nu_H(x;y) \vert \leq C N \pn(y)$, $\eqref{Lp norms of widecheck chi L,l}$ and \eqref{rough bound e-A (N+1) eA with l} to estimate 
	\[
	\begin{split}
	|\langle \xi, V_4 \xi \rangle | &\leq C \int dx dy dz \ N^\frac{5}{2} V(N(x-y)) \cgl(x-z) \pn(y) \Vert a_x a_y \xi \Vert \cdot \Vert a_z \xi \Vert \\
	&\leq C \Vert \cgl \Vert_1^\frac{1}{2} \cdot \Vert \mathcal{V}_N^\frac{1}{2} \xi \Vert \left( \int dx dy dz \ N^3 V(N(x-y)) \cgl(x-z) \Vert a_z \xi \Vert^2  \right)^\frac{1}{2} \\
	&\leq C \Vert \mathcal{V}_N^\frac{1}{2} \xi \Vert \cdot \Vert (\mathcal{N}+1)^\frac{1}{2} \xi \Vert. 
	\end{split}
	\]
\end{proof}

Combining the last three lemmas, we obtain a bound for the growth of the Hamilton operator $\cH_N = \cK + \cV_{ext} + \cV_N$. 
\begin{lemma}
	Assume \eqref{eq:asmptsVVext}. Let $\alpha > \beta >0$ and $\alpha\geq 4$. There exists $C >0$ such that for all $t \in [0;1]$, all $\ell\in (0;1)$ and all $N\in \mathbb{N}$ large enough
	\begin{equation} \label{rough estimate e -sA HN e sA}
	e^{-tA} \mathcal{H}_N e^{tA} \leq C \mathcal{H}_N + C\ell^{-\alpha} (\mathcal{N}+1).
	\end{equation}
\end{lemma}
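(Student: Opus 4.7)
The plan is a standard Gronwall argument. For $\xi \in \cF^{\leq N}$ in the form domain, set $f_\xi(t) = \langle \xi, e^{-tA}\mathcal{H}_N e^{tA}\xi\rangle$. Differentiation gives $f_\xi'(t) = \langle\xi, e^{-tA}[\mathcal{H}_N,A]e^{tA}\xi\rangle + \text{c.c.}$, so the goal reduces to producing an operator-bound of the form
\[
\pm[\mathcal{H}_N,A] \leq C\,\mathcal{H}_N + C\ell^{-\alpha}(\mathcal{N}+1),
\]
combined with Lemma \ref{lm:Npowcubic} to control $e^{-tA}(\mathcal{N}+1)e^{tA} \leq C(\mathcal{N}+1)$.

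The three commutator bounds required are already at hand. From \eqref{eq:commKA2} we have $\pm[\mathcal{K},A]\leq C(\mathcal{K}+\mathcal{V}_N) + C\ell^{-\alpha}(\mathcal{N}+1) \leq C\mathcal{H}_N + C\ell^{-\alpha}(\mathcal{N}+1)$. From \eqref{eq:commVNA2}, using $\mathcal{K}_{\ell^{-\beta-\varepsilon}}\leq \mathcal{K}$, we get $\pm[\mathcal{V}_N,A]\leq C(\mathcal{H}_N+\mathcal{N}+1)$. Finally, the bound $\pm[\mathcal{V}_{ext},A]\leq C\ell^{\alpha/2}(\mathcal{V}_{ext}+\mathcal{N}+1)$ established in the proof of \eqref{rough bound e-A Vext eA} yields $\pm[\mathcal{V}_{ext},A]\leq C(\mathcal{H}_N+\mathcal{N}+1)$ since $\alpha > 0$. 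Summing these three estimates gives the desired commutator bound.

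Inserting this into the derivative and invoking Lemma \ref{lm:Npowcubic} yields
\[
|f_\xi'(t)| \leq C\langle\xi, e^{-tA}\mathcal{H}_N e^{tA}\xi\rangle + C\ell^{-\alpha}\langle\xi, e^{-tA}(\mathcal{N}+1)e^{tA}\xi\rangle \leq C f_\xi(t) + C\ell^{-\alpha}\langle\xi,(\mathcal{N}+1)\xi\rangle,
\]
uniformly in $t\in[0;1]$. Gronwall's inequality then gives
\[
f_\xi(t) \leq e^{Ct}\bigl(\langle\xi,\mathcal{H}_N\xi\rangle + C\ell^{-\alpha}\langle\xi,(\mathcal{N}+1)\xi\rangle\bigr),
\]
which is the claim.

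The only mild subtlety is that the lemma \eqref{rough bound e-A Vext eA} is stated as a bound on the difference $\langle\xi,(e^{-tA}\mathcal{V}_{ext}e^{tA}-\mathcal{V}_{ext})\xi\rangle$ rather than on the commutator itself, but its proof (the computation of $[\mathcal{V}_{ext},A]$) supplies exactly the operator inequality we need. No single step is delicate here; all nontrivial work has already been done in the preceding three lemmas, and this result is just their assembly via Gronwall.
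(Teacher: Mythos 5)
Your proposal is correct and follows essentially the same Gronwall strategy as the paper. The only difference is a minor reorganization: the paper applies Gronwall to $f_\xi(s)=\langle\xi, e^{-sA}(\mathcal{K}+\mathcal{V}_N)e^{sA}\xi\rangle$ alone (using \eqref{eq:commKA2} and \eqref{eq:commVNA2}) and then adds the already-proved bound \eqref{rough bound e-A Vext eA} for $e^{-tA}\mathcal{V}_{ext}e^{tA}$ at the end, whereas you fold $\mathcal{V}_{ext}$ directly into the Gronwall argument by using the commutator estimate $\pm[\mathcal{V}_{ext},A]\leq C\ell^{\alpha/2}(\mathcal{V}_{ext}+\mathcal{N}+1)$ extracted from the proof of \eqref{rough bound e-A Vext eA}; both yield the same conclusion with the same ingredients.
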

\begin{proof}
	We define $f_\xi(s) := \langle \xi, e^{-sA} [ \cK + \cV_N ] e^{sA} \xi \rangle$. Then
	\begin{align*}
	\partial_s f_\xi(s) = \langle \xi, e^{-sA}[ \cK + \mathcal{V}_N , A] e^{sA} \xi \rangle.	\end{align*}
Inserting the bounds (\ref{eq:commKA2}), (\ref{eq:commVNA2}) (with $\cK_{\ell^{-\beta-\eps}} \leq \cK$) and applying \eqref{rough bound e-A (N+1) eA with l}, we obtain 
\[ \begin{split}  \partial_s f_\xi(s)  &\leq C f_\xi (s) + C \ell^{-\alpha} \langle \xi , e^{-sA} (\cN+1) e^{sA} \xi \rangle \\ &\leq 
C f_\xi (s) + C \ell^{-\alpha} \langle \xi , e^{-sA} (\cN+1) e^{sA} \xi \rangle \leq 
C f_\xi (s) + C \ell^{-\alpha} \langle \xi , (\cN+1) \xi \rangle \end{split} \]
for all $s \in [0;1]$. With Gronwall's Lemma and using (\ref{rough bound e-A Vext eA}) to estimate the growth of the external potential, we arrive at (\ref{rough estimate e -sA HN e sA}).
\end{proof} 

The estimate (\ref{rough estimate e -sA HN e sA}) is still not optimal (because of the large factor $\ell^{-\alpha}$ in front of $(\cN_+ + 1)$). To improve this bound, we first have to study the growth of the operator $\cK_\theta$, defined in (\ref{eq:Ktheta}) measuring the kinetic energy of particles with momenta smaller than $\theta < \ell^{-\alpha} - \ell^{-\beta}$.

\begin{lemma}\label{lm:Ktheta} 
Assume \eqref{eq:asmptsVVext}. Let $0<\beta < \alpha \leq 4 \beta$, $\alpha \geq 4$, $\varepsilon \in (0;\alpha -\beta)$. Then there exists a constant $C$ such that for all $s\in [0;1]$ all $\ell > 0$ small enough and all $\ell^{-\beta-\eps} \leq \theta < \ell^{-\alpha} - \ell^{-\beta-\varepsilon}$, we have  
	\begin{equation} \label{rough estimate e -sA K theta e sA}
	e^{-sA} \mathcal{K}_\theta e^{sA} \leq C \mathcal{K}_\theta + C\ell^{2 (\alpha - \beta -\varepsilon)} (\mathcal{H}_N  + \cN +1).
	\end{equation}
\end{lemma}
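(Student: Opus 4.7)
The plan is to prove (\ref{rough estimate e -sA K theta e sA}) by Gronwall's inequality, following the pattern of Lemma \ref{lm:Npowcubic} and of the proof of (\ref{rough estimate e -sA HN e sA}) above. Setting $f_\xi(s) := \langle \xi, e^{-sA}\cK_\theta e^{sA}\xi\rangle$ for $\xi\in\cF^{\leq N}$ and $s\in[0;1]$, one has $\partial_s f_\xi(s)=\langle\xi, e^{-sA}[\cK_\theta, A]e^{sA}\xi\rangle$. Hence the whole argument reduces to establishing a form bound
\[ \pm [\cK_\theta, A] \;\leq\; C\cK_\theta \,+\, C\ell^{2(\alpha-\beta-\varepsilon)}(\cH_N+\cN+1), \]
after which Gronwall's lemma, together with the already proven growth estimates (\ref{rough bound e-A (N+1) eA with l}) and (\ref{rough estimate e -sA HN e sA}), yields (\ref{rough estimate e -sA K theta e sA}).

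To bound $[\cK_\theta, A]$ I pass to momentum space. Writing
\[ A \;=\; \frac{1}{\sqrt N}\int dp_1\, dp_2\, dq\; \wh\nu_H(p_1;p_2)\, \gl(q)\, \hat b_{p_1+q}^*\hat a_{p_2}^*\hat a_q \;-\; \mathrm{h.c.} \]
and using $[\cK_\theta, \hat a_k]=-k^2\chi_{|k|\leq\theta}\hat a_k$ (and the analogous identities on $\hat a^*_k$, $\hat b^*_k$), the commutator decomposes as $[\cK_\theta, A] = I_1 + I_2 + I_3 - \mathrm{h.c.}$, where the multiplier $p^2\chi_{|p|\leq \theta}$ is applied to the momentum of $\hat b^*_{p_1+q}$, $\hat a^*_{p_2}$, and $\hat a_q$, respectively.

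The decisive structural observation is that $I_1$ is negligible at every polynomial order in $\ell$. Indeed, its integrand carries the three cutoffs $\chi_{|p_1+q|\leq\theta}$, $\chi_H(p_1)=\chi_{|p_1|\geq\ell^{-\alpha}}$, and $\gl(q)=e^{-(\ell^\beta q)^2}$; the triangle inequality gives $|q|\geq |p_1|-|p_1+q|\geq \ell^{-\alpha}-\theta\geq \ell^{-\beta-\varepsilon}$ on the support, whence $\gl(q)\leq e^{-\ell^{-2\varepsilon}}$ is smaller than any power of $\ell$. For $I_3$, Cauchy--Schwarz pairing the factor $q\chi_{|q|\leq\theta}$ with $\|\hat a_q\xi\|$ yields directly $\|\cK_\theta^{1/2}\xi\|^2$ for the free slot, while the remaining quadratic form is controlled via $|\wh G(p)|\leq C|p|^{-2}$ and $\|\wh G\chi_H\|^2\leq C\ell^\alpha$, the momentum constraint forcing in addition $|p_1+q|\geq \ell^{-\beta-\varepsilon}$, which, combined with the $N^{-1/2}$ from $A$, produces the prefactor $\ell^{2(\alpha-\beta-\varepsilon)}$. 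The term $I_2$ is handled analogously: the weight $p_2^2\chi_{|p_2|\leq\theta}$ is absorbed by exploiting the decay $|\wh\ph(p_1+p_2)|\leq C(1+|p_1+p_2|^2)^{-1}$ on $|p_1+p_2|\geq \ell^{-\beta-\varepsilon}$ (once more forced by $|p_1|\geq\ell^{-\alpha}$ and $|p_2|\leq \theta$), yielding the same prefactor.

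The main obstacle is the careful Cauchy--Schwarz bookkeeping required in $I_2$ and $I_3$ to extract the precise power $\ell^{2(\alpha-\beta-\varepsilon)}$ in the commutator bound. A related subtlety appears when closing Gronwall: naively invoking (\ref{rough estimate e -sA HN e sA}) to control $e^{-tA}(\cH_N+\cN+1)e^{tA}$ reintroduces a factor $\ell^{-\alpha}$, and this loss must be avoided by using the coupling of $\cK_\theta$ to $\cV_N$ through (\ref{eq:commVNA2}), in which $\cK_{\ell^{-\beta-\varepsilon}}\leq \cK_\theta$ is precisely the low-momentum kinetic energy appearing in the bound on $[\cV_N, A]$; hence the growth of the auxiliary terms can be re-absorbed into $\cK_\theta$ itself along the Gronwall estimate.
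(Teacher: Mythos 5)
Your overall strategy — Gronwall for $f_\xi(s)=\langle\xi,e^{-sA}\cK_\theta e^{sA}\xi\rangle$, decomposition of $[\cK_\theta,A]$ in Fourier variables into three terms according to which creation/annihilation operator picks up the multiplier $p^2\chi_{|p|\le\theta}$, and the observation that the piece where the multiplier sits on $\hat b^*_{p_1+q}$ is superpolynomially small because the three cutoffs force $|q|\ge\ell^{-\beta-\eps}$ in the support of $\gl$ — is exactly the paper's proof. However there is a real gap in two places that compound each other.

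First, the commutator bound you state,
\[ \pm[\cK_\theta,A]\;\le\;C\cK_\theta+C\ell^{2(\alpha-\beta-\eps)}(\cH_N+\cN+1), \]
is weaker by a full power $\ell^\alpha$ than what the argument actually requires, and weaker than what the paper proves. The paper's careful Cauchy--Schwarz (putting $(u+v)^2$ on the $\hat a_w\hat a_{u+v}$ slot — note this is the \emph{full} kinetic energy $\cK$, not $\cK_\theta$, since $|u+v|\approx|v|\ge\ell^{-\alpha}$ is large — and leaving $u^4|v|^{-4}\gl(u)$ on the $\hat a_u$ slot, then integrating $|v|^{-4}$ over $|v|\ge\ell^{-\alpha}$ and splitting the $u$-integral at $\ell^{-\beta-\eps}$) yields
\[ |\langle\xi,[\cK_\theta,A]\xi\rangle|\;\le\;C\|\cK_\theta^{1/2}\xi\|^2+C\ell^{3\alpha-2\beta-2\eps}\|(\cK+\cN+1)^{1/2}\xi\|^2, \]
i.e.\ one extra $\ell^\alpha$ compared with the target. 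This extra decay is not cosmetic: it is exactly the amount needed to survive Gronwall.

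Second, your proposed remedy for the $\ell^{-\alpha}$ loss in Gronwall (coupling $\cK_\theta$ to $\cV_N$ via \eqref{eq:commVNA2}) does not close the loop. The obstruction is that the error term in $[\cK_\theta,A]$ involves the \emph{full} kinetic energy $\cK$, not $\cK_\theta$, coming from the $\hat a_{u+v}$ factor whose momentum is of order $\ell^{-\alpha}$ and hence above $\theta$. The operator $\cK$ cannot be absorbed by $\cK_\theta+\cV_N$ in any coupled Gronwall system: the $\cK_{\ell^{-\beta-\eps}}$ in the $[\cV_N,A]$ bound controls only low-momentum kinetic energy, and $[\cK,A]$ itself already produces $\ell^{-\alpha}(\cN+1)$ by (\ref{eq:commKA2}), so adding $\cK$ or $\cH_N$ to the Gronwall quantity merely reintroduces the loss. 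What actually saves the day is the extra $\ell^\alpha$ in the commutator bound above: after invoking (\ref{rough estimate e -sA HN e sA}) one gets $\ell^{3\alpha-2\beta-2\eps}\cdot\ell^{-\alpha}(\cN+1)=\ell^{2(\alpha-\beta-\eps)}(\cN+1)$, which is precisely the claimed rate. You should therefore upgrade your bound on $I_3$ (and check $I_2$) to produce $\ell^{3\alpha-2\beta-2\eps}(\cK+\cN+1)$, and drop the coupled-Gronwall idea; the paper's simple use of (\ref{rough estimate e -sA HN e sA}) after that is then sufficient.
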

\begin{proof}
In momentum space, we have 
\begin{equation}\label{eq:Amom}
\begin{split} 
	A &= \frac{1}{\sqrt{N}} \int du dv dw \, \widehat{\nu}_H(v;w) \gl(u) [\hat{b}_{u+v}^* \hat{a}_w^* \hat{a}_u - h.c.] \\ &= \frac{1}{\sqrt{N}} \int du dv dw \, \hat{G} (v) \chi_H (v) \hat{\ph}_0 (v+w)  \gl(u) [\hat{b}_{u+v}^* \hat{a}_w^* \hat{a}_u - h.c.] .
	\end{split} \end{equation} 	
Thus, we find 
 \[
	\begin{split}
	[ \mathcal{K}_\theta, A]
	&=  -\frac{1}{\sqrt{N}} \int_{\vert u \vert \leq \theta} du dv dw \ \widehat{\nu}_H(v;w) \gl(u) \vert u \vert^2 [\hat{b}_{u+v}^* \hat{a}_w^* \hat{a}_u + h.c.] \\
	&\quad + \frac{1}{\sqrt{N}} \int_{\vert u + v \vert \leq \theta} du dv dw \ \widehat{\nu}_H(v;w) \gl(u) \vert u + v \vert^2 [\hat{b}_{u+v}^* \hat{a}_w^* \hat{a}_u + h.c.] \\
	&\quad + \frac{1}{\sqrt{N}} \int_{\vert w \vert \leq \theta} du dv dw \ \widehat{\nu}_H(v;w) \gl(u) \vert w \vert^2 [\hat{b}_{u+v}^* \hat{a}_w^* \hat{a}_u + h.c.] \\
	&=: A_1 + A_2 + A_3.
	\end{split}
	\]
Let $\varepsilon\in (0;\alpha -\beta)$. For $\vert u \vert \leq \ell^{-\beta - \varepsilon}$ and $\vert v \vert \geq \ell^{-\alpha}$ we have (since $\alpha > \beta + \eps$) $|u+v| \geq \ell^{-\alpha} /2$, for $\ell$ small enough. Therefore, by \eqref{eq:whnuHL2bnd}, \eqref{eq:decphhat}, we find	
\begin{equation}\label{eq:A1}\begin{split}
	\vert \langle \xi, A_1 \xi \rangle \vert
	&\leq \frac{C}{\sqrt{N}} \int_{\vert u \vert \leq \theta, \ \vert v \vert \geq \ell^{-\alpha}} du dv dw \ \frac{1}{\vert v \vert^2} \vert \widehat{\ph}_0(v+w) \vert \gl (u)  \vert u \vert^2 \Vert \hat{a}_w \hat{a}_{u+v} \xi \Vert  \Vert \hat{a}_u \xi \Vert \\
	&\leq \frac{C}{\sqrt{N}} \ell^{\alpha} \left[ \int du dv dw  |\hat{\ph}_0 (v+w)|  (u+v)^2 \| \hat{a}_w \hat{a}_{u+v}  \xi \|^2 \right]^{1/2}  \\ &\hspace{2cm} \times \left[ \int_{|u| \leq \theta, |v| \geq \ell^{-\alpha}}  du dv dw \frac{1}{|v|^4} |\hat{\ph}_0 (v+w)|  \gl (u) \, u^4 \| \hat{a}_u \xi \|^2 \right]^{1/2} \\ &\leq C \ell^{3\alpha}\| \cK^{1/2} \xi \|  \| (\cN + 1)^{1/2} \xi \| + C \ell^{(3\alpha-2\beta-2\eps)/2} \| \cK^{1/2} \xi \| \| \cK_\theta^{1/2} \xi \|, 
	\end{split} \end{equation} 
where, in the last step, we separated the $u$-integral in the second parenthesis between $|u| > \ell^{-\beta-\eps}$ (where we control $|u|^4$ with $\gl (u)$) and $|u| < \ell^{-\beta-\eps}$ (where we use $\cK_\theta$). 
	
Now we turn to $A_3$. We bound 
\begin{equation}\label{eq:A2}\begin{split}
	\vert \langle \xi, A_3 \xi \rangle \vert
	&\leq \frac{C}{\sqrt{N}} \theta \ell^{2\alpha} \int_{\vert w \vert \leq \theta, \vert v \vert \geq \ell^{-\alpha}} dudv dw \, |\hat{\ph}_0 (v+w)| \frac{\gl (u)}{|u|}   \vert w \vert  \Vert \hat{a}_w \hat{a}_{u+v} \xi \Vert  |u| \Vert \hat{a}_u \xi \Vert \\
	&\leq \frac{C}{\sqrt{N}}  \theta \ell^{2\alpha} \left[ \int_{|w| \leq \theta} du dv dw \,  \frac{\gl (u)}{|u|^2} |w|^2 \Vert \hat{a}_w \hat{a}_{u+v} \xi \Vert^2 \right]^{1/2} \\ &\hspace{3cm} \times   \left[ \int_{\vert w \vert \leq \theta, \vert v \vert \geq \ell^{-\alpha}}  du dv dw \, |\hat{\ph}_0 (v+w)|^2 \gl (u) |u|^2 \| \hat{a}_u \xi \|^2 \right]^{1/2} \\ 
	&\leq C \ell^{3\alpha} \| \cK_\theta^{1/2} \xi \| \| \cN^{1/2} \xi \| + C \theta^{5/2} \ell^{2\alpha +2\beta + 5\eps/2}  \| \cK^{1/2}_\theta \xi \|^2 \\ &\leq C \ell^{3\alpha} \| \cK_\theta^{1/2} \xi \| \| \cN^{1/2} \xi \| + C  \ell^{(4\beta+ 5\eps -\alpha)/2}  \| \cK^{1/2}_\theta \xi \|^2,
	\end{split} \end{equation} 
where we divided the $u$-integral in the third line an the integral over $|u| > \ell^{-\beta-\eps}$ (here we can control factors of $u$ and extract arbitrary polynomial decay in $\ell$ from $\gl$) and an integral over $|u| \leq \ell^{-\beta -\eps}$ (here we used the fact that $|v+w| \geq |v| - |w| \geq \ell^{-\beta}$ and the bound (\ref{eq:decphhat}) for the decay of $\hat{\ph}_0$). 

We are left with $A_2$. Observing that $|u| > |v| - |u+v| \geq \ell^{-\beta-\eps}$ if $|u+v| \leq \theta \leq \ell^{-\alpha} - \ell^{-\beta-\eps}$ and $|v| \geq \ell^{-\alpha}$, we can extract arbitrary polynomial decay in $\ell$ from $\gl$. Thus, we easily get 
\begin{equation} \label{eq:A3}
|\langle \xi, A_2 \xi \rangle | \leq C \ell^{3\alpha} \| (\cN+1)^{1/2} \xi \|^2.
\end{equation}

From (\ref{eq:A1}), (\ref{eq:A2}), (\ref{eq:A3}) and the assumption $\beta < \alpha \leq 4 \beta$, we conclude that
\[ |\langle \xi , [ \cK_\theta, A] \xi \rangle | \leq C \| \cK_\theta^{1/2} \xi \|^2 + C \ell^{3\alpha-2\beta-2\eps} \| (\cK+\cN+1)^{1/2} \xi \|^2.  \]
Defining $f_\xi(s) = \langle \xi, e^{-sA} \mathcal{K}_\theta e^{sA} \xi \rangle$, we conclude that 
\begin{align*}
\vert \partial_s f_\xi (s) \vert 
&\leq C f_\xi (s) + C \ell^{3\alpha-2\beta-2\eps} \| (\cK + \cN+  1)^{1/2} e^{sA} \xi \|^2. 
\end{align*}
Estimating $\cK \leq \cH_N$ and applying (\ref{rough estimate e -sA HN e sA}) and Lemma \ref{lm:Npowcubic}, Gronwall leads us to 
\eqref{rough estimate e -sA K theta e sA}. 
\end{proof}

Lemma \ref{lm:Ktheta} can be used to improve the estimate (\ref{rough estimate e -sA HN e sA}) on the growth of the Hamiltonian. We begin with the potential energy operator $\cV_N$. 
\begin{lemma} 
Assume \eqref{eq:asmptsVVext}. Let $0< \beta < \alpha \leq 4 \beta$, $\alpha \geq 4$. Then there exists $C>0$ such that for $\ell > 0$  small enough and all $s\in [0;1]$, we have
	\begin{equation} \label{rough estimate e -A VN eA with l}
	e^{-sA} \mathcal{V}_N e^{sA} \leq C (\mathcal{H}_N + \cN +1).
	\end{equation}
\end{lemma}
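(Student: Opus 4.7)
My plan is to prove \eqref{rough estimate e -A VN eA with l} by a Gronwall argument applied to the function $f_\xi(s) = \langle \xi, e^{-sA} \mathcal{V}_N e^{sA} \xi \rangle$ for $\xi \in \mathcal{F}^{\leq N}$ and $s \in [0;1]$. Taking the derivative yields
\[
\partial_s f_\xi(s) = \langle \xi, e^{-sA} [\mathcal{V}_N, A] e^{sA} \xi \rangle.
\]
The key input is the refined commutator estimate \eqref{eq:commVNA2}, which gives
\[
\pm [\mathcal{V}_N, A] \leq C \big[ \mathcal{K}_{\ell^{-\beta-\varepsilon}} + \mathcal{V}_N + \mathcal{N} + 1 \big]
\]
for any fixed $\varepsilon \in (0, \alpha - \beta)$. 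The crucial feature here is that the kinetic energy appears only through the \emph{low-momentum} truncation $\mathcal{K}_{\ell^{-\beta-\varepsilon}}$, rather than through the full $\mathcal{K}$.

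The next step is to control the conjugated operators on the right-hand side. The term $e^{-sA} \mathcal{V}_N e^{sA}$ is just $f_\xi(s)$. For the number operator I invoke Lemma \ref{lm:Npowcubic}, which gives $e^{-sA} (\mathcal{N}+1) e^{sA} \leq C(\mathcal{N}+1)$. For the truncated kinetic energy I apply Lemma \ref{lm:Ktheta} with the admissible choice $\theta = \ell^{-\beta-\varepsilon}$ (which satisfies $\ell^{-\beta-\varepsilon} \leq \theta < \ell^{-\alpha} - \ell^{-\beta-\varepsilon}$ for $\ell$ small enough, since $\alpha > \beta + \varepsilon$), obtaining
\[
e^{-sA} \mathcal{K}_{\ell^{-\beta-\varepsilon}} e^{sA} \leq C \mathcal{K}_{\ell^{-\beta-\varepsilon}} + C \ell^{2(\alpha-\beta-\varepsilon)} (\mathcal{H}_N + \mathcal{N}+1) \leq C (\mathcal{H}_N + \mathcal{N}+1),
\]
using $\mathcal{K}_{\ell^{-\beta-\varepsilon}} \leq \mathcal{K} \leq \mathcal{H}_N$ and absorbing the small prefactor. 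Combining these three bounds yields
\[
\partial_s f_\xi(s) \leq C f_\xi(s) + C \langle \xi, (\mathcal{H}_N + \mathcal{N}+1) \xi \rangle.
\]

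Finally, Gronwall's inequality applied on $[0;1]$ gives
\[
f_\xi(1) \leq e^C \big[ f_\xi(0) + C \langle \xi, (\mathcal{H}_N + \mathcal{N}+1) \xi \rangle \big] \leq C \langle \xi, (\mathcal{V}_N + \mathcal{H}_N + \mathcal{N}+1) \xi \rangle,
\]
and since $\mathcal{V}_N \leq \mathcal{H}_N$, this is bounded by $C \langle \xi, (\mathcal{H}_N + \mathcal{N}+1) \xi \rangle$, proving the claim. The conceptual point — and the only nontrivial step — is that the commutator bound \eqref{eq:commVNA2} produces the low-momentum truncation $\mathcal{K}_\theta$ rather than the full $\mathcal{K}$; this allows Lemma \ref{lm:Ktheta} to be applied and avoids reintroducing the unwanted factor $\ell^{-\alpha}$ from the rough bound \eqref{rough estimate e -sA HN e sA}. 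Everything else is routine once these preliminary lemmas are in place.
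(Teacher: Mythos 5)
Your proposal is correct and follows the same route the paper takes: a Gronwall argument applied to $f_\xi(s) = \langle \xi, e^{-sA}\mathcal{V}_N e^{sA}\xi\rangle$, using the refined commutator bound \eqref{eq:commVNA2} together with Lemma \ref{lm:Npowcubic} for $\cN+1$ and Lemma \ref{lm:Ktheta} for the low-momentum kinetic energy $\mathcal{K}_{\ell^{-\beta-\varepsilon}}$. You also correctly identify the key structural point — that \eqref{eq:commVNA2} produces only $\mathcal{K}_\theta$ rather than the full $\mathcal{K}$, which is what lets Lemma \ref{lm:Ktheta} close the argument without reintroducing the $\ell^{-\alpha}$ loss from \eqref{rough estimate e -sA HN e sA}.
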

\begin{proof}
Eq. \eqref{rough estimate e -A VN eA with l} follows from (\ref{eq:commVNA2}), using (\ref{rough bound e-A (N+1) eA with l}), \eqref{rough estimate e -sA K theta e sA} and Gronwall's lemma.
\end{proof}  

We conclude this subsection with an improvement on the growth of the kinetic energy operator $\cK$. 
\begin{lemma}
	Assume \eqref{eq:asmptsVVext}. Let $0< \beta < \alpha \leq 4 \beta$, $\alpha \geq 4$, $0 < \eps < \alpha-\beta$. Then there exists $C>0$ such that for $\ell\in (0;1)$ small enough, we have for all $s\in [0;1]$ 
	\begin{equation} \label{rough estimate e -sA K e sA with l}
	e^{-sA} \mathcal{K} e^{sA} \leq C \ell^{-\frac{(\alpha+\beta + \eps)}{2}} (\mathcal{H}_N  + \cN +1).
	\end{equation}
\end{lemma}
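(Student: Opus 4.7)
Set $f_\xi(s) := \langle \xi, e^{-sA}\cK e^{sA}\xi\rangle$ for $\xi \in \cF^{\leq N}$, so that
\[
\partial_s f_\xi(s) = \langle e^{sA}\xi, [\cK, A]\, e^{sA}\xi\rangle,
\]
and use the decomposition $[\cK, A] = T_2 + S_1 + S_2 + \delta_{\cK}$ from \eqref{eq:commKA}. The plan is to feed each summand into a Gronwall argument, combining it with the already-established growth estimates \eqref{rough bound e-A (N+1) eA with l}, \eqref{rough estimate e -A VN eA with l} and \eqref{rough estimate e -sA K theta e sA} on $\cN$, $\cV_N$ and $\cK_\theta$. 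The contributions of $T_2$, $S_1$ and $\delta_{\cK}$ are the easy ones: the bounds from \eqref{eq:T2S12} (together with the estimate on $\delta_{\cK}$) give $\pm(T_2 + S_1 + \delta_{\cK}) \leq C(\cK + \cV_N + \cN + 1)$ with a constant independent of $\ell$, so after pulling $e^{\pm sA}$ off $\cV_N$ and $\cN$ they contribute only $Cf_\xi(s) + C\langle\xi, (\cH_N + \cN + 1)\xi\rangle$ to the Gronwall ODE, with $O(1)$ prefactors.

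The whole difficulty is the contribution of $S_2$. The naive bound $\pm S_2 \leq C\cK + C\ell^{-\alpha}(\cN+1)$ in \eqref{eq:T2S12} would produce an $\ell^{-\alpha}$ factor in the final estimate, whereas the target is $\ell^{-(\alpha+\beta+\eps)/2}$. To sharpen it, I would rewrite $S_2$ in momentum space,
\[
S_2 = \frac{1}{\sqrt N}\int dp\,dq\; \widehat{Vf_\ell}(p/N)\,\chi_{H^c}(p)\,\gl(q)\,\hat b^*_{q-p}\,\hat a^*(\hat\pn_p)\,\hat a_q \;+\; \text{h.c.},
\]
and exploit the \emph{two} momentum cutoffs simultaneously: $\chi_{H^c}(p)$ confines the creation leg to $|p|\leq \ell^{-\alpha}$, and the Gaussian $\gl(q)$ confines the annihilation leg effectively to $|q|\leq \ell^{-\beta-\eps}$, the tail $|q|>\ell^{-\beta-\eps}$ being super-exponentially small and absorbable in a negligible error. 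A Cauchy--Schwarz splitting that puts $|p|^2$ on the kinetic side (so that the first factor is bounded by $C\sqrt N\,\|\cK^{1/2}\cdot\|$ via \eqref{eq:aph2}) and on the other side combines $\int_{|p|\leq \ell^{-\alpha}}|p|^{-2}\,dp \leq C\ell^{-\alpha}$ with the low-momentum restriction on $q$, followed by an application of Young's inequality balancing these two scales, yields an estimate of the form
\[
\pm S_2 \;\leq\; C\,\cK \;+\; C\ell^{-(\alpha+\beta+\eps)/2}\,\bigl(\cK_{\ell^{-\beta-\eps}} + \cV_N + \cN + 1\bigr).
\]
The geometric-mean exponent $(\alpha+\beta+\eps)/2$ arises precisely from interpolating the bad factor $\ell^{-\alpha}$ (from the $p$-integration) against the good factor $\ell^{\beta+\eps}$ produced by the $\gl$-localisation of $q$, and the condition $\alpha \leq 4\beta$ in the hypothesis is what makes this interpolation admissible.

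Substituting the bounds on $T_2, S_1, \delta_{\cK}$ and $S_2$ into the Gronwall ODE and using \eqref{rough bound e-A (N+1) eA with l}, \eqref{rough estimate e -A VN eA with l}, and the crucial Lemma \ref{lm:Ktheta} (which under the assumption $\alpha \leq 4\beta$ implies $e^{-sA}\cK_\theta e^{sA} \leq C\cK_\theta + C(\cH_N + \cN + 1)$ without any diverging $\ell$-factor for $\theta = \ell^{-\beta-\eps}$) to strip the $e^{\pm sA}$ on the right, one obtains
\[
\partial_s f_\xi(s) \;\leq\; C f_\xi(s) \;+\; C\ell^{-(\alpha+\beta+\eps)/2}\,\langle\xi, (\cH_N + \cN + 1)\xi\rangle.
\]
Gronwall's lemma, together with $f_\xi(0) = \langle\xi,\cK\xi\rangle \leq \langle\xi,\cH_N\xi\rangle$, then yields \eqref{rough estimate e -sA K e sA with l}. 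The technical heart of the argument is thus the refined momentum-space estimate on $S_2$; the rest is a direct Gronwall iteration bootstrapped on the previously derived low-momentum kinetic bound of Lemma \ref{lm:Ktheta}.
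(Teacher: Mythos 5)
Your overall architecture — isolating $S_2$ as the critical term, invoking the a~priori conjugation bounds on $\cN$, $\cV_N$, $\cK_\theta$ from the preliminary lemmas, and closing with a Gronwall argument — is sound and matches the paper's proof. The gap is in the mechanism you give for the refined estimate on $S_2$. You propose to put $|p|^2$ on the kinetic side of the Cauchy--Schwarz (so the first factor is $\sim\sqrt{N}\,\|(\cK+\cN)^{1/2}\xi\|$ via \eqref{eq:aph2}) and to pair it with $\int_{|p|\leq\ell^{-\alpha}}|p|^{-2}\,dp\leq C\ell^{-\alpha}$; but the restriction $|q|\leq\ell^{-\beta-\eps}$ has no effect on that $p$-integral, so this just reproduces the crude estimate already recorded in \eqref{eq:T2S12}, namely $|\langle\xi,S_2\xi\rangle|\leq C\ell^{-\alpha/2}\|(\cK+\cN)^{1/2}\xi\|\,\|\cN^{1/2}\xi\|$. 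Applying Young to that with an $O(1)$ coefficient on $\cK$ forces $\ell^{-\alpha}$ onto the $\cN$ side, and after Gronwall the result would be of order $\ell^{-\alpha}$, weaker than claimed.

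What the paper actually does is to put the Cauchy--Schwarz weight on $|q-p|^2$ — the momentum of the hard $\hat{b}^*_{q-p}$ leg, not the $\hat{a}^*(\hat\pn_p)$ leg — and then split the region $|q-p|\leq\ell^{-\alpha}+\ell^{-\beta-\eps}$ (forced once $|p|\leq\ell^{-\alpha}$, $|q|\leq\ell^{-\beta-\eps}$) into the ball $|q-p|\leq\theta:=\ell^{-\alpha}-\ell^{-\beta-\eps}$ and the complementary thin shell. On the ball the weighted factor is controlled by the \emph{low}-momentum operator $\cK_\theta$ while $\int_{|r|\leq\theta}|r|^{-2}dr\sim\ell^{-\alpha}$; on the thin shell the full $\cK$ enters, but the shell has thickness $\sim\ell^{-\beta-\eps}$ at radius $\sim\ell^{-\alpha}$, so $\int|r|^{-2}dr\sim\ell^{-\beta-\eps}$. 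This produces exactly $\ell^{-\alpha/2}\|\cK_\theta^{1/2}\xi\|\,\|\cN^{1/2}\xi\|+\ell^{-(\beta+\eps)/2}\|\cK^{1/2}\xi\|\,\|\cN^{1/2}\xi\|$. The $(\alpha+\beta+\eps)/2$ exponent then comes from conjugating the second \emph{product} without squaring it first: under $e^{sA}$, $\cK$ picks up an $\ell^{-\alpha}$ factor (by \eqref{rough estimate e -sA HN e sA}) while $\cN$ picks up only $O(1)$ (Lemma \ref{lm:Npowcubic}), so the conjugated product is of order $\ell^{-(\beta+\eps)/2}\cdot\ell^{-\alpha/2}$. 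Your target operator inequality for $S_2$ can indeed be obtained from the paper's intermediate bound by a suitably weighted Young, but the $|q-p|$-split into a $\cK_\theta$ piece plus a thin-shell piece is the essential idea your sketch omits, and without it the claimed refinement over $\ell^{-\alpha}$ does not follow.
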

\begin{proof}
From (\ref{eq:commKA}) and (\ref{S1}), we find that $[\cK, A] = S_2 + \delta'_\cK$, with \[ \pm \delta'_\cK \leq C \ell^{\alpha/2} \cK+ C (\cV_N + \cN + 1) \]  and 
\[ S_2 = \frac{1}{\sqrt{N}} \int dx dy dz \ [(N^3 (Vf_\ell)(N\cdot)) * \widecheck{\chi}_{H^c}](x-y) \pn(y) \cgl(x-z) [b_x^* a_y^* a_z + h.c.]. \]
Switching to Fourier space, we have  
\begin{align*}
\vert \langle \xi, S_2 \xi \rangle \vert
	&\leq \frac{C}{\sqrt{N}} \int_{\vert p \vert \leq \ell^{-\alpha}} dp dq  \, \gl (q)  \| \hat{a} (\hat{\ph}_p)  \hat{a}_{q-p} \xi \|  \Vert \hat{a}_q \xi \Vert \\ 
	&\leq C\ell^{3\alpha} \| \cN^{1/2} \xi \|^2 +  \frac{C}{\sqrt{N}} \int_{\vert p \vert \leq \ell^{-\alpha}, |q| < \ell^{-\beta -\eps}} dp dq  \,  \| \hat{a} (\hat{\ph}_p)  \hat{a}_{q-p} \xi \|  \Vert \hat{a}_q \xi \Vert,
\end{align*} 
where we used $\gl$ to extract decay in $\ell$ for the region $|q| > \ell^{-\beta-\eps}$. To control the last integral, we split the region where $|q-p| < \ell^{-\alpha} + \ell^{-\beta-\eps}$ in two parts, the first with $|q-p| < \ell^{-\alpha} - \ell^{-\beta-\eps} =: \theta$ and the second with $\theta \leq |q-p| <  \ell^{-\alpha} + \ell^{-\beta-\eps}$. We obtain 
\[ \begin{split}  \vert \langle \xi, S_2 \xi \rangle \vert \leq \; &C\ell^{3\alpha} \| \cN^{1/2} \xi \|^2  \\ &+\frac{C}{\sqrt{N}} \left[ \int_{|q-p| \leq \theta} \hspace{-.4cm}  dp dq (q-p)^2  \| \hat{a} (\hat{\ph}_p)  \hat{a}_{q-p} \xi \|^2 \right]^{1/2} \left[ \int_{|q-p| \leq \theta}  \hspace{-.4cm}  dp dq (q-p)^{-2} \| \hat{a}_q \xi \|^2 \right]^{1/2}  \\ &+\frac{C}{\sqrt{N}} \left[ \int dp dq (q-p)^2  \| \hat{a} (\hat{\ph}_p)  \hat{a}_{q-p} \xi \|^2 \right]^{1/2}  \\ &\hspace{4cm}  \times \left[ 
 \int_{\theta \leq |q-p| \leq \ell^{-\alpha} + \ell^{-\beta-\eps}}  \hspace{-.4cm}  dp dq (q-p)^{-2} \| \hat{a}_q \xi \|^2 \right]^{1/2} \\
 \leq \; &  C\ell^{3\alpha} \| \cN^{1/2} \xi \|^2 + C \ell^{-\alpha/2} \| \cK_\theta^{1/2} \xi \| \| \cN^{1/2} \xi \| + C \ell^{-(\beta+\eps)/2} \| \cK^{1/2} \xi \| \| \cN^{1/2} \xi \| .\end{split} \] 
We conclude that 
\[ \begin{split} \Big| \frac{d}{ds} \langle \xi, e^{sA} \cK e^{-sA} \xi \rangle \Big| 
\leq \; &C \ell^{\alpha/2}  \langle \xi e^{-sA} \cK e^{sA} \xi \rangle + C \ell^{-\alpha/2} \langle \xi, e^{-sA} (\cK_\theta + \cV_N + \cN + 1) e^{sA} \xi \rangle \\ &+ C \ell^{-(\beta + \eps)/2} \| \cK^{1/2} e^{sA} \xi \| \| (\cN + 1)^{1/2} e^{sA} \xi \|. \end{split} \]
With (\ref{rough bound e-A (N+1) eA with l}), \eqref{rough estimate e -sA K theta e sA}, \eqref{rough estimate e -A VN eA with l}, and with (\ref{rough estimate e -sA HN e sA}) (estimating $\cK \leq \cH_N$), we obtain 
\[  \Big| \frac{d}{ds} \langle \xi, e^{sA} \cK e^{-sA} \xi \rangle \Big| \leq C \ell^{-(\alpha+ \beta +\eps)/2} \langle \xi , (\cH_N + \cN + 1) \xi \rangle. \]	
Integrating over $s$ yields \eqref{rough estimate e -sA K e sA with l}.
\end{proof}

\subsection{Analysis of $e^{-A} \mathcal{D}_N e^A$}

In this section we study the contribution arising from the operator $\mathcal{D}_N$, defined in \eqref{eq:def-Geff}. 
\begin{lemma}\label{lm:F-conj}
There is $C>0$ s.t. for all $F \in L^\infty (\bR^3)$, $\alpha , \beta>0$, $\ell\in (0;1)$, we have
	\begin{equation} \label{int F(u) e -A au* au e A}
	\begin{split} 
	\left\vert \int du \ F(u) \langle \xi_1, \left( e^{-A} a_u^* a_u e^A - a_u^* a_u \right) \xi_2 \rangle \right\vert
	&\leq C \ell^\frac{\alpha}{2} \Vert F \Vert_\infty \Vert (\mathcal{N}+1)^\frac{1}{2} \xi_1 \Vert  \Vert (\mathcal{N}+1)^\frac{1}{2} \xi_2 \Vert, \\
	\left\vert \int du \ F(u) \langle \xi_1, \left( e^{-A} b_u^* b_u e^A - b_u^* b_u \right) \xi_2 \rangle \right\vert
	&\leq C \ell^\frac{\alpha}{2} \Vert F \Vert_\infty \Vert (\mathcal{N}+1)^\frac{1}{2} \xi_1 \Vert \Vert (\mathcal{N}+1)^\frac{1}{2} \xi_2 \Vert.
	\end{split} 
	\end{equation}
\end{lemma}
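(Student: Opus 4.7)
The plan is to use the fundamental theorem of calculus together with the commutator bound $[a_u^*a_u, A]$ and the standard Cauchy--Schwarz trick with the kernel $\cgl(x-z)$ split as $\cgl^{1/2}(x-z)\cdot \cgl^{1/2}(x-z)$. Writing
\[
e^{-A} a_u^* a_u e^A - a_u^* a_u = \int_0^1 ds\, e^{-sA}[a_u^* a_u, A] e^{sA},
\]
I first compute the commutator. Since $[a_u^* a_u, \sqrt{(N-\cN)/N}] = 0$ and $[a_u^* a_u, a_v^*] = \delta(u-v) a_u^*$, $[a_u^* a_u, a_v] = -\delta(u-v) a_u$, the commutator $[a_u^* a_u, A]$ is
\[
\frac{1}{\sqrt{N}} \int dy dz\, \Big[\nu_H(u;y) \cgl(u-z) b_u^* a_y^* a_z + \nu_H(x;u) \cgl(x-z) b_x^* a_u^* a_z - \nu_H(x;y) \cgl(x-u) b_x^* a_y^* a_u \Big]+\text{h.c.},
\]
where in the second and third summand the $x$, respectively the $y$, integrations are still present.

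Next, I integrate against $F(u)$ and pair with $e^{sA}\xi_1$ and $e^{sA}\xi_2$. Each of the three terms, and their hermitian conjugates, is estimated in the same manner as in the proof of Lemma~\ref{lm:Npowcubic}: splitting $\cgl(\cdot) = \cgl^{1/2}(\cdot)\cdot \cgl^{1/2}(\cdot)$ and applying Cauchy--Schwarz so that one factor of $\nu_H$ lies in the same integral as one field operator; for instance, for the first term,
\[
\begin{aligned}
&\frac{\|F\|_\infty}{\sqrt{N}} \int du dy dz\, |\nu_H(u;y)| \cgl(u-z) \, \|a_u a_y e^{sA}\xi_1\|\, \|a_z e^{sA} \xi_2\| \\
&\quad\leq \frac{C\|F\|_\infty}{\sqrt{N}} \Big(\int du dy dz\, \cgl(u-z) \|a_u a_y e^{sA}\xi_1\|^2\Big)^{1/2} \Big(\int du dy dz\, |\nu_H(u;y)|^2 \cgl(u-z) \|a_z e^{sA}\xi_2\|^2\Big)^{1/2}.
\end{aligned}
\]
Using $\|\cgl\|_1 = 1$ and $\|\nu_{H,u}\|^2 \leq C\ell^\alpha$ from \eqref{eq:nuHL2bnd}, this is bounded by
\[
\frac{C\ell^{\alpha/2}\|F\|_\infty}{\sqrt{N}} \|(\cN+1) e^{sA}\xi_1\|\, \|(\cN+1)^{1/2} e^{sA}\xi_2\|.
\]
The two remaining terms (and h.c.'s) are handled by the same argument, using $\|\nu_{H,y}\|\leq C\ell^{\alpha/2}|\pn(y)|$ for the terms where the delta identifies the second argument of $\nu_H$. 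Since we work on $\cF^{\leq N}$, we have $\|(\cN+1) \psi\| \leq \sqrt{N+1}\, \|(\cN+1)^{1/2}\psi\|$, which cancels the $1/\sqrt N$ factor. Finally, Lemma~\ref{lm:Npowcubic} replaces $e^{sA}\xi_i$ by $\xi_i$ uniformly in $s\in[0,1]$, giving the first inequality in \eqref{int F(u) e -A au* au e A}.

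For the $b$-version, I would use the identity $b_u^* b_u = a_u^* a_u - a_u^* a_u(\cN-1)/N$, so that
\[
\int du\, F(u) (b_u^* b_u - a_u^* a_u) = -\Big(\int du\, F(u) a_u^* a_u\Big)\cdot \frac{\cN-1}{N}.
\]
The conjugation of the extra factor $(\cN-1)/N$ under $e^A$ is controlled by Lemma~\ref{lm:Npowcubic}, which gives a factor of order $1$, and the difference $e^{-A}(\cN-1)/N e^A - (\cN-1)/N$ is bounded by $C/N \cdot (\cN+1)$ via the commutator computation already used in the proof of Lemma~\ref{lm:Npowcubic}; together with the first bound applied to $a_u^* a_u$, this yields the second inequality. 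The main (minor) obstacle is bookkeeping in the $b$-case to make sure the extra $(\cN-1)/N$ factor does not destroy the $\ell^{\alpha/2}$ gain; this is handled by pulling this bounded operator out before applying the $a$-estimate.
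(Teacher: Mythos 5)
Your treatment of the first ($a$-version) estimate is essentially the paper's proof: fundamental theorem of calculus, computation of $[a_u^*a_u,A]$ (which, after integrating against $F$, produces the factor $F(x)+F(y)-F(z)$), Cauchy--Schwarz with the split $\cgl=\cgl^{1/2}\cdot\cgl^{1/2}$, $\|\nu_{H,x}\|\leq C\ell^{\alpha/2}$ and $\|\cgl\|_1=1$, conversion of $N^{-1/2}\|(\cN+1)\,\cdot\|$ into $\|(\cN+1)^{1/2}\,\cdot\|$ using the truncation $\cN\leq N$, and Lemma~\ref{lm:Npowcubic} to absorb the $e^{sA}$'s. This is correct and matches what the paper does.

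For the $b$-version you take a slightly different route from the paper's ``proceeding similarly'' (which would be to compute $[b_u^*b_u,A]$ directly and observe it has the same delta structure as $[a_u^*a_u,A]$ times the bounded prefactor $\frac{N-\cN+1}{N}$, plus $O(N^{-1})$ corrections from $[\cN,A]$): you instead reduce to the $a$-case via $b_u^*b_u - a_u^*a_u = -a_u^*a_u\frac{\cN-1}{N}$. The strategy is sound, but your statement of the key error bound is too weak. You claim $e^{-A}\frac{\cN-1}{N}e^A - \frac{\cN-1}{N}$ is ``bounded by $C(\cN+1)/N$''; with only that, after pairing with $\int F a_u^*a_u$ and absorbing the factors of $N$ through $\cN\leq N$, you obtain $C\|F\|_\infty\|(\cN+1)^{1/2}\xi_1\|\|(\cN+1)^{1/2}\xi_2\|$ \emph{without} the crucial $\ell^{\alpha/2}$ prefactor, so the claimed inequality does not follow. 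The $\ell^{\alpha/2}$ gain is in fact available: writing $e^{-A}\cN e^A-\cN=\int_0^1 ds\, e^{-sA}[\cN,A]e^{sA}$ and $[\cN,A]=N^{-1/2}\int \nu_H\cgl\,(b_x^*a_y^*a_z+\text{h.c.})$, the same Cauchy--Schwarz argument as in the $a$-case (using $\|\nu_{H,x}\|\leq C\ell^{\alpha/2}$) yields
\[
\Big|\big\langle M\xi_1,\big(e^{-A}\tfrac{\cN-1}{N}e^A-\tfrac{\cN-1}{N}\big)\xi_2\big\rangle\Big|
\leq C\ell^{\alpha/2}\|F\|_\infty\|(\cN+1)^{1/2}\xi_1\|\|(\cN+1)^{1/2}\xi_2\|
\]
with $M=\int F a_u^*a_u\,du$, after again using $\cN\leq N$ and Lemma~\ref{lm:Npowcubic}. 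You must carry this $\ell^{\alpha/2}$ explicitly through the estimate of the $(\cN-1)/N$-piece; as written you would lose it.
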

\begin{proof}
With (\ref{eq:comm-b2}), we find 
\begin{align*}
	&\int du \ F(u) \left( e^{-A} a_u^* a_u e^A - a_u^* a_u\right) \\
	&= \frac{1}{\sqrt{N}} \int_0^1 ds \int dx dy dz \ (F(x)+F(y)-F(z) ) \nu_H(x;y) \cgl(x-z) \left[ e^{-sA}b_x^* a_y^* a_z e^A + h.c.\right].
	\end{align*}
	Using \eqref{Lp norms of widecheck chi L,l} and \eqref{rough bound e-A (N+1) eA with l} we obtain
	\[ \begin{split}
	&\left\vert \frac{1}{\sqrt{N}} \int_0^1 ds \int dx dy dz \ (F(x)+ F(y)-F(z)) \nu_H(x;y) \cgl(x-z) \langle \xi_1, e^{-sA} b_x^* a_y^* a_z e^{sA} \xi_2 \rangle \right\vert \\
	&\leq \frac{3 \Vert F \Vert_{\infty}}{\sqrt{N}} \int_0^1 ds \int dx dy dz \  \vert \nu_H(x;y) \vert \vert \cgl(x-z) \vert  \Vert a_x a_y e^{sA} \xi_1 \Vert  \Vert a_z e^{sA} \xi_2 \Vert \\
	&\leq C \Vert F \Vert_\infty \ell^\frac{\alpha}{2} \Vert (\mathcal{N}+1)^\frac{1}{2} \xi_1 \Vert  \Vert (\mathcal{N}+1)^\frac{1}{2} \xi_2 \Vert.
	\end{split}\]
	Interchanging $\xi_1$ and $\xi_2$ yields the same estimate for the hermitian conjugate and implies the first estimate in \eqref{int F(u) e -A au* au e A}. Proceeding similarly, we obtain also the second bound.
	\end{proof} 
	
Using Lemma \ref{lm:F-conj}, we can easily control the action of $e^A$ on the operator $\cD_N$. The proof of the next lemma follows very closely the proof of \cite[Prop. 8.7]{BBCS4}. 
\begin{lemma}
Let $\alpha, \beta>0$. Then there exists $C>0$ such that for all $\ell\in (0;1)$ holds
\begin{equation} \label{conjugation DN}
e^{-A} \mathcal{D}_N e^{A} = \mathcal{D}_N + \delta_{\mathcal{D}_N},
\end{equation}
where
\[
\pm \delta_{\mathcal{D}_N}  \leq C \ell^\frac{\alpha}{2} (\mathcal{N}+1).
\]
\end{lemma}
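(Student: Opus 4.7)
\medskip

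\noindent\textbf{Proof proposal.} The operator $\cD_N$ is the sum of a constant $N\cE_{GP}(\ph_0)$, which trivially commutes with $e^A$, and two terms proportional to $\cN$ and $\cN^2/N$. I will control these two contributions separately, both by a Duhamel/Gronwall argument driven by the commutator with $A$ and by the a priori bound of Lemma~\ref{lm:Npowcubic}.

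\emph{The linear piece.} For the term $-\eps_{GP}\,\cN$ I simply apply the first estimate in Lemma~\ref{lm:F-conj} to the choice $F\equiv 1$, writing $\cN=\int a_u^*a_u\,du$. Since $\|F\|_\infty=1$, this yields directly
\[
\bigl| \langle \xi_1, (e^{-A}\cN e^A-\cN)\xi_2 \rangle \bigr|
\leq C\ell^{\alpha/2}\,\|(\cN+1)^{1/2}\xi_1\|\,\|(\cN+1)^{1/2}\xi_2\|,
\]
which by polarization gives the operator inequality $\pm(e^{-A}\cN e^A-\cN)\le C\ell^{\alpha/2}(\cN+1)$. Multiplying by $\eps_{GP}$ (which is $\mathcal O(1)$) produces an error of the required form.

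\emph{The quadratic piece.} For $4\pi\frak a_0\|\ph_0\|_4^4\,\cN^2/N$ I use the Duhamel identity
\[
e^{-A}\cN^2 e^A-\cN^2 = \int_0^1 ds\, e^{-sA}[\cN^2,A]e^{sA},
\qquad [\cN^2,A]=\cN[\cN,A]+[\cN,A]\cN.
\]
A short calculation from \eqref{eq:comm-b2} and the fact that $b_x^*a_y^*a_z$ has net particle number $+1$ gives
\[
[\cN,A] = \frac{1}{\sqrt N}\int dx\,dy\,dz\,\nu_H(x;y)\cgl(x-z)\bigl(b_x^*a_y^*a_z+a_z^*a_y b_x\bigr).
\]
Commuting one factor of $\cN$ through (e.g.\ $\cN\, b_x^*a_y^*a_z=b_x^*a_y^*a_z(\cN+1)$) and applying Cauchy--Schwarz exactly as in the proof of Lemma~\ref{lm:F-conj}, together with $\|\nu_{H,x}\|\le C\ell^{\alpha/2}|\ph_0(x)|$ from \eqref{eq:nuHL2bnd} and $\|\cgl\|_1=1$ from \eqref{Lp norms of widecheck chi L,l}, I obtain
\[
\bigl|\langle\xi,[\cN^2,A]\xi\rangle\bigr|
\leq C\ell^{\alpha/2}\,\|(\cN+1)^{1/2}\xi\|\,\|(\cN+1)^{3/2}\xi\|
\leq C\ell^{\alpha/2}\,\langle\xi,(\cN+1)^2\xi\rangle.
\]
Inserting this in the Duhamel identity and using Lemma~\ref{lm:Npowcubic} to move the bound through $e^{\pm sA}$ for $s\in[0;1]$ yields $\pm(e^{-A}\cN^2 e^A-\cN^2)\le C\ell^{\alpha/2}(\cN+1)^2$. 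Dividing by $N$ and using $(\cN+1)/N\le C$ on $\cF^{\leq N}$ converts this into the desired bound $\pm C\ell^{\alpha/2}(\cN+1)$ for the contribution to $\delta_{\cD_N}$.

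\emph{Main concern.} There is no real obstacle: the whole argument is a clean application of Lemma~\ref{lm:F-conj} and the a priori estimate of Lemma~\ref{lm:Npowcubic}. The only point to be careful about is the bookkeeping in the quadratic piece, where one must ensure that the $(\cN+1)^2$ gained from commuting $\cN$ past the cubic $b^*a^*a$ structure is compensated by the prefactor $1/N$ coming from the coefficient of $\cN^2$ in $\cD_N$; this is where the truncation to $\cF^{\leq N}$ is essential.
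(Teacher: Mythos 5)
Your proof matches the approach the paper intends (it states only that the result ``follows very closely the proof of \cite[Prop. 8.7]{BBCS4}'' and refers to Lemma \ref{lm:F-conj}): handle $-\eps_{GP}\cN$ directly with Lemma \ref{lm:F-conj} applied to $F\equiv 1$, and handle $\cN^2/N$ by Duhamel, the commutator $[\cN^2,A]$, and Lemma \ref{lm:Npowcubic}. The conclusion is correct. However, there is one intermediate inequality that is false as written: you claim
\[
\|(\cN+1)^{1/2}\xi\|\,\|(\cN+1)^{3/2}\xi\|\le C\langle\xi,(\cN+1)^2\xi\rangle,
\]
but in fact the \emph{opposite} holds by Cauchy--Schwarz, since $\langle\xi,(\cN+1)^2\xi\rangle=\langle(\cN+1)^{1/2}\xi,(\cN+1)^{3/2}\xi\rangle\le\|(\cN+1)^{1/2}\xi\|\,\|(\cN+1)^{3/2}\xi\|$. (For a state concentrated mostly at $n=0$ with a small tail at large $n$, the left-hand side can be of order $\sqrt n$ while the right-hand side stays of order one.)

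The argument nonetheless goes through: you should not discard the factor $N^{-1/2}$ coming from the definition of $A$ before dividing by the $N$ in the coefficient of $\cN^2$. Keeping it, the Cauchy--Schwarz estimate gives
\[
\bigl|\langle\xi,[\cN^2,A]\xi\rangle\bigr|
\le \frac{C\ell^{\alpha/2}}{\sqrt N}\,\|(\cN+1)^{3/2}\xi\|\,\|(\cN+1)\xi\|,
\]
so after Duhamel, Lemma \ref{lm:Npowcubic}, and division by $N$,
\[
\frac1N\bigl|\langle\xi,(e^{-A}\cN^2 e^A-\cN^2)\xi\rangle\bigr|
\le \frac{C\ell^{\alpha/2}}{N^{3/2}}\,\|(\cN+1)^{3/2}\xi\|\,\|(\cN+1)\xi\|
\le C\ell^{\alpha/2}\,\langle\xi,(\cN+1)\xi\rangle ,
\]
using $\cN\le N$ on $\cF^{\leq N}$ in the last step. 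This is precisely the bookkeeping you flag as the ``main concern'', and it works — just replace the invalid operator-inequality step by keeping the $N$-powers explicit until the end.
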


\subsection{Analysis of $e^{-A} \mathcal{Q}_N e^A$}

In this section we study the contribution to $\cJ_N$ arising from the operator $\mathcal{Q}_N$, defined in \eqref{eq:def-Geff}. 
\begin{lemma}
Let $0< \beta < \alpha \leq 4 \beta$, $\alpha \geq 4$ and $\varepsilon \in (0;\alpha -\beta)$. Then there exists $C>0$ such that for $\ell\in (0;1)$ small enough and $N$ large enough, we have 
\begin{equation} \label{conjugation QN}
e^{-A} \mathcal{Q}_N e^A = \mathcal{Q}_N + \delta_{\mathcal{Q}_N},
\end{equation}
where
\[
\pm \delta_{\mathcal{Q}_N} \leq C \big[ \ell^{(\alpha - \beta-\eps)/2} + \ell^{(5\alpha - 7\beta-2\eps)/4} \big] \, (\cH_N + \cN +  1). \]
\end{lemma}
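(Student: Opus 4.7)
\bigskip

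\noindent\textbf{Plan of proof.} The approach is to split $\cQ_N = \cQ_{N,1}+\cQ_{N,2}$ into its diagonal and off-diagonal quadratic parts, where
\[
\cQ_{N,1} := 2\wh V(0)\int dx\,|\pn(x)|^2 b_x^* b_x, \qquad
\cQ_{N,2} := 4\pi\an \int dxdy\,\wch\chi_{H^c}(x-y)\pn(x)\pn(y)\,(b_x^* b_y^* + \text{h.c.}).
\]
For the diagonal piece $\cQ_{N,1}$, note that $F(x):=2\wh V(0)|\pn(x)|^2$ belongs to $L^\infty$ with $\|F\|_\infty\le C$, because $\|\pn\|_\infty\le C$. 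Therefore the second estimate in Lemma \ref{lm:F-conj} applies directly and yields
\[
\pm\bigl(e^{-A}\cQ_{N,1} e^A - \cQ_{N,1}\bigr) \le C\ell^{\alpha/2}(\cN+1),
\]
which is already stronger than the required bound. All the real work sits in the off-diagonal piece $\cQ_{N,2}$.

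For $\cQ_{N,2}$, I would write
\[
e^{-A}\cQ_{N,2} e^A - \cQ_{N,2} = \int_0^1 ds\, e^{-sA}[\cQ_{N,2},A]\, e^{sA}
\]
and compute the commutator in position space. Using $[b_x^*,a_y^* a_z]=-\delta(x-z)b_y^*$ (and the same identity with $y$ in place of $x$) from \eqref{eq:comm-b2}, together with $[b_x^*,b_y^*]=0$, the double commutator $[b_x^*b_y^*,\,b_u^* a_v^* a_w]$ produces delta contractions $\delta(x-w)$ and $\delta(y-w)$. After integrating $w$ out, this yields cubic operators of the schematic form $b^*b^*b^*+\text{h.c.}$ with integral kernels that couple $\wch\chi_{H^c}$, $\nu_H$ and $\cgl$ (for instance, a term proportional to $N^{-1/2}\int dxdydudv\, \wch\chi_{H^c}(x-y)\pn(x)\pn(y)\nu_H(u;v)\cgl(u-y)\,b_x^* b_u^* b_v^*$ and similar pieces). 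At the very end the conjugation $e^{-sA}\,\cdot\, e^{sA}$ is absorbed using the a priori growth estimates Lemma \ref{lm:Npowcubic}, \eqref{rough estimate e -sA K theta e sA} and \eqref{rough estimate e -A VN eA with l}, so the task reduces to bounding the unconjugated cubic operator in terms of $\cH_N+\cN+1$.

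To extract the claimed powers of $\ell$, the natural step is to pass to momentum space. There one exploits the disjoint support of $\chi_{H^c}$ (living on $|p|\le\ell^{-\alpha}$) and $\chi_H$ inside $\nu_H$ (living on $|q|\ge\ell^{-\alpha}$): the Fourier transform of $\cgl$ forces the third momentum to essentially lie on $|r|\lesssim \ell^{-\beta}$, so at least one of the three momentum labels of the cubic operator is large ($\gtrsim \ell^{-\alpha}$), while the smallness of $\chi_H\hat G$ (of size $\ell^{\alpha/2}$ in $L^2$), the Gaussian factor $\gl$ and the $L^\infty$/decay bound on $\hat\pn$ (see \eqref{eq:decphhat}) provide the necessary $\ell$-gain. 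A Cauchy–Schwarz splitting, pairing one $b^*$ with $|p|^2$ (to produce $\cK$ or rather $\cK_{\ell^{-\beta-\eps}}+\cV_N$ after reconverting to position space) and the remaining two factors with $\cN$, is where the exponents $(\alpha-\beta-\eps)/2$ and $(5\alpha-7\beta-2\eps)/4$ arise: the first exponent comes from the ``easy'' splitting analogous to the one used in the proof of \eqref{eq:commVNA2}, while the second exponent encodes the more delicate region in which one has to distribute four powers of momentum across three cubic slots. The main obstacle is precisely this last bookkeeping — controlling the cubic contributions without loss of powers of $\ell$ requires using at the same time (i) the low-momentum cutoff of $\wch\chi_{H^c}$, (ii) the high-momentum support of $\chi_H\hat G$ and its $|p|^{-2}$ pointwise bound \eqref{eq:whGbnd}, and (iii) the Gaussian decay of $\gl$, all balanced against the kinetic and interaction energy estimates in the a priori Lemmas above, analogously to the treatment of the cubic conjugation in \cite[Sect.~8]{BBCS4}.
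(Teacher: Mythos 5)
Your overall strategy matches the paper's: split $\mathcal{Q}_N$ into diagonal and off-diagonal pieces, dispatch the diagonal part with Lemma \ref{lm:F-conj}, and for the off-diagonal part compute the commutator with $A$, switch to momentum space, and exploit the disjoint supports of $\chi_{H^c}$ and $\chi_H$ together with the Gaussian decay of $\gl$, the bound $|\hat G(p)|\lesssim|p|^{-2}$, and the a priori conjugation estimates. That part of the plan is sound.

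There is, however, a genuine gap in the commutator computation. You commute the off-diagonal part of $\mathcal{Q}_N$ only against the $b^*a^*a$ piece of $A$, using the identity $[b^*,a^*a]$; this indeed produces the clean $b^*b^*b^*$ cubic contractions you display. But $A = \frac{1}{\sqrt N}\int\nu_H\,\cgl\,(b^*a^*a - \text{h.c.})$ also contains $a^*ab$, and the cross term $[b_u^*b_v^*, a_w^* a_z b_x]$ involves $[b_u^*,b_x]$, which by \eqref{eq:comm-b} is not zero: it equals $-(1-\cN/N)\delta(u-x)+N^{-1}a_x^*a_u$. Carrying this out produces, in addition to the cubic terms, a quintic operator of the form $N^{-3/2}\int \cdots\, b_z^*a_u^*a_v^*a_xa_y$, quadratic terms with a $(1-\cN/N)$ factor, and a linear term. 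These are the paper's $\tau_3,\dots,\tau_6$, and they are not ``similar pieces'' to the cubic ones; the quintic operator in particular has to be estimated separately (the paper bounds it by $C\ell^{\alpha}\|(\cK+\cN)^{1/2}\xi\|^2$). Without enumerating these, the proof is incomplete.

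Related to this, your attribution of the two exponents is off, and you cite the wrong a priori lemma for the final step. The raw commutator bound has two parts, $C\ell^{\alpha}\|(\cK+\cN)^{1/2}\xi\|^2$ (coming from the quintic term) and $C\ell^{3(\alpha-\beta)/2}\|(\cK+\cN)^{1/2}\xi\|\,\|(\cN+1)^{1/2}\xi\|$ (from the cubic contractions). The final exponents in the statement then arise by conjugating with $e^{sA}$ and paying the $\ell^{-(\alpha+\beta+\eps)/2}$ cost from \eqref{rough estimate e -sA K e sA with l} for the \emph{full} kinetic energy, not from \eqref{rough estimate e -sA K theta e sA} for $\cK_\theta$. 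Concretely, $\ell^{\alpha}\cdot\ell^{-(\alpha+\beta+\eps)/2}$ gives $(\alpha-\beta-\eps)/2$, and $\ell^{3(\alpha-\beta)/2}\cdot\ell^{-(\alpha+\beta+\eps)/4}$ gives roughly $(5\alpha-7\beta-\eps)/4$; so the exponent you call the ``easy'' one is in fact tied to the quintic term you omitted.
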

\begin{proof}
	The first contribution to $\mathcal{Q}_N$ in (\ref{eq:def-Geff}) can be handled with \eqref{int F(u) e -A au* au e A}; it produces an error term bounded by $C \ell^\frac{\alpha}{2} (\mathcal{N}+1)$. As for the second contribution to $\mathcal{Q}_N$, we compute, with the commutation relations (\ref{eq:comm-b}), (\ref{eq:comm-b2}), 
	\[ \begin{split} 
	&4\pi \frak{a}_0 \int du dv \, \widecheck{\chi}_{H^c}(u-v) \pn(u) \pn(v) [b_u^* b_v^*, A] \\
	&=-\frac{8\pi a_0}{\sqrt{N}} \int dx dy dz du \ \widecheck{\chi}_{H^c}(u-z) \pn(u) \pn(z) \nu_H(x;y)  \cgl(x-z) b_x^* b_y^* b_u^* \\
	&\quad + \frac{8\pi a_0}{\sqrt{N}} \int dx dy dz du \ \widecheck{\chi}_{H^c}(u-y) \pn(u) \pn(y) \nu_H(x;y)  \cgl(x-z) b_z^* b_u^* b_x \\
	&\quad +\frac{8 \pi a_0}{\sqrt{N}} \int dx dy dz du \ \widecheck{\chi}_{H^c}(u-x) \pn(u) \pn(x) \nu_H(x;y)  \cgl(x-z) b_z^* a_u^* a_y \left(1-\frac{\mathcal{N}+1}{N} \right) \\
	&\quad + \frac{8\pi a_0}{\sqrt{N}} \int dx dy dz \ \widecheck{\chi}_{H^c}(x-y) \pn(x) \pn(y) \nu_H(x;y)  \cgl(x-z)b_z^* \left( 1- \frac{\mathcal{N}+1}{N} \right) \\
	&\quad - \frac{8\pi a_0}{N \sqrt{N}} \int dx dy dz du dv \ \widecheck{\chi}_{H^c}(u-v) \pn(u) \pn(v) \nu_H(x;y)  \cgl(x-z) b_z^* a_u^* a_v^* a_x a_y \\
	&\quad - \frac{16 \pi a_0}{N \sqrt{N}} \int dx dy dz du \ \widecheck{\chi}_{H^c}(u-y) \pn(u) \pn(y) \nu_H(x;y)  \cgl(x-z) b_z^* a_u^* a_x \\
	&=: \sum_{j=1}^6 \tau_j.
	\end{split} \]
Switching to momentum space we get, with the notation introduced in (\ref{eq:aph}), 
	\begin{equation} \label{tau1}
	\begin{split}
	\vert \langle \xi, \tau_1 \xi \rangle \vert
	&\leq \frac{C}{\sqrt{N}} \int dp ds dt \ \chi_{H^c}(p) \vert \hat{G}(s) \vert \chi_H(s) \gl(t) \vert\hat{\pn}(p+t) \vert \\
	&\qquad \hspace{3cm} \times \Vert (\mathcal{N}+1)^{-\frac{1}{2}}\hat{a} (\hat{\ph}_p) \hat{a} (\hat{\ph}_{-s}) \hat{a}_{s+t} \xi \Vert  \Vert (\mathcal{N}+1)^\frac{1}{2} \xi \Vert \\
	&\leq C \ell^\frac{3(\alpha-\beta)}{2} \Vert (\mathcal{K} + \cN )^\frac{1}{2} \xi \Vert \Vert (\mathcal{N}+1)^\frac{1}{2} \xi \Vert,
	\end{split}
	\end{equation}
where we used the bound (\ref{eq:aph2}), $\| \gl \| \leq C \ell^{-3\beta/2}$ and $\int_{|s| \geq \ell^{-\alpha}} |s|^{-6} \leq C \ell^{3\alpha}$. Similarly, 
\begin{equation} \label{tau2}
	\vert \langle \xi, \tau_2 \xi \rangle \vert , \vert \langle \xi, \tau_3 \xi \rangle \vert
	\leq C \ell^\frac{3(\alpha-\beta)}{2}  \Vert (\mathcal{K} + \cN )^\frac{1}{2} \xi \Vert 
	\Vert (\mathcal{N}+1)^\frac{1}{2} \xi \Vert.
	\end{equation}
Moreover, we can easily bound 
\begin{equation}\label{tau4} \vert \langle \xi, \tau_4 \xi \rangle \vert, \vert \langle \xi, \tau_6 \xi \rangle \vert \leq C \ell^{3\alpha} \| (\cN + 1)^{1/2} \xi \|^2 \end{equation} 
for $N \in \bN$ large enough (here, we can use the small factors $N^{-1/2}$ and $N^{-3/2}$ to gain arbitrary decay in $\ell$). As for $\tau_5$, switching to momentum space we estimate \begin{equation}\label{tau5} \begin{split} 
\vert \langle \xi, \tau_5 \xi \rangle \vert \leq \; &\frac{C}{N^{3/2}} \int dp ds dt \, \chi_{H^c} (p) |\hat{G} (s)| \chi_H (s) \gl (t)  \, \| \hat{a}_t \hat{a} (\hat{\ph}_p) \hat{a} (\hat{\ph}_{-p}) \xi \| \| \hat{a}_{s+t} \hat{a} (\hat{\ph}_s) \xi \| \\ \leq \; & \frac{C}{N} \left[ \int_{|s| \geq \ell^{-\alpha}} dp ds dt \, \frac{p^2}{|s|^6} \| \hat{a}_t \hat{a} (\hat{\ph}_p) \xi \|^2 \right]^{1/2} \left[ \int_{|p| \leq \ell^{-\alpha}} dp ds dt  \, \frac{s^2}{p^2}  \| \hat{a}_{s+t} \hat{a} (\hat{\ph}_s) \xi \|^2 \right]^{1/2} \\ \leq \; &C \ell^{\alpha}  \Vert (\mathcal{K} + \cN )^\frac{1}{2} \xi \Vert^2,
 \end{split} \end{equation} 
where we used $|\hat{G} (s)| \leq C / s^2$ and the bound (\ref{eq:aph2}). Combining \eqref{tau1}, \eqref{tau2}, \eqref{tau4} and \eqref{tau5} we conclude that
\[ \vert \langle \xi, [\mathcal{Q}_N, A] \xi \rangle \vert \leq C \ell^\alpha \| (\cK + \cN )^{1/2} \xi \|^2 + C \ell^{3(\alpha-\beta)/2} \| (\cK+ \cN)^{1/2} \xi \| \| (\cN+1)^{1/2} \xi \| .\]
With \eqref{rough bound e-A (N+1) eA with l},\eqref{rough estimate e -sA K e sA with l}, we obtain 
\[ \begin{split} \Big| \langle \xi, e^{-A} \mathcal{Q}_N e^A \xi \rangle - \langle \xi, \mathcal{Q}_N \xi \rangle \Big| &\leq \int_0^1 |\langle \xi, e^{-sA} [\mathcal{Q}_N , A] e^{sA} \xi \rangle | \, ds \\ &\leq C \big[ \ell^{(\alpha - \beta-\eps)/2} + \ell^{(5\alpha - 7\beta-2\eps)/4} \big] \langle \xi, (\cH_N + \cN +1) \xi \rangle \, . \end{split} \] 
\end{proof}

\subsection{Contributions from $e^{-A} \mathcal{K} e^A$}

To control the action of $e^A$ on the kinetic energy operator $\cK$, we need better estimates on the commutator $[\cK, A]$ and also on a term arising from the second commutator $[[\cK,A],A]$. 
\begin{lemma} 
Let $0< \beta < \alpha \leq 4 \beta$, $\alpha > 4$, and $\varepsilon \in (0;\alpha -\beta)$. Then there exists a constant $C>0$ such that for all $\ell > 0$ sufficiently small and for $N$ sufficiently large we have
	\begin{equation} \label{[K,A]}
	\begin{split}
	[\mathcal{K}, A ]
	&= -\frac{1}{\sqrt{N}} \int dx dy dz \ N^3 (Vf_\ell )(N(x-y)) \pn(y) \cgl(x-z) [b_x^* a_y^* a_z + h.c.] \\
	&\quad + \frac{1}{\sqrt{N}} \int dx dy dz \ 8 \pi a_0 \widecheck{\chi}_{H^c}(x-y) \cgl(x-z) \pn(y) [b_x^* a_y^* a_z +h.c.]  + \wt{\delta}_\mathcal{K}
	\end{split}
	\end{equation}
	where 
	\[
	\begin{split}
	\vert \langle \xi, \wt{\delta}_\mathcal{K} \xi \rangle \vert
	\leq \; &C \ell^\frac{(\alpha -4)}{2} \Vert (\mathcal{N}+1)^\frac{1}{2} \xi \Vert^2 
	+ C \ell^\frac{\alpha}{2}  \| (\cK+ \cN)^{1/2} \xi \| \| (\cK_{\ell^{-\beta-\eps}} + \cN + 1)^{1/2} \xi \| .
	\end{split}
	\]
	Moreover,  
	\begin{equation} \label{double commutator kinetic energy}
	\begin{split}
\Big\vert \frac{1}{\sqrt{N}} \int dx dy dz \ 8 \pi a_0 &\widecheck{\chi}_{H^c}(x-y) \cgl(x-z) \pn(y) \langle \xi, [b_x^* a_y^* a_z, A] \xi \rangle \Big\vert \\
	\leq \; &C \ell^\frac{3(\alpha - \beta)}{2}  \Vert (\mathcal{K} + \cN)^\frac{1}{2} \xi \Vert 
 \Vert (\mathcal{N}+1)^\frac{1}{2} \xi \Vert 	+ C\ell^{\alpha} \Vert (\mathcal{K} + \cN)^\frac{1}{2} \xi \Vert^2 
 \\ &+ C \ell^\frac{(\alpha-\beta)}{2}  \Vert \mathcal{K}_{\ell^{-\beta - \varepsilon}}^\frac{1}{2} \xi \Vert \Vert (\mathcal{N}+1)^\frac{1}{2} \xi \Vert. \end{split}
	\end{equation}
\end{lemma}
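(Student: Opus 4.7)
For the first identity \eqref{[K,A]}, I would begin from the decomposition $[\mathcal{K}, A] = T_1 + T_2 + S_1 + S_2 + S_3$ implicit in the proof of \eqref{eq:commKA} (with $T_1$ further split there by means of the scattering equation \eqref{eq:scatlN} into $S_1 + S_2 + S_3$, and with $T_{12}, T_{13}, S_3$ all already of order $\ell^{(\alpha-4)/2}(\cN+1)$ or better). The term $S_1$ already coincides with the first main contribution on the r.h.s. of \eqref{[K,A]}, and the estimate $\pm T_2 \leq C \ell^{\alpha/2} \cK$ places $T_2$ into $\widetilde{\delta}_\cK$. The heart of the argument is to replace, inside $S_2$, the kernel $[N^3(Vf_\ell)(N\cdot) \ast \widecheck{\chi}_{H^c}](x-y)$ by $8\pi a_0 \widecheck{\chi}_{H^c}(x-y)$: passing to Fourier space, the difference has symbol $[\widecheck{(Vf_\ell)}(p/N) - 8\pi a_0]\chi_{H^c}(p)$, which by \eqref{N3 Vfl - 8pi a0} is bounded by $C(|p|/N + (N\ell)^{-1})$ on $\{|p| \leq \ell^{-\alpha}\}$. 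Writing the resulting remainder in momentum space as $\frac{1}{\sqrt{N}}\int dp\,dq\,[\widecheck{(Vf_\ell)}(p/N) - 8\pi a_0]\chi_{H^c}(p)\,\gl(q)\langle \xi, \hat{b}^*(\widehat{\ph}_p)\hat{a}_{q-p}^* \hat{a}_q \xi\rangle + \text{h.c.}$ and applying Cauchy--Schwarz, one bounds it via \eqref{eq:aph1}--\eqref{eq:aph2}, splitting the $q$-integration at $|q|=\ell^{-\beta-\varepsilon}$: for $|q|>\ell^{-\beta-\varepsilon}$ the Gaussian $\gl$ provides arbitrary polynomial decay in $\ell$, while for $|q|\leq\ell^{-\beta-\varepsilon}$ one uses $\cK_{\ell^{-\beta-\varepsilon}}$. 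This produces the second mixed bound in $\widetilde{\delta}_\cK$.

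For the double commutator estimate \eqref{double commutator kinetic energy}, I would compute $[b_x^* a_y^* a_z, A]$ by means of the commutation relations \eqref{eq:ccr}, \eqref{eq:comm-b}, \eqref{eq:comm-b2}. Since both $b_x^* a_y^* a_z$ and $A$ are cubic, each of the six single contractions (three creators in the first factor paired with three annihilators in $A$, plus symmetric combinations with $A^*$) produces a quartic term, while double contractions give quadratic pieces; the structure is closely parallel to the six terms $\tau_1,\dots,\tau_6$ in the analysis of $[\cQ_N, A]$. Multiplying by the external kernel $8\pi a_0 \widecheck{\chi}_{H^c}(x-y)\cgl(x-z)\pn(y)$, integrating in $x,y,z$, and switching to momentum space, the external factor contributes $\chi_{H^c}(p)\gl(q)\widehat{\pn}(p+t)$ while $A$ supplies $\widehat{G}(s)\chi_H(s)\widehat{\pn}(s+r)\gl(u)$. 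Each resulting term is then estimated by Cauchy--Schwarz using $|\widehat{G}(s)| \leq C/|s|^2$ together with $|s|\geq \ell^{-\alpha}$ (yielding powers $\ell^\alpha$), the polynomial decay of $\widehat{\pn}$ from \eqref{eq:decphhat}, and $\|\gl\|=C\ell^{-3\beta/2}$; whenever $p^2$ or $q^2$ factors appear they are absorbed either into $\cK_{\ell^{-\beta-\varepsilon}}$ (for $|q|\leq\ell^{-\beta-\varepsilon}$) or into the Gaussian decay of $\gl$ (for $|q|>\ell^{-\beta-\varepsilon}$), and the relation \eqref{eq:aph2} is used whenever a genuine kinetic factor $\hat{a}(\widehat{\ph}_p)$ with a weight $|p|$ arises. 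Reassembling, the contributions group naturally into three regimes producing respectively $\ell^{3(\alpha-\beta)/2}$ (from the fully ``small'' terms), $\ell^{\alpha}$ (from the quintic-contracted-once terms where both cutoffs reinforce), and $\ell^{(\alpha-\beta)/2}$ (from the $\cK_{\ell^{-\beta-\varepsilon}}$ borderline terms), as claimed.

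The main obstacle is bookkeeping: simultaneously extracting three independent sources of smallness---the restriction $|p|\leq \ell^{-\alpha}$ from $\chi_{H^c}$, the restriction $|s|\geq \ell^{-\alpha}$ from $\chi_H$ together with the $s^{-2}$ decay of $\widehat{G}$, and the Gaussian cutoff $\gl(q)$ at scale $\ell^{-\beta}$---while leaving only the \emph{truncated} kinetic energy $\cK_{\ell^{-\beta-\varepsilon}}$ (rather than the full $\cK$) on the right-hand side. This forces a careful case split for \emph{every} one of the many commutator terms, and the precise matching of indices between the two cubic factors (which determines whether a given momentum is restricted by $\chi_H$, $\chi_{H^c}$, or $\gl$) must be verified term by term before the Cauchy--Schwarz estimates can be applied as above.
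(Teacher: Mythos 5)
Your outline of the first identity \eqref{[K,A]} correctly identifies the decomposition $[\cK,A]=T_2+S_1+S_2+\delta_\cK$ from \eqref{eq:commKA}, the role of $S_1$ as the first main term, and the key step of replacing $N^3(Vf_\ell)(N\cdot)*\widecheck{\chi}_{H^c}$ by $8\pi\frak{a}_0\widecheck{\chi}_{H^c}$ inside $S_2$ via \eqref{N3 Vfl - 8pi a0} and a momentum-space Cauchy--Schwarz with the $|q|=\ell^{-\beta-\eps}$ split. However, there is a genuine gap in your treatment of $T_2$. You claim that the rough bound $\pm T_2 \leq C\ell^{\alpha/2}\cK$ from \eqref{eq:T2S12} already places $T_2$ into $\wt\delta_\cK$, but that bound corresponds to $C\ell^{\alpha/2}\|\cK^{1/2}\xi\|^2$, i.e.\ the full kinetic energy in \emph{both} factors. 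The claimed form of $\wt\delta_\cK$ is asymmetric: one factor carries $\cK$, the other only the truncated $\cK_{\ell^{-\beta-\eps}}$. This asymmetry is not a cosmetic point; it is precisely what later makes the Gronwall argument work, since conjugation by $e^{sA}$ blows up $\cK$ by a factor $\ell^{-(\alpha+\beta+\eps)/2}$ (see \eqref{rough estimate e -sA K e sA with l}) but only grows $\cK_{\ell^{-\beta-\eps}}$ by a constant. You therefore cannot use the crude bound on $T_2$; you must redo it in momentum space, writing $T_2 = -\tfrac{2}{\sqrt N}\int dp\,dq\;(p\cdot q)\,\hat G(p)\chi_H(p)\gl(q)[\hat b^*_{p+q}\hat a(\hat\ph_{-p})\hat a_q+\text{h.c.}]$, applying Cauchy--Schwarz so that the $|p|$ weight lands on the full-kinetic-energy factor, and then splitting the $q$-integral at $\ell^{-\beta-\eps}$ (using $\cK_{\ell^{-\beta-\eps}}$ for $|q|\leq\ell^{-\beta-\eps}$ and the Gaussian $\gl$ for $|q|>\ell^{-\beta-\eps}$) --- exactly the mechanism you correctly describe for the $S_2$ remainder.

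For the second part, your sketch describes the right overall strategy (compute $[b_x^*a_y^*a_z,A]$ from \eqref{eq:comm-b}, \eqref{eq:comm-b2}, pass to momentum space, exploit the three sources of decay from $\chi_{H^c}$, $\chi_H$ together with $|\hat G(s)|\lesssim|s|^{-2}$, and $\gl$, and split at $\ell^{-\beta-\eps}$ where needed). But the analogy to the six terms $\tau_1,\ldots,\tau_6$ from $[\cQ_N,A]$ is misleading: $\cQ_N$ is quadratic while the cubic-cubic commutator here yields fourteen distinct terms $Y_1,\ldots,Y_{14}$, whose momentum-index matchings determine, term by term, which of the three decay mechanisms is available. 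Your ``three regimes'' description of the final bound is accurate, but as written your proposal does not demonstrate that all of the (many more than six) terms actually land in one of those three buckets, and this must be verified concretely.
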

\begin{proof}
We start with the decomposition $[\cK, A] = T_2 + S_1 + S_2 + \delta_\cK$ from (\ref{eq:commKA}). 
The term $S_1$ corresponds exactly to the first term on the r.h.s. of (\ref{[K,A]}). To show (\ref{[K,A]}), we prove that $T_2$ is small and that $S_2$ corresponds to the second term on the r.h.s. of (\ref{[K,A]}), up to small corrections. Switching to momentum space, we get 
\[
\begin{split}
T_2 &= - \frac{2}{\sqrt{N}} \int dp dq  \, p \cdot q \,  \hat{G}(p) \chi_H(p)  \gl(q) [\hat{b}_{p+q}^* \hat{a} (\hat{\ph}_{-p}) \hat{a}_{q} + h.c.].
	\end{split}
	\]
We estimate  
\[ \begin{split}
&\vert \langle \xi, T_2 \xi \rangle \vert 
\leq \frac{C}{\sqrt{N}} \int dp dq \ \vert p \vert  \vert q \vert  \vert \hat{G}(p) \vert \chi_H(p) \gl(q) \Vert \hat{a}_{p+q} \hat{a} (\hat{\ph}_{-p}) \xi \Vert  \Vert \hat{a}_q \xi \Vert \\
	&\leq \frac{C}{\sqrt{N}} \left[ \int dp dq \ \vert p \vert^2 \Vert \hat{a}_{p+q} \hat{a} (\hat{\ph}_{-p}) \xi \Vert^2 \right]^\frac{1}{2} 
	\left[ \int dp dq \,  \vert \hat{G}(p) \vert^2 \chi_H(p) \gl(q)^2 \vert q \vert^2 \Vert \hat{a}_q \xi \Vert^2 \right]^\frac{1}{2} \\
&\leq C\ell^\frac{\alpha}{2} \| (\mathcal{K}+\cN)^\frac{1}{2} \xi \|  \left[ \Vert \mathcal{K}_{\ell^{-\beta -\varepsilon}}^\frac{1}{2} \xi \Vert + \ell^{3\alpha} \Vert \mathcal{N}^\frac{1}{2} \xi \Vert \right],
	\end{split} \]
where in the second integral we separated $|q| < \ell^{-\beta-\eps}$ (where we can use $\cK_{\ell^{-\beta-\eps}}$) and $|q| > \ell^{-\beta-\eps}$ (where $\gl$ is effective). Now we consider $S_2$, as defined in (\ref{eq:T2S12}). Switching to Fourier space and subtracting the second term on the r.h.s. of (\ref{[K,A]}), we can bound
\[\begin{split}  \Big| \langle \xi, S_2 \xi &\rangle - \frac{8\pi\frak{a}_0}{\sqrt{N}} \int dx dy dz \, \check{\chi}_{H^c} (x-y) \cgl (x-z) \ph_0 (y) \, \langle \xi, \big[ b_x^* a_y^* a_z + \text{h.c.} \big] \xi \rangle \Big| \\ \leq \; &\frac{1}{\sqrt{N}} \int dp dq \, \int dp dq \, | \widehat{V f_\ell} (q/N) - 8\pi \frak{a}_0 | \chi_{H^c} (q) \gl (p) \| \hat{a}_{p-q} \hat{a} (\hat{\ph}_q) \xi \| \| \hat{a}_p \xi \| \\
\leq \; &\frac{C}{N^{3/2}} \int dp dq\, (\ell^{-1}+ |q|) \chi_{H^c} (q) \gl (p) \| \hat{a}_{p-q} \hat{a} (\hat{\ph}_q) \xi \| \| \hat{a}_p \xi \| \leq C \ell^{3\alpha} \| (\cN+1)^{1/2} \xi \|^2 \end{split} \]
where we used \eqref{N3 Vfl - 8pi a0} and we chose $N$ large enough to obtain the desired decay in $\ell$. This concludes the proof of (\ref{[K,A]}). 

To show (\ref{double commutator kinetic energy}), we write, in momentum space, $A$ as in (\ref{eq:Amom}) and 
\[ \begin{split} \frac{8\pi \frak{a}_0}{\sqrt{N}} \int dx dy dz \, \check{\chi}_{H^c} (x-y) &\cgl (x-z) \ph_0 (y) b_x^* a_y^* a_z \\ & = \frac{8\pi \frak{a}_0}{\sqrt{N}} \int dq dp \, \chi_{H^c} (q) \gl (p) \, \hat{b}^*_{q+p} \hat{a}^* (\hat{\ph}_{-q}) \hat{a}_p. \end{split} \]
Using the relations (\ref{eq:comm-b}), (\ref{eq:comm-b2}) (translated to momentum space), we compute 
\[
\frac{8\pi \frak{a}_0}{\sqrt{N}} \int dx dy dz \, \check{\chi}_{H^c} (x-y) \cgl (x-z) \ph_0 (y) \big[ b_x^* a_y^* a_z , A \big] = \sum_{j=1}^{14} Y_j \]
with the operators $\{ Y_1, \dots , Y_{14} \}$ as given below. We bound each term separately. 

We start with 
\[ Y_1 = -\frac{8\pi \frak{a}_0}{N}  \int dq dp ds \, \chi_{H^c} (q) \gl (p) \hat{G} (s) \chi_H (s) \gl (p+q) \hat{b}^*_{s+p+q} \hat{b}^* (\hat{\ph}_{-s}) \hat{a}^* (\hat{\ph}_{-q}) \hat{a}_p. \]
Using (\ref{eq:aph1}), (\ref{eq:aph2}) and $|\hat{G} (s)| \leq C |s|^{-2}$, we find
\begin{equation*} \begin{split} |\langle \xi, Y_1 \xi \rangle | \leq \; &\frac{C}{N} \left[ \int dq dp ds \, s^2 \| \hat{a}_{s+p+q} \hat{a} (\hat{\ph}_{-s}) \hat{a} (\hat{\ph}_{-q}) \xi \|^2 \right]^{\frac{1}{2}}  \\ &\hspace{5cm} \times \left[ \int_{|s| \geq \ell^{-\alpha}} dq dp ds \, |s|^{-6} \gl (q)^2  \| \hat{a}_p \xi \|^2  \right]^{\frac{1}{2}}  \\ \leq \; &C \ell^{3(\alpha-\beta)/2} \| (\cK + \cN)^{1/2} \xi \| \| \cN^{1/2} \xi \|. \end{split} \end{equation*} 
We continue with 
\[ \begin{split} Y_2 &=  \frac{8\pi \frak{a}_0}{N}  \int dq dp ds dt \, \chi_{H^c} (q) \gl (p) \hat{G} (s) \chi_H (s) \hat{\ph}_0 (s-p-q) \gl (t)\\ &\hspace{7cm} \times  (1-\cN/N) \hat{a}^*_{-t} \hat{a}^* (\hat{\ph}_{-q}) \hat{a}_{t-s} \hat{a}_p \end{split} \]
which can be bounded by
\[ \begin{split} |\langle \xi, Y_2 \xi \rangle | \leq \; & C \ell^{3\alpha} \| (\cN+1)^{1/2} \xi \|^2 \\ &+ \frac{C}{N} \left[ \int_{|s|> \ell^{-\alpha}, |t|< \ell^{-\beta-\eps}}  dq dp ds dt \, |s|^{-4} |\hat{\ph}_0 (s-p-q)| \, t^2 \| \hat{a}_{-t} \hat{a} (\hat{\ph}_{-q}) \xi \|^2 \right]^{1/2} \\ &\hspace{3cm} \times  \left[ \int  dq dp ds dt \, \frac{ \gl(t)^2}{t^{2}} |\hat{\ph}_0 (s-p-q)| \, \| \hat{a}_{t-s} \hat{a}_p \xi \|^2  \right]^{1/2} \\
\leq \; &C \ell^{3\alpha} \| (\cN+1)^{1/2} \xi \|^2 + C \ell^{(\alpha-\beta)/2} \| \cK_{\ell^{-\beta-\eps}}^{1/2} \xi \| \| (\cN+1)^{1/2} \xi \|\end{split} \]
where we used the decay of $\gl$ to handle the region $|t| > \ell^{-\beta-\eps}$. 
Next, we consider
\[ \begin{split} Y_3 =  & \frac{8\pi \frak{a}_0}{N}  \int dp ds dt \, \chi_{H^c} (p+s+t) \cgl (p) \hat{G} (s) \chi_H (s)  \gl (t)  \\ &\hspace{6cm} \times (1-\cN/N) \hat{a}^*_{-t} \hat{a}^* (\hat{\ph}_{p+s+t}) \hat{a} (\hat{\ph}_{s}) \hat{a}_{p} \end{split} \]
whose contribution is estimated similarly as the one of $Y_2$ by
\[ \begin{split} |\langle \xi, Y_3 \xi \rangle | \leq \; & C \ell^{3\alpha} \| (\cN+1)^{1/2} \xi \|^2 \\ &+ \frac{C}{N} \left[ \int_{|s|> \ell^{-\alpha}, |t|< \ell^{-\beta-\eps}}  dp ds dt \, |s|^{-4} \, t^2 \| \hat{a}_{-t} \hat{a} (\hat{\ph}_{p+s+t}) \xi \|^2 \right]^{1/2} \\ &\hspace{3cm} \times  \left[ \int  dp ds dt \, \frac{\gl (t)^2}{t^{2}} \, \| \hat{a} (\hat{\ph}_s)  \hat{a}_p \xi \|^2  \right]^{1/2} \\
\leq \; &C \ell^{3\alpha} \| (\cN+1)^{1/2} \xi \|^2 + C \ell^{(\alpha-\beta)/2} \| \cK_{\ell^{-\beta-\eps}}^{1/2} \xi \| \| (\cN+1)^{1/2} \xi \|.\end{split} \]
The terms 
\[ \begin{split} Y_4 &= \frac{8\pi \frak{a}_0}{N}  \int dq dp ds dt \, \chi_{H^c} (q) \gl (p) \hat{G} (s) \chi_H (s)  \gl (t)  \hat{\ph}_0 (q+s+t) \hat{\ph}_0 (q-p)  \\ &\hspace{10cm} \times (1-\cN/N) \hat{a}^*_{-t} \hat{a}_p, \\
Y_5 &= \frac{8\pi \frak{a}_0}{N}  \int dp ds dt \, \chi_{H^c} (p+s+t) \gl (p) \hat{G} (s) \chi_H (s)  \gl (t)  \widehat{\ph^2_0} (p+t)  (1-\cN/N) \hat{a}^*_{-t} \hat{a}_p, \\
Y_6 &= - \frac{8\pi \frak{a}_0}{N^2}  \int dq dp ds dt \, \chi_{H^c} (q) \gl (p) \hat{G} (s) \chi_H (s)  \gl (t)  \hat{\ph}_0 (q-s-t) \hat{a}^*_{-t}  \hat{a}^*_{q+p}  \hat{a} (\hat{\ph}_{s}) \hat{a}_{p}, \\
Y_7 &= - \frac{8\pi \frak{a}_0}{N^2}  \int dq dp ds dt \, \chi_{H^c} (q) \gl (p) \hat{G} (s) \chi_H (s)  \gl (t)  \widehat{\ph^2_0} (q+s) \hat{a}^*_{-t}  \hat{a}^*_{q+p}   \hat{a}_{-s-t}  \hat{a}_{p} 
\end{split} \]
can be bounded by 
\[ \pm Y_4, \pm Y_5, \pm Y_6, \pm Y_7 \leq C \ell^{3\alpha} (\cN + 1) \]
for $N$ large enough (the additional factor $1/N$ can be used to gain arbitrary decay in $\ell$). As for 
\[ Y_8 = - \frac{8\pi \frak{a}_0}{N^2}  \int dq dp ds dt \, \chi_{H^c} (q) \gl (p) \hat{G} (s) \chi_H (s)  \gl (t) \hat{a}^*_{-t} \hat{a}^* (\hat{\ph}_{-q}) \hat{a}^*_{q+p}   \hat{a}_{-s-t} \hat{a} (\hat{\ph}_{s}) \hat{a}_{p}, \]
we estimate, again with (\ref{eq:aph1}), (\ref{eq:aph2}) and $|\hat{G} (s)| \leq C |s|^{-2}$,  
\[ \begin{split} 
|\langle \xi, Y_8 \xi \rangle | \leq \; &\frac{C}{N^2} \left[ \int_{|s|> \ell^{-\alpha}}  dq dp ds dt \, |s|^{-6} \, q^2 \| \hat{a}_{-t} \hat{a} (\hat{\ph}_{-q}) \hat{a}_{q+p} \xi \|^2 \right]^{1/2} \\ &\hspace{3cm} \times  \left[ \int_{|q| \leq \ell^{-\alpha}}  dq dp ds dt  \, |q|^{-2} \, s^2 \| \hat{a}_{-s-t} \hat{a} (\hat{\ph}_s)  \hat{a}_p \xi \|^2  \right]^{1/2} \\
\leq \; &C \ell^{\alpha} \| (\cK+ \cN)^{1/2} \xi \|^2. \end{split} \]
To bound 
\[ Y_9 =  \frac{8\pi \frak{a}_0}{N}  \int dq dp ds dt \, \chi_{H^c} (q) \gl (p) \hat{G} (s) \chi_H (s)  \gl (t)  \hat{\ph}_0 (p+s) \hat{b}^* (\hat{\ph}_{-q} )  \hat{b}^*_{q+p}  \hat{a}^*_{s+t}  
  \hat{a}_{t}, \]
we first apply Cauchy-Schwarz:  
\[ \begin{split} |\langle \xi, Y_9 \xi \rangle | \leq &\frac{C}{N} \int dq dp ds dt \, \chi_{H^c} (q) \gl (p) | \hat{G} (s)|  \chi_H (s)  \gl (t)  |\hat{\ph}_0 (p+s)| \\ &\hspace{6cm} \times \| \hat{a} (\hat{\ph}_{q+p}) \hat{a}_{s+t} \xi \| \|  \hat{a}^* (\hat{\ph}_{-q}) \hat{a}_t \xi \|. \end{split} \]
With (\ref{eq:ccr}), we find $\| \hat{a}^* (\hat{\ph}_{-q}) \hat{a}_t \xi \|^2 
\leq \| \hat{a} (\hat{\ph}_{-q}) \hat{a}_t \xi \|^2+ \| \hat{a}_t \xi \|^2$. For the contribution from  $\| \hat{a}_t \xi \|^2$, we can extract arbitrary decay in $\ell$ from the factor $N^{-1}$. For $|t| > \ell^{-\beta-\eps}$, we can gain decay in $\ell$ from the Gaussian $\gl$. We find
\[ \begin{split} |\langle \xi, Y_9 \xi \rangle | \leq &C \ell^{3\alpha} \| (\cN+1)^{1/2} \xi \|^2 \\&+ \frac{C}{N} \left[ \int dq dp ds dt  \, \frac{\gl (t)}{t^2} \, |\hat{\ph}_0 (p+s)| \|  \hat{a} (\hat{\ph}_{q+p}) \hat{a}_{s+t} \xi \|^2 \right]^{1/2} \\ &\hspace{1cm} \times \left[ \int_{|s| > \ell^{-\alpha}, |t| < \ell^{-\beta-\eps}}  dq dp ds dt  \, \frac{1}{|s|^4}  |\hat{\ph}_0 (p+s)| t^2 \| \hat{a} (\hat{\ph}_{-q}) \hat{a}_t \xi \|^2 \right]^{1/2} \\ \leq &C \ell^{3\alpha} \| (\cN+1)^{1/2} \xi \|^2 + C \ell^{(\alpha-\beta)/2} \| \cK_{\ell^{-\beta-\eps}}^{1/2} \xi \| \| (\cN + 1)^{1/2} \xi \| .
\end{split} \]
Next, consider 
 \[ Y_{10} = \frac{8\pi \frak{a}_0}{N}  \int dq ds dt \, \chi_{H^c} (q) \gl (s+t) \hat{G} (s) \chi_H (s)  \gl (t)  \hat{b}^*_{q+s+t}   \hat{a}^* (\hat{\ph}_{-s}) \hat{b}^* (\hat{\ph}_{-q} ) \hat{a}_t .\]
 We proceed as for $Y_9$, with $\| \hat{a}^* (\hat{\ph}_{-q} ) \hat{a}_t \xi \|^2 \leq \| \hat{a} (\hat{\ph}_{-q} ) \hat{a}_t \xi \|^2 + \| \hat{a}_t \xi \|^2$. Again, to bound the contribution arising from $\| \hat{a}_t \xi \|^2$ we can extract decay in $\ell$ from $N^{-1}$. We find 
 \[ \begin{split} 
|\langle \xi, Y_{10} \xi \rangle | \leq \; &C \ell^{3\alpha} \| (\cN+1)^{1/2} \xi \|^2 \\ &+ \frac{C}{N} \left[ \int_{|q| \leq \ell^{-\alpha}}  dq ds dt  \, |q|^{-2} \, s^2 \| \hat{a}_{q+s+t} \hat{a} (\hat{\ph}_{-s}) \xi \|^2  \right]^{1/2}\\ &\hspace{3cm} \times 
\left[ \int_{|q| \leq \ell^{-\alpha} , |s|> \ell^{-\alpha}}  dq ds dt \, |s|^{-6} \, q^{2} \|  \hat{a} (\hat{\ph}_{-q})  \hat{a}_{t} \xi \|^2 \right]^{1/2}   \\
\leq \; &C \ell^{3\alpha} \| (\cN+1)^{1/2} \xi \|^2 +  C \ell^{\alpha} \| (\cK+ \cN)^{1/2} \xi \|^2. \end{split} \]
As for 
 \[ Y_{11} = - \frac{8\pi \frak{a}_0}{N}  \int dq dp ds dt \, \chi_{H^c} (q) \gl (p) \hat{G} (s) \chi_H (s)  \gl (t) \hat{\ph}_0 (q+t) \hat{b}^*_{q+p} \hat{a}^*_{s+t}  \hat{a}^* (\hat{\ph}_{-s}) \hat{a}_{p}, \]
we have  
\[ \begin{split} 
|\langle \xi, Y_{11} \xi \rangle | \leq \; &\frac{C}{N} \left[ \int  dq dp ds dt  \, |\hat{\ph}_0 (q+t)| \, s^2 \| \hat{a} (\hat{\ph}_{-s}) \hat{a}_{s+t}  \hat{a}_{q+p} \xi \|^2 \right]^{1/2} \\ &\hspace{3cm} \times  \left[ \int_{|s|> \ell^{-\alpha}}   dq dp ds dt  \, |s|^{-6}   |\hat{\ph}_0 (q+t)| \, \gl (t)^2 \, \| \hat{a}_p \xi \|^2  \right]^{1/2} \\
\leq \; &C \ell^{3(\alpha-\beta)/2} \| (\cK+\cN)^{1/2} \xi \| \| (\cN+1)^{1/2} \xi \|.  \end{split} \]
On the other hand,
\[ Y_{12} =  \frac{8\pi \frak{a}_0}{N}  \int dq dp ds dt \, \chi_{H^c} (q) \gl (p) \hat{G} (s) \chi_H (s)  \gl (t) \widehat{\ph^2_0} (q+s) \hat{b}^*_{q+p} \hat{a}^*_{-t}  \hat{b}_{-s-t}  \hat{a}_{p} \]
can be bounded (controlling as usual the region with $|t| > \ell^{-\beta-\eps}$ with $\chi_{L<\ell}$) by
\[ \begin{split} 
|\langle \xi, Y_{12} \xi \rangle | \leq \; &C \ell^{3\alpha} \| (\cN+1)^{1/2} \xi \|^2 \\ &+ \frac{C}{N} \left[ \int_{|s| > \ell^{-\alpha}, |t|< \ell^{-\beta-\eps}}  dq dp ds dt  \, |s|^{-4} \, |\widehat{\ph^2_0} (q+s) | \, t^2  \|  \hat{a}_{-t}  \hat{a}_{q+p} \xi \|^2 \right]^{1/2} \\ &\hspace{3cm} \times  \left[ \int  dq dp ds dt  \,  |\widehat{\ph^2_0} (q+s) | \, \frac{\gl (t)^2}{t^2} \, \|\hat{a}_{-s-t} \hat{a}_p \xi \|^2  \right]^{1/2} \\
\leq \; &C \ell^{3\alpha} \| (\cN+1)^{1/2} \xi \|^2 + C \ell^{(\alpha-\beta)/2} \| \cK_{\ell^{-\beta-\eps}}^{1/2} \xi \| \| (\cN+1)^{1/2} \xi \| .  \end{split} \]
Also the term 
\[ Y_{13} = \frac{8\pi \frak{a}_0}{N}  \int dq dp ds dt \, \chi_{H^c} (q) \gl (p) \hat{G} (s) \chi_H (s)  \gl (t) \hat{\ph}_0 (q-s-t) \hat{b}^*_{q+p}  \hat{a}^*_{-t} \hat{b} (\hat{\ph}_s) \hat{a}_{p} \]
can be bounded similarly: 
\[ \begin{split} 
|\langle \xi, Y_{13} \xi \rangle | \leq \; &C \ell^{3\alpha} \| (\cN+1)^{1/2} \xi \|^2 \\ &+ \frac{C}{N} \left[ \int_{|s| > \ell^{-\alpha}, |t|< \ell^{-\beta-\eps}}  dq dp ds dt  \, |s|^{-4} \, |\hat{\ph}_0 (q-s-t) | \, t^2  \|  \hat{a}_{-t}  \hat{a}_{q+p} \xi \|^2 \right]^{1/2} \\ &\hspace{3cm} \times  \left[ \int  dq dp ds dt  \,  |\hat{\ph}_0 (q-s-t) | \, \frac{\gl (t)^2}{t^2} \, \|\hat{a} (\hat{\ph}_s) \hat{a}_p \xi \|^2  \right]^{1/2} \\
\leq \; &C \ell^{3\alpha} \| (\cN+1)^{1/2} \xi \|^2 + C \ell^{(\alpha-\beta)/2} \| \cK_{\ell^{-\beta-\eps}}^{1/2} \xi \| \| (\cN+1)^{1/2} \xi \|.   \end{split} \]
Finally, we bound
 \[ Y_{14} = -\frac{8\pi \frak{a}_0}{N}  \int dq ds dt \, \chi_{H^c} (q) \gl (t)^2 \hat{G} (s) \chi_H (s)  \hat{b}^*_{q-t} \hat{a}^* (\hat{\ph}_{-q})   \hat{b}_{-s-t}  \hat{a} (\hat{\ph}_s) \]
 as follows: 
 \[\begin{split} 
|\langle \xi, Y_{14} \xi \rangle | \leq \; &\frac{C}{N} \left[ \int_{|s| > \ell^{-\alpha}}  dq ds dt  \, |s|^{-6} \, q^2  \|  \hat{a}_{q-t}  \hat{a} (\hat{\ph}_{-q}) \xi \|^2 \right]^{1/2} \\ &\hspace{3cm} \times  \left[ \int_{|q| \leq \ell^{-\alpha}}  dq ds dt   \, |q|^{-2}  s^2 \|\hat{a} (\hat{\ph}_s) \hat{a}_{-s-t} \xi \|^2  \right]^{1/2} \\
\leq \; &C \ell^{\alpha} \| (\cK+\cN)^{1/2} \xi \|^2.   \end{split} \]
Combining the estimates for $\{ Y_1, \dots , Y_{14}\}$, we obtain (\ref{double commutator kinetic energy}).
\end{proof}

\subsection{Contributions from $e^{-A} \mathcal{C}_N e^A$}

To control the action of $e^A$ on the cubic term $\mathcal{C}_N$ in (\ref{eq:def-Geff}), we need precise estimates for the commutator $[\mathcal{C}_N, A]$. 
\begin{lemma}
Let $0<\beta < \alpha$ and $\varepsilon\in (0;\alpha-\beta)$. Then there exists a constant $C>0$ such that for all $\ell\in (0;1)$ sufficiently small and for $N$ sufficiently large we have
	\begin{equation}
	\begin{split} \label{[CN, A]}
	[\mathcal{C}_N, A] &= \frac{1}{N} \int dx dy dz \ N^3 V(N(x-y)) \nu_H(x;y) \pn(y) \cgl(x-z) [a_z^* a_y + h.c.] \left(1- \frac{\mathcal{N}}{N}\right) \\
	& + \frac{1}{N} \int dx dy dz \ N^3 V(N(x-y)) \nu_H(x;y) \pn(x) \cgl(x-z) [a_z^* a_x + h.c.] \left(1- \frac{\mathcal{N}}{N}\right) \\
	& + \delta_{C_N},
	\end{split}
	\end{equation}
	where 
	\[
	\begin{split}
	\vert \langle \xi, \delta_{C_N} \xi \rangle \vert 
	&\leq C \ell^\frac{3(\alpha-\beta)}{2}  \Vert (\mathcal{K} + \cN)^\frac{1}{2} \xi \Vert  \Vert \mathcal{N}^\frac{1}{2} \xi \Vert + C \ell^\frac{(\alpha-\beta)}{2} \Vert (\mathcal{K}_{\ell^{-\beta-\varepsilon}} + \cV_N + \cN)^\frac{1}{2} \xi \Vert^2. 	\end{split}
	\]
\end{lemma}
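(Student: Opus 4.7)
The plan is to expand $[\mathcal{C}_N, A]$ directly using the commutation relations \eqref{eq:ccr}, \eqref{eq:comm-b}, \eqref{eq:comm-b2}, identify the two quadratic main contributions, and bound the remaining pieces by Cauchy--Schwarz in momentum space, following the template of the analyses of $[\mathcal{Q}_N,A]$ and $[\mathcal{K},A]$ carried out above. I would write $\mathcal{C}_N = \mathcal{C}_N^{(+)} + (\mathcal{C}_N^{(+)})^*$ with
\[
\mathcal{C}_N^{(+)} = \frac{1}{\sqrt{N}}\int dx\,dy\, N^3V(N(x-y))\pn(x)\, b_x^* a_y^* a_x,
\]
and $A = A^{(+)} - (A^{(+)})^*$ with $A^{(+)}$ the corresponding cubic monomial built from $\nu_H$ and $\cgl$. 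Self-adjointness of $[\mathcal{C}_N,A]$ then reduces the analysis to the two pieces $[\mathcal{C}_N^{(+)},(A^{(+)})^*]$ and $[\mathcal{C}_N^{(+)},A^{(+)}]$; the latter contains three creation-type operators against three more and only contributes through the $\mathcal{N}/N$-corrections in \eqref{eq:comm-b}, producing $1/N$-suppressed sextic terms whose expectation is estimated exactly as the term $Y_8$ in the preceding lemma on $[\mathcal{Q}_N,A]$, giving an error bounded by $\ell^{\alpha}\|(\mathcal{K}+\mathcal{N})^{1/2}\xi\|^2$.

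The main step is the analysis of $[\mathcal{C}_N^{(+)},(A^{(+)})^*]$. Here the two quadratic contributions of \eqref{[CN, A]} arise from double contractions: one comes from contracting the $a_x$ of $\mathcal{C}_N^{(+)}$ with the $a_{z'}^*$ of $(A^{(+)})^*$ (using $[a_x,a_{z'}^*]=\delta(x-z')$) together with $[b_x^*, b_{x'}] = -(1-\mathcal{N}/N)\delta(x-x') + a_{x'}^*a_x/N$, leaving a quadratic monomial $a_{z'}^* a_{y'}$ paired with the kernel $N^3V(N(x-y))\,\nu_H(x;y)\,\pn(y)\,\cgl(x-z)\,(1-\mathcal{N}/N)$ after relabeling; the other comes from the symmetric pattern where $a_x$ contracts with $b_{x'}$ (via the definition of the $b$-operator) and $b_x^*$ contracts with $a_{z'}^*$, producing the companion term with $\pn(x)$ in place of $\pn(y)$. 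All other contraction patterns produce either quartic operators (single contraction) or sextic operators (no contraction), plus $1/N$-corrections from the modified $b$-commutators.

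The remaining terms are bounded by Cauchy--Schwarz in momentum space in the spirit of the preceding lemmas. For the quartic pieces, writing things in Fourier variables, using the estimate $|\widehat{\nu}_H(p;q)| \leq C|p|^{-2}\chi_H(p)|\widehat{\pn}(p+q)|$ from Lemma \ref{lm:bndsnuH}, the bound $\|\pn\|_\infty, \|\widehat{V}\|_\infty \leq C$, and splitting the $\cgl$-integration into the low-momentum region $\{|t|<\ell^{-\beta-\varepsilon}\}$ (controlled by $\mathcal{K}_{\ell^{-\beta-\varepsilon}}$) versus the high-momentum region $\{|t|\geq \ell^{-\beta-\varepsilon}\}$ (where the Gaussian $\gl$ provides arbitrary polynomial decay in $\ell$), one extracts an error bounded by $\ell^{(\alpha-\beta)/2}\|(\mathcal{K}_{\ell^{-\beta-\varepsilon}}+\mathcal{V}_N+\mathcal{N})^{1/2}\xi\|^2$; this is the exact same mechanism used to bound the analogous quartic contributions in the $[\mathcal{Q}_N,A]$ analysis (cf.\ the estimate of $Y_2$, $Y_9$, $Y_{12}$, $Y_{13}$). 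The sextic and $1/N$-suppressed pieces are handled as in the bounds for $Y_1$--$Y_3$: using $\hat{G}\in L^\infty$-with-$|p|^{-2}$-decay and (\ref{eq:aph1})--(\ref{eq:aph2}), they produce contributions bounded by $\ell^{3(\alpha-\beta)/2}\|(\mathcal{K}+\mathcal{N})^{1/2}\xi\|\|\mathcal{N}^{1/2}\xi\|$.

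The hard part will be the bookkeeping: verifying that among the many contraction patterns produced by the commutator of two cubic monomials exactly the two terms displayed in \eqref{[CN, A]} survive as the leading pieces, with the correct $(1-\mathcal{N}/N)$-dressing and the correct pairing of $\pn$-factors with the integration variables, and that no spurious linear-in-$\mathcal{N}$ contribution escapes into the main terms through the $\sqrt{(N-\mathcal{N})/N}$-factors hidden in the $b^\sharp$'s. Once this enumeration is set up cleanly (splitting the commutator according to which pair among $\{a_x,b_x^*,a_y^*\}$ contracts with which pair among $\{a_{z'}^*,b_{x'},a_{y'}\}$, and keeping track of the normal-ordered remainders), the error estimates reduce to routine applications of Cauchy--Schwarz along the momentum-splitting lines already developed in the preceding subsections.
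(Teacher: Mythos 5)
Your overall strategy — expand the commutator using the commutation relations, extract the two quadratic main terms from double contractions, and bound the many remaining pieces by momentum-space Cauchy--Schwarz with the $\gl$/$\cK_{\ell^{-\beta-\varepsilon}}$ momentum-splitting device — is the approach the paper takes, and your estimates for the particle-number-conserving quartic pieces of $[\mathcal{C}_N^{(+)},(A^{(+)})^*]$ sketched by analogy with $Y_2,Y_9,Y_{12},Y_{13}$ are in the right spirit.

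There is, however, a genuine structural error in your handling of $[\mathcal{C}_N^{(+)},A^{(+)}]$. Both $\mathcal{C}_N^{(+)}=N^{-1/2}\int N^3V(N(u-v))\pn(u)\,b_u^*a_v^*a_u$ and $A^{(+)}=N^{-1/2}\int\nu_H(x;y)\cgl(x-z)\,b_x^*a_y^*a_z$ are cubic operators with two creations and one annihilation each — not ``three creation-type operators.'' Their commutator is not merely an $\cN/N$-correction: contracting $a_z$ against $b_u^*$ (or $a_u$ against $b_x^*$, $a_y^*$) produces genuine quartic operators raising $\cN$ by two, such as the paper's
\[
C_1=-\frac1N\int dx\,dy\,dz\,dv\ N^3V(N(z-v))\pn(z)\nu_H(x;y)\cgl(x-z)\,b_x^*b_y^*a_v^*a_z,
\]
as well as $C_7,C_8,C_{10}$ in the paper's enumeration. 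These are $O(1)$ in $N$ — the explicit $1/N$ is exactly eaten by the $N$-scaling of $V$ and of $\nu_H$ — and must be estimated in Fourier variables (using $|\widehat G(s)|\chi_H(s)\lesssim|s|^{-2}$ and \eqref{eq:aph2}, as for $Y_1$), producing the $\ell^{3(\alpha-\beta)/2}\|(\cK+\cN)^{1/2}\xi\|\,\|\cN^{1/2}\xi\|$ piece of $\delta_{C_N}$. Your attribution of $[\mathcal{C}_N^{(+)},A^{(+)}]$ to a $Y_8$-like $1/N$-suppressed sextic remainder bounded by $\ell^\alpha\|(\cK+\cN)^{1/2}\xi\|^2$ is therefore not a valid route ($Y_8$ carried an extra explicit $N^{-1}$ that these terms do not have), and as written your proposal would leave several of the twelve error terms unaccounted for. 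A smaller inconsistency: after contracting $a_x$ against $a_{z'}^*$, the surviving quadratic monomial cannot contain $a_{z'}^*$; the operator that is actually left is $a_y^*a_{y'}$, whose indices are then identified via the $\delta$-functions to give the $a_z^*a_y$ and $a_z^*a_x$ in the statement.
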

\begin{proof}
A long but straightforward computation using (\ref{eq:comm-b}), (\ref{eq:comm-b2}) shows that 
\[ \begin{split}  [\mathcal{C}_N ,  A] &= \frac{1}{N} \int dx dy dz \ N^3 V(N(x-y)) \nu_H(x;y) \pn(y) \cgl(x-z) [a_z^* a_y + h.c.] \left(1- \frac{\mathcal{N}}{N}\right) \\
	& + \frac{1}{N} \int dx dy dz \ N^3 V(N(x-y)) \nu_H(x;y) \pn(x) \cgl(x-z) [a_z^* a_x + h.c.] \left(1- \frac{\mathcal{N}}{N}\right) \\
 &+ \sum_{j=1}^{12} C_j, \end{split} \]
with the error terms $\{C_1, \dots , C_{12}\}$ are listed and bounded below. 

We begin with
\[ C_1= -\frac{1}{N} \int dx dy dz dv \ N^3 V(N(z-v)) \pn(z) \nu_H(x;y) \cgl(x-z) b_x^* b_y^* a_v^* a_z  \]
which can be bounded, switching to momentum space, by 
\[ \begin{split} 
| \langle \xi, C_1 \xi \rangle | \leq \; &\frac{1}{N} \int dp ds dt \, |\hat{V} (p/N)||\hat{G} (s)| \chi_H (s) \gl (t) \| \hat{a}_{s+t} \hat{a} (\hat{\ph}_{-s}) \hat{a}_{-p} \xi \| \| \hat{a} (\hat{\ph}_{t-p}) \xi \| \\
\leq \; &\frac{C}{N} \left[ \int dp ds dt \,  s^2 \| \hat{a}_{s+t} \hat{a} (\hat{\ph}_{-s}) \hat{a}_{-p} \xi \|^2 \right]^{1/2} \\ &\hspace{3cm} \times  \left[ \int_{|s| > \ell^{-\alpha}}  dp ds dt \, |s|^{-6} \gl (t)^2 \, \| \hat{a} (\hat{\ph}_{t-p}) \xi \|^2 \right]^{1/2} \\ \leq \; & C \ell^{3(\alpha-\beta)/2}  \| (\cK+\cN)^{1/2} \xi \|  \| \cN^{1/2} \xi \|. \end{split}  \]
Also for the term
\[ C_2 =  \frac{1}{N} \int dx dy dz dv \ N^3 V(N(y-v)) \pn(y) \nu_H(x;y) \cgl(x-z) \left(1-\frac{\mathcal{N}}{N}\right) a_z^* a_v^* a_x a_y \]
we switch to momentum space. We find 
\[ \begin{split} 
| \langle \xi, C_2 \xi \rangle | \leq \; &\frac{1}{N} \int dp ds dt \, |\hat{V} (p/N)| |\hat{G} (s)| \chi_H (s) \gl (t) \| \hat{a}_{-t} \hat{a}_{-p} \xi \| \| \hat{a} (\widehat{\ph^2}_{s-p}) \hat{a}_{-s-t} \xi \| \\
\leq \; &C \ell^{3\alpha} \| (\cN+1)^{1/2} \xi \|^2 \\ &+ \frac{C}{N} \left[ \int_{|s| > \ell^{-\alpha}, |t| < \ell^{-\beta-\eps}} dp ds dt \,  |s|^{-4} \, t^2 \| \hat{a}_{-t} \hat{a}_{-p} \xi \|^2 \right]^{1/2} \\ &\hspace{3cm} \times  \left[ \int  dp ds dt \, \frac{\gl (t)^2}{t^2} \, \| \hat{a} (\widehat{\ph^2}_{s-p}) \hat{a}_{-s-t} \xi \|^2 \right]^{1/2} \\ \leq \; & C \ell^{3\alpha} \| \cN^{1/2} \xi \|^2 + C \ell^{(\alpha- \beta)/2} \| \cK_{\ell^{-\beta-\eps}}^{1/2} \xi \| \| \cN^{1/2} \xi \|, \end{split}  \]
where the first term on the r.h.s. arises from $|t| > \ell^{-\beta-\eps}$, where we can use $\gl (t)$ to extract (arbitrary polynomial) decay in $\ell$. The term 
\[ C_3 = \frac{1}{N} \int dx dy dz dv \ N^3 V(N(x-v)) \pn(x) \nu_H(x;y) \cgl(x-z) \left(1- \frac{\mathcal{N}}{N}\right) a_z^* a_v^* a_y a_x \]
can be handled similarly to $C_2$. We find 
\[ | \langle \xi, C_3 \xi \rangle | \leq  C \ell^{3\alpha} \| \cN^{1/2} \xi \|^2 + C \ell^{(\alpha- \beta)/2} \| \cK_{\ell^{-\beta-\eps}}^{1/2} \xi \| \| \cN^{1/2} \xi \| .\]
On the other hand, to estimate 
\[ C_4 = - \frac{1}{N^2} \int dx dy dz du dv \ N^3 V(N(u-v)) \pn(u) \nu_H(x;y) \cgl(x-z) a_u^* a_z^* a_v^* a_x a_y a_u, \]
we switch only partially to Fourier space (keeping $V$ in position space). We obtain  
\[ \begin{split} 
| \langle \xi, C_4 \xi \rangle | \leq \; &\int du dv ds dt \, N V(N(u-v)) |\hat{G} (s)| \chi_H (s) \gl (t) \| \hat{a}_{-t} a_u a_v \xi \| \| \hat{a} (\hat{\ph}_{s}) \hat{a}_{-s-t} a_u \xi \| \\
\leq \; &\frac{C}{N^{3/2}} \left[ \int_{|s| > \ell^{-\alpha}} du dv ds dt \,  N^2 V(N(u-v)) \, |s|^{-4} \, \| \hat{a}_{-t} a_u a_v \xi \|^2 \right]^{1/2} \\ &\hspace{3cm} \times \left[ \int du dv ds dt \, N^3 V(N(u-v)) \| \hat{a} (\hat{\ph}_{s}) \hat{a}_{-s-t} a_u \xi \|^2 \right]^{1/2}  \\
\leq \; &C \ell^{\alpha/2} \| \cV_N^{1/2} \xi \| \| \cN^{1/2} \xi \|.  \end{split}  \]
The terms 
\[ \begin{split} C_5 &=  - \frac{1}{N^2} \int dx dy dz du \ N^3 V(N(u-x)) \pn(u) \nu_H(x;y) \cgl(x-z) a_u^* a_z^* a_y a_u \\
C_6 &=  - \frac{1}{N^2} \int dx dy dz du \ N^3 V(N(u-y)) \pn(u) \nu_H(x;y) \cgl(x-z) a_u^* a_z^* a_x a_u \end{split}  \]
can be bounded using the factor $N^{-2}$ to gain arbitrary decay in $\ell$. We easily find
\[ \pm C_5, \pm C_6 \leq C \ell^{3\alpha} (\cN+ 1). \]
As for 
\[ C_7 = \frac{1}{N} \int dx dy dz dv \, N^3 V(N(y-v)) \pn(y) \nu_H(x;y) \cgl(x-z) b_y^* b_v^* a_x^* a_z, \] 
we stay in position space and estimate 
\[ \begin{split} 
| \langle \xi, C_7 \xi \rangle | \leq \; &\int dx dy dz dv \,  N^2 V(N(y-v)) |\pn(y)| |\nu_H(x;y)| \cgl(x-z) \| a_y a_v a_x \xi \| \| a_z \xi \| \\ \leq  \; &C  \left[ \int dx dy dz dv \,  N^2 V(N(y-v)) \cgl(x-z) \| a_y a_v a_x  \xi \|^2 \right]^{1/2} \\ &\hspace{1.5cm} \times \left[ \int dx dy dz dv \,  N^2 V(N(y-v)) |\nu_H(x;y)|^2 \cgl(x-z)  \| a_z \xi \|^2 \right]^{1/2}  \\ \leq \; &C \ell^{\alpha/2} \| \cV_N^{1/2} \xi \|  \| \cN^{1/2} \xi \|,
\end{split} \]
using that (\ref{Lp norms of widecheck chi L,l}) and (\ref{eq:nuHL2bnd}). We can proceed very similarly to bound
\[ 
C_8 =  \frac{1}{N} \int dx dy dz dv \, N^3 V(N(x-v)) \pn(x) \nu_H(x;y) \cgl(x-z) b_x^* b_v^* a_y^* a_z .\]
We find 
\[ |\langle \xi, C_8 \xi \rangle |  \leq C \ell^{\alpha/2} \| \cV_N^{1/2} \xi \| \| \cN^{1/2} \xi \| \]
Also 
\[ C_9 =  -\frac{1}{N} \int dx dy dz dv \, N^3 V(N(z-v)) \pn(z) \nu_H(x;y) \cgl(x-z) b_z^* a_v^* b_x b_y \]
can be handled analogously, estimating 
\[ \begin{split} 
|\langle \xi, C_9 \xi \rangle | \leq \; &\frac{C}{\sqrt{N}}  \left[ \int dx dy dz dv \, N^2 V(N(z-v)) |\nu_H (x;y)|^2 \cgl(x-z) \| a_z a_v \xi \|^2 \right]^{1/2} \\ &\hspace{2.5cm} \times \left[  \int dx dy dz dv \, N^3 V(N(z-v)) \cgl (x-z) \| a_x a_y \xi \|^2 \right]^{1/2} \\ \leq \; &C \ell^{\alpha/2} \| \cV_N^{1/2} \xi \| \| \cN^{1/2} \xi \|. \end{split} \]
To bound 
\[ C_{10} = - \frac{1}{N} \int dx dy dz du \ N^3 V(N(u-z)) \pn(u) \nu_H(x;y) \cgl(x-z) b_u^* b_x^* a_y^* a_u, \]
we switch partially to momentum space: 
\[ \begin{split} 
| \langle \xi, C_{10} \xi \rangle | \leq \; &\int du dz ds dt \, N^2 V(N(u-z)) |\hat{G} (s)| \chi_H (s) \gl (t) \| a_u \hat{a} (\hat{\ph}_{-s}) \hat{a}_{s+t} \xi \| \| a_u \xi \| \\
\leq \; &\frac{C}{N} \left[ \int du ds dt \,  s^2 \| a_u \hat{a} (\hat{\ph}_{-s}) \hat{a}_{s+t} \xi \|^2 \right]^{\frac{1}{2}} \left[ \int_{|s| > \ell^{-\alpha}} du ds dt \, |s|^{-6}  \gl (t)^2 \, \| a_u \xi \|^2 \right]^{\frac{1}{2}}
\\ \leq \;&C \ell^{3(\alpha-\beta)/2} \| (\cK + \cN)^{1/2} \xi \|  \| \cN^{1/2} \xi \|. \end{split} \]
Also to estimate 
\[ C_{11} =  \frac{1}{N} \int dx dy dz du \ N^3 V(N(u-y)) \pn(u) \nu_H(x;y) \cgl(x-z) b_u^* a_z^* b_x a_u  \]
we switch partially to momentum space. We find
\[ \begin{split} 
| \langle \xi, C_{11} \xi \rangle | \leq \; &\int du dy ds dt \, N^2 V(N(u-y)) |\hat{G} (s)| \chi_H (s) \gl (t) \| a_u \hat{a}_{-t}  \xi \| \| a_u \hat{a}_{-s-t} \xi \| \\
\leq \; &C \ell^{3\alpha} \| \cN^{1/2} \xi \|^2 + \frac{C}{N} \left[ \int_{|t| < \ell^{-\beta-\eps}} du ds dt \,  |s|^{-4} t^2 \| a_u \hat{a}_t \xi \|^2 \right]^{\frac{1}{2}} \\ &\hspace{3.5cm} \times  \left[ \int_{|s| > \ell^{-\alpha}} du ds dt \,  \frac{\gl (t)^2}{t^2} \, \| a_u \hat{a}_{-s-t} \xi \|^2 \right]^{\frac{1}{2}}
\\ \leq \;&C \ell^{3\alpha} \| \cN^{1/2} \xi \|^2 + C \ell^{(\alpha-\beta)/2} \| \cK_{\ell^{-\beta-\eps}}^{1/2} \xi \| \| \cN^{1/2} \xi \|, \end{split} \]
where the first term on the r.h.s. arises from the region $|t| > \ell^{-\beta-\eps}$, where we can use $\gl$ to extract decay in $\ell$. As for the term
\[ C_{12} =  \frac{1}{N} \int dx dy dz du \ N^3 V(N(u-x)) \pn(u) \nu_H(x;y) \cgl(x-z) b_u^* a_z^* b_y a_u, \]
it can be bounded similarly as $C_{11}$. We find
\[ |\langle \xi, C_{12} \xi \rangle | \leq C \ell^{3\alpha} \| \cN^{1/2} \xi \|^2 + C \ell^{(\alpha-\beta)/2} \| \cK_{\ell^{-\beta-\eps}}^{1/2} \xi \| \| \cN^{1/2} \xi \|. \]
\end{proof}

\subsection{Proof of Proposition \ref{prop:JNell}}

Recalling (\ref{eq:def-Geff0}) and applying (\ref{rough bound e-A Vext eA}), (\ref{conjugation DN}) and (\ref{conjugation QN}), we obtain 
\begin{equation}\label{eq:JN1}
\cJ_N = \mathcal{D}_N + \mathcal{Q}_N + \mathcal{C}_N + \mathcal{H}_N + \int_0^1 ds \ e^{-sA} [\mathcal{C}_N + \mathcal{K}+ \mathcal{V}_N, A] e^{sA}
 +\delta_1,
\end{equation} 
where \[ \pm \delta_1 \leq C \left[ \ell^{(\alpha-\beta-\eps)/2}  + \ell^{(5\alpha-7\beta-2\eps)/4} \right] (\cH_N + \cN + 1) \, .\]  

Combining  (\ref{commutator [VN, A]}) with the first term on the r.h.s. of (\ref{[K,A]}) (and recalling the definition of $\mathcal{C}_N$ in (\ref{eq:def-Geff})), we obtain 
\begin{equation} \label{eq:overviewcommutator}
\begin{split}
&[\mathcal{K} + \mathcal{V}_N, A] \\
&= -\mathcal{C}_N + \frac{1}{\sqrt{N}} \int dx dy dz \ N^3 V(N(x-y)) \pn(y) \widecheck{(1-\gl)}(x-z) [b_x^* a_y^* a_z + h.c.] \\
&\quad -\frac{1}{\sqrt{N}} \int dx dy dz \ (G* \widecheck{\chi}_{H^c})(x-y) \pn(y) \cgl(x-z) N^2 V(N(x-y)) [b_x^* a_y^* a_z + h.c.] \\
&\quad + \frac{8 \pi a_0}{\sqrt{N}} \int dx dy dz \  \widecheck{\chi}_{H^c}(x-y) \cgl(x-z) \pn(y) [b_x^* a_y^* a_z +h.c.] + \delta_2,
\end{split}
\end{equation}
where 
\[ \begin{split} 
|\langle \xi, \delta_2 \xi \rangle | \leq \; &C \left[ \ell^{(\alpha-4)/2} +  \ell^{(\alpha-\beta)/2} \right] \, \| (\cK_{\ell^{-\beta-\eps}} + \cN + \cV_N + 1)^{1/2} \xi \|^2 \\ &+ C \ell^{\alpha /2} \, \| \cK^{1/2} \xi \| \| (\cK_{\ell^{-\beta-\eps}} + \cN+1)^{1/2} \xi \|.   \end{split} \]

Next, we observe that the second term on the r.h.s. of (\ref{eq:overviewcommutator}) can be bounded, using that $0 \leq 1- \gl (p) \leq \min(1, \ell^{2\beta} p^2)$, by 
\[
\begin{split}
\Big\vert \frac{1}{\sqrt{N}} \int dx &dy dz \ N^3 V(N(x-y)) \pn(y) \widecheck{(1-\gl)}(x-z) \langle \xi, b_x^* a_y^* a_z \xi \rangle \Big\vert \\
&\leq \Vert \mathcal{V}_N^\frac{1}{2} \xi \Vert \left[ \int dx \ \Vert a ( \widecheck{(1-\gl)}_x \xi  \Vert^2 \right]^\frac{1}{2} \\
&= \Vert \mathcal{V}_N^\frac{1}{2} \xi \Vert \left[ \int dp |1- \gl (p)|^2 \| \hat{a}_p \xi \|^2 \right]^{1/2} \leq C \ell^\beta \Vert \mathcal{V}_N^\frac{1}{2} \xi \Vert \| \cK^{1/2} \xi \|. \end{split} \]
As for the third term on the r.h.s. of (\ref{eq:overviewcommutator}), we can first extract $\| G * \check{\chi}_{H^c} \|_\infty \leq \| G \| \| \chi_{H^c} \| \leq C \ell^{-\alpha}$ and then use the factor $N^{-1/2}$ to gain arbitrary decay in $\ell$. We obtain 
\[  \begin{split}  \Big| \frac{1}{\sqrt{N}} \int dx dy dz \ (G* \widecheck{\chi}_{H^c})(x-y) \pn(y) \cgl(x-z) &N^2 V(N(x-y)) \langle \xi,  [b_x^* a_y^* a_z + h.c.] \xi \rangle \Big| \\ &\leq C \ell^{3\alpha} \| \cV_N^{1/2} \xi \| \| (\cN+1)^{1/2} \xi \|. \end{split} \]
Finally, the fourth term on the r.h.s. of (\ref{eq:overviewcommutator}) can be bounded by 
\[ \begin{split} \Big| &\frac{8\pi\frak{a}_0}{\sqrt{N}} \int dx dy dz \check{\chi}_{H^c} (x-y) \cgl (x-z) \ph_0 (y) \langle \xi, [ b_x^* a_y^* a_z + h.c. ] \xi \rangle \Big| \\ &\leq \frac{C}{\sqrt{N}} \left[\int dx dy dz \, \cgl (x-z) \| a_x a_y \xi \|^2 \right]^{1/2} \left[ \int dx dy dz |\check{\chi}_{H^c} (x-y)|^2 \cgl (x-z) \| a_z \xi \|^2 \right]^{1/2} \\
&\leq \frac{C \ell^{-3\alpha/2}}{\sqrt{N}} \| \cN \xi \| \| \cN^{1/2} \xi \|.  \end{split} \]
Thus
\[ [\mathcal{K} + \mathcal{V}_N, A] = - \mathcal{C}_N + \delta_3 \]
with 
\[ \begin{split} | \langle \xi , \delta_3 \xi \rangle | \leq \; &C \left[ \ell^{(\alpha-4)/2} +  \ell^{(\alpha-\beta)/2} \right] \, \| (\cK_{\ell^{-\beta-\eps}} + \cN + \cV_N + 1)^{1/2} \xi \|^2 \\ &+ C \left[ \ell^{\alpha /2} + \ell^\beta \right] \, \| \cK^{1/2} \xi \| \| (\cK_{\ell^{-\beta-\eps}} +\cV_N +  \cN+1)^{1/2} \xi \| +C \ell^{-4\alpha} \| \cN \xi \|^2 / N. \end{split} \]

Inserting into (\ref{eq:JN1}) and using  (\ref{rough bound e-A (N+1) eA with l}), \eqref{rough estimate e -sA K theta e sA}, (\ref{rough estimate e -A VN eA with l}), (\ref{rough estimate e -sA K e sA with l}), we obtain  
\begin{equation}\label{eq:JN2} \begin{split}  \cJ_N = \; &\mathcal{D}_N + \mathcal{Q}_N + \mathcal{C}_N + \mathcal{H}_N - \int_0^1 ds \ e^{-sA} \mathcal{C}_N e^{sA} + \int_0^1 ds e^{-sA} [\mathcal{C}_N , A] e^{-sA} + \delta_4 \\ 
= \; &\mathcal{D}_N + \mathcal{Q}_N + \mathcal{H}_N + \int_0^1 ds \, s \, e^{-sA} [\mathcal{C}_N, A ] e^{sA} + \delta_4 \end{split} \end{equation} 
with
\[ \pm \delta_4 \leq  C \left[ \ell^{(\alpha-4)/2} + \ell^{(\alpha-\beta-\eps)/4} + \ell^{(5\alpha - 7\beta -2\eps)/4} + \ell^{(3\beta-\alpha-\eps)/4}   \right] (\cH_N +  \cN + 1) + C \ell^{-4\alpha} \cN^2 / N. \]
We computed the commutator $[\mathcal{C}_N , A]$ in \eqref{[CN, A]}. To deal with the two main contributions on the r.h.s. of \eqref{[CN, A]}, we switch to momentum space. For the first term, we find 
\begin{equation}\label{eq:last1} \begin{split} 
&\frac{1}{N} \int dx dy dz \ N^3 V(N(x-y)) \nu_H(x;y) \pn(y) \cgl(x-z) [a_z^* a_y + h.c.] \left(1- \frac{\mathcal{N}}{N}\right) \\
&= \int dp \left[ \frac{1}{N} \int_{|q|> \ell^{-\alpha}} dq \ \hat{V} ((p-q)/N) \hat{G} (q) \right] \gl(p) \left[\hat{a}_p^* \hat{a} (\widehat{\ph^2_p})  \left(1-\frac{\mathcal{N}}{N}\right) + h.c. \right].
\end{split} \end{equation} 
With \eqref{eq:Vfa0} , we can estimate  
\[ \Big| \frac{1}{N} \int_{|q| > \ell^{-\alpha}} dq \, \hat{V} ((p-q)/N) \hat{G} (q)  - (8\pi \frak{a}_0 - \hat{V} (0)) \Big| \leq C (|p| + \ell^{-2\alpha})/N. \]
Since moreover (using $0 \leq 1- \gl (p) \leq \max(1, \ell^{2\beta} p^2)$) 
\[ \int dp \, |1- \gl(p)| |\langle \xi, \hat{a}_p^* \hat{a} (\widehat{\pn^2}_p) (1-\cN/N) \xi \rangle | \leq \int dp \, |p| \ell^{\beta} \| \hat{a}_p \xi \| \| \hat{a} (\widehat{\ph^2_p}) \xi \| \leq \ell^{\beta} \| \cK^{1/2} \xi \|  \| \cN^{1/2} \xi \| , \] 
we conclude from (\ref{eq:last1}), switching back to position space, that 
\[ \begin{split} &\frac{1}{N} \int dx dy dz \ N^3 V(N(x-y)) \nu_H(x;y) \pn(y) \cgl(x-z) [a_z^* a_y + h.c.] \left(1- \frac{\mathcal{N}}{N}\right) \\ &= (8\pi \frak{a}_0 - \hat{V} (0)) \int dp \, \big[ \hat{a}_p \hat{a}^* (\widehat{\ph^2}_p) + h.c.] + \delta_5 = 2 (8\pi \frak{a}_0 - \hat{V} (0)) \int dx\,  \ph_0^2 (x) a_x^* a_x + \delta_5, \end{split} \]
where 
\[ |\langle \xi , \delta_5 \xi \rangle | \leq C \ell^{\beta} \| \cK^{1/2} \xi \| \| \cN^{1/2} \xi \| + C \| \cN \xi \|^2 / N. \]
Similarly, we can also handle the second term on the r.h.s. of \eqref{[CN, A]}. 
We conclude that
\[ [ \mathcal{C}_N , A] = 4 (8 \pi \frak{a}_0 - \hat{V} (0) ) \int dx \, \ph_0^2 (x) a_x^* a_x + \delta_6 \]
with 
\[ \begin{split} |\langle \xi,  \delta_6 \xi \rangle | \leq \; &C (\ell^
\beta + \ell^{3(\alpha-\beta)/2}) \| (\cK+\cN)^{1/2} \xi \| \| \cN^{1/2} \xi \|  \\ &+ C \ell^{(\alpha-\beta)/2} \| (\cK_{\ell^{-\beta-\eps}} + \cV_N + \cN)^{1/2} \xi \|^2 + C \| \cN \xi \|^2 / N . \end{split} \] 
Inserting in (\ref{eq:JN2}), we find, with \eqref{int F(u) e -A au* au e A}, 
(\ref{rough bound e-A (N+1) eA with l}), \eqref{rough estimate e -sA K theta e sA}, (\ref{rough estimate e -A VN eA with l}), (\ref{rough estimate e -sA K e sA with l}),
\[ \begin{split} 
\cJ_N = \; &\mathcal{D}_N + \mathcal{Q}_N + \mathcal{H}_N +2 (8\pi  \frak{a}_0 - \hat{V} (0) ) \int dx \, \ph_0^2 (x) a_x^* a_x + \delta_7 \end{split} \]
with
\[ \pm \delta_7 \leq  C \left[ \ell^{(\alpha-4)/2} + \ell^{(\alpha-\beta-\eps)/4} + \ell^{(5\alpha - 7\beta -2\eps)/4} + \ell^{(3\beta-\alpha-\eps)/4}   \right] (\cH_N + \cN + 1) + C \ell^{-4\alpha} \cN^2 / N. \]

Under the assumption $\alpha > 4$, $7\beta/5 < \alpha < 3 \beta$, we can find $\eps > 0$ small enough so that $\kappa = \min ((\alpha-4)/2, (\alpha-\beta-\eps)/4, (5\alpha-7\beta-2\eps)/4 , (3\beta-\alpha-\eps)/4) > 0$. Inserting $\mathcal{D}_N , \mathcal{Q}_N$ as in (\ref{eq:def-Geff}), we arrive therefore at
\begin{equation}\label{eq:JNell-f1} 
\begin{split} 
\cJ_N &\geq N \mathcal{E}^{\text{GP}}(\pn)  - \varepsilon_{GP} \mathcal{N} 
+4\pi \frak{a}_0 \int dx dy \ \widecheck{\chi}_{H^c}(x-y) \pn(x) \pn(y) [b_x b_y + b_x^* b_y^*] \\
&\quad + 16\pi \frak{a}_0 \int dx \ \pn(x)^2 a_x^* a_x 
+ (1- C\ell^{\kappa}) \mathcal{H}_N - C\ell^\kappa (\cN + 1) - C \ell^{-4\alpha} (\mathcal{N}+1)^2 / N \, .
\end{split}
\end{equation} 
Next, we observe that 
\[ 0 \leq 4\pi \frak{a}_0 \int dx dy \check{\chi}_{H^c} (x-y) \ph_0 (x) \ph_0 (y) [b_x + b_x^* ] [ b_y^* + b_y]. \]
This implies that 
\[ \begin{split} &4\pi \frak{a}_0 \int dx dy \check{\chi}_{H^c} (x-y)  \ph_0 (x) \ph_0 (y) [b_x b_y + b_x^* b_y^*] \\ &\geq - 8\pi \frak{a}_0 \int dx dy \check{\chi}_{H^c} (x-y) \ph_0 (x) \ph_0 (y) b_x^* b_y - 4 \pi \frak{a}_0 \int dx dy \check{\chi}_{H^c} (x-y) \ph_0 (x) \ph_0 (y) [ b_x, b_y^* ] \\ &\geq - 8\pi \frak{a}_0 \int dx dy \check{\chi}_{H^c} (x-y) \ph_0 (x) \ph_0 (y) a_x^* a_y - C \ell^{-3\alpha} - C \cN^2 / N, \end{split} \]
where in the last step we used the commutation relations (\ref{eq:comm-b}) (and we replaced $b_x^*, b_y$ by $a_x^*, a_y$). Since, switching to momentum space,  
\[ \begin{split}  \int dx dy \check{\chi}_{H} (x-y) &\ph_0 (x) \ph_0 (y) \langle \xi, a_x^* a_y \xi \rangle  \\ &= \int dp \chi_{H} (p) \| \hat{a} (\hat{\ph}_p) \xi \|^2 \leq \ell^{2\alpha} \int dp \, p^2  \| \hat{a} (\hat{\ph}_p) \xi \|^2 \leq \ell^{2\alpha} \| (\cK+\cN)^{1/2} \xi \|^2,  \end{split} \] 
with (\ref{eq:aph2}), we conclude that 
\[ \begin{split} 4\pi \frak{a}_0 \int dx dy & \check{\chi}_{H^c} (x-y)  \ph_0 (x) \ph_0 (y) [b_x b_y + b_x^* b_y^*] \\ & \geq - 8\pi \frak{a}_0 \int dx \, \ph_0 (x)^2 a_x^* a_x - C \ell^{-3\alpha} - C \cN^2 / N - C \ell^{2\alpha} (\cK+ \cN).  \end{split} \]
Inserting in (\ref{eq:JNell-f1}), we arrive at 
\[ \begin{split}  \cJ_N \geq \; &N \cE^{\text{GP}} (\ph_0) + (1 - C \ell^\kappa) d\Gamma (-\Delta + V_\text{ext} +8\pi \frak{a}_0 |\ph_0 (x)|^2 - \eps_{GP} ) \\ &- C \ell^\kappa \cN - C \ell^{-3\alpha} - C \ell^{-4\alpha} \cN^2 / N, \end{split} \]
which implies (\ref{eq:propJNell}), if $\ell > 0$ is small enough.

\appendix
\section{Properties of the Gross-Pitaevskii Functional}\label{apx:gpfunctional}

In this appendix we collect several well-known results about the Gross-Pitaevskii functional $\cE_{GP}$, defined in equation \eqref{eq:defGPfunctional}. Let us recall that $\cE_{GP}:\cD_{GP}\to \mathbb{R}$ is given by
		\[
		 \mathcal{E}_{GP}(\varphi) = \int_{\bR^3} \left( \vert \nabla \pn(x) \vert^2 + V_{ext}(x) \vert \pn(x)\vert^2 + 4\pi a_0 \vert \pn(x)\vert^4 \right) dx 
		\]
with domain
		 \[ \mathcal{D}_{GP} =\big\{ \varphi \in H^1(\bR^3)\cap L^4(\bR^3): \ V_{ext}\vert \varphi\vert^2\in L^1(\bR^3)\big\}. \]
Recall, moreover, assumption $(2)$ in Eq. \eqref{eq:asmptsVVext} on the external potential $V_{ext}$. The following was proved in \cite[Theorems 2.1, 2.5 \& Lemma A.6]{LSY}.
\begin{theorem} \label{thm:gpmin1}
	There exists a minimizer $\pn\in \cD_{GP}$ with $ \|\pn\|_2=1$ such that
			\[ \inf_{\psi\in \mathcal{D}_{GP} \ : \|\psi\|_2 =1} \mathcal{E}_{GP}(\psi) = \mathcal{E}_{GP}(\pn).  \]
	The minimizer $\pn$ is unique up to a complex phase, which can be chosen so that $\pn$ is strictly positive. Furthermore, the minimizer $\pn$ solves the Gross-Pitaevskii equation
	
	\begin{equation} \label{eq:gpeq}
	-\Delta \pn + V_{ext} \pn + 8\pi a_0 \vert \pn \vert^2 \pn = \eps_{GP}\pn,
	\end{equation}
with $\mu$ given by
	$$ \eps_{GP}= \mathcal{E}_{GP}(\pn) + 4\pi a_0 \Vert \pn \Vert_4^4.$$
	
Moreover, $\pn \in L^\infty(\bR^3)\cap C^1(\bR^3)$ and for every $\nu>0$ there exists $C_\nu$ (which only depends on $\nu$ and $\mathfrak{a}_0$) such that for all $x\in \bR^3$ it holds true that
	\begin{equation} \label{eq:expdecaypn}
	\vert \pn(x)\vert \leq C_\nu e^{-\nu \vert x \vert}.
	\end{equation} 
\end{theorem}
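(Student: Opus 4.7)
The plan is to prove the four assertions of Theorem \ref{thm:gpmin1} in turn by standard variational arguments tailored to the confining potential $V_{ext}$.

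\textbf{Existence.} I would apply the direct method of the calculus of variations. Let $\{\psi_n\} \subset \cD_{GP}$ be a minimizing sequence with $\|\psi_n\|_2 = 1$. Non-negativity of $V_{ext}$ and $V \geq 0$ together with the lower bound on $\cE_{GP}$ (which is controlled from below by, say, the ground state energy of $-\Delta + V_{ext}$) yield uniform bounds
\[
\sup_n \Big( \|\nabla\psi_n\|_2^2 + \int V_{ext} |\psi_n|^2 + \|\psi_n\|_4^4 \Big) < \infty.
\]
Since $V_{ext}(x) \to \infty$ as $|x| \to \infty$, the bound on $\int V_{ext}|\psi_n|^2$ provides tightness: for any $\eps>0$, choosing $R$ so large that $V_{ext} \geq \eps^{-1}$ on $|x|\geq R$ yields $\int_{|x|\geq R}|\psi_n|^2 \leq C\eps$. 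Combined with weak $H^1$ compactness and Rellich--Kondrachov on $B_R$, one extracts a subsequence converging strongly in $L^2(\bR^3)$ and, by interpolation with the $H^1$-bound and Sobolev embedding $H^1 \hookrightarrow L^6$, strongly in $L^4(\bR^3)$ to some $\pn \in H^1 \cap L^4$ with $\|\pn\|_2=1$. Weak lower semicontinuity of $\|\nabla \cdot\|_2^2$ and Fatou applied to $V_{ext}|\psi_n|^2$ show $\cE_{GP}(\pn) \leq \liminf_n \cE_{GP}(\psi_n)$, hence $\pn$ is a minimizer.

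\textbf{Uniqueness, non-negativity, and the Euler--Lagrange equation.} Since $|\nabla|\psi||\leq |\nabla\psi|$ a.e.\ (diamagnetic inequality) and the remaining terms depend only on $|\psi|$, I may replace $\pn$ by $|\pn|\geq 0$ without increasing $\cE_{GP}$. For uniqueness, I would reformulate the problem in terms of the density $\rho = \pn^2$: using the standard convexity of $\rho \mapsto \int |\nabla\sqrt{\rho}|^2$ (Hoffmann-Ostenhof type inequality) together with linearity of $\rho \mapsto \int V_{ext}\rho$ and strict convexity of $\rho\mapsto\int \rho^2$, the functional $\rho \mapsto \cE_{GP}(\sqrt{\rho})$ is strictly convex on $\{\rho\geq 0,\ \int\rho=1\}$, giving uniqueness of $\rho$ and therefore of $\pn$ up to a complex phase. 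The Euler--Lagrange equation \eqref{eq:gpeq} follows by differentiating $t\mapsto \cE_{GP}((\pn+t\eta)/\|\pn+t\eta\|_2)$ at $t=0$ against arbitrary $\eta \in C_c^\infty$; testing \eqref{eq:gpeq} against $\pn$ yields $\eps_{GP} = \cE_{GP}(\pn) + 4\pi\mathfrak{a}_0\|\pn\|_4^4$. Strict positivity then follows from the strong maximum principle / Harnack inequality applied to the semilinear equation $-\Delta\pn + (V_{ext}+8\pi \mathfrak{a}_0\pn^2-\eps_{GP})\pn = 0$ on any ball, since $V_{ext}\in L^\infty_{\rm loc}$ and $\pn \not\equiv 0$.

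\textbf{Boundedness and $C^1$ regularity.} From $\pn \in H^1 \hookrightarrow L^6$ and the GP equation I would run a Moser/De Giorgi bootstrap. Rewriting $(-\Delta + 1)\pn = (1+\eps_{GP}-V_{ext}-8\pi \mathfrak{a}_0 \pn^2)\pn$, the right-hand side lies in $L^2_{\rm loc}$, so $\pn\in H^2_{\rm loc}$, and iterating with $V_{ext}\in L^\infty_{\rm loc}$ plus the exponential decay of $\pn$ (proved below, which kills $V_{ext}\pn$ at infinity) gives $\pn \in L^\infty(\bR^3)$. Since $V_{ext}\in C^3$ and $\pn\in L^\infty$, Schauder estimates applied locally upgrade $\pn$ to $C^{2,\gamma}_{\rm loc}$, in particular $\pn\in C^1(\bR^3)$.

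\textbf{Exponential decay.} This is the classical Agmon argument. Given $\nu>0$, choose $R_\nu$ so large that $V_{ext}(x)-\eps_{GP} \geq \nu^2 +1$ for $|x|\geq R_\nu$ (possible since $V_{ext}\to\infty$). Using the GP equation and dropping the non-negative nonlinear term $8\pi\mathfrak{a}_0\pn^3 \geq 0$, multiplying by $e^{2\nu|x|}\pn\,\chi$ for a suitable cutoff $\chi$ and integrating by parts gives, modulo a boundary term supported near $|x|=R_\nu$,
\[
\int \big(V_{ext}-\eps_{GP}-\nu^2\big)\, e^{2\nu|x|}|\pn|^2 \,dx \leq C_{R_\nu}\|\pn\|_\infty^2.
\]
This yields $\int e^{2\nu|x|}|\pn|^2<\infty$, and combined with the $L^\infty$-bound and a local Harnack/elliptic estimate promotes the $L^2$-exponential decay to the pointwise bound \eqref{eq:expdecaypn}.

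\emph{Main obstacle.} The subtlest step is the uniqueness/positivity argument: one must justify convexity in $\rho$ (which requires $\sqrt\rho \in H^1$ for any admissible $\rho$, handled via the Hoffmann-Ostenhof inequality) and then pass from non-negativity to strict positivity through a local strong maximum principle for a Schrödinger operator whose potential $V_{ext}$ is only locally bounded but unbounded at infinity. All other steps are standard once the coercivity from $V_{ext}$ is exploited to recover compactness and Agmon-type decay.
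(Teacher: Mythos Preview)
Your proof sketch is essentially correct and follows the standard variational route. Note, however, that the paper does not actually prove this theorem: it is stated in the appendix with the preface ``The following was proved in \cite[Theorems 2.1, 2.5 \& Lemma A.6]{LSY}'' and no proof is given. So there is nothing to compare against in the paper itself; your argument is in line with the original Lieb--Seiringer--Yngvason proof that the paper cites.

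Two minor points worth tightening. First, you assume $V_{ext}\geq 0$, but the hypotheses \eqref{eq:asmptsVVext} only give $V_{ext}\in C^3$ with $V_{ext}(x)\to\infty$; since $V_{ext}$ is then bounded below, this is fixed by adding a constant, which shifts $\cE_{GP}$ and $\eps_{GP}$ by the same constant and changes nothing else. Second, there is a slight circularity in your regularity/decay paragraph: you invoke the exponential decay (proved ``below'') to obtain the global $L^\infty$ bound, but your Agmon argument as written uses $\|\pn\|_\infty$ in the boundary term. The clean ordering is: (i) bootstrap gives $\pn\in H^2_{\rm loc}\cap L^\infty_{\rm loc}$; (ii) the Agmon multiplier argument with a smooth cutoff only needs $\pn\in H^1$ and local boundedness near $|x|=R_\nu$ to control the boundary term, yielding $e^{\nu|\cdot|}\pn\in L^2$; (iii) local elliptic estimates plus Sobolev embedding then upgrade this to pointwise decay, hence global $L^\infty$; (iv) Schauder gives $C^1$. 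With this reordering the argument is complete.
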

We denote by $ \pn$ in the following the unique, strictly positive minimizer of $\cE_{GP}$, subject to the contraint $ \|\pn\|_2=1$. In addition to Theorem \ref{thm:gpmin1}, we need to collect a few additional facts about the regularity of $\pn$. Before we do so, notice that the assumption
		\[ V_{ext} (x+y) \leq C (V_{ext}(x) + C) (V_{ext}(y)+C)\] 
implies that $V_{ext}$ has at most exponential growth, as $|x|\to \infty$. Indeed, by \eqref{eq:asmptsVVext}, we find $R>0$ such that $V_{ext}(x) >0$ for all $\vert x \vert \geq R$. Let $\tilde{C}$ be the maximum of $V_{ext}$ in the ball of radius $2R$ around the origin. For $\vert x \vert \geq R$, we pick $n\in \mathbb{N}$ such that $nR \leq \vert x \vert < (n+1)R$ and obtain 
\begin{align*}
\vert V_{ext}(x) \vert \leq C^n (V_{ext}(x/n)+C)^n 
\leq (C(\tilde{C}+C))^n 
\leq  (C(\tilde{C}+C))^{\vert x \vert/R}.
\end{align*}
Hence, $V_{ext}$ grows at most exponentially. In particular, by \eqref{eq:expdecaypn}, this implies that 
\begin{equation} \label{eq:Vextbnd}
\Vert V_{ext} \pn \Vert_\infty  \leq C.
\end{equation}

\begin{lemma}\label{lem:gpmin2}
Let $ V_{ext}$ satisfy the assumptions in \eqref{eq:asmptsVVext}. Then $\pn\in H^2(\mathbb{R}^3)\cap C^2(\mathbb{R}^3)$ and for every $\nu>0$ there exists $C_\nu>0$ such that for every $x\in \mathbb{R}^3$ we have
	\begin{equation} \label{exponential decay}
	\vert \nabla \pn (x) \vert, \ \vert \Delta \pn(x) \vert 
	\leq C_\nu e^{-\nu \vert x \vert}.
	\end{equation}
Moreover, if $\whpn$ denotes the Fourier transform of $\pn$, we have for all $ p\in\bR^3$ that
		\begin{equation} \label{eq:decphhat}
		| \whpn (p)|     \leq  \frac{C}{(1+ | p|)^4}.
		\end{equation}
\end{lemma}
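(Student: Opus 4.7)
The approach is to treat the Euler--Lagrange equation \eqref{eq:gpeq} as an elliptic equation for $\pn$, iteratively bootstrapping regularity and decay from the pointwise bound \eqref{eq:expdecaypn} and from the at most exponential growth of $V_{\mathrm{ext}}, \nabla V_{\mathrm{ext}}, \Delta V_{\mathrm{ext}}$ guaranteed by assumption~(2) in \eqref{eq:asmptsVVext}.

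First, I would rewrite \eqref{eq:gpeq} as $-\Delta\pn = (\eps_{GP} - V_{\mathrm{ext}} - 8\pi\an|\pn|^2)\pn$ and observe that the right-hand side lies in $L^2(\bR^3)$: the cubic term is in $L^2$ since $\pn\in L^2\cap L^\infty$, and $V_{\mathrm{ext}}\pn\in L^2$ because $V_{\mathrm{ext}}$ grows at most exponentially while, by \eqref{eq:expdecaypn}, $\pn$ decays faster than any exponential. This yields $\pn\in H^2(\bR^3)$. To upgrade to $C^2$ I invoke standard elliptic regularity: from $H^2(\bR^3)\hookrightarrow C^{0,1/2}_{\mathrm{loc}}$ and $V_{\mathrm{ext}}\in C^3$, the right-hand side is locally H\"older, so Schauder estimates give $\pn\in C^{2,\alpha}_{\mathrm{loc}}$. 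The exponential decay of $\Delta\pn$ then reads off directly from the equation: for any $\nu>0$, choosing the decay rate of $\pn$ in \eqref{eq:expdecaypn} larger than $\nu$ plus the exponential growth rate of $V_{\mathrm{ext}}$ gives $|V_{\mathrm{ext}}(x)\pn(x)|\leq C_\nu e^{-\nu|x|}$, while the cubic term $|\pn|^3$ is handled trivially.

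For $\nabla\pn$, I would use the Yukawa representation: rewriting \eqref{eq:gpeq} as $(-\Delta+1)\pn = f_1$ with $f_1 = (1+\eps_{GP} - V_{\mathrm{ext}} - 8\pi\an|\pn|^2)\pn$ of arbitrary exponential decay by the same argument, I obtain $\pn = Y\ast f_1$ with $Y(x)=e^{-|x|}/(4\pi|x|)$, and hence $\nabla\pn = (\nabla Y)\ast f_1$. A standard splitting of the convolution into the regions $\{|x-y|\geq |x|/2\}$, where $|\nabla Y(x-y)|\lesssim e^{-|x|/2}$, and $\{|x-y|<|x|/2\}$, where $|f_1(y)|\lesssim e^{-2\nu|x|}$ and $\nabla Y$ is locally integrable, yields $|\nabla\pn(x)|\leq C_\nu e^{-\nu|x|}$.

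Finally, for the Fourier bound \eqref{eq:decphhat}, I would show that $\Delta^2\pn\in L^1(\bR^3)$; combined with the trivial bound $\|\whpn\|_\infty\leq\|\pn\|_1<\infty$, passing to Fourier space gives $|p|^4|\whpn(p)|\leq C$ and hence $|\whpn(p)|\leq C(1+|p|)^{-4}$. The $L^1$ bound follows from expanding
\[
\Delta^2\pn = \Delta\bigl[(V_{\mathrm{ext}} + 8\pi\an\pn^2 - \eps_{GP})\pn\bigr]
\]
via the product rule into terms such as $(\Delta V_{\mathrm{ext}})\pn$, $\nabla V_{\mathrm{ext}}\cdot\nabla\pn$, $V_{\mathrm{ext}}\Delta\pn$, $\pn|\nabla\pn|^2$, and $\pn^2\Delta\pn$; each is a product of a factor with at most exponential growth ($V_{\mathrm{ext}}$, $\nabla V_{\mathrm{ext}}$ or $\Delta V_{\mathrm{ext}}$) and a factor with arbitrary exponential decay ($\pn$, $\nabla\pn$ or $\Delta\pn$ from the previous step), hence lies in $L^1(\bR^3)$. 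The main technical obstacle is the pointwise exponential decay of $\nabla\pn$, since it cannot be read off directly from \eqref{eq:gpeq} and requires the Green's function argument above, whose execution is routine but needs a careful domain split to reconcile the local singularity of $\nabla Y$ with its exponential decay at infinity.
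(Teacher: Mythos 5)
Your overall strategy mirrors the paper's: bootstrap regularity from the GP equation, read off the decay of $\Delta\pn$ directly, obtain the decay of $\nabla\pn$ by a separate argument, and get \eqref{eq:decphhat} by showing $\Delta^2\pn\in L^1$. The one genuinely different step is the gradient decay: the paper applies the interior gradient estimate \cite[Theorem 3.9]{GT}, $\sup_{B_1(y)}|\nabla\pn|\le C(\sup_{B_2(y)}|\pn|+\sup_{B_2(y)}|\Delta\pn|)$, on unit balls around each $y$, so $\nabla\pn$ inherits whatever exponential rate $\pn$ and $\Delta\pn$ have, with no further bookkeeping. You instead use the Yukawa representation $\nabla\pn=(\nabla Y)*f_1$. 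Both are legitimate, but the interior-estimate route is cleaner precisely because it is rate-preserving.

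There is, however, a real gap in the Yukawa argument as you wrote it. The kernel $Y(x)=e^{-|x|}/(4\pi|x|)$ has a \emph{fixed} decay rate $1$, so your far-field estimate on $\{|x-y|\ge|x|/2\}$ only yields $|\nabla Y(x-y)|\lesssim e^{-|x|/2}$, and hence only $|\nabla\pn(x)|\lesssim e^{-|x|/2}$ --- a fixed rate, not the claimed $C_\nu e^{-\nu|x|}$ for \emph{every} $\nu>0$. Since the lemma is used with $\nu$ chosen large (to beat the unspecified exponential growth rate of $V_{ext}$), and since you yourself invoke ``arbitrary exponential decay'' of $\nabla\pn$ when arguing that $\Delta^2\pn\in L^1$, this is not a harmless slip. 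The fix is standard but must be stated: for each $\nu$, rewrite the equation as $(-\Delta+m^2)\pn=f_m$ with $f_m=(m^2+\eps_{GP}-V_{ext}-8\pi\an|\pn|^2)\pn$ and use the kernel $Y_m(x)=e^{-m|x|}/(4\pi|x|)$ with $m$ taken large (say $m\ge 2\nu$), so the far-field contribution decays like $e^{-m|x|/2}\le e^{-\nu|x|}$. With that modification your convolution split goes through. One more small omission: to write $\Delta^2\pn$ pointwise and expand it by the product rule you also need $\pn\in C^4$, which requires one further Schauder bootstrap (using $V_{ext}\in C^3$); the paper states this explicitly and you should too.
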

\begin{proof}
By the previous Theorem \ref{thm:gpmin1}, the Gross-Pitaevskii equation  \eqref{eq:gpeq}, Eq. \eqref{eq:expdecaypn} and the fact that $V_{ext}, \nabla V_{ext}$ grow at most exponentially (by the assumptions \eqref{eq:asmptsVVext} and the previous remark), we obtain the exponential decay of $\Delta \pn$. Moreover, since $\pn\in L^\infty(\bR^3)$ and local Hölder continuity of $V_{ext}$, we get the local Hölder continuity of $\Delta \pn$. Elliptic regularity then implies that $\pn\in C^2(\mathbb{R}^3, \mathbb{R})$. 

Next, by \cite[Theorem 3.9]{GT}, if $u\in C^2(B_2(y))$ solves $ \Delta u=f$, then there exists a constant $C>0$, independent of $y\in\bR^3$, such that
	\begin{align*}
	\Vert \nabla u  \Vert_{L^\infty (B_1(y) )} \leq C \big(\Vert u \Vert_{L^\infty (B_2(y) )} + \Vert f \Vert_{L^\infty (B_2(y))}\big).
	\end{align*}
Here, $ B_1(y) $ and $B_2(y)$ denote the open balls of radius one and two, respectively, centered at $y\in\bR^3$. Applying this last bound to $\pn$ and using the exponential decay of $\pn, \Delta \pn$ implies that also $\nabla \pn$ has exponential decay. 

Finally, let us prove the decay estimate \eqref{eq:decphhat}. By Theorem \eqref{thm:gpmin1}, Eq. \eqref{eq:gpeq} and elliptic regularity theory, we conclude $\pn \in C^4(\mathbb{R}^3)$ and that
		\begin{align*}
		\Delta^2 \pn = - \Delta( \pn V_{ext}) + \eps_{GP}\Delta \pn + 8\pi a_0 (3 \pn^2 \Delta \pn + 6 \pn |\nabla \pn|^2).
		\end{align*}
Since, on the one hand, $\pn, \nabla \pn$ and $\Delta \pn$ all have exponential decay with arbitrary rate while $V_{ext}, \nabla V_{ext}, \Delta V_{ext}$ grow at most exponentially by assumption \eqref{eq:asmptsVVext}, we conclude that $\Delta^2\pn\in L^1(\mathbb{R}^3)$. This implies the estimate \eqref{eq:decphhat} by switching to Fourier space.
\end{proof}



\end{document}